\renewcommand \thepart{}
\renewcommand \partname{}
\tikzstyle{new style 0}=[fill=white, draw=red, shape=circle]
\tikzstyle{cutoff}=[-, draw=red]
\tikzstyle{arrow R}=[->]
\tikzstyle{arrow L}=[<-]
\tikzstyle{fading}=[-, draw={rgb,255: red,191; green,191; blue,191}, dashed]
\patchcmd{\epigraph}{\@epitext{#1}}{\itshape\@epitext{#1}}{}{}
\DeclareMathOperator*{\sst}{ss}
\newtheoremstyle{newthmstyle}%
  {\topsep}
{\topsep}
{\itshape}
{0pt}
{\bfseries\color{blue}}
{. }
{0pt}
{}
\theoremstyle{newthmstyle}
\newtheorem{lemma}{{\color{blue}Lemma}}
\newtheorem{definition}{Definition}
\newtheorem{assumption}{Assumption}
\newtheorem{remark}{{\color{blue}Remark}}
\newtheorem{prop}{{\color{blue}Proposition}}
\newtheorem{thm}{{\color{blue}Theorem}}
\newtheorem{appprop}{{\color{blue}Proposition}}
\newtheorem{appremark}{{\color{blue}Remark}}
\newtheorem{applemma}{{\color{blue}Lemma}}
\theoremstyle{definition}
\renewcommand\Pr{\mathbb{P}}
\renewcommand\Re{\mathbb{R}}
\newcommand\E{\mathbb{E}}
\renewcommand\hat[1]{\widehat{#1}}
\renewcommand\tilde[1]{\widetilde{#1}}
\newcommand{\g}{\mid}%
\renewcommand{\d}{\ \mathrm{d}}
\newcommand{\ulz}{\underline{z}}
\newcommand{\indic}[1]{ \mathbb{I}_{\left\{z \geq #1\right\}}(z) }
\newcommand{\indicalt}[1]{ \mathbb{I} \left\{z \geq #1\right\}}
\newcommand{\indictwo}[2]{ \mathbb{I}_{\left\{#2 \geq #1\right\}}(#2)}
\xpatchcmd{\sv@part}{\huge \bfseries \partname \nobreakspace \thepart \par \vskip 20\p@ \fi \Huge \bfseries #2}{\fi \Huge \bfseries \thepart. #2}{}{}
\renewcommand \thepart{}
\renewcommand \partname{}
\newcommand{\subfloat}[2]{%
    \begin{minipage}{0.33\textwidth}
        \includegraphics[width=\textwidth]{ #1 } \newline
        \mbox{\hspace{0.5\textwidth}} #2
    \end{minipage}%
}
\renewcommand{\Pr}{\mathbb{P}}
\renewcommand{\Re}{\mathbb{R}}
\renewcommand{\d}{\ \mathrm{d}}
\newcommand{\ela}{\mathcal{E}}
\renewcommand{\hat}[1]{\widehat{#1}}
\renewcommand{\tilde}[1]{\widetilde{#1}}
\newcommand{\olpsi}{\overline{\psi}}
\renewcommand{\olpsi}{\overline{\psi}}
\newif\ifhidelongversion
\newcommand{\versionlong}[1]{\ifhidelongversion\else#1\fi}
\newif\ifhideshortversion
\newcommand{\versionshort}[1]{\ifhideshortversion\else#1\fi}
\begin{document}

\doparttoc 
\faketableofcontents 


\def \thetitle{Not-So-Cleansing Recessions}
 \title{Not-So-Cleansing Recessions\footnote{We are thankful to Edoardo Acabbi, Isaac Baley, David Baqaee, Florin Bilbiie, Timo Boppart, Paco Buera, Ariel Burstein, Andrea Caggese, Vasco Carvalho, Matteo Escud\'e, Gene Grossman, David H\'emous, Felix Kübler, Omar Licandro, Luca Mazzone, Mathieu Parenti, Lorenzo Pesaresi, Michael Peters, Francisco Queir\'os, Edouard Schaal, Florian Scheuer, Adam Spencer, Vincent Sterk, and Jaume Ventura as well as seminar participants at UZH, Nova SBE, Crei-UPF, Bristol, Oxford, Queen Mary University London, UCL, SFI, and presentations at the Esade Macro Meetings, 2025 SED Summer Meeting, and Firm Heterogeneity and Macro workshop for their feedback. We thank Andrea Caggese and Andrea Sy for sharing the data. Igli Bajo gratefully acknowledges financial support from the Swiss National Science Foundation.  Alessandro Ferrari gratefully acknowledges financial support from the Severo Ochoa Programme for Centres of Excellence in R\&D (Barcelona School of Economics
CEX2024-001476-S). Linen Yu provided excellent research assistance.}}
\def \theshorttitle {} 
\author{ \color{blue}\large{Igli Bajo}\color{black}
\\ \small \hspace{-8pt} University of Zurich \& SFI  \and %
\color{blue}\large{Frederik H. Bennhoff
}\color{black}
\\ \small \hspace{-8pt} University of Zurich    \and
\color{blue}\large{Alessandro Ferrari}\color{black}
\\ \small \hspace{-8pt} UPF, CREi, BSE \& CEPR
}
\def \thedate {\today} 

\maketitle
\thispagestyle{empty} 

\vspace{-30pt}
\begin{abstract} 
Recessions are periods in which the least productive firms in the economy exit, and as the economy recovers, they are replaced by new and more productive entrants. These \textit{cleansing effects} improve the average firm productivity. At the same time, recessions induce a loss of varieties. In an economy with Homothetic Single Aggregator technology, we show that their long-run welfare effects trade off these two forces. This trade-off is governed by \textit{love-of-variety} and the elasticity of substitution in aggregate production. If industry output is aggregated using the standard CES aggregator, recessions do not improve long-run GDP or welfare. If the economy features more \textit{love-of-variety} than CES, the social planner optimally subsidizes economic activity both in steady state and even more so in recessions to avoid firm exit. We use the model and quasi-exogenous variation in demand to estimate love-of-variety. We find it to be significantly higher than implied by CES aggregation, suggesting that even the long-run effects of recessions are negative. Finally, we quantitatively characterize the optimal policy response both along the transition and in the steady state.

\end{abstract}

\begin{raggedright} Keywords: cleansing effects of recessions, business cycle, love-of-variety.\\
JEL Codes: E32, D31, L11\\
\end{raggedright}
\newpage
\setcounter{page}{1}
\onehalfspacing

\epigraph{\justifying This is really at the bottom of the recurrent troubles of capitalist society. They are but temporary. They are the means to reconstruct each time the economic system on a more efficient plan. But they inflict losses while they last, drive firms into the bankruptcy court, throw people out of employment, before the ground is clear and the way paved for new achievement of the kind which has created modern civilization and made the greatness of this country.}{--- \textup{Joseph A. Schumpeter} on the Great Depression, 1934}

\section{Introduction}
Recessions are periods of increased reallocation of economic activity. The \textit{liquidationist} view of business cycles has long posited that during downturns, unproductive economic units exit, and, as the economy recovers, they are replaced by new and more productive firms. \cite{stiglitz1993endogenous} refers to this effect as the \textit{``silver lining of economic recessions''}. The intellectual roots of this argument can be traced back to \cite{schumpeter1934}, who viewed recessions as a necessary short-term pain through which the economy improves its long-run outcomes. 

Figure \ref{fig:loss_of_m_es} provides the empirical motivation for our analysis. It shows the evolution of the number of active firms in the administrative data from Spain (SABI). Relative to a linear trend estimated through 2006, the number of firms exhibits a large and persistent decline following the 2007-08 recession. Figure \ref{fig:tfp_entrant_exiters}, based on the same data, shows that while in \textit{normal times} the estimated TFP distributions of entering and exiting firms are very similar, this is not the case during recessions. We find that the distribution of entering firms first-order stochastically dominates that of exiting ones. This represents prima facie evidence of a positive reallocation during recessions: exiting firms are replaced by, on average, better ones, consistent with the creative destruction in Schumpeterian theories. However, we argue that this reallocation may not necessarily yield output or welfare gains even in the long run, when the recession has subsided.

\begin{figure}[htb]
    \centering
    \includegraphics[width=0.65\linewidth]{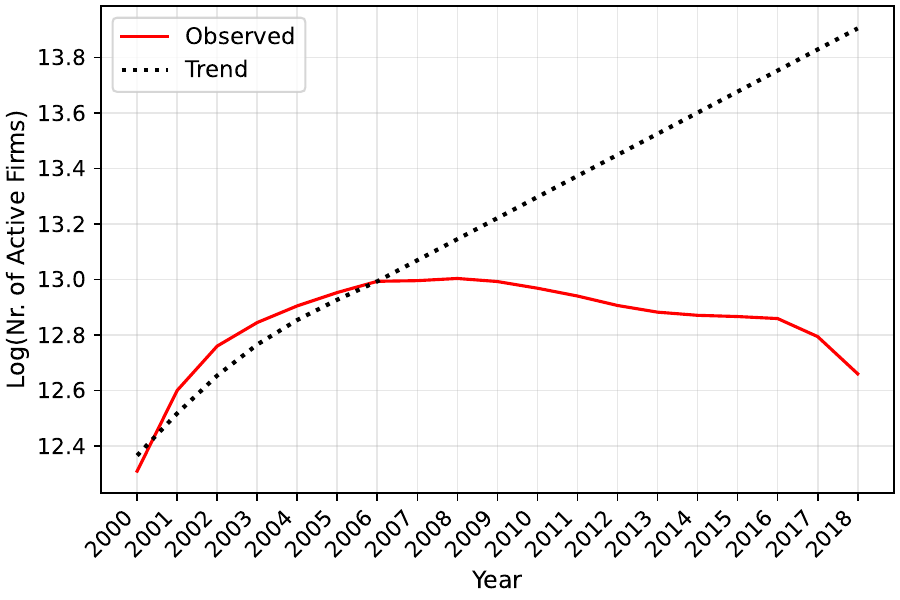}
        \caption{Number of Firms in Spain}
    \label{fig:loss_of_m_es}
\end{figure}

\begin{figure}[htp]
  \centering
  \begin{minipage}{0.49\linewidth}
    \centering
    \includegraphics[width=\linewidth]{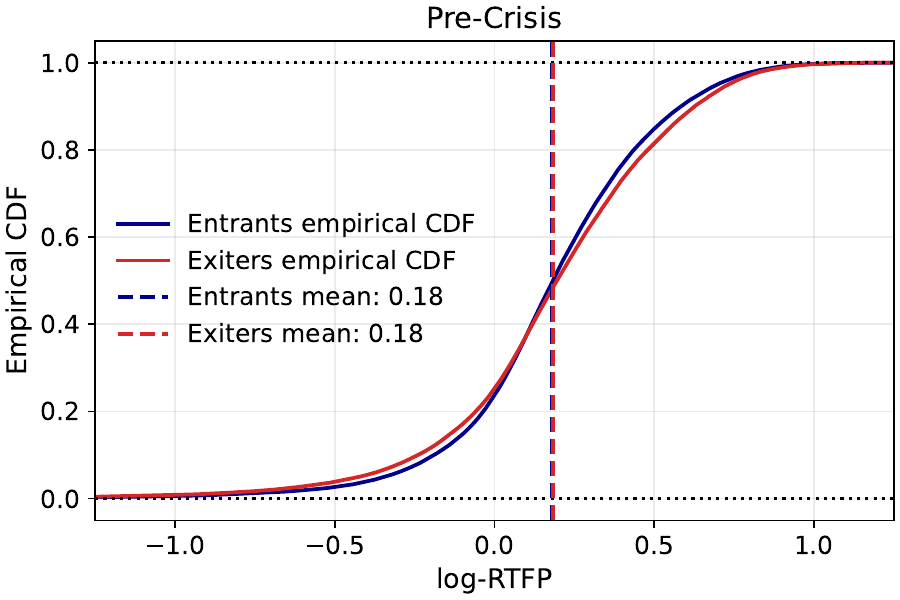}
  \end{minipage}\hfill
  \begin{minipage}{0.49\linewidth}
    \centering
    \includegraphics[width=\linewidth]{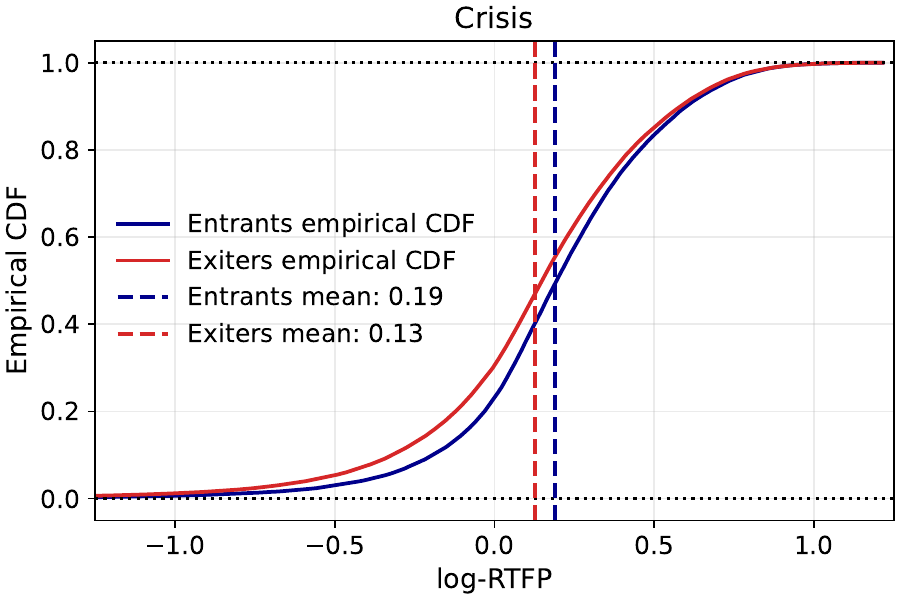}
  \end{minipage}
   \caption{Empirical CDFs (ECDFs) of log Revenue TFP (RTFP) for Entrants and Exiters. The estimation procedure for RTFP is described in Appendix \ref{app:sabi_tfp_estim}. 
  In the period preceding the Global Financial Crisis (2002-2006), the entrant and exiter ECDFs overlap and cross, and we fail to reject equality of means at the 5\% level. During the Global Financial Crisis (2007-2009), the entrant ECDF lies below the exiter ECDF throughout the support---consistent with first-order stochastic dominance of entrants---and we reject equality of means at the 5\% level.}

  \label{fig:tfp_entrant_exiters}
\end{figure}

In this paper, we revisit the role of recessions as moments of \textit{cleansing} and creative destruction and, in particular, their long-run effects on output and welfare. 
We consider a parsimonious model of firm dynamics with heterogeneous firms \`a la \cite{hopenhayn1992entry}-\cite{melitz_impact_2003} and study the cleansing effect of business cycles driven by fluctuations in the fixed cost of production.  In our model, temporary fluctuations can permanently change the firm size distribution, generating history dependence.  In this context, we show that increases in fixed costs have cleansing effects through Schumpeterian forces: after a recession, fewer firms operate, and they are, on average, more productive than before the downturn. However, these productivity gains do not necessarily translate into higher long-run output and welfare. 

When products are differentiated, the output and welfare effects of changes in the number of varieties and in average productivity are governed by different forces. Changes in the number of firms translate into output and welfare according to the \textit{love-of-variety} (LoV) of the aggregator. Conversely, changes in the average productivity of firms affect the economy through the elasticity of demand. We formalize this trade-off in a partial equilibrium economy where a final good producer has a \textit{Homothetic, Single Aggregator (HSA)} technology \citep{matsuyama2023non}. 
We characterize the effects of recessions driven by temporary increases in the fixed cost of production in terms of three forces: i) the external effects of the change in the number of available varieties; ii) the change in a sufficient statistic for the state of competition in the economy; and iii) the average surplus difference between exiting and entering firms. If the presence of more varieties generates positive externalities in production via LoV, the first effect is negative since recessions tend to reduce the number of firms. The second effect is generally zero since the improvement in average efficiency is exactly offset by the reduction in the number of firms. Finally, the contribution of the third effect fundamentally depends on whether the demand elasticity is increasing in price (and, by extension, decreasing in productivity), a property known as \textit{Marshall's Second Law of Demand} \citep[][]{mayer2021product,matsuyama2023non}. If this property holds, the third effect is positive since average productivity increases after the recession.

While this general characterization clarifies the fundamental forces shaping long-run recession effects, it remains intractable for analytical and empirical work because the aggregator is only implicitly defined. We make further headway by considering a simple special case of HSA:
\cite{dixit1975monopolistic} aggregation. This is a generalized CES where \textit{love-of-variety} $q$ and the elasticity of substitution $\sigma$ are governed by different parameters.\footnote{See \cite{brakman2001monopolistic} for a reprint of \cite{dixit1975monopolistic}. See also \cite{ethier1982national} and \cite{benassy1996taste} for similar aggregation.} In this special case, we show that the long-run welfare effect of recession is positive if and only if $q<1/(\sigma-1)$ while it is negative if and only if $q>1/(\sigma-1)$. The classical \cite{dixit1977monopolistic} CES aggregation represents a special case of our setting, where $q=1/(\sigma-1)$, in which the long-run effect of recessions is exactly zero. In this knife-edge case, the effect of the variety loss and the average productivity gains exactly offset each other as they are governed by the same elasticity $1/(\sigma-1)$. We conclude that, in this setting, \textit{cleansing effects} in terms of average productivity are neither necessary nor sufficient to determine the long-run welfare effects of business cycles. 

Next, we show that this intuition extends to general equilibrium with a minor caveat: in GE, the cleansing effect frees up labor resources. After a recession, the economy features fewer, more productive firms. As a consequence, less of the finite labor endowment is used to pay the fixed costs of production. This implies that fixed cost cycles induce higher long-run welfare even in CES economies. Nonetheless, we show that for any fixed cost cycle there exists a unique level of LoV $q^\star$, such that long-run output and welfare are identical to their pre-cycle levels. 

We conclude our theoretical analysis with two normative results. We show that the market allocation is constrained efficient if and only if the economy features CES aggregation. This extends results by \cite{dixit1977monopolistic} and \cite{dhingra2019monopolistic} to economies that also consider the presence of incumbent firms. We then show that when the economy values varieties more or less than implied by CES, the market equilibrium features either too few or too many firms, and characterize the optimal policy intervention. The planner can correct this steady-state inefficiency with a single instrument, changing the relative entry and fixed costs. 

Importantly, we show that in economies in which output is aggregated with stronger LoV than in the CES benchmark, under mild assumptions on the productivity distribution, the social planner finds it optimal to increase fixed cost subsidies during recessions. This policy is motivated by inducing a smaller amount of exit during bad times than implied by the market equilibrium. This result speaks to frequent efforts by governments to reduce churn during bad times, as seen during the Covid-19 recession.\footnote{See \cite{kozeniauskas2022cleansing} for an empirical evaluation of such programs.}

Both our positive and normative analyses highlight the importance of the parameters governing love of variety $q$ and the elasticity of demand $\sigma$. To inform our quantitative evaluation, in Section \ref{sec:empirics}, we set out to estimate these key parameters. We do so using two equilibrium relations in our model. First, in equilibrium $\sigma$ uniquely determines the profit rate of the economy. We use the WIOD data \citep{timmer2015illustrated} to obtain the distribution of profit rates for the 56 industries in the sample across 43 countries.\footnote{WIOD reports capital expenses together with economic profits, we, therefore, make the conservative assumption that all capital expenses are, in fact, profits, leading to an upward bias of the profit rate and a downward bias in $\sigma$.} We find a median elasticity of substitution of 5.8, associated with a profit rate of $17.2\%$. To estimate $q$, we make use of another structural relation in our model: LoV governs the extent of external returns in our economy. We show that, in our setting, the elasticity of output to an increase in downstream expenditure is equal to $1+q$, where the ``1'' obtains from our constant technological returns to scale assumption. We use the quasi-exogenous shifts in foreign income from the instrumental variable developed by \cite{ferrari2023inventories}. This instrument combines shifts in destination markets for WIOD industries via their network exposure shares. We obtain an estimate of $q$ between .51 and .63, depending on the specification. Comparing these estimates with the love-of-variety in CES implied by our distribution of $\sigma$, we conclude that all but 2 out of the 56 industries in our sample have a stronger love-of-variety than implied by CES. The direct consequence of our empirical finding is that, for most of these industries, recessions are costly even in the long run, despite their cleansing effects on the productivity distribution.

We conclude by studying a quantitative version of our model. This allows us to evaluate the overall effect of recessions, accounting for transitional dynamics. We use our estimated parameters for love-of-variety and the elasticity of substitution, and calibrate the rest of the model to moments of the Spanish firm-level administrative data.
First, we show that, given the estimated love-of-variety and elasticity of substitution, the planner would like to introduce a steady-state subsidy to the fixed cost. As highlighted by our normative results, the economy features too few firms, and the planner wants to induce a higher number of varieties. Quantitatively, we find that this subsidy can increase welfare by $55.30\%$ in consumption equivalent variation (CEV). Next, we study business cycles in our quantitative model. We engineer a temporary increase in the fixed cost so that in the trough, 20.44\% of firms exit, as in Spain in 2007-09. Absent any policy intervention, this recession induces a welfare loss of $-5.28\%$ CEV. This loss is largely driven by a very persistent loss of varieties ($-9.1\%$ after 15 years), consistent with the deviation from trend observed in the data. When we allow the social planner to intervene, it does so by subsidizing firms during the recession to minimize the exit of varieties. The planner fully shifts the absorption mechanism of the recession: it induces firms not to exit and reallocates labor from production to fixed costs payments. This active policy makes the recession much shorter-lived and eliminates any long-run effect. Depending on whether the economy starts from a distorted or an undistorted steady-state, the cycle intervention can substantially reduce the welfare cost of the recession (up to 88\% of its CEV cost). This result underscores the importance of the intervention in reducing the cost of the recession. 

\paragraph{Related Literature}
This paper draws insights from the literature on the optimal number of varieties \citep{dixit1977monopolistic,dhingra2019monopolistic, matsuyama2020does, matsuyama2023love} to discuss the cleansing effects of recessions.\footnote{This problem has been the subject of a large literature which includes \cite{spence1976product,dixit1977monopolistic,mankiw1986free,zhelobodko2012monopolistic,dhingra2019monopolistic,matsuyama2020does}. See \cite{brakman2001monopolistic} for an early review.} We build on \cite{jovanovic1982selection}, \cite{hopenhayn1992entry}, and \cite{melitz_impact_2003} to study the effect of recessions in economies with love-of-variety and firm dynamics.
The notion that recessions can generate long-run benefits thanks to an acceleration of creative destruction dates back to \cite{schumpeter1934,schumpeter1939business} and has been formalized in \cite{caballero1994cleansing}. 
They study an embedded capital model in which recessions induce firm exit, thereby accelerating the speed of modernization of installed capital. We study an economy in which there is no capital, but producers are heterogeneous. During recessions, the least productive firms exit and are subsequently replaced by entrants with higher average productivity. Importantly, we show that this is not enough to infer the behavior of output or welfare, as the equilibrium number of firms drops after a recession. The patterns predicted by our model are consistent with the large empirical evidence on the procyclical properties of entry and exit, both at the firm and product level \citep{dunne1989firm,davis1992gross,broda2010product,kehrig2015cyclical,lee2015entry,argente_innovation_2018,tian2018firm,argente2024life}. 

Theoretically, our work is related to the literature on the aggregate consequences of entry, exit, and firm heterogeneity, such as \cite{chatterjee1993entry,bilbiie2012endogenous,clementi2016entry,bilbiie2019monopoly,carvalho2019large,bilbiie2020aggregate,gouin2022productivity,ferrari2022firm,bendetti2024hetereogeneous,collard2025neoclassical}, and the literature on external effects in macroeconomics following \cite{baxter1991productive}. Related to our conclusion, \cite{hamano_endogenous_2017, hamano_monetary_2022} highlight the existence, in a CES economy, of the welfare tradeoff between higher average productivity and fewer available varieties. We show that for a CES economy, the two effects perfectly offset each other in PE. In GE, the only welfare effect is the labor savings and the income windfall that this generates. The tradeoff between better firm selection and loss of variety becomes welfare-relevant only away from CES. In particular, when love-of-variety is larger than implied by the CES benchmark. \cite{barlevy2002sullying}, \cite{caballero2005cost}, \cite{ouyang2009scarring}, \cite{moreira2016firm}, and \cite{acabbi2022human} suggest that the cleansing effects may be lower than expected. \cite{barlevy2002sullying} and \cite{acabbi2022human} argue that procyclical job match quality dominates the cleansing role of recessions. \cite{ouyang2009scarring} considers a setting in which recessions may halt the entry of very productive firms and, thereby, reducing long-run growth. \cite{moreira2016firm} shows that firms born in recessions remain smaller than their counterparts born in good times. Our main result is related but distinct: even when measured productivity improves during recessions, this may induce a loss of varieties that manifests in lower output and welfare. 

Finally, our paper is closely related to \cite{ardelean_how_2006}, \cite{baqaee2023supplier}, and \cite{poilly}, who provide estimates for love-of-variety. Using firm-to-firm transactions, \cite{baqaee2023supplier} estimate the elasticity to new suppliers to be .3. Using establishment entry, \cite{poilly} estimate an elasticity of .49. Our estimates based on foreign demand shocks suggest that the aggregate love-of-variety effect is approximately .5-.6, broadly consistent with the range implied by \citet{baqaee2023supplier} and \citet{poilly}, and significantly larger than implied by a CES aggregator.

\section{Model}
\label{sec:pe}
\subsection{Monopolistic Competition with HSA Demand Structure}\label{sec:hsa_setup}
We analyze a standard model of firm dynamics \citep{hopenhayn1992entry,melitz_impact_2003} with monopolistically competitive firms \'a la \cite{dixit1977monopolistic}. We begin with partial equilibrium, where a single representative household spends a fixed budget, $\mathcal{I}$, on a composite good. The good is produced by a perfectly competitive downstream firm. This intermediary takes a bundle of intermediate products, produced by upstream firms, as inputs to its \textit{homothetic, single aggregator} (HSA) technology.

\paragraph{Households} There is a representative household with utility $\mathcal U(Y)$, $\mathcal U^\prime>0$, who spends a fixed amount of income $\mathcal{I}$ on a final consumption good $Y$. 

\paragraph{Production}
Let $\bm\Omega$ be the index set of all active `upstream' firms. Each firm, $\omega\in \bm\Omega$, is a monopolistic producer of its unique variety. To produce, a firm must pay overhead $f^c > 0$ in units of labor. The firm's production function is given by $y_\omega = z_\omega l_\omega$, where $z_\omega$ is its idiosyncratic productivity and $l_
\omega$ the amount of labor used in production. Given an output target $y_\omega$, the total labor demand is $l_\omega(y_\omega) = y_\omega / z_\omega + f^c$. Setting labor as the numeraire ($w=1$), this is also the firm's wage bill. We denote by $m(z)$ the productivity density of active firms and by $M\equiv \int m(z)dz$ the number of available varieties.\footnote{As a notational convention, we use $m(z)$ when referring to density functions, and $\mu(z)$ when referring to probability density functions that integrate to 1, i.e. $\mu(z):=m(z)/M$.}

Firms operate in monopolistic competition. Let $\bm{p} = \{p_\omega : \omega\in\Omega\}$ be the vector of prices of all firms in the market, and let the demand faced by the firm producing variety $\omega $ be $D(p_\omega, \bf{p})$. Then, the firm chooses its price to maximize profits
\begin{align} \label{eq:hsa_profit_max}
    \pi \equiv \,\max_{p_\omega \geq 0} \; D\big( p_\omega,\bm{p}\big) (p_\omega - 1/z_\omega)-f^c.
\end{align}
The demand function for intermediate goods is symmetric across all $\omega \in \Omega$. It is derived from the profit maximization problem of the final good producer, which we describe next.

\paragraph{Industry Output} A perfectly competitive firm buys varieties and combines them into a homogeneous consumption good, $Y$, sold to the household. We assume that this intermediary operates a \textit{homothetic, single aggregator (HSA)} technology:  $Y(\bm{y})$, where $\bm y$ is a product bundle of intermediate inputs. The key feature of the \textit{HSA} class of aggregators is the existence of a single aggregate price $A(\bm{p})$ that summarizes the competitive state of the economy.\footnote{In general, the unit-cost index $P(\bm{p})$ and the aggregate price index $A(\bm{p})$ are separate objects, and they do not need to coincide.} 
This aggregate price statistic is implicitly defined via the adding-up constraint (AUC)%
\begin{align}\label{eq:hsa_addup}\tag{AUC}
    1 = \int_{\bm \Omega} s\left( \frac{p_\omega}{A(\bm{p})}\right) \d \omega = \int s \left( \frac{p(z)}{A(\bm{p})}\right) m(z) \d z,
\end{align} %
where $s(x)$ is the market-share function, which is strictly decreasing in \textit{normalized prices} $x = p/A(\bm{p})$. The market-share function is the defining primitive of any HSA demand structure, and \eqref{eq:hsa_addup} can be thought of as a market clearing condition. It implies the demand function in eq. \eqref{eq:hsa_profit_max} through the relationship $s_\omega = (p_\omega D_\omega)/\mathcal{I}$.

We denote by $P(\bm{p})$ the cost index of the final good producer with unit productivity. This index solves the cost-minimization problem at prices $\bm{p}$. Combining the adding-up constraint with the market share identity, $s\left(\frac{p_\omega}{A(\bm{p})}\right) = \frac{\partial \ln P(\bm{p})}{\partial \ln p_\omega},$ we obtain the following relation between the unit cost function $P(\bm{p})$ and the market share function 
    \begin{align}
    P(\bm{p}) = A(\bm{p}) \exp \left[ -\int s\left(\frac{p(z)}{A(\bm{p})}\right) \Phi\left(\frac{p(z)}{A(\bm{p})}\right) m(z)\d z \right], \text{ where } \Phi(x) \equiv \frac{1}{s(x)}\int_{x}^{\infty} \frac{s(\lambda)}{\lambda} \d \lambda > 0. \label{eq:hsa_price}
\end{align}%
The cost index $P(\bm p)$ depends positively on the aggregate price $A(\bm p)$ and negatively on $\Phi(x)$, which is the productivity gain created by the product sold at normalized price $x$, and can be interpreted as a measure of \textit{internal returns to variety} \citep{MatsuyamaUshchev2023_LoveForVariety}. The final producer's marginal cost of production is then composed of the cost index $P(\bm p)$ and its productivity $\upsilon$. We allow productivity to depend directly on the number of varieties $M$ to capture \textit{external returns to variety}. These are any TFP-gain of the downstream firm which i) is caused by having more varieties as production inputs and ii) does not distort the relative demand for any two given varieties. Formally, the final good producer's marginal cost is given by $1/\upsilon(M)\cdot P(\bm p)$.
We obtain aggregate output by combining productivity $\upsilon(M)$, the cost index, $P$ of eq. (\ref{eq:hsa_price}), and the expenditure equation $YP=\mathcal{I}$ %
\begin{align}\label{eq:hsa_output}
    Y &= \upsilon(M) \cdot \mathcal{I} \cdot A(
\bm{p})^{-1}\cdot \exp \left[ \int s\left(\frac{p(z)}{A(\bm{p})}\right) \Phi\left(\frac{p(z)}{A(\bm{p})}\right) m(z) \d z \right]. 
\end{align}

\paragraph{Love-of-Variety} Love-of-Variety (LoV) of an aggregator, as defined in \cite{benassy1996taste}, is the elasticity of output $Y$ with respect to $M$, keeping the amount of inputs constant: $\mathcal{L}(M)=d\log Y/d\log M$. LoV is designed to capture the variety gains in eq. \eqref{eq:hsa_output}.\footnote{Formally, if $\omega$ indexes firms on the interval $[0, M]$ without any loss of generality, then\begin{align*}
    \mathcal{L}(M) = \frac{\partial \log Y\Big(\big\{ \frac{1}{M}\ \vert\ \omega \in [0, M] \big\}\Big)}{\partial \log M} = \frac{\partial \log Y(\bm{1})}{\partial \log M} - 1,
\end{align*} %
where $\bm{1}$ is a vector of ones, and constant returns to scale of $Y$ allows factoring out $1/M$ from $Y$, which explains the $(-1)$ term on the right side.} 
From \cite{MatsuyamaUshchev2023_LoveForVariety}, we note that LoV in our HSA economy can be written as \begin{align}\label{eq:hsa_lov_def}
    \mathcal{L}(M) \equiv \underbrace{(\ela_M \upsilon)(M)}_{\mathcal{L}^{ext}(M)} + \underbrace{\Phi \left( s^{-1}\left(M^{-1}\right)\right)}_{\mathcal{L}^{int}(M)}
\end{align} where $\ela_M \upsilon$ is the elasticity of TFP $\upsilon$ with respect to $M$ and captures the external return to varieties $\mathcal{L}^{ext}$, whereas $\mathcal{L}^{int}$ summarizes the internal gains from varieties through the demand function. The classical \cite{dixit1977monopolistic} CES aggregator is a special case of this formulation. Formally, it is the only HSA aggregator such that $\mathcal{L}^{int}(M)$ is constant.

\paragraph{Pricing}

Firms maximize profits in eq. \eqref{eq:hsa_profit_max} by choosing relative prices $x \equiv p/A$ given their normalized marginal cost, $z^{-1}/A$. As a technical condition, we assume that marginal revenues are strictly increasing in the relative price along the demand curve.\footnote{A sufficient condition for this assumption to hold is that the aggregator satisfies Marshall's Second Law of Demand \citep{MatsuyamaUshchev2022_SelectionSorting}.} This gives rise to a strictly increasing pricing function, $X$: %
\begin{align}
    x(z) \equiv \frac{p(z)}{A} = X\bigg( \frac{z^{-1}}{A} \bigg), \quad X' > 0,
\end{align} %
and implies that profits can be written as a function of only normalized marginal cost: $\pi\big( \frac{z^{-1}}{A} \big)$. 

\paragraph{Entry and Exit}
Ex-ante homogeneous upstream firms pay a fixed cost of entry $f^e>0$ in units of labor to enter and draw their productivity level $z$ from a baseline probability distribution $\mu^E$ with support $\bm Z \subseteq (0, \infty)$. Then, they choose to start producing and pay the fixed cost $f^c$ if and only if $\pi(z_\omega^{-1}/A) \geq 0$. This fixed cost implies the existence of a threshold productivity $\underline z$ such that the zero-profit condition
\begin{align}
    \pi\big(\underline{z}^{-1}/A\big) = 0 \label{eq:hsa_zpc}\tag{ZPC}
\end{align}
holds. Firms enter until the expected net profit equals the cost of entry, i.e., the following free-entry condition is satisfied: 
\begin{align}
    f^e \geq \int_{\bm Z} \max \{ \pi\big(z^{-1}/A\big), 0 \}\mu^{E}(z)\d z = \int_{\underline{z}}^\infty \pi\big(z^{-1}/A\big)\mu^{E}(z)\d z . \label{eq:hsa_free_entry}\tag{FEC}
\end{align}%
\paragraph{Dynamics} Finally, even if the firm pricing and entry problems are static, periods are linked through the presence of incumbent firms. For any given period $t$, incumbents are firms that were active in the economy at $t-1$. Their entry cost is sunk, and their productivity is known. We denote the distribution of incumbent firms by $m_{t-1}$. Then, the distribution of active firms at time $t$ is  given by the law of motion%
\begin{align}\label{eq:hsa_lom}\tag{LOM}
    m_t(z) =  \mathbbm{1}_{[\underline{z}_t, \infty)}(z)\cdot m_{t-1}(z) + \mathbbm{1}_{[\underline{z}_t, \infty)}(z) \cdot E_t \cdot \mu^{E}(z), \quad (z \geq 0),
\end{align} %
for some arbitrary incumbent distribution $m_{t-1}$. The first term represents the surviving incumbents and the second term the surviving entrants. 

\subsection{Equilibrium}\label{sec:hsa_eq}
An equilibrium in the economy is given by a triplet $(E_t, \underline{z}_t, A_t)$ of entrants, marginal firm, and aggregate price. The economy admits three types of equilibria: i) \textit{equilibrium with entry}; ii) \textit{equilibrium without entry}; and iii) \textit{knife-edge equilibrium}. All equilibria are formalized in the following definition. Finally, in our economy, a \textit{steady-state} is an equilibrium where the distribution does not change over time.

\begin{definition}[Equilibrium] \label{definition:HSA_equilibrium}
   Given parameters $\bm{\phi}_t=(f^c, f^e, \mathcal{I})$, and incumbents $m_{t-1}$:
    \begin{enumerate}
    \item[(1)] An equilibrium with entry is a triplet $(\underline{z}^\star_t, A^\star_t, E^\star_t)$ satisfying \eqref{eq:hsa_zpc}, \eqref{eq:hsa_free_entry} holding with equality, \eqref{eq:hsa_addup}, and \eqref{eq:hsa_lom}  with $E^\star_t>0$. 
    
    \item[(2)] An equilibrium without entry is a triplet $(\underline{z}^\star_t, A^\star_t, 0)$ satisfying \eqref{eq:hsa_zpc}, \eqref{eq:hsa_free_entry} holding with strict inequality, \eqref{eq:hsa_addup}, and \eqref{eq:hsa_lom} with $E^\star_t=0$.
    
    \item[(3)] A knife-edge equilibrium is a triplet $(\underline{z}^\star_t, A^\star_t, 0)$ satisfying \eqref{eq:hsa_zpc}, \eqref{eq:hsa_free_entry} holding with equality, \eqref{eq:hsa_addup}, and \eqref{eq:hsa_lom}  with $E^\star_t=0$.
    \end{enumerate}
    A time-$t$ equilibrium is a steady-state if at any future time $t+s$ $\bm\phi_t = \bm \phi_{t+s}$ and $(\underline{z}^\star_t, A^\star_t, E^\star_t)=(\underline{z}^\star_{t+s}, A^\star_{t+s}, E^\star_{t+s}),\,\forall s\in \mathbb{N}$.
\end{definition}


We note that the equilibrium conditions do not depend on $\mathcal{L}^{ext}$. External returns to varieties only affect the production of the downstream intermediary firm. This change in production does not translate into higher demand for the upstream firms, because total expenditure on the composite good is fixed to $\mathcal{I}$. Economies with higher LoV have larger final good output and a proportionately lower price index such that the revenue of the intermediary remains fixed at $\mathcal{I}=PY$. 

From the definition, we derive the property that any two equilibria with entry that share the same parameter vector $\bm \phi$, but differ in their incumbent distributions, have the same equilibrium productivity cutoff and aggregate price index. This is true even though the firm distributions can be different in either equilibrium. All proofs are relegated to Appendix \ref{app:proofs}.

\begin{lemma}[Invariance with Entry]\label{lemma:hsa_invariance_A_z}
    Let $\bm{\phi}$ be a parameter vector and $m_{t-1}$ and $m_{t-1}'$ be two distributions of incumbents. Suppose, the economy given $(\bm \phi, m_{t-1})$ and the economy given $(\bm \phi, m_{t-1}^\prime)$ both attain an equilibrium with entry. Then, the aggregate price statistic and cutoff coincide in both economies: $A = A^\prime  \text{ and }  \underline{z} = \underline{z}^\prime.$
\end{lemma}

This result reveals a fundamental symmetry in our model: when entry is active, the competitive state of the economy (captured by $A$ and $\ulz$) depends only on 
parameters, not on history, encoded in $m$. The intuition is as follows: firm profits depend 
on i) fixed costs $f_c$, ii) market size $\mathcal I$, and iii) competitive pressure $A$. Since the zero-profit and free-entry conditions together determine $\underline z$ and $A$ using only these variables, the equilibrium values are independent of the incumbent distribution $m_{t-1}$. Entry acts as the equilibrating force: more or fewer entrants adjust $A$ to its unique equilibrium level, regardless of how many incumbents are already present. Because entrants and incumbents affect $A$ symmetrically through the pricing equation, incumbents have no competitive advantage—their presence simply crowds out some entry that would otherwise occur.

In contrast, in an equilibrium without entry, the cutoff depends on the incumbents' distribution. Eq. \eqref{eq:hsa_free_entry} is slack, the \eqref{eq:hsa_zpc} determines the cutoff $\ulz$, and eq. \eqref{eq:hsa_addup} determines the aggregate price $A$. If the incumbent distribution $m_{t-1}$ is weighted towards low-productivity firms, the aggregate price $A$ is high due to weak competition. The opposite holds if $m_{t-1}$ is weighted toward high-productivity firms. Then, from the \eqref{eq:hsa_zpc}, the cutoff has to be different in the two competitive settings. Different histories produce different competitive pressures, making $\underline z$ and $A$ path-dependent.

These results yield a striking asymmetry in business cycle dynamics. During any boom with positive entry, the measures of competition ($A$) and minimum productivity ($\underline z$) are identical across economies, regardless of the incumbent distributions. The composition of firms may differ, but the economic outcomes do not. Recessions break this symmetry. When entry stops, the shape of the inherited distribution $m_{t-1}$ determines how competitive the market becomes, what cutoff emerges, and which firms survive. History matters, but only during contractions. Paraphrasing Tolstoy: all expansions are alike, but each recession is different in its own way.

Now, the natural questions are: i) how do business cycle dynamics shape the incumbent distribution over time, and ii) how do such changes affect output and welfare? We study these questions in the next section.

\subsection{Business Cycles} \label{sec:hsa_business_cycles}
To isolate the long-run consequences of temporary fluctuations, we compare steady states before and after shocks. We return to the full dynamics, including transitions, at the end of the section. We define business cycles as one-time, unexpected shocks to the fixed costs of production, $f^c$. Such cycles could, for example, be driven by varying financing conditions in working capital constraints.\footnote{An alternative interpretation of these shocks is that the fixed costs are produced in a different sector $c$ from labour and that downturns are triggered by reductions in the TFP of sector $c$.} We focus on this kind of recession as they have the highest potential to generate \textit{cleansing effects} since fixed costs directly govern selection. We consider alternative sources of business cycle fluctuations in Section \ref{sec:otherbc}. Our overarching goal is to characterize the effects of recessions in terms of cleansing effects, defined as changes in the productivity distribution, as in Figure \ref{fig:tfp_entrant_exiters}, and overall output effects.

Formally, consider an economy running through three phases, capturing the dynamics of crises. In phase $\tau = 1$, fixed costs equal $f_l^c$, then unexpectedly increase to $f_h^c>f_l^c$ in phase $\tau = 2$, and subsequently revert to $f_l^c$ in phase $\tau = 3$; subscripts $l$ and $h$ refer to low and high.\footnote{Note that considering the phase 3 reversion to $f_l^c$ is equivalent to studying the limit of a slow-moving process of mean-reversion of the fixed cost to the long-run mean of $f_l^c$. We characterize this problem in Appendix \ref{prop:smooth_reversion_pareto_smooth_fc}.} Our comparison of interest is between phases 1 and 3, in which the vector of parameters characterizing the economy is the same, but the history is different. As the economy does not feature slow-moving dynamics, the three phases can be thought of as three long-run steady states or as three consecutive periods.

Suppose that in phase 1, the economy achieves a steady state characterized by $\underline{z}_1$, $A_1$, and the distribution $m_1$. The distribution at the end of phase 1 is given by $m_1 = m_0(z) \mathbb{I}_{\{z\geq \underline z_1\}} + E_1\mu^E(z)\mathbb{I}_{\{z\geq \underline z_1\}}$ for some initial distribution $m_0$ such that $E_1>0$.
In phase 2, the rise of fixed costs implies that the cutoff, determined by the zero-profit condition (eq. \ref{eq:hsa_zpc}), increases to $\underline z_2 > \underline z_1$. This generates the new distribution $m_2 = m_1(z) \mathbb{I}_{\{z\geq \underline z_2\}}$, forcing some of the incumbents to exit.
We formalize this process in Proposition \ref{prop:hsa_phase2} and panels A and B of Fig. \ref{fig:entry_exit}.

\begin{prop}[Equilibrium During the Recession]\label{prop:hsa_phase2}
    Let the phase-1 equilibrium given incumbents $m_0$ and $\bm \phi_1 =(f_l^c, f^e,\mathcal{I})$ be characterized by $(\underline z_1, A_1, E_1)$ with $E_1>0$ and distribution $m_1$. Consider the phase-2 equilibrium under $\bm \phi_2 =(f_h^c, f^e,\mathcal{I})$ with incumbents $m_1$ and $f^h>f^c$. Then, in phase 2, the cutoff productivity increases $\underline z_2 > \underline z_1$, the aggregate price increases $A_2 > A_1$, and there is no entry $E_2 = 0$.%
    \end{prop}
In phase 3, the fixed costs revert to $f_l^c$. Survivors of the crisis are now incumbents and new firms enter. Equations \eqref{eq:hsa_addup}, \eqref{eq:hsa_zpc} and \eqref{eq:hsa_free_entry} hold with equality, and $E_3 > 0$. Panel C of Figure \ref{fig:entry_exit} shows the post-recession distribution $m_3$. Note that the productivity of the least productive active firm is the same before and after the cycle, since by Lemma \ref{lemma:hsa_invariance_A_z}, the cutoff $\underline z_3$ equals the pre-crisis cutoff $\underline z_1$. Yet, the distribution of firms differs and $m_1 \neq m_3$ due to the presence of incumbents. 

This surprising result can be understood by noting that in this model, there is an important asymmetry between entry and exit. Entry occurs under the veil of ignorance: firms do not know their productivity when they choose to enter. On the other hand, exit occurs when firms already know their type $z$, which implies that exit is \textit{selected}. In other words, entry occurs along the entire support above the cutoff $\underline z$, while exit always happens at the bottom of the distribution. 

\begin{figure}[htbp]
    \centering
    \includegraphics[width=1\linewidth]{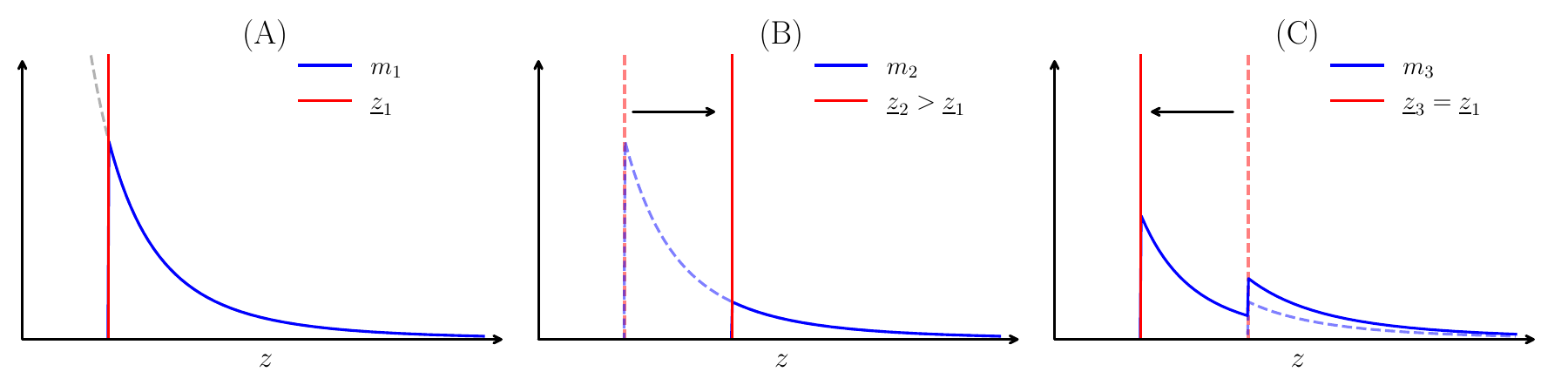}
    \caption{The figure shows the entry and exit dynamics over the business cycle. Panel (A) shows the distribution $m_1$ before the shock hits. Upon impact, the left tail of firms with productivity less than $\underline z_2$ leave, creating distribution $m_2$ (B). Finally, after fixed costs return to pre-shock levels and new firms drawn from the baseline distribution ($\mu^E$) enter, $m_3$ becomes the distribution of productivities in the market (C). The dashed light-blue line refers to $m_{t-1}$.}
    \label{fig:entry_exit}
\end{figure}

The asymmetry between unselected entry and selected exit means that recessions and booms can change the makeup of the productivity distribution and the equilibrium number of active firms. In Theorem \ref{thm:hsa_tradeoff}, we formalize the trade-off between the \textit{cleansing effects} and the \textit{variety effects} of recessions.

\begin{thm}[Fundamental Trade-Off of Recessions]\label{thm:hsa_tradeoff}
Let
$$
\zeta_E(z)\equiv \frac{\mu^E(z)\,\mathbb{I}\{z\ge \ulz_1\}}{\Pr_{\mu^E}(z\ge \ulz_1)}
\qquad\text{and}\qquad
\zeta_X(z)\equiv \frac{m_1(z)\,\mathbb{I}\{\ulz_1\le z\le \ulz_2\}}{\int_{\ulz_1}^{\ulz_2} m_1(z)\,\d z}
$$
denote the successful-entrant and exiter probability densities, respectively, and let $
\mu_1(z)\equiv \frac{m_1(z)}{M_1}$ and $ \mu_3(z)\equiv \frac{m_3(z)}{M_3}
$ denote the pre- and post-recession productivity probability densities, respectively. Then, the following holds

\begin{itemize}
    \item 
 If $\zeta_E$ strictly first-order stochastically dominates $\zeta_X$: $\zeta_E\succ_{\mathrm{1}}\zeta_X$, then the post-recession mass of varieties falls and the productivity distribution mass shifts right:
\begin{align}\label{hsa:loss_of_variety_right_shift_pdist}
    \zeta_E \ \succ_{\mathrm{1}}\ \zeta_X 
\quad\Longrightarrow\quad
\Delta M<0\ \ \text{and}\ \ \mu_3 \ \succ_{\mathrm{1}}\ \mu_1.
\end{align}
\item  Conversely, if $\zeta_X$ strictly first-order stochastically dominates $\zeta_E$: $\zeta_X\succ_{\mathrm{1}}\zeta_E$, then the post-recession mass of varieties rises, and the productivity distribution mass shifts left:
\begin{align}
    \zeta_X \ \succ_{\mathrm{1}}\ \zeta_E
\quad\Longrightarrow\quad
\Delta M>0\ \ \text{and}\ \ \mu_1 \ \succ_{\mathrm{1}}\ \mu_3.
\end{align}
\end{itemize}
\end{thm}
Theorem \ref{thm:hsa_tradeoff} highlights that a recession may either reduce the set of available varieties and improve the post-recession productivity distribution, or expand the set of available varieties and deteriorate the post-recession productivity distribution. If entrants are better than exiters, in the FOSD sense, then the number of active firms falls. As a consequence, we argue that, if a recession has \textit{cleansing effects}, then it induces a loss of variety. To sign the output and welfare implications of recessions requires asking which of these two effects dominates and fundamentally depends on the relative value assigned to variety versus shifts in the productivity distribution. In the remainder of the paper, we focus on the empirically relevant case in which recessions reduce variety and shift the productivity distribution rightward, i.e. eq. \eqref{hsa:loss_of_variety_right_shift_pdist}, consistent with the evidence for Spain shown in Figure \ref{fig:tfp_entrant_exiters}. A simple sufficient condition for $\zeta_E \; \succ_{\mathrm{1}} \; \zeta_X$ is that incumbents are distributed proportionally to the entrants distribution on the exit region, i.e.\ $m_{t-1}(z)=c\,\mu^E(z) \; \forall z\in[\ulz_1,\ulz_2]$ for some constant $c>0$.

To characterize the behavior of output and welfare pre- and post-cycle (i.e., $Y_1$ versus $Y_3$), we use eq. \eqref{eq:hsa_output} and introduce three auxiliary objects. First, the sales-share function $r(z^{-1}/A) = (s \circ X) (z^{-1}/A)$ maps normalized marginal costs to revenue shares. Second, the function $\theta_1(z) \equiv r(z^{-1}/A_1)m_1(z)$ is the equilibrium sales share density of phase 1. Integrating it over $[s_1, s_2]$ yields $\Theta_1(s_1, s_2)\equiv \int_{s_1}^{s_2} \theta_1(z) \d z $, the sales share of all varieties produced with productivity $s_1\leq z\leq s_2$. %
Third, the function $\theta_1^E(z) \equiv \frac{r\left( \frac{z^{-1}}{A_1} \right)\mu^E(z)}{\int_{\ulz_1}^{\infty} r\left(\frac{z^{-1}}{A_1}\right) \mu^E(z) \d z}$ is the entry sales share density. It gives the sales share attributable to new entrants with productivity $z$ relative to the total sales share of all new entrants. Both densities have support $[\ulz_1, \infty)$. Theorem \ref{thm:hsa_cleansing} characterizes the effect of a recession on long-run output.

\begin{thm}[Cleansing Effects of Cycles]\label{thm:hsa_cleansing}
 Let $m_1$ be the equilibrium distribution of an equilibrium with entry. Let $\Delta \log Y\equiv \log Y_3 - \log Y_1$. 
 Then:
    \begin{align}\label{eq:hsa_cycle_Y}
       \hspace{-1.3cm}\Delta \log Y &= \underbrace{\Delta \log \upsilon(M)}_{\Delta^{ext}} - \underbrace{\Delta\log A(p)}_{\substack{\text{change in} \\ \text{ aggregate price}}} + \notag \\ & \underbrace{\overbrace{\Theta_1(\ulz_1, \ulz_2)}^{\text{rev. share weight}} \bigg[ \overbrace{\E_{\theta_1^E} \Big[(\Phi\circ X) \Big(\frac{1}{zA_1}\Big) \Big]}^{\text{avg. entrant contribution}} - \overbrace{\E_{\theta_1} \Big[  (\Phi\circ X) \Big(\frac{1}{zA_1}\Big) \ \Big|\ \ulz_2 \geq z \geq \ulz_1 \Big]}^{\text{avg. exiter contribution}} \bigg]}_{\Delta^{int} = \substack{ \text{change in average} \\ \text{surplus per variety}}},
    \end{align}
    where the first term, $\Delta^{ext}$, is the effect of the cycle through the variety externality, the second term, $\Delta\log A = 0$ by Lemma \ref{lemma:hsa_invariance_A_z}, is the change in the aggregate price, and the last term can be understood as the change in the average consumer surplus contribution of entrants vs exiters.
\end{thm}
The long-run effect of temporary increases in the fixed cost can be decomposed into three distinct terms. First, the cycle reduces the mass of varieties. Intuitively, because revenue shares are increasing in productivity, replacing the lost revenue share of exiting firms requires fewer successful entrants on average.
If the external effects are increasing in the number of firms, the loss of varieties generates a long-run output loss through $\Delta^{ext}<0$. Next, the \textit{cleansing effects} of the recession improve average productivity. However, this gain is partly offset by lower entry, such that the aggregate price $A(p)$ is unchanged, as stated in Lemma \ref{lemma:hsa_invariance_A_z}. Finally, this replacement process affects the surplus generated in the economy, above and beyond the effects on the aggregate price $A(p)$. This effect is summarized in the last term of eq. \eqref{eq:hsa_cycle_Y}. Under HSA, the elasticity of demand for each variety is free to vary with the price/quantity of the variety itself. The \textit{cleansing effects} imply that, on average, high price (low productivity) firms are replaced by lower price (higher productivity) ones. This replacement can translate into a higher or lower surplus, depending on whether the elasticity of demand is higher or lower for entrants vs exiting firms. Formally, this property is summarized by the monotonicity of $\Phi$ in the relative price $x$. Determining the sign of the change in average surplus $\Delta^{int}$ requires imposing additional properties on $\Phi$, $\theta_1$, and $\theta_1^E$. We do so through some known special cases.

\paragraph{Example 1: CES} First, suppose that $\Phi$ is globally constant. This is the case if and only if the demand system is CES \citep{MatsuyamaUshchev2023_LoveForVariety}. Then, the two expectations in eq. \eqref{eq:hsa_cycle_Y} cancel and $\Delta^{int}=0$. We can easily establish  a partial converse: if the internal effect is zero for all cutoff increases, then the demand system must be CES.

\begin{prop}[No internal effects iff CES]\label{prop:CES_HSA}
    Suppose the demand system is CES, then $\Delta^{int}=0$. Conversely, if $\Delta^{int}=0$ for all $f^c_h>f_l^c$, then the demand system is CES. As a consequence, if the demand system is CES, the long-run effect of recession is fully driven by the external effects $\Delta^{ext}$. Under CES, the long-run effect is zero if and only if there are no external effects $\Delta^{ext}=0$.
\end{prop}

Proposition \ref{prop:CES_HSA} provides a powerful benchmark. If the economy aggregates varieties through a \cite{dixit1977monopolistic} CES aggregator, then the elasticity of demand is constant across the price distribution. Therefore, replacing low-productivity exiters with higher-productivity entrants generates offsetting effects: the surplus gain per variety from better entrants exactly cancels the surplus loss from having fewer varieties. As a consequence, absent external variety effects, there are no long-run effects of business cycles on output. This result is surprising since, as we have shown before, the ex-ante and ex-post productivity distributions differ, and firms are, on average, more productive after a recession. We return to this case later in the paper. 

\paragraph{Example 2: No Incumbents}
Suppose the economy is \textit{empty} right before phase 1, namely, there are no incumbent firms that were selected in previous recessions ($m_0(z)=0,\,\forall z$). This immediately implies that $\theta_1 = \theta_1^E$. Then, the change in average surplus $\Delta^{int}$ simplifies to \begin{align}\label{eq:hsa_cycle_Y_noinc}
    \Delta^{int} = \Theta_1(\ulz_1, \ulz_2) \bigg[ \E_{\theta_1^E} \Big[(\Phi\circ X) \Big(\frac{1}{zA_1}\Big) \Big]
    - \E_{\theta_1^E}  \Big[  (\Phi\circ X) \Big(\frac{1}{zA_1}\Big) \ \Big|\ \ulz_2 \geq z \geq \ulz_1 \Big] \bigg],
\end{align}
and the sign of the internal effects is pinned down by whether $\Phi$ is increasing or decreasing. 
If $Phi$ is decreasing, the economy is characterized by a larger internal-effects contribution from varieties produced by higher-productivity firms. The opposite is true for an \textit{increasing} $\Phi$. As firms are, on average, more productive after the cycle, the internal effects are positive if $\Phi$ is decreasing, and negative if $\Phi$ is increasing. In the absence of external variety effects, the sign of $Phi$ fully governs the long-run output effects of the recession.

\citet{MatsuyamaUshchev2023_LoveForVariety} show that a low-level condition on the demand structure, \textit{Marshall's Second Law of Demand} (MSLD), is sufficient to sign the derivative of $\Phi$. In particular, if the price elasticity of demand for each variety is globally strictly increasing in its own price (MSLD), then $\Phi$ is decreasing. If the opposite holds and the own-price elasticity of demand is decreasing (anti-MSLD), then $\Phi$ is increasing.
Proposition \ref{prop:hsa_cleansing_msld} summarizes these results. 
This result is particularly relevant, given that recent empirical work finds support for Marshall's Second Law \citep{mayer2021product}.

\begin{prop}[Cleansing Effects of Business Cycles under MSLD]\label{prop:hsa_cleansing_msld}
   Suppose the economy starts with no incumbents. Also, suppose that there are no external variety effects $\Delta^{ext}=0$. Then, the following holds:
   \begin{itemize}
       \item if the aggregator satisfies MSLD,  the long-run effect of recessions is positive: $\Delta\log Y>0$.
       \item if it satisfies the anti-MSLD, then the long-run effect of recessions is negative: $\Delta\log Y<0$.
   \end{itemize}
\end{prop}

\paragraph{Example 3: External Effects of Varieties} So far, we have considered special cases that simplify the internal effects of varieties. Here, we study functional forms for the external effects $\Delta^{ext}$. We consider two simple cases: i) the externality is isoelastic in the mass of firms $M$, as in \citet{benassy1996taste}; ii) the externality is exponential in the mass of firms $M$. Proposition~\ref{prop:hsa_cleansing_external_iso} characterizes the change in output after the recession under these two specifications.
\begin{prop}[Special Cases of External Effects] \label{prop:hsa_cleansing_external_iso}
    Suppose that the external effects take the form $\upsilon(M)=e^{qM}$ for some $q\in \Re$. Then, the recession-induced output change is given by $\Delta\log Y = q\Delta M+ \Delta^{int}$ with $\Delta M<0$. 

    Suppose otherwise, that the external effects take the form $\upsilon(M) = M^q$ for some $q\in\Re$. Then, the recession-induced output change is given by $\Delta\log Y = q\Delta\log M+ \Delta^{int}$ with $\Delta \log M<0$. 
\end{prop}
These simple cases highlight how the external effects, provided that $q>0$, reduce post-recession output for any given change in the internal effects. A positive coefficient $q$ characterizes economies in which the existence of multiple input varieties generates a positive externality on the TFP of the aggregating firm.

In summary, temporary recessions driven by fluctuations in the fixed cost of production can have quasi-permanent effects. The \textit{memory} of this economy is encoded in the firm productivity distribution. Business cycles in which the fixed costs temporarily increase feature i) \textit{cleansing effects}, in the sense that the average post-recession productivity increases; ii) a net loss of variety; and iii) ambiguous effects on output. This ambiguity is driven by how changes in the firm productivity distribution translate into output and pricing choices. Depending on the specific structure of the HSA aggregator, the economy might experience a permanent loss of output and welfare even if the average productivity of firms increases. 

Hereafter, we choose a particular HSA aggregator that allows us to characterize the economy analytically. 

\subsection{Industry Output -- Isoelastic Aggregator} \label{sec:hsa_aggregator_ben}
Consider a special case of the \textit{HSA} aggregator defined in Section \ref{sec:hsa_setup}. The final good firm aggregates inputs with a generalized CES function \`a la \cite{dixit1975monopolistic}, \cite{ethier1982national}, \cite{benassy1996taste}:%
\begin{align}\label{eq:aggr}
    Y = M^{q - \frac{1}{\sigma - 1}} \; \left[ \int y(z)^{\frac{\sigma -1}{\sigma}} m(z) dz \right]^{\frac{\sigma}{\sigma-1}}.
\end{align} %
Two key parameters govern the aggregator: the constant elasticity of substitution, $\sigma$, which measures product similarity, and love-of-variety (LoV). We highlight in Remark \ref{rem:q} that in our aggregator, LoV is exactly equal to $q$. 
The classic \cite{dixit1977monopolistic} CES aggregator is a special case where $q = q^{CES} := 1/(\sigma - 1)$. Hence, $q>1/(\sigma-1)$ implies stronger LoV than in the CES case and vice versa.\footnote{Readers more familiar with the \cite{melitz_impact_2003} model aggregation can think of our aggregator as being  $Y = M^{q+1} \; \left[ \int y(z)^{\frac{\sigma -1}{\sigma}} \mu(z) dz \right]^{\frac{\sigma}{\sigma-1}}$, with $\mu(z)=m(z)/M$ being a probability density function.}
\begin{remark}\label{rem:q}
    The aggregator in eq. \eqref{eq:aggr} features constant external variety effect $\mathcal{L}^{ext}=q$ and no internal variety effect $\mathcal L^{int}=0$. As a consequence, the total love-of-variety is $\mathcal L=q$.
\end{remark}
The aggregator in eq. (\ref{eq:aggr}) satisfies a number of desirable properties, striking a balance between the tractability of CES aggregation and the ability to govern separately \textit{love-of-variety} and the price elasticity of demand. In Appendix \ref{ext:hom-aggr} we prove that in the broader class of homothetic preferences described in \cite{matsuyama2023non}, the aggregator in eq. (\ref{eq:aggr}) is the only one that jointly satisfies separability into an iso-elastic variety externality $\mathcal L^{ext}=q$ and an aggregator with zero LoV $\mathcal L^{int}=0$. In this sense, eq. (\ref{eq:aggr}) is the only aggregator in which the variety externality can be cleanly separated from demand aggregation.

\subsubsection{Equilibrium} \label{subsec:2.2}
Given the aggregator in eq. \eqref{eq:aggr}, we can derive the demand for each variety explicitly, which allows a closed-form solution for the equilibrium. Since the variety externality does not affect relative prices, the demand schedule is the same as under standard CES aggregation. As a consequence, the monopolistically competitive intermediate good producers choose a price $p(z)=\frac{\sigma}{\sigma-1}\frac{1}{z}$. At the optimum, firm profits, $\pi$, depend on the firm's productivity $z$ and on the intensity of competition, summarized by the endogenous productivity distribution $m$: 
%
\begin{align} \pi( z, m) = \frac{\mathcal{I}}{\sigma} \frac{z^{\sigma-1}}{\int  z^{\sigma -1} m(z) \d z} - f^c.\label{eq:pi-pe}\end{align}
Since the aggregator belongs to the HSA class, $m$ affects profits only through an aggregate price statistic. In this case, we can write the statistic explicitly: $A(\bm{p})=\frac{\sigma}{\sigma-1}\big(\int z^{\sigma-1}m(z)\d z\big)^{\frac{1}{1-\sigma}}$. 
For expositional convenience, we define a monotone transformation of $A$: the index $\int z^{\sigma -1} m(z)\d z$, which we call \textit{market intensity}. Intuitively, when market intensity is high, the aggregate price $A$ is low. 
We refer to this statistic as \textit{market intensity} since $z^{\sigma-1}/\int z^{\sigma -1} m(z)\d z$ is the market share of firms with productivity z. Hence, a larger \textit{market intensity} implies lower market shares for all firms. Since the statistic is a strictly decreasing transform of $A$, all results stated in terms of $A$ in Sections \ref{sec:hsa_eq} and \ref{sec:hsa_business_cycles} carry over to market intensity, with inequalities reversed.

%

As before, entrants form expectations on the profit they would obtain upon entry. In any equilibrium with entry, firms enter simultaneously until the free entry condition is satisfied.\footnote{We prove in Appendix \ref{ext:entry_equiv} that this is equivalent to the limit of an iterative entry game.} 
All the equilibria types from Definition \ref{definition:HSA_equilibrium} are characterized by the following system
\begin{align}
   & m_t(z) = m_{t-1}(z)\mathbb{I}_{\{z\geq \underline z_t\}} + E_t\mu^E(z)\mathbb{I}_{\{z\geq \underline z_t\}}, \;\; (\forall z \geq 0),\label{eq:mu-lom}\\ 
    &\E_{\mu^E} [\max\{\pi( z, m_t) , 0\} ] \leq f^e , 
    \label{eq:fe-pe} \\
   &  \pi(\underline z_t, m_t)=0. 
    \label{eq:zcp-pe} 
\end{align} 
The equations \eqref{eq:mu-lom}, \eqref{eq:fe-pe}, and \eqref{eq:zcp-pe} are the counterparts of \eqref{eq:hsa_lom}, \eqref{eq:hsa_free_entry}, and \eqref{eq:hsa_zpc}, respectively, for the aggregator in eq. \eqref{eq:aggr}. The adding-up constraint \eqref{eq:hsa_addup} is mechanically satisfied in equilibrium. 

\subsubsection{Business Cycles} \label{sec:business_cycles}
In equilibrium, aggregate output can be rewritten as 
\begin{align}\label{eq:gdp}
     Y_\tau = M_\tau^{q - \frac{1}{\sigma - 1}}  L^p_\tau\: \left(\int z^{\sigma-1} m_\tau(z)\d z\right)^{\frac{1}{\sigma-1}},
\end{align} 
where $L_\tau^p:=\int l(z)m(z) \d z$ is the amount of labor used in production, and $\left(\int z^{\sigma-1} m_\tau(z)\d z\right)^{\frac{1}{\sigma-1}}$ is aggregate technical efficiency, which we refer to as TFP. Proposition \ref{proposition:PE-cycles} characterizes how business cycles affect aggregate output relative to pre-cycle values. This is a special case of Theorem \ref{thm:hsa_cleansing} under the isoelastic aggregator in eq. \eqref{eq:aggr}.

\begin{prop}[Cleansing Effects of Cycles]\label{proposition:PE-cycles}
The change in output after the crisis is given by \begin{align}
    \Delta\log Y = \left(q - \frac{1}{\sigma-1}\right) \Delta\log M + \frac{1}{\sigma-1} \Delta\log \left(\int z^{\sigma-1} m(z)\d z\right).
\end{align} The recession reduces the number of firms but leaves aggregate productivity unchanged:
\begin{align*}
   \Delta\log M < 0 \text{ and } \; \Delta\log \left(\int z^{\sigma-1} m(z)\d z\right)=0.
\end{align*} As a consequence, output and welfare increase if and only if the economy values varieties less than in CES aggregation: 
\begin{align}
    \Delta\log Y\gtreqqless 0 \Leftrightarrow q \lesseqqgtr q^{CES}.
\end{align}
\end{prop}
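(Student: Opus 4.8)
The plan is to reduce the whole statement to two scalar facts about the pre-cycle and post-cycle steady states — that production labor is unchanged, and that the market-intensity index $\int z^{\sigma-1}m(z)\d z$ is unchanged while the mass of firms $M$ strictly falls — after which everything is algebra on eq.~(\ref{eq:gdp}). For the decomposition, take logs in eq.~(\ref{eq:gdp}) and difference phases $1$ and $3$ to get $\Delta\log Y = \big(q-\tfrac{1}{\sigma-1}\big)\Delta\log M + \Delta\log L^p + \tfrac{1}{\sigma-1}\Delta\log\!\big(\int z^{\sigma-1}m(z)\d z\big)$, and then kill the labor term: since the intermediary is perfectly competitive it passes its entire revenue $\mathcal{I}$ through to the variety producers, and each monopolist's revenue equals $p(z)y(z)=\tfrac{\sigma}{\sigma-1}\tfrac{1}{z}\cdot zl(z)=\tfrac{\sigma}{\sigma-1}l(z)$; aggregating over firms gives $L^p=\tfrac{\sigma-1}{\sigma}\mathcal{I}$ in every phase, hence $\Delta\log L^p=0$ and the stated decomposition.

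\textbf{TFP is unchanged.} Write $Z_\tau:=\int z^{\sigma-1}m_\tau(z)\d z$. Evaluating the zero-profit condition~(\ref{eq:zcp-pe}) with the profit function~(\ref{eq:pi-pe}) at the cutoff gives $Z_\tau=\tfrac{\mathcal{I}}{\sigma f^c_\tau}\,\underline z_\tau^{\sigma-1}$; this holds in phase $1$ and in phase $3$, since in phase $3$ there is re-entry ($E_3>0$) so~(\ref{eq:zcp-pe}) binds. By Lemma~\ref{lemma:cutoff} the cutoff, whenever there is entry, solves~(\ref{eq:cutoff}), which depends only on the ratio $f^e/f^c$; as $f^c$ has reverted to $f^c_l$ we obtain $\underline z_3=\underline z_1$, hence $Z_3=Z_1$ and $\Delta\log\!\big(\int z^{\sigma-1}m(z)\d z\big)=0$.

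\textbf{The firm count strictly falls --- the substantive step.} Tracking the distributions along the cycle: $m_1=E_1\mu^E\,\mathbb{I}_{\{z\ge\underline z_1\}}$; phase-$2$ exit truncates it to $m_2=E_1\mu^E\,\mathbb{I}_{\{z\ge\underline z_2\}}$ with $\underline z_2>\underline z_1$ determined by~(\ref{eq:zcp-pe}) at $f^c_h$; and since $\underline z_3=\underline z_1<\underline z_2$, phase-$3$ re-entry gives $m_3=E_1\mu^E\,\mathbb{I}_{\{z\ge\underline z_2\}}+E_3\mu^E\,\mathbb{I}_{\{z\ge\underline z_1\}}$ with $E_3>0$ (positivity follows from~(\ref{eq:entrymass-pe}) using $\underline z_3=\underline z_1$). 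Thus $m_3/M_3$ is a mixture of the surviving-incumbent block (the entry distribution truncated at $\underline z_2$) and a fresh-entrant block (truncated at $\underline z_1$). Since $z^{\sigma-1}$ is increasing in $z$, the incumbent block has strictly larger mean of $z^{\sigma-1}$ than the entrant block, so $Z_3/M_3=\E_{m_3/M_3}[z^{\sigma-1}]$ is strictly above $Z_1/M_1=\E_{\mu^E}[z^{\sigma-1}\mid z\ge\underline z_1]$. Combined with $Z_3=Z_1$ this forces $M_3<M_1$, i.e.\ $\Delta\log M<0$. (Strictness uses $E_1>0$ together with the fact that $\mu^E$ places positive mass on $[\underline z_1,\underline z_2)$, i.e.\ that the recession actually triggers exit.)

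\textbf{Conclusion.} Substituting $\Delta\log\!\big(\int z^{\sigma-1}m(z)\d z\big)=0$ into the decomposition gives $\Delta\log Y=\big(q-q^{CES}\big)\Delta\log M$ with $q^{CES}=\tfrac{1}{\sigma-1}$, and since $\Delta\log M<0$ the sign of $\Delta\log Y$ equals the sign of $-(q-q^{CES})$, which is exactly $\Delta\log Y\gtreqqless 0\Leftrightarrow q\lesseqqgtr q^{CES}$; welfare $\mathcal{U}(Y)$ inherits the sign because $\mathcal{U}'>0$. I expect the firm-count step to be the only real obstacle: the labor and TFP invariances are bookkeeping with the equilibrium conditions and Lemma~\ref{lemma:cutoff}, whereas the sign of $\Delta\log M$ is where the entry--exit asymmetry (exit is selected from the bottom, entry is not) does the work, and one must be careful to write $m_3$ as two blocks that can be ranked by their $z^{\sigma-1}$-mean and that the resulting inequality is strict.
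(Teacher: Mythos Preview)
Your proof is correct and shares the paper's architecture—decompose $\Delta\log Y$ via eq.~(\ref{eq:gdp}), show market intensity $Z=\int z^{\sigma-1}m\,\d z$ is invariant, then show $M$ strictly falls—but the two substantive steps are executed differently. For $Z_3=Z_1$, the paper substitutes the entry formula~(\ref{eq:entrymass-pe}) for $E_3$ into $\int z^{\sigma-1}m_3\,\d z$ and simplifies algebraically; you instead read $Z_\tau=\tfrac{\mathcal{I}}{\sigma f^c_\tau}\,\underline z_\tau^{\,\sigma-1}$ directly off the zero-profit condition~(\ref{eq:zcp-pe}) and invoke Lemma~\ref{lemma:cutoff} for $\underline z_3=\underline z_1$, which avoids the computation entirely. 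For $M_3<M_1$, the paper proves that $m_3/M_3$ first-order stochastically dominates $m_1/M_1$ by comparing tail probabilities separately on $[\underline z_1,\underline z_2)$ and $[\underline z_2,\infty)$, and then applies FOSD to the increasing map $z\mapsto z^{\sigma-1}$; your mixture observation—$m_3/M_3$ is a nondegenerate convex combination of $\mu^E(\cdot\mid z\ge\underline z_1)$ and $\mu^E(\cdot\mid z\ge\underline z_2)$, and the latter has strictly larger $(\sigma-1)$-moment—delivers $Z_3/M_3>Z_1/M_1$ in one line. The FOSD route gives a stronger distributional conclusion than the proposition needs; yours is the minimal argument for the mean comparison that is actually used.
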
 
Whether long-run output and welfare are larger than before the recession depends uniquely on the LoV parameter $q$. In the long run, an economy that undergoes a temporary increase in the fixed cost of production has fewer but, on average, more productive firms.  
This result comes with several additional important observations. First, our result refers exclusively to long-run effects: any potential welfare loss is not driven by the transition in phase 2. When $q<q^{CES}$, the economy might still be worse off once transition costs are accounted for. What we show is that if $q>q^{CES}$, long-run welfare is lower even without the transition costs. Second, this negative long-run effect arises even if all the classic Schumpeterian forces of recessions are present: in our model, new entrants are, on average, more productive than exiters during the cycle. Recessions induce a \textit{selection effect}. Yet, they may still result in long-run welfare losses. The aggregate consequences of recessions trade off fewer firms, whose effect is governed by LoV $q$, with higher average productivity, whose effect is governed by the elasticity of demand and specifically $q^{CES}=\frac{1}{\sigma-1}$.
Third, we remark that the CES case is a knife-edge parametric restriction where these two coincide:

\begin{remark}[CES Aggregation]
    In the special case of CES aggregation $\left(q=q^{CES}\right)$, long-run output and welfare are unchanged pre- and post-recession:  $\Delta\log Y=0$.
\end{remark}
To build intuition for the result, it is useful to invoke Lemma \ref{lemma:hsa_invariance_A_z}. Since the size of the industry $\mathcal I$ is constant, and there is no strategic advantage of incumbents over entrants, the economy converges back to the original market intensity $\int z^{\sigma-1} m(z)\d z$ in the long-run steady state. However, the recession permanently changes the firms' distribution. Firms entering during the recovery are, on average, more productive than exiting firms. As a consequence, exiters are replaced less than 1-for-1, and the number of firms $M$ drops.


In a CES economy, output is given by $Y = L^p \left(\int z^{\sigma-1} m(z)\d z\right)^{\frac{1}{\sigma-1}}$ and therefore depends uniquely on aggregate TFP and aggregate production labor, $L^p$. The latter, however, does not depend on the fixed cost cycle but exclusively on market size $\mathcal{I}$ and $\sigma$. 
Since aggregate TFP and aggregate labor used in production are unchanged, so is output and, therefore, welfare.

This effect is best understood considering eq. (\ref{eq:gdp}) and rewriting it in terms of the total number of firms and average productivity. Taking logs and differencing pre- and post-crisis values yields %
\begin{align}\label{eq:output_FD_avgprod}
    \Delta \log Y = q \Delta \log M + \Delta \log L^p +  \Delta \log \bar z, 
\end{align} %
where we call $\bar z_\tau := \left(\int z^{\sigma-1}\mu_\tau(z) \d z\right)^{\frac{1}{\sigma-1}}$ \textit{average productivity}. The first term, $q \Delta \log M$, is the \textit{variety effect}, the second term, represents changes in labor used for production, and the third term the \textit{selection effect} of the crisis. The variety effect is driven by changes in the number of firms $M$ and its effect on output is governed by LoV, $q$. The selection effect is channeled through changes in average productivity as exiting firms, on average, are less productive than entrants. This effect is governed by $\frac{1}{\sigma-1}$ since, noting that aggregate technical efficiency is unchanged immediately implies $\Delta\log \bar z = - \frac{1}{\sigma-1}\Delta\log M$. Therefore, in PE, the variety effect dominates the selection effect for $q > q^{CES}$, while for CES economies, the two cancel out exactly. 



\subsubsection{Alternative Sources of Business Cycle Fluctuations}\label{sec:otherbc}

In our analysis so far, we have considered a specific source of business cycle fluctuations: movements in the fixed operating cost. We interpret these costs broadly as stand-ins for operating capital constraints or quasi-fixed factors. Notably, these recessions have the highest chance of generating cleansing effects since they directly affect the selection margin. Here, we consider two alternative sources of fluctuations: aggregate TFP shocks and movements in the entry cost.

\paragraph{Aggregate TFP Cycles} Consider an unexpected temporary decrease in aggregate productivity $\Xi$, such that the productivity of all firms goes from $z$ to $\Xi z<z$. We characterize the behavior of our economy in Proposition \ref{prop:ext_TFP}.
\begin{prop}[Aggregate TFP Cycles]\label{prop:ext_TFP}
    Recessions driven by temporary decreases in aggregate TFP have no long-run effects. 
\end{prop}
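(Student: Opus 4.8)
The plan is to exploit the scale invariance of the profit function in eq.~(\ref{eq:pi-pe}) under a common multiplicative shift in productivity. If aggregate TFP equals $A$, a firm with baseline draw $z$ has effective productivity $Az$, and the profit it earns when the operating distribution is $m$ equals $\frac{\mathcal I}{\sigma}\cdot\frac{A^{\sigma-1}z^{\sigma-1}}{A^{\sigma-1}\int z^{\sigma-1}m(z)\d z}-f^c=\pi(z,m)$, because the factor $A^{\sigma-1}$ cancels between numerator and denominator. Hence, viewed as a function of the baseline draw and of the operating distribution $m$, the profit schedule — and therefore both the zero-profit condition~(\ref{eq:zcp-pe}) and the free-entry condition~(\ref{eq:fe-pe}) — does not depend on $A$ at all.

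I would then run this through the three phases of the cycle. In phase~1 the economy sits at the steady state with distribution $m_1$ solving eqs.~(\ref{eq:entrymass-pe}) and (\ref{eq:cutoff}) with $E_1=0$. In phase~2, $A<1$: by the cancellation above, the zero-profit cutoff in terms of baseline draws is unchanged, $\underline z_2=\underline z_1$, so no incumbent exits; and, since profits as a function of $(z,m)$ are unchanged, solving the entry condition (\ref{eq:entrymass-pe}) with inherited distribution $m_1$ and cutoff $\underline z_1$ returns $E_2=0$, exactly as in phase~1. Hence $m_2=m_1$. In phase~3, $A$ reverts to $1$ while the inherited distribution is $m_2=m_1$, which already satisfies the phase-3 equilibrium conditions with $E_3=0$; therefore $m_3=m_1$, and in particular $M_3=M_1$ and $\int z^{\sigma-1}m_3(z)\d z=\int z^{\sigma-1}m_1(z)\d z$.

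Finally I would substitute into the output expression. Adapting eq.~(\ref{eq:gdp}) to carry the TFP level gives $Y_\tau=A_\tau\,L^p_\tau\,M_\tau^{q-1/(\sigma-1)}\big(\int z^{\sigma-1}m_\tau(z)\d z\big)^{1/(\sigma-1)}$, where $L^p_\tau$ depends only on $\mathcal I$ and $\sigma$ (Section~\ref{subsec:2.2}) and hence is common to phases~1 and~3. With $A_1=A_3=1$ and $m_1=m_3$, this gives $Y_3=Y_1$ and therefore $\mathcal U(Y_3)=\mathcal U(Y_1)$; that is, $\Delta\log Y=\Delta\log M=0$ and there is no long-run welfare effect. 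One may add that $Y_2=A\,Y_1<Y_1$, so the downturn is genuine but purely transitory.

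There is no real obstacle beyond stating the argument carefully: the whole result rests on the cancellation of $A^{\sigma-1}$, which in turn rests on the shock being a \emph{common proportional} shift to every firm's productivity, so that it drops out of the ratio $z^{\sigma-1}/\!\int z^{\sigma-1}m(z)\d z$ that is the sufficient statistic both for profits and for the exit cutoff. A shock that changed the \emph{shape} of the productivity distribution would not cancel, and neutrality would fail — which is precisely the contrast with the fixed-cost cycle of Proposition~\ref{proposition:PE-cycles}, where the shock hits $f^c$ directly and therefore does move the cutoff.
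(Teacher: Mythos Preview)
Your proof is correct and rests on the same mechanism as the paper's: the common factor $A^{\sigma-1}$ cancels from the ratio $z^{\sigma-1}/\!\int z^{\sigma-1}m(z)\d z$, so neither the zero-profit condition nor the free-entry condition moves, and the operating distribution is unchanged across phases. Your phase-by-phase walk-through ($m_1=m_2=m_3$) is a clean way to organize the invariance argument and matches the narrative structure used elsewhere in Section~\ref{sec:business_cycles}.

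The only substantive difference is scope: the paper's proof is written so that it simultaneously covers the general-equilibrium case. There, income $R$ is endogenous, so one must additionally check that $R_A=R$; the paper does this by combining the accounting identity $R_A=\bar L+\Pi_A$ with $\Pi_A=R_A/\sigma-M_Af^c$ and observing that the resulting system is again $A$-free. Your argument fixes $\mathcal I$ exogenously, which is exactly right for the partial-equilibrium setting in which the proposition is stated (and the GE extension is relegated to a footnote), but it would not transfer verbatim to Section~\ref{sec:ge} without that extra step.
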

To understand this surprising result, note that, in our economy, temporary shocks have long-run effects only if they affect entry and exit decisions. An aggregate TFP shock moves the effective productivity of all firms equally. As a consequence, the relative productivity is unchanged. Entry/exit decisions might still be affected if the size of the industry changes. However, in this partial equilibrium setting the industry size is fixed at $\mathcal I$. Therefore, an aggregate TFP shock does not induce any cleansing effect in this economy. It does, however, affect output, which remains lower than its pre-crisis value as long as $\Xi$ remains lower than 1. Yet, as entry and exit are not affected, if $\Xi$ returns to its initial value, the economy returns to its original steady state.\footnote{ This result extends to the general equilibrium model we consider in the next section under perfectly inelastic labor supply.}

\paragraph{Entry Cost Cycles} We study the effect of changes in the entry cost. To this end, we have to change our model to have a positive entry flow in steady state. Consider a version of our economy where firms exogenously exit at rate $\delta>0$ as in \cite{melitz_impact_2003}. Then the law of motion of the productivity distribution is given by $m_t(z) = [(1-\delta) m_{t-1}(z)+E_t \mu^E(z)]\mathbbm{1}_{\{z\geq\underline z_t\}}$. This standard formulation implies that in the steady state, there is positive entry to exactly offset the churn due to the exit rate $\delta$. Importantly, it also implies that the steady state is unique and history-independent. Denote $\underline z^{\sst}$ and $M^{\sst}$ the steady-state cutoff and number of firms, respectively.

Suppose that the economy, starting from a steady state, experiences an unexpected increase in the entry cost $f^E$. We characterize the behavior of this economy in Proposition \ref{prop:ext_entry}.
\begin{prop}[Entry Cost Cycles]\label{prop:ext_entry}
    Consider a temporary increase in the entry cost. Along this transition, the following holds
    \begin{enumerate}[label = \alph*)]
        \item The number of varieties converges to  $M^{\sst}$ from below.
        \item The average productivity of active firms converges to $\bar z^{\sst}$ from below.
        \item The temporary increase in the entry cost always generates welfare losses, independently of love-of-variety. 
    \end{enumerate}
\end{prop}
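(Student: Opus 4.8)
The plan is to exploit the fact that, in the Melitz-style economy with exogenous exit rate $\delta>0$, the only endogenous state is again the incumbents' productivity distribution, and that (by the analogue of Lemma \ref{lemma:cutoff}) the cutoff is pinned down by a free-entry condition whenever entry is active. First I would establish the transition path induced by a temporary rise in $f^E$: during the high-$f^E$ phase the free-entry condition tightens, so either entry shuts down entirely or the active-entry cutoff rises; in either case the flow of entrants $E_t$ falls below the steady-state replacement level $\delta M^{\sst}$, while the surviving-incumbent distribution keeps depreciating at rate $\delta$ on a truncated support. Hence $M_t$ falls strictly below $M^{\sst}$ during the shock. Once $f^E$ reverts, the free-entry condition again pins the cutoff to $\underline z^{\sst}$ (by the same market-intensity logic as in Section \ref{subsec:2.2}: incumbents and entrants enter market intensity symmetrically, so the cutoff is history-independent whenever $E_t>0$), and entry resumes. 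I would then argue that from that point on, market intensity $\mathcal Z_t=\int z^{\sigma-1}m_t(z)\d z$ is driven by a contraction-type law of motion $\mathcal Z_t = (1-\delta)\mathcal Z_{t-1}\cdot(\text{truncation factor}) + E_t\!\int_{\underline z^{\sst}} z^{\sigma-1}\mu^E(z)\d z$ with $\mathcal Z$ below its steady-state value, so $E_t$ stays strictly positive and $\mathcal Z_t\uparrow\mathcal Z^{\sst}$ monotonically; since the cutoff is fixed at $\underline z^{\sst}$, convergence of $\mathcal Z_t$ from below is equivalent to $M_t\uparrow M^{\sst}$ from below, giving part (a).

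For part (b), I would use that average productivity satisfies $\bar z_\tau^{\sigma-1}=\mathcal Z_\tau/M_\tau$, i.e. $\bar z_\tau$ is the mean of $z^{\sigma-1}$ under the probability density $\mu_\tau=m_\tau/M_\tau$. The key observation is that during the shortfall, entry is depressed, so the post-reversion distribution is a mixture that overweights \emph{aged, depreciated incumbents} relative to the steady-state mixture — and since incumbents that survived were drawn long ago from $\mu^E$ truncated at (weakly lower) past cutoffs, while the current replacement entrants would be truncated at $\underline z^{\sst}$, the economy is temporarily short of the high-productivity entrant inflow that ordinarily pulls the mean up. More cleanly: in steady state $\mathcal Z^{\sst}$ and $M^{\sst}$ jointly solve the entry/cutoff system, and since $\underline z_t=\underline z^{\sst}$ throughout the recovery while $M_t<M^{\sst}$, one can show $\mathcal Z_t$ and $M_t$ approach their limits along a path on which the ratio $\mathcal Z_t/M_t$ is below $\mathcal Z^{\sst}/M^{\sst}$ — this follows because the deficit in $M$ is concentrated among the firms near the cutoff (entrants just below the steady-state mass), which carry below-average $z^{\sigma-1}$, so removing them raises the mean; wait — that would push $\bar z$ above, so the sign argument must instead run through the \emph{missing entrant cohorts}: fewer entrants means the surviving pool is tilted toward older incumbents whose productivity composition, having been repeatedly truncated only at historical cutoffs $\le\underline z^{\sst}$, is first-order stochastically dominated (in $z$) by the freshly-replenished steady-state pool. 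I would make this precise by comparing the recovery-path density $\mu_t$ to $\mu^{\sst}$ and showing $\mu_t$ is a truncation-and-reweighting that lowers $\E[z^{\sigma-1}]$, hence $\bar z_t<\bar z^{\sst}$, with convergence from below as $E_t\to\delta M^{\sst}$.

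For part (c), I would combine the GDP decomposition $\log Y_\tau = q\log M_\tau + \log L^p_\tau + \log\bar z_\tau + \text{const}$ with the earlier fact that $L^p_\tau$ depends only on $\mathcal I$ and $\sigma$ (production labor is $\mathcal I(\sigma-1)/\sigma$, invariant to the cycle). Since along the entire transition $M_\tau\le M^{\sst}$ (part a) and $\bar z_\tau\le\bar z^{\sst}$ (part b), both with strict inequality away from the limit, every period of the transition has $Y_\tau<Y^{\sst}$ regardless of the sign of $q$ (as $q\ge 0$); integrating the household's flow utility $\mathcal U(Y_\tau)$ against discount factors then yields a strict welfare loss relative to staying in steady state. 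I would note the contrast with Proposition \ref{proposition:PE-cycles}: here there is \emph{no} long-run effect (the economy returns to the unique steady state), so the loss is purely transitional, but — unlike the fixed-cost cycle — it is unambiguously negative because entry-cost shocks depress \emph{both} varieties and selection simultaneously rather than trading them off.

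The main obstacle is part (b): pinning down the sign of $\bar z_t - \bar z^{\sst}$ rigorously. The naive "truncation raises the mean" intuition cuts the wrong way, so the argument has to be carried by the composition shift from the suppressed entry flow, and one must handle the interaction of repeated geometric depreciation with the time-varying historical cutoffs during the high-$f^E$ phase. I expect the clean route is to track $\mathcal Z_t$ and $M_t$ as a coupled linear recursion with the cutoff frozen at $\underline z^{\sst}$, solve for the ratio, and show monotone convergence of $\mathcal Z_t/M_t$ from below — reducing part (b) to a one-dimensional monotonicity check rather than a stochastic-dominance argument over densities.
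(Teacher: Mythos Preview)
Your argument has a sign error that undermines part (b) and, with it, your route to (c). When $f^e$ rises and entry remains positive, the cutoff \emph{falls}, not rises. From eq.~(\ref{eq:cutoff}), $f^e/f^c=\int_{\underline z}^\infty\big[(z/\underline z)^{\sigma-1}-1\big]\mu^E(z)\d z$, and the right-hand side is strictly decreasing in $\underline z$; a higher $f^e$ is therefore matched by a lower $\underline z$. The economics: entrants pay $f^e$ \emph{before} observing $z$, so costlier entry does not select better firms---it only shrinks the entrant mass $E$, which lowers market intensity, raises gross profits at every $z$, and pushes the zero-profit productivity \emph{down}. This is exactly the ``weakened competitive pressure'' channel the text describes immediately after the proposition.

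Once the cutoff direction is corrected, the paper's argument for (b) is one line: the cutoff is weakly below $\underline z^{\sst}$ along the transition, entrants with $z\in[\underline z_t,\underline z^{\sst})$ are admitted, and since exogenous death at rate $\delta$ is unselected in $z$, the active pool is first-order dominated by the steady-state pool, so $\bar z_t\le\bar z^{\sst}$. Your alternative---freezing the cutoff at $\underline z^{\sst}$ and invoking missing-cohort composition---cannot deliver the sign, and you correctly sense this (``wait---that would push $\bar z$ above''). The reason is structural: with a common cutoff and i.i.d.\ exit, every surviving cohort has productivity law $\mu^E(\cdot\mid z\ge\underline z^{\sst})$ regardless of vintage, so age carries no compositional content. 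The fix is not a sharper mixing lemma; it is the cutoff sign. With it, part (c) is immediate from $Y=M^q L^p\,\bar z$: both $M$ and $\bar z$ are weakly below steady state along the path, $L^p=\tfrac{\sigma-1}{\sigma}\mathcal I$ is fixed in partial equilibrium, and hence $Y_t\le Y^{\sst}$ for every $q\ge 0$.
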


Proposition \ref{prop:ext_entry} characterizes the transitional behavior of the economy after a temporary increase in the fixed cost of entry $f^e$. First, when the entry cost is higher, the business dynamism of the economy is hampered. Throughout the transition, the number of firms is below its steady-state value, inducing variety losses. Unlike a fixed-cost recession, an increase in the entry cost is not \textit{cleansing}. Along the transition, selection forces are weakened as potential entrants exert a smaller competitive pressure on unproductive incumbents. As a consequence, the average productivity of firms declines. Interestingly, output does not fall substantially on impact since the output losses are only driven by the steady erosion of the number of firms and the slow decline in average productivity. As the entry cost converges back to its original level, the economy returns to its steady state with a slow increase in the number of firms and average productivity. We conclude that a temporary increase in $f^e$ induces aggregate behavior that is more reminiscent of \textit{stagnation} episodes than of recessions.

Since both the number of firms and average productivity worsen along the transition, the welfare effects are unambiguously negative, independently of how the economy values varieties.

\section{General Equilibrium} \label{sec:ge}
In this section, we extend our result to a general equilibrium setting. Relative to the model in Section \ref{sec:hsa_aggregator_ben}, there are two changes. First, production labor demanded by firms now affects the market-clearing wage. Labor market clearing dictates:
\begin{align}
    L^p + M  f^c + E f^e = \bar L. \label{eq:lmc-ge}
\end{align}%
Second, expenditure on the consumption good $PY$ is equal to the revenues of the firms $R$, so that, setting labor as the numeraire ($w=1$), the budget constraint is $R=\bar L+\Pi$.\footnote{We maintain that labor is inelastically supplied, an assumption we relax in Appendix \ref{ext:elastic-labor} and specifically in Proposition \ref{prop:elast_labor}.} Then, firm profits become:
\begin{align}\label{eq:pi_ge}
    \pi(z, m) = \frac{R}{\sigma} \frac{z^{\sigma-1} }{\int z^{\sigma-1}m(z) \d z}\;  - f^c.
\end{align}%
Entry and exit are still pinned down by the
\textit{free-entry condition}
and \textit{zero-profit condition} (eqs. \ref{eq:fe-pe} and \ref{eq:zcp-pe}).
When entry is strictly positive, eqs. (\ref{eq:fe-pe}) and (\ref{eq:zcp-pe}) can be rearranged to obtain:\footnote{To obtain the PE equilibrium characterization just substitute $R_t$ with $\mathcal{I}$.}
\begin{align}\label{eq:entrymass-ge}
    E_t &= \frac{R_t}{\sigma f^c} \frac{\underline z_t^{\sigma -1}}{\int_{\underline z_t}^\infty z^{\sigma -1} \mu^E(z)\d z} - \frac{\int_{\underline z_t}^\infty z^{\sigma -1} m_{t-1}(z)\d z}{\int_{\underline z_t}^\infty z^{\sigma -1} \mu^E(z)\d z},\\
    \frac{f^e}{f^c} &= \int_{\underline z_t}^\infty \left[\left(\frac{z}{\underline z_t}\right)^{\sigma-1} -1 \right]\mu^E(z)d z. \label{eq:cutoff}
\end{align} 
The equilibrium is fully characterized by eqs. (\ref{eq:lmc-ge}), (\ref{eq:entrymass-ge}), and (\ref{eq:cutoff}). 
When there is no scope for entry, then $E_t = 0$, the \textit{free-entry condition} is slack, and the \textit{zero-profit condition}
\begin{align}\label{eq:zpc-noentry-ge}
    \frac{R_t}{\sigma} \frac{\underline z_t^{\sigma -1}}{\int_{\underline z_t}^\infty z^{\sigma-1}m_{t-1}(z) \d z}\;  - f^c = 0
\end{align}
pins down the mass of active firms in the economy. The equilibrium is fully characterized by eqs. (\ref{eq:lmc-ge}) and (\ref{eq:zpc-noentry-ge}), and $E_t=0$.

\subsection{Business Cycles} \label{sec:fixec_cost_cycles}
We consider the same source of fixed cost variation and establish the effect on long-run output. The following result extends Proposition \ref{proposition:PE-cycles} to general equilibrium:
\begin{prop}[Cleansing Effects of Cycles]\label{prop:output-GE} The change in output after the crisis is given by \begin{align}\label{eq:ge_ratio}
    \Delta\log Y =  \left(q - \frac{1}{\sigma-1}\right)\Delta\log M + \Delta\log L^p +\frac{1}{\sigma-1}\Delta\log\int z^{\sigma-1} m(z) \d z,
\end{align} where \begin{align}\label{eq:ge-clens}
    \Delta\log M<0, \;  \Delta\log L^p>0, \;\Delta\log\int z^{\sigma-1} m(z) \d z>0.
\end{align} There exists a unique $q^\star > q^{CES}$ (which generally depends on the change in fixed cost $\Delta \log f^c$) for which $\Delta\log Y=0$. Furthermore,
\begin{align}
    \Delta\log Y<0 \iff q>q^\star.
\end{align}
\end{prop}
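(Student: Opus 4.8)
The plan is to reduce the whole statement to the single claim $\Delta\log M<0$, then read everything else off the decomposition. The displayed formula for $\Delta\log Y$ is just logs and differences applied to the GDP identity \eqref{eq:gdp}. Since \eqref{eq:cutoff} carries over unchanged from partial equilibrium and depends only on $f^e/f^c$ and $\mu^E$, and $f^e$ and $f_l^c$ coincide in phases $1$ and $3$, the reasoning behind Lemma~\ref{lemma:cutoff} gives $\underline z_3=\underline z_1=:\underline z$. In both steady states $E=0$, so: labor clearing \eqref{eq:lmc-ge} gives $L^p_\tau=\bar L-M_\tau f_l^c$; the budget $R=\bar L+\Pi$ with $\Pi=\int\pi(z,m_\tau)m_\tau(z)\d z=\tfrac{R_\tau}{\sigma}-M_\tau f_l^c$ gives $R_\tau=\tfrac{\sigma}{\sigma-1}L^p_\tau$; and the zero-profit condition $\pi(\underline z,m_\tau)=0$ gives $\int z^{\sigma-1}m_\tau(z)\d z=\tfrac{R_\tau\,\underline z^{\sigma-1}}{\sigma f_l^c}$. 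Hence market intensity, revenue, and production labor are mutually proportional with identical constants across the two phases, so \eqref{eq:gdp} collapses to $\log Y_\tau=\mathrm{const}+(q-\tfrac{1}{\sigma-1})\log M_\tau+\tfrac{\sigma}{\sigma-1}\log(\bar L-M_\tau f_l^c)$. In particular $\Delta\log L^p$ and $\Delta\log\!\int z^{\sigma-1}m(z)\d z$ carry the sign opposite to $\Delta\log M$, which delivers \eqref{eq:ge-clens} once $\Delta\log M<0$ is proved.

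For $\Delta\log M<0$ I would use the explicit distributions generated by the cycle: $m_1(z)=E_1\mu^E(z)\mathbb{I}\{z\ge\underline z\}$, then $m_2(z)=E_1\mu^E(z)\mathbb{I}\{z\ge\underline z_2\}$ with $\underline z_2>\underline z$ (the strict inequality follows from the zero-profit condition: were no incumbent to exit, revenue would fall while the fixed cost rose, forcing $\underline z_2>\underline z$, a contradiction), and after recovery $m_3(z)=m_2(z)+E_3\mu^E(z)\mathbb{I}\{z\ge\underline z\}$ with $E_3>0$. Writing $M_3=\int m_3$ and $\int z^{\sigma-1}m_3$, combining with $R_3=\tfrac{\sigma}{\sigma-1}(\bar L-M_3 f_l^c)$ and the zero-profit relation, and subtracting the corresponding phase-$1$ equations, the entry masses cancel and one is left with
\begin{align*}
 (M_3-M_1)\Big(\E_{\mu^E}\!\big[Z^{\sigma-1}\mid Z\ge\underline z\big]+\tfrac{\underline z^{\sigma-1}}{\sigma-1}\Big)
 &=\Delta M\Big(\E_{\mu^E}\!\big[Z^{\sigma-1}\mid \underline z\le Z<\underline z_2\big]\\
 &\qquad-\E_{\mu^E}\!\big[Z^{\sigma-1}\mid Z\ge\underline z\big]\Big),
\end{align*}
where $\Delta M>0$ is the mass exiting in phase $2$. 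The left-hand bracket is positive; the right-hand bracket is strictly negative because $a\mapsto\E_{\mu^E}[Z^{\sigma-1}\mid Z\ge a]$ is increasing (conditioning on a higher threshold first-order stochastically dominates and $z\mapsto z^{\sigma-1}$ is increasing for $\sigma>1$), so the entrant average of $Z^{\sigma-1}$, taken over all of $\mu^E$ above $\underline z$ including the tail above $\underline z_2$, strictly exceeds the exiter average, taken over the slice $[\underline z,\underline z_2)$ only. Hence $M_3<M_1$: fewer firms enter than exited, and varieties are permanently lost — the Schumpeterian selection effect made explicit.

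To finish, set $D:=\Delta\log L^p+\tfrac{1}{\sigma-1}\Delta\log\!\int z^{\sigma-1}m(z)\d z$, which is $>0$ by \eqref{eq:ge-clens}. The decomposition becomes $\Delta\log Y=(q-q^{CES})\Delta\log M+D$, affine in $q$ with negative slope $\Delta\log M$ and value $D>0$ at $q=q^{CES}$. It therefore has a unique root $q^\star=q^{CES}+D/|\Delta\log M|>q^{CES}$, and $\Delta\log Y<0\iff q>q^\star$. Because $D$ and $|\Delta\log M|$ depend on the magnitude of the fixed-cost shock, so does $q^\star$, as stated.

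I expect $\Delta\log M<0$ to be the main obstacle. In partial equilibrium it is nearly immediate, since market intensity $\int z^{\sigma-1}m(z)\d z$ must return exactly to its pre-cycle value (income $\mathcal I$ is fixed), making ``exiters replaced less than one-for-one'' transparent; in general equilibrium that statistic instead rises, so one must carry the labor--revenue feedback through the fixed-point relations above, with the sign of the final comparison resting entirely on the monotonicity of $\E_{\mu^E}[Z^{\sigma-1}\mid Z\ge a]$ in $a$.
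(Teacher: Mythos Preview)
Your proof is correct and takes a somewhat different, arguably cleaner, route than the paper's. Both approaches share the observation that in the two steady states $R_\tau=\tfrac{\sigma}{\sigma-1}(\bar L-M_\tau f_l^c)$, so market intensity, revenue, and production labor all move together and opposite to $M$. Where you diverge is in establishing $M_3<M_1$. You derive an explicit identity
\[
(M_3-M_1)\Big(\E[Z^{\sigma-1}\mid Z\ge\underline z]+\tfrac{\underline z^{\sigma-1}}{\sigma-1}\Big)=\Delta M\Big(\E[Z^{\sigma-1}\mid \underline z\le Z<\underline z_2]-\E[Z^{\sigma-1}\mid Z\ge\underline z]\Big)
\]
and sign the right-hand bracket directly by comparing exiter and entrant averages of $Z^{\sigma-1}$. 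The paper instead argues by contradiction: it supposes $R_3\le R_1$, infers from the phase-3 entry equation that $E_3$ is bounded above by its partial-equilibrium value, invokes the proof of Proposition~\ref{proposition:PE-cycles} to conclude $\int z^{\sigma-1}m_3\le\int z^{\sigma-1}m_1$, and then notes that the implied drop in \emph{average} productivity $\bar z$ contradicts the construction (left-truncation followed by entry at a lower cutoff cannot lower $\bar z$). Your argument is self-contained and makes the ``exiters replaced less than one-for-one'' mechanism fully explicit, at the cost of one more line of algebra; the paper's contradiction is shorter on the page but leans on the FOSD machinery already built for the PE case. One small remark: your appeal to monotonicity of $a\mapsto\E[Z^{\sigma-1}\mid Z\ge a]$ is not quite the lemma you need---what you actually use, and correctly justify in the next clause, is that conditioning on $Z\ge\underline z$ FOSD conditioning on $Z\in[\underline z,\underline z_2)$.
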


General equilibrium introduces an additional effect of recessions: the \textit{labor-saving effect}. As the long-run economy features fewer operating firms, fewer labor resources are used for fixed costs as opposed to production. As a consequence, the income obtained by the household, both in terms of labor payments and profits, increases. To see this, note that the fixed supply of labor $\bar L$ is used in production $L^p$ and for fixed costs payments $Mf^c$, since, by definition, there is no entry in the steady state.  Combining labor market clearing and the firms' pricing policy, we note that GDP obtains as $PY = \frac{\sigma}{\sigma-1}(\bar L-Mf^c)$. Hence, recessions that reduce the steady-state number of firms increase GDP by reducing the amount of labor used for fixed costs payments and diverting it to production. Importantly, GDP is also the total market size that firms consider upon entry: $PY=R$. As a consequence, the labor-saving effect induces additional entry after the recession, relative to our partial equilibrium framework. 
The direct consequence of the combination of these forces is an output gain, even in the CES case, which is highlighted in Remark \ref{ces-GE}.

\begin{remark}[CES Aggregation in General Equilibrium]\label{ces-GE}
In the special case of CES aggregation $\left(q= q^{CES}\right)$, $\Delta\log Y>0$.
\end{remark} 

Based on the above intuition and remark, it is immediate that when $q>q^{CES}$, the economy values the loss of variety more than in the CES case. As a consequence, there exists a level of love-of-variety such that the household is indifferent between pre- and post-cycle outcomes as the three effects exactly cancel. 

Proposition \ref{prop:output-GE} establishes the properties of an economy experiencing a recession driven by some change in the fixed cost. In the remainder of the section, we discuss many of the assumptions we have made so far and consider recessions of different intensities.

\subsection{Discussion}
We conclude this section with a brief discussion of the most salient model primitives and their interpretations. We provide a formal treatment of these extensions in Appendix \ref{sec:extensions}.

\paragraph{Intensity of Recessions} Throughout our analysis, we have studied recessions characterized by some arbitrary fixed cost increase $f^c_h$. In Appendix \ref{sec:depth}, we ask whether larger increases in fixed costs induce stronger cleansing effects and variety losses. We show that we can partition economies into three sets based on their love-of-variety $q$. First, some economies have a $q$ so high that recessions of any intensity are welfare-reducing. Second, some economies have a $q$ so low that any recession is welfare-improving. Third, there exists a range of love-of-variety $q$ such that small and large increases in the fixed cost increase long-run output and welfare, while medium-sized recessions reduce them. This is driven by a non-monotonic relationship between recession depth and the strength of cleansing effects.

\paragraph{Love-of-Variety} In our model, while firms are heterogeneous in their productivity, they are homogeneous in their contribution to the variety externality. We relax this assumption in Appendix \ref{app:matsuyama}. We characterize the behavior of an economy where firms, upon entry, draw two distinct features: their productivity $z$ and their contribution to the variety externality $\xi$. These can be drawn from a joint distribution with arbitrary correlation. We show that our baseline economy is a special case of this more general setting where the $z$ and $\xi$ draws are independent. Away from this special case, we show in Proposition \ref{prop:het_ext} that the intensity of the correlation between productivity and externality contributions determines the long-run effect of recessions. Intuitively, if productivity and the externality contribution are positively correlated, recessions induce a smaller variety loss than if they are independent: the same cleansing effects that improve average productivity can reduce the cost of the reduction in varieties. 
Conversely, if the correlation between idiosyncratic externality and productivity is negative, the long-run effects of recessions are more negative than under the independence case. In this scenario, the cleansing effect induces a further loss as it negatively selects firms in terms of their externality contributions.

\paragraph{Entry, Exit, and Selection} 

In our model, recessions generate \textit{selected} exit followed by \textit{unselected} entry during recoveries. This property follows from our assumption that firms do not know their productivity upon paying the entry cost. This modeling assumption provides us with analytically tractable result, which qualitatively carries through in less extreme versions of our model. Suppose, for example, that all firms have some uncertainty about their true productivity $z$. As long as firms that have already produced have less uncertainty than prospective entrepreneurs, our results qualitatively hold.
These results would also arise in a context in which potential entrants know their productivity but face uncertainty about the fixed cost.

A related point regards to our assumptions on exit. We have modeled exit as a forced event, triggered by negative current profits $\pi(z, m) < 0$. This assumption can be interpreted as firms defaulting due to working capital constraints and poorly developed financial markets. Suppose otherwise that firms could borrow against future profits. Then, they would make exit decisions by comparing the net present value of fixed cost $f^c/(1-\beta)$ to the NPV of gross profits $\tilde\pi/(1-\beta)$. Under this reading, firm exit is not driven by a constraint, but by a rentability consideration. Due to the unanticipated nature of the shocks we consider, this formulation is equivalent to our static exit choice, as firm owners do not expect parameter changes. In Appendix \ref{ext:fwd-look-exit}, we relax the assumption that firm owners do not expect any future changes in fixed cost, allowing them to forecast the reversion of fixed cost to pre-cycle levels. This extension is consistent with our propositions and does not alter results beyond a change in notation. 

An additional important assumption in our framework is the absence of exit in the steady state. This is often generated by either idiosyncratic fluctuations in productivity \citep{hopenhayn1992entry} or some exogenous exit rate \citep{melitz_impact_2003}. In either of those scenarios, the steady-state is not history dependent. As a consequence, all our results apply along the transition. We return to this point in the quantitative version of our model. 

\paragraph{Aggregate Transitional Dynamics} Proposition \ref{proposition:PE-cycles} provides a characterization comparing the economy before the shock to its long-run post-shock equilibrium. This allows us to isolate the Schumpeterian argument that cleansing recessions are beneficial in the long run. It also conveniently allows us to sidestep the full dynamic behavior of the economy along the transition. We provide an analytical characterization of the transition dynamics in Appendix \ref{app:dynamics} under the assumption that the entrant productivity distribution is Pareto. In Propositions \ref{prop:smooth_reversion_pareto_sequential_fc} and \ref{prop:smooth_reversion_pareto_smooth_fc}, we show that when fixed costs revert at a constant rate, the transition features a constant flow of entrants. The cutoff smoothly converges to its long-run level from above, while the number of firms converges from below. We show that fixed cost increases that slowly subside induce stronger selection effects as shown in Figure \ref{fig:tent}.

\paragraph{Idiosyncratic Fluctuations}
 Since our focus is on the consequences of aggregate shocks, we have considered an economy where firms are not subject to idiosyncratic fluctuations \'a la \cite{hopenhayn1992entry}. Our model can be readily extended to account for idiosyncratic movements in productivity. We extend our results to such an economy in Appendix \ref{app:stoch_prod}.

\paragraph{Varieties and the Number of Firms} In our model, varieties and firms are, by assumption, the same thing. However, our results readily generalize to economies in which firms produce multiple products. We assume that varieties produced by the same firm are more substitutable than varieties across firms, and we can then distinguish between LoV for products within and across firms. We analyse this extension and how our main results change in Appendix \ref{app:multiprod}. Our main conclusions continue to hold.

\paragraph{Fixed Costs Units} In the model discussed in this section, we maintained that the unit of fixed and entry costs is labor. This provides a natural benchmark based on the \cite{dixit1977monopolistic,dhingra2019monopolistic} efficiency results, to which we return in our policy discussion. An alternative assumption is that these costs are in units of the final output good. This has important implications as entry then generates additional external effects \citep[see][]{barro_economic_2004}. We consider this case in Appendix \ref{ext:fixed-cost-in-final-good} by extending our framework to one in which fixed costs are paid in units of a new good, made from a Cobb-Douglas aggregate of output and labor. This framework nests our baseline model and one of output good fixed costs as special cases. We show that our main results carry through. 
\section{Policy} \label{sec:policy}
In this section, we first study a social planner problem for economies with incumbents. The planner chooses the allocation of labor between entry costs, fixed costs, and each individual firm's production, controlling the entry mass $E_{\tau, t}^{SP}$ and the cutoff $\underline z_{\tau, t}^{SP}$ at time $t$ and phase $\tau$. We show that the government can use subsidies to achieve the first-best allocation with a single instrument: a tax/subsidy on $f^e/f^c$ in eq. (\ref{eq:cutoff}),  financing the intervention using a lump-sum tax paid by the households. In the knife-edge case of CES, the market allocation is efficient. Next, we consider the optimal policy problem along the business cycle, as in Section \ref{sec:business_cycles}. We find that the sign of the intervention is the same as the steady state one, but its optimal magnitude is larger: if the planner finds it optimal to subsidize firms in steady state, it should increase the size of the subsidies during recessions.

\subsection{Planner Entry and Exit Choices}
We consider a myopic planner maximizing current output. We show in Appendix \ref{app:proofs} that the market's allocation of labor across firms is efficient. Hence, we allow the planner to choose directly the number of entrants $E^{SP}$ and the cutoff productivity $\underline z^{SP}$. This is equivalent to choosing how much labor to allocate for production, entry costs, and fixed costs payments. Next, we decentralize allocation via a tax/subsidy. The planner's problem is
\begin{align}\label{eq:SP_obj}
    \max_{E_t^{SP},\underline z_t^{SP}} \quad Y_t^{SP} &={(M_t^{SP})}^{q-\frac{1}{\sigma-1}}L^{p,SP} \left[\int  z^{\sigma - 1} m_t^{SP}(z)\d z\right]^{\frac{1}{\sigma-1}} \qquad \text{s.t.} \\\nonumber
    \bar L &\geq L_t^{p,SP} + f^e E_t^{SP} + f^c M_{t}^{SP},\\\nonumber
    m_t^{SP}(z) &= (m_{t-1}(z) + E_t^{SP}\mu^E(z)) \mathbb{I}_{\{ z \geq \underline z_t^{SP}\}}(z) ,\\\nonumber
    \underline z_t^{SP}, E_t^{SP}&\geq 0.
\end{align} 
The myopic planner maximizes $Y_t^{SP}$,  subject to the labor market clearing condition and the law of motion of the firms' productivity distribution. We solve the full dynamic policy problem in the quantitative version of the model. 
For the analytical characterization, we make the following assumptions:
\begin{assumption}\label{as:policy}
    We assume the following holds:
    \begin{enumerate}[label=\alph*)]
        \item The distribution of active firms is a truncation of the entrants' distribution: $m(z)\propto \mu^E(z)\mathbb{I}_{\{z\geq\underline z\}}$ for some $z$.\label{1a}
        \item The expected relative market share of entrants decreases with higher truncations: \\$\mathbb{E}_{\mu^E}\left[(z/\underline z)^{\sigma-1}|z\geq \underline z\right]$ decreases in $\underline z$.\label{1b}
    \end{enumerate}
\end{assumption}

Assumption \ref{as:policy}\ref{1a} amounts to an assumption on the history of the economy. It states that the distribution of active firms is proportional to the entrants' productivity distribution. This is the case whenever the economy has not undergone any recession yet, or it is in the middle of a fixed cost crisis. 
Assumption \ref{as:policy}\ref{1b} is an assumption on the productivity distribution of entrants. It posits that as we truncate the distribution from the left, the expected market share weakly decreases. This is satisfied by many commonly used distributions.\footnote{Distributions satisfying the assumption are, for example, Pareto, Exponential, Weibull (for shape parameter $k>1$), Burr, and Fréchet (for shape parameter $\xi\leq1$). It can be interpreted as requiring that as the condition for survival becomes more stringent, the productivity of marginal exiter and new entrant become more similar on expectation. A sufficient condition for $\E_{\mu^E}\left[\left( z/\underline z\right)^{\sigma-1} \big\lvert  z \geq \underline z\right]$ to be decreasing in the cutoff $\underline z$ is that the elasticity of the anti-cumulative corresponding to $\mu^E$ with respect to the truncation $\underline z$, $\varepsilon_{1-F}(\underline z)$, is globally larger than $-1$, i.e. $\varepsilon_{1-F}(\underline z)\geq -1, \; \forall \underline z \geq 0$.} While we can relax these assumptions for specific results, we maintain Assumption \ref{as:policy} throughout this section to provide a consistent characterization. 

Proposition \ref{prop:SP_solution} characterizes of the solution to the planner problem. 

\begin{prop}[Social Planner Allocation]\label{prop:SP_solution}
    The following holds

\begin{enumerate} [label=\Roman*)]
    \item The planner solution coincides with the market allocation if and only if $q = q^{CES}$.
    \item In a state of entry,
    \begin{enumerate} [label = \alph*)]
        \item The optimal number of entrants $E^{SP}$ is strictly increasing in $q$.
        \item The presence of incumbents with higher average productivity than entrants decreases the socially optimal number of entrants and, therefore, the number of firms. It does not affect the social planner's steady-state cutoff.
        \item If there are no incumbents, the cutoff $\underline z^{SP}$ is decreasing in $q$.
    \end{enumerate}

\item The socially optimal allocation can be achieved in a decentralized equilibrium where the planner sets a subsidy/tax to the fixed cost $\theta^c$ such that firms pay $f^c(1-\theta^c)$, financed by a lump-sum tax on the household. The optimal subsidy/tax $\theta^c$ satisfies
\begin{align}
            1-\theta^c(\underline z^{SP}) = \left[[q(\sigma - 1) - 1]\left(\E_{\mu^E}\left[\left(\frac{ z}{\underline z^{SP}}\right)^{\sigma-1} \mid  z \geq \underline z^{SP}\right]\right)^{-1}+1\right]^{-1},
\end{align}\label{prop:SP_subsidy}
where $\theta^c\gtreqless 0 \iff q\gtreqless q^{CES} \equiv1/(\sigma-1)$. 
    \end{enumerate}
\end{prop}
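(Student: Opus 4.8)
The plan is to derive the two first-order conditions of the reduced planner program and read off all four parts from them. Since (as shown in Appendix~\ref{app:proofs}) the within-industry allocation of labor is efficient, the planner's program is already \eqref{eq:SP_obj}; the labor constraint binds, so I substitute $L^{p,SP}=\bar L-f^eE^{SP}-f^cM^{SP}$ and treat $Y^{SP}$ as a function of $(E^{SP},\underline z^{SP})$ alone, maximized on the region of entry where both conditions are interior. Writing $p_E(\underline z)=\int_{\underline z}^{\infty}\mu^E(z)\d z$ and letting $\phi$ denote the mass density $m_{t-1}(\underline z^{SP})+E^{SP}\mu^E(\underline z^{SP})$ at the cutoff, Leibniz's rule gives $\partial_{E^{SP}}M^{SP}=p_E(\underline z^{SP})$, $\partial_{E^{SP}}\mathcal Z^{SP}=\int_{\underline z^{SP}}^{\infty}z^{\sigma-1}\mu^E(z)\d z$, $\partial_{\underline z^{SP}}M^{SP}=-\phi$, $\partial_{\underline z^{SP}}\mathcal Z^{SP}=-(\underline z^{SP})^{\sigma-1}\phi$ (with $\mathcal Z^{SP}=\int z^{\sigma-1}m^{SP}(z)\d z$), plus the induced derivatives of $L^{p,SP}$. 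Assumption~\ref{as:policy}\ref{1a} is then invoked to replace $\mathcal Z^{SP}/M^{SP}$ by $\E_{\mu^E}\!\left[z^{\sigma-1}\mid z\geq\underline z^{SP}\right]=(\underline z^{SP})^{\sigma-1}\,\E_{\mu^E}\!\left[(z/\underline z^{SP})^{\sigma-1}\mid z\geq\underline z^{SP}\right]$, precisely the conditional moment in the statement.

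The crux is to combine the two conditions so as to eliminate the variety-externality term $q-\tfrac1{\sigma-1}$ and the shadow price $f^c/L^{p,SP}$. Solving the $\underline z^{SP}$-condition for $f^c/L^{p,SP}$ and inserting it into the $E^{SP}$-condition makes the $q\,p_E(\underline z^{SP})/M^{SP}$ contributions cancel, leaving one equation $f^e/f^c=\Psi(\underline z^{SP};q,\sigma)$ whose right-hand side is built only from $p_E(\underline z^{SP})$ and $\E_{\mu^E}[(z/\underline z^{SP})^{\sigma-1}\mid z\geq\underline z^{SP}]$ and contains neither $E^{SP}$ nor $m_{t-1}$ — so the planner's cutoff is history-independent, mirroring Lemma~\ref{lemma:cutoff}. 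For part~III I then compare $\Psi$ with the decentralized cutoff relation, which is \eqref{eq:cutoff} with $f^c$ replaced by $f^c(1-\theta^c)$ (a lump-sum tax closing the household budget): equating the two at $\underline z=\underline z^{SP}$ solves for $\theta^c$ and, after rearrangement, yields the displayed formula for $1-\theta^c(\underline z^{SP})$, whose sign gives $\theta^c\gtreqqless 0\iff q\gtreqqless 1/(\sigma-1)$. I would also check that the induced entry mass equals $E^{SP}$, by verifying that under the same $\theta^c$ the $E^{SP}$-condition reproduces the decentralized free-entry/labor-clearing relation. Part~I then follows: at $q=q^{CES}$ the factor relating $\Psi$ to the right-hand side of \eqref{eq:cutoff} equals $1$, so $\theta^c=0$ and the planner's conditions reduce to the market equilibrium conditions; conversely, for $q\neq q^{CES}$ that factor differs from $1$, so the market cutoff fails the planner's condition and the allocations differ.

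Part~II is comparative statics on the same two reduced conditions. Assumption~\ref{as:policy}\ref{1b} makes $\E_{\mu^E}[(z/\underline z)^{\sigma-1}\mid z\geq\underline z]$ decreasing, hence $\Psi(\cdot;q,\sigma)$ strictly monotone in its first argument; since $\Psi$ is also monotone in $q$, the implicit function theorem gives $\partial\underline z^{SP}/\partial q<0$ with no incumbents (part~II.c), and the absence of $m_{t-1}$ from $\Psi$ gives the statement that incumbents do not move the cutoff (first half of part~II.b). Substituting $\underline z^{SP}(q)$ back into the $E^{SP}$-condition together with the binding labor constraint yields $E^{SP}$ explicitly as a function of $q$ and of the inherited incumbent mass and market intensity; differentiating, the direct ``more love of variety'' channel and the indirect channel through a lower cutoff (which raises $p_E$) both push $E^{SP}$ up, giving part~II.a. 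Finally, the $E^{SP}$-condition is strictly decreasing in the incumbents' per-firm contribution to market intensity, so replacing a cohort of incumbents by a more productive one of the same size lowers $E^{SP}$ and, since $\underline z^{SP}$ and hence $p_E(\underline z^{SP})$ are unchanged, lowers $M^{SP}=M^I+E^{SP}p_E(\underline z^{SP})$ — the remainder of part~II.b.

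The main obstacle I anticipate is twofold. First, the objective is non-smooth through the indicator $\mathbb{I}_{\{z\geq\underline z^{SP}\}}$, so one must justify differentiation in $\underline z^{SP}$, handle the possibility that the planner does not retain every incumbent, and confirm the stationary point is a maximum; Assumption~\ref{as:policy}\ref{1a} is exactly what keeps $\mathcal Z^{SP}/M^{SP}$ in closed form and is doing real work here. Second, and more delicate, is the comparative static for part~II.a: the sign of $\partial E^{SP}/\partial q$ mixes a direct effect with an effect routed through $\underline z^{SP}(q)$, and establishing that the net effect is positive — together with pinning down the correct ``more productive incumbents'' comparison in part~II.b — is where Assumption~\ref{as:policy}\ref{1b} is indispensable and where I expect most of the work to lie.
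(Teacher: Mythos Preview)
Your approach is essentially the paper's: both take the two interior first-order conditions of \eqref{eq:SP_obj}, substitute one into the other under Assumption~\ref{as:policy}\ref{1a} (the paper phrases this as the steady-state limit $Q\to 1$) to obtain a cutoff relation that is independent of $E^{SP}$ and of incumbents, and then read Parts~I--III off the resulting pair of equations. The only cosmetic difference is that for Part~III the paper matches the planner's exit FOC against the subsidized zero-profit condition \eqref{eq:zcp-pe} rather than matching your $\Psi$ against the subsidized free-entry cutoff \eqref{eq:cutoff}, but the two routes deliver the same $\theta^c$.
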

The result in Proposition \ref{prop:SP_solution} extends previous insights from \cite{spence1976product,dixit1977monopolistic,mankiw1986free,parenti2017toward, bilbiie2019monopoly, dhingra2019monopolistic} and \cite{matsuyama2020does} to economies with firm heterogeneity and incumbents. When the planner is constrained by the presence of already existing incumbent firms, it optimally chooses the same cutoff and mass of entrants as in the market equilibrium if and only if the intensity of love-of-variety is that of CES preferences. 
If LoV is larger, there are inefficiently few firms operating in the market. These potential inefficiencies represent steady-state distortions. For any LoV, the optimal number of entrants is decreasing in the mass incumbent. In our setting, where incumbents are, on average, more productive than successful entrants, the planner optimally includes fewer firms the larger the mass of incumbents. Intuitively, the social planner internalizes the trade-off between allocating more labor to the most productive firms versus producing more varieties through a less productive entry margin. 
Without incumbents, LoV affects the cutoff in an intuitive way: a higher $q$ leads the social planner to decrease $\underline z$ and induce further entry. 

Finally, Proposition \ref{prop:SP_solution}.\ref{prop:SP_subsidy} characterizes the optimal policy intervention. A subsidy/tax to fixed costs is sufficient to restore efficiency. Importantly, whether the planner wants to subsidize or tax the presence of firms in the economy depends exactly on whether $q$ is larger or smaller than $1/(\sigma-1)$. When $q>1/(\sigma-1)$, there are inefficiently few firms, so the planner optimally chooses a subsidy $theta^c>0$.

Next, we characterize the optimal policy intervention during business cycles.

\subsection{Optimal Policy through the Cycle}\label{sec:optim-pttc}
We compare the planner solutions and market allocations subject to a cycle like in Section \ref{sec:business_cycles}. Figure \ref{fig:policy_cycle} illustrates the comparison. 
\begin{figure}[htbp]
    \centering
    \includegraphics[width=\linewidth]{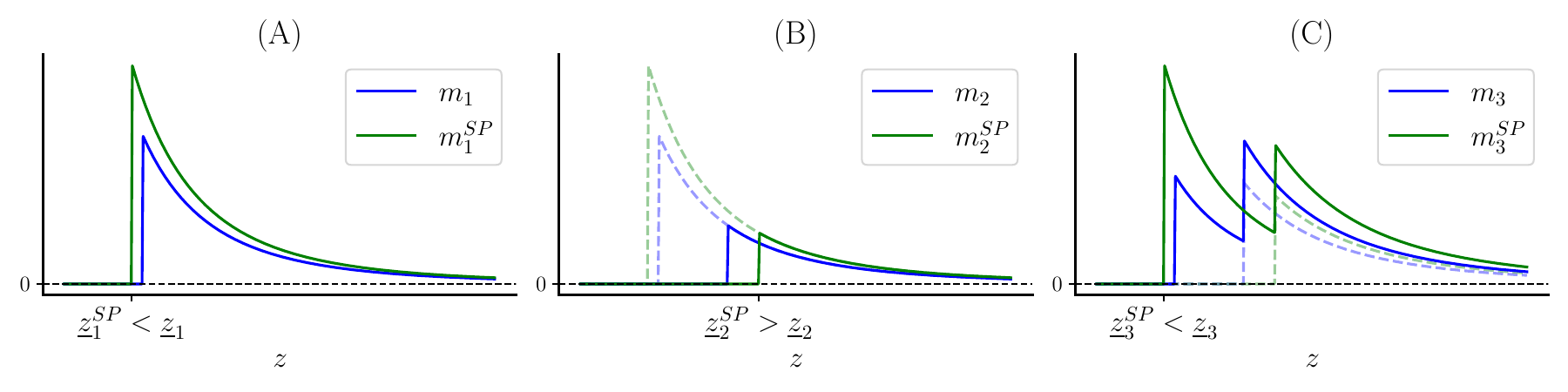}
    \caption{Evolution of $m$ through the business cycle for the market and social planner allocations, assuming that $q > q^{CES}$.}
    \label{fig:policy_cycle}
\end{figure}
Suppose that $q>1/(\sigma-1)$, then in the original allocation there are too few firms and the planner induces additional entry by implementing the optimal policy from Proposition \ref{prop:SP_solution}.\ref{prop:SP_subsidy}. Before the recession, the planner allocation features more firms, as shown in Panel (A). Suppose that the two economies experience an increase in the fixed cost. The planner's economy, by virtue of the larger measure of firms, is effectively hit harder by the rising fixed costs. As a consequence, the cutoff experiences a much larger rightward shift than in the laissez-faire allocation. Importantly, the planner finds it optimal to increase the subsidy during the crisis, which counteracts this force, saving some firms from exit.  Yet, the realized cutoff may be to the right of the laissez-faire economy, because the planned economy enters the crisis with a thicker right tail. When the rise in fixed costs subsides, firms repopulate the economy. Since the planner is instating the steady-state subsidy from Proposition \ref{prop:SP_solution}, more firms enter, and the post-crisis distribution features more varieties than the market allocation would deliver. 
We summarize the optimal policy response in the three phases of the cycle.

\begin{prop}[Optimal Policy over the Cycle]\label{prop:steady-state-subsidy}
    {{The optimal fixed cost taxes/subsidies are counter-cyclical in absolute value:  $|\theta_1^c| \leq |\theta_2^c|$ and $|\theta_3^c| \leq |\theta_2^c|$.}}
\end{prop}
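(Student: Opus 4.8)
The plan is to work with the characterization of the optimal fixed-cost subsidy from Proposition \ref{prop:SP_solution}.\ref{prop:SP_subsidy}, which expresses $1-\theta^c$ as a monotone function of the single object $g(\underline z^{SP}) := \E_{\mu^E}\big[(z/\underline z^{SP})^{\sigma-1}\mid z\geq \underline z^{SP}\big]$, and to reduce the claimed counter-cyclicality of $|\theta^c_\tau|$ to a statement about how the planner's cutoff $\underline z_\tau^{SP}$ moves across the three phases together with Assumption \ref{as:policy}\ref{1b} (that $g$ is decreasing in the cutoff). Concretely, from the formula $1-\theta^c(\underline z) = \big[(q(\sigma-1)-1)\,g(\underline z)^{-1}+1\big]^{-1}$ one reads off that $\theta^c(\underline z)$ has the sign of $q(\sigma-1)-1$ and that $|\theta^c(\underline z)|$ is a monotone function of $g(\underline z)$: when $q>q^{CES}$ the subsidy $\theta^c>0$ is \emph{increasing} in $g$, hence \emph{decreasing} in $\underline z$ by Assumption \ref{as:policy}\ref{1b}; when $q<q^{CES}$ the tax $-\theta^c>0$ is \emph{decreasing} in $g$, hence \emph{increasing} in $\underline z$; and when $q=q^{CES}$ we have $\theta^c\equiv 0$ and there is nothing to prove. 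So in every case the proof comes down to showing that the planner's realized cutoff in the crisis phase is more extreme (in the relevant direction) than in phases 1 and 3 — i.e. $\underline z_2^{SP}\geq \underline z_1^{SP}$ and $\underline z_2^{SP}\geq \underline z_3^{SP}$ when $q>q^{CES}$, and the reverse inequalities when $q<q^{CES}$ — after which monotonicity of $|\theta^c(\cdot)|$ delivers $|\theta^c_1|\leq|\theta^c_2|$ and $|\theta^c_3|\leq|\theta^c_2|$.

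Next I would pin down the planner's cutoffs phase by phase. In phases 1 and 3 the planner's economy is in a state of entry, so by Proposition \ref{prop:SP_solution}.II(b) the steady-state cutoff is \emph{independent of the incumbent distribution} and is the same object in both phases — call it $\underline z^{SP,ss}$; this already gives $\underline z_1^{SP}=\underline z_3^{SP}$, so it suffices to compare phase 2 against this common value. (One must check that phase 3 is genuinely a state of entry for the planner's economy; this follows as in the market case since the crisis truncates $m_2$ from the left and, once $f^c$ reverts, expected entrant profits exceed $f^e$ at the old cutoff.) In phase 2 there is no entry: the planner inherits the (subsidized, hence larger) pre-crisis distribution $m_1^{SP}$ and chooses only the cutoff $\underline z_2^{SP}$ to maximize phase-2 output subject to labor market clearing, with $m_2^{SP}(z)=m_1^{SP}(z)\mathbb{I}_{\{z\geq \underline z_2^{SP}\}}$. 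I would write the phase-2 first-order condition for $\underline z_2^{SP}$ from \eqref{eq:SP_obj} — it balances the labor freed up by raising the cutoff (which goes into production and is valued more when $q$ is high, via fewer $M f^c$ payments) against the lost TFP and lost variety — and compare it with the phase-1/3 steady-state optimality condition. The key comparative-statics fact to extract is that a higher love-of-variety $q$ pushes the planner toward a \emph{lower} cutoff in the entry phases (Proposition \ref{prop:SP_solution}.II(c) states exactly this in the no-incumbent case, which covers phase 1) but, because in phase 2 the planner cannot replace exiting varieties by entry, the binding fixed-cost increase forces the cutoff up relative to its entry-phase target; the subsidy is raised precisely to blunt, but not fully undo, this forced increase — so $\underline z_2^{SP}>\underline z^{SP,ss}$ when $q>q^{CES}$. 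The mirror-image argument (planner taxes firms, wants a higher cutoff in entry phases, and the crisis-driven forced exit is in the \emph{same} direction as what the planner wants, but the planner raises the tax further in the crisis to push the cutoff even higher) handles $q<q^{CES}$.

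I expect the main obstacle to be making the phase-2 cutoff comparison fully rigorous when incumbents are present, because Assumption \ref{as:policy}\ref{1a} (active firms are a truncation of $\mu^E$) holds in phase 1 and phase 2 but \emph{not} in phase 3, and the clean formula for $\theta^c$ was derived under that assumption. The cleanest route is to note that $\theta^c_\tau$ is evaluated at the \emph{realized} cutoff $\underline z^{SP}_\tau$ in each phase, and that the subsidy formula in Proposition \ref{prop:SP_solution}.\ref{prop:SP_subsidy} is a statement about the wedge needed to decentralize a \emph{given} cutoff — it depends on $\underline z$ and the entrant distribution $\mu^E$ only, not on the incumbent stock — so it applies in phase 3 as well with $\underline z_3^{SP}=\underline z^{SP,ss}$. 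Granting that, the whole argument collapses to: (i) $|\theta^c(\underline z)|$ is monotone in $\underline z$ with direction governed by $\mathrm{sgn}(q-q^{CES})$ (from the formula plus Assumption \ref{as:policy}\ref{1b}); (ii) $\underline z_1^{SP}=\underline z_3^{SP}=\underline z^{SP,ss}$ (from Proposition \ref{prop:SP_solution}.II(b)); (iii) the phase-2 cutoff is more extreme than $\underline z^{SP,ss}$ in the direction dictated by $\mathrm{sgn}(q-q^{CES})$ (the one genuinely new comparative static, proved from the phase-2 FOC). Combining (i)–(iii) yields $|\theta^c_1|=|\theta^c_3|\leq|\theta^c_2|$, which is stronger than the stated inequalities. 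If establishing (iii) in full generality turns out to be delicate, a fallback is to invoke Assumption \ref{as:policy}\ref{1b} together with the already-proven market-side fact that the crisis cutoff exceeds the steady-state cutoff (Lemma \ref{lemma:cutoff} logic) and a monotone-comparative-statics argument on \eqref{eq:SP_obj} in the crisis phase, $\tau=2$, treating $\underline z_2^{SP}$ as the only choice variable.
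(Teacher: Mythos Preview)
Your overall strategy—reduce counter-cyclicality of $|\theta^c_\tau|$ to (i) monotonicity of $|\theta^c(\underline z)|$ in the cutoff via the formula in Proposition~\ref{prop:SP_solution}.\ref{prop:SP_subsidy} together with Assumption~\ref{as:policy}\ref{1b}, and (ii) an ordering of the planner's cutoffs across phases—is the paper's route. But step (i) contains a sign error that inverts the argument. From the formula one has $\theta^c = \dfrac{q(\sigma-1)-1}{q(\sigma-1)-1+g(\underline z)}$ with $g(\underline z):=\E_{\mu^E}\!\big[(z/\underline z)^{\sigma-1}\mid z\geq\underline z\big]\geq 1$, so $\partial\theta^c/\partial g = -[q(\sigma-1)-1]/[q(\sigma-1)-1+g]^2$. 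For $q>q^{CES}$ the subsidy is therefore \emph{decreasing} in $g$, not increasing; under Assumption~\ref{as:policy}\ref{1b} this makes $|\theta^c|$ \emph{increasing} in $\underline z$, the opposite of what you assert. Carried through with your own cutoff ordering $\underline z_2^{SP}\geq\underline z_1^{SP}$, your stated monotonicity would deliver $|\theta_2^c|\leq|\theta_1^c|$, contradicting the proposition—so the chain as written does not close.

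Once the sign is fixed, the case split on $\mathrm{sgn}(q-q^{CES})$ in step (ii) is both unnecessary and incorrect. The same calculation shows $|\theta^c|$ is increasing in $\underline z$ for \emph{all} $q\neq q^{CES}$ (the denominator satisfies $q(\sigma-1)-1+g> q(\sigma-1)>0$ because $g\geq 1$). And the planner's cutoff rises in the crisis phase \emph{regardless} of $q$: in the phase-2 exit first-order condition (eq.~\eqref{eq:foc_SP_exit}), the residual is strictly increasing in $f^c$ and strictly decreasing in $\underline z$ (the labor-share term falls via $p_E'<0$, and by Assumption~\ref{as:policy}\ref{1b} the $-1/g(\underline z)$ term falls as well), so $d\underline z_2^{SP}/df^c>0$ by the implicit function theorem. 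There is no reversal of the cutoff inequality when $q<q^{CES}$. With these two corrections—$|\theta^c|$ increasing in $\underline z$ for all $q\neq q^{CES}$, and $\underline z_2^{SP}>\underline z_1^{SP}=\underline z_3^{SP}$ for all $q$—your steps (i)--(iii) do go through and deliver the stated inequalities.
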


The social planner's optimal policy is to increase the magnitude of its steady-state interventions, determined by Proposition \ref{prop:SP_solution}.\ref{prop:SP_subsidy}. If the economy features $q>1/(\sigma-1)$ then the planner's steady-state policy is a subsidy $\theta_1>0$, in which case during period 2 at the peak of the crisis, the planner optimally increases the subsidy to $\theta_2>\theta_1$. The intuition is that, during the recession, firms' exit permanently lowers the number of available varieties. To avoid this long-run loss, the planner intervenes by softening the blow and avoiding the exit of some of the firms.
As the recession subsides, the planner reduces the subsidy to a lower level $\theta_3<\theta_2$. Conversely, if $q<1/(\sigma-1)$, the optimal steady-state policy is a tax $\theta_1<0$ and the planner increases this tax during the recession. In this scenario, the planner takes advantage of the crisis as a cleansing moment and uses it to \textit{pick winners} in the long-run steady state. 
We conclude that the social planner implements an optimal cyclical policy, which can be inferred from their steady-state behavior. 

Given the importance of \textit{love-for-variety} $q$ and the price elasticity of demand $\sigma$ in shaping the optimal intervention, in the next section, we provide a new approach to identify them separately.
\section{Estimation}\label{sec:empirics}

We set out to estimate our two key parameters: $q$ and $\sigma$, governing love-for-variety and substitutability across varieties. We estimate $q$ and $\sigma$ in a cross-industry version of our model for all the countries available in the World Input-Output Database \citep{timmer2015illustrated}, henceforth WIOD, as well as $\sigma$ in the Spanish administrative firm-level data (SABI). Since we consider a large number of different industries, $i \in\mathbf{I}$, over time, $t\in \mathbf{T}$, and across countries, $c \in \mathbf{C}$, we study the partial equilibrium framework of Section \ref{sec:business_cycles}. We discuss the identification of $\sigma$ and $q$ in turn.

\subsection{Data}
We use two main data sources: the World Input-Output Database (WIOD) and the Spanish administrative balance sheet data (SABI). We use the former to estimate our key parameter of interest and the latter for validation and calibration.
\paragraph{Input-Output Data}
The primary data source for the identification of $q$ is the World Input-Output Database (WIOD) 2016 release, see \cite{timmer2015illustrated}. It contains the Input-Output structure of sector-to-sector flows for $C=43$ countries from 2000 to 2014 yearly. The data is available at the 2-digit ISIC rev-4 level. The number of sectors in WIOD is $I=56$, which amounts to 6,071,296 industry-to-industry flows and 108,416 industry-to-country flows for every year in the sample. The full country and sector coverage is reported in Appendix \ref{app:wiod}.

The World Input-Output Table is an $(I \times C)$ by $(I \times C)$ matrix. Each element $Z_{cd}^{ij}$ denotes the sales of industry $i$ in country $c$ to industry $j$ in country $d$ for intermediate input use. Additionally, the data includes an $(I\times C)$ by $C$ matrix of final use. Each element $F_{cd}^i$ is the value of sales of industry $i$ in country $c$ sold to and consumed by households, government, and non-profit organizations in destination $d$. Denote $F_c^i=\sum_dF_{cd}^i$ the value of output of sector $i$ in country $c$ consumed in any country in the world. The total value of sales is $S_c^i=F^i_c+\sum_d\sum_j Z_{cd}^{ij}$. 

WIOD also includes industry-level deflators. Importantly, the data is also provided at previous-year prices, which allows us to concatenate the data to obtain changes in output from changes in sales at different prices, as we explain in Appendix \ref{sec:salesVoutput}.

We complement the I-O table with the data in the Socio-Economic Accounts, which include information on input usage for all sector-country-year tuples. In particular, we make use of the information on labor used in production by these industries.

\paragraph{Spanish Administrative Data (SABI)} The Sistema de Análisis de Balances Ibéricos (SABI) is a firm-level database maintained by Bureau van Dijk and Informa D\&B. It contains standardized financial statements and company information for Spanish firms, drawing on official sources such as the Boletín Oficial del Registro Mercantil, the Mercantile Registry, and stock exchange filings. The database provides historical annual accounts, including balance sheets and income statements. Our sample covers the period 2000–2018 at an annual frequency and includes 768,296 unique firms, with about 384,571 firms per year on average.
\subsection[Identification of sigma]{Identification of $\sigma$}
In our model, $\sigma$ can be identified directly from the profit share since $\pi/R = 1/\sigma $. However, the WIOD data is built such that the sum of material, labor, and capital expenditure is equal to gross output. This implies lumping together economic profits with the cost of capital. For the purpose of identifying $\sigma$, we make the conservative assumption that all capital expenses are, in fact, profits. Hence, we compute this profit rate as gross output net of worker compensation and intermediate input expenditures divided by gross output: $1/\sigma = (GO-COMP-II)/GO$ for all industry-country-year triplets. Clearly, this implies that we are overstating the profit share in our measurement. As a consequence, our estimated $\sigma$ is a lower bound for the true elasticity of substitution and, therefore, the implied $q^{CES}$ is larger than the true $q^{CES}$. Alongside the WIOD-based measurement, we compute $\sigma$ using the universe of firms in the Spanish SABI database. We apply the same conservative treatment of capital costs as profits, which similarly leads to an upward-biased value of $q^{CES}$.

\subsection[Identification of q]{Identification of $q$}

The estimation of love-for-variety has been extremely elusive for decades. The main difficulty is that it requires exogenous variation in the number of available varieties to estimate its effect on output. This, naturally, requires taking a stance on what a variety is in the data. Our starting point to estimate $q$ is different: we use a structural relation of the model, linking changes in expenditure to changes in output through $q$. We start by taking logs of the aggregate production function in eq. (\ref{eq:gdp}), and differentiating with respect to $\log \mathcal I$ to obtain
\begin{align}\label{eq:empirical_logY}
    \frac{\partial\log Y}{\partial \log \mathcal I}    &= q \frac{\partial\log M}{\partial \log \mathcal I}+ 1 + \frac{\partial\log \bar z}{\partial \log \mathcal I}.  
\end{align}
A change in nominal expenditure of downstream customers on the industry, $\mathcal I$, induces a 1-for-1 direct increase in output as well as two indirect effects. First, higher expenditure induces entry, which in turn increases $M$. This effect generates a change in output that depends only on $q$. Second, by inducing entry, higher $\mathcal I$ changes the average productivity of firms, $\log \bar z$, provided entrants/exiters do not have the same average productivity as incumbents. Since in equilibrium production labor $L^p$ is proportional to $\mathcal I$, the same equations and all following results hold for changes in $L^p$, too. Next, we note that the economy does not respond symmetrically to expansions and contractions. In an expansion, the average productivity of firms in the economy changes if entrants are more or less productive than incumbents on average. Differently, a reduction in $\mathcal I$ always increases average productivity, since exit occurs from the bottom of the distribution. The following Lemma characterizes the elasticities given a \textit{small increase in} $\mathcal I$. 

\begin{lemma}[Elasticities In Output Equation]\label{lem:emp_elasticities}
    Suppose, entrants and incumbents are distributed according to $m^E(\, \cdot\, ) \propto \mu^E(\, \cdot \g z \geq \underline z)$ and $m^I(\, \cdot\, )=E^I\mu^E(\, \cdot \g z \geq z^*), E^I\geq 0$, respectively.%
    \footnote{%
    Here, $E^I$ is the number of `historical' entrants, i.e., the mass of incumbent firms that have tried, and of which $E^I \int_{z^*}^\infty \mu^E(z) dz$ have succeeded in entering the market.} %
    Then, the right-derivatives $\frac{\partial_+\log M}{\partial_+ \log \mathcal I}$ and $\frac{\partial_+\log \bar z}{\partial_+ \log \mathcal I}$ are given by:
    \begin{align}
       & \frac{\partial_+\log \bar z}{\partial_+ \log \mathcal I} = \text{\footnotesize$\frac{1}{\sigma-1}$} \left( 1 -  \frac{\partial_+\log M}{\partial_+ \log \mathcal I}\right),\: \text{ and}\:&
        \frac{\partial_+\log M}{\partial_+ \log \mathcal I} = \frac{E_{\g E^I=0}p_E(\underline z)}{E_{\g E^I=0}p_E(\underline z) + E^I\, k(\underline z, z^*)},
    \end{align}
    where $p_E(\underline z):=\int_{\underline z}^\infty \mu^E(z) dz$, $E_{\g E^I=0}$ is the number of entrants if there are no incumbents, and $k$ is a function satisfying $k(\underline z,z^\ast) \lesseqgtr 0 $ if $\underline z \lesseqgtr z^\ast$.
\end{lemma}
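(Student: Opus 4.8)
The plan is to turn the statement into elementary comparative statics of the entry margin with the cutoff frozen. Write $\mathcal Z:=\int z^{\sigma-1}m(z)\,\d z$ for market intensity, $p_0(a):=\int_a^\infty\mu^E(z)\,\d z$ for the mass of $\mu^E$ above $a$, and $\zeta(a):=\int_a^\infty z^{\sigma-1}\mu^E(z)\,\d z$. With $m=m^I+m^E$ as in the hypothesis and $E$ the mass of fresh draws, one then has $M=E^I p_0(z^\ast)+E\,p_0(\underline z)$, $\mathcal Z=E^I\zeta(z^\ast)+E\,\zeta(\underline z)$, and, by definition of average productivity, $\bar z^{\,\sigma-1}=\mathcal Z/M$. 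I would take the right-derivative (an expansion) precisely so that entry is triggered, $E>0$: then by Lemma~\ref{lemma:cutoff} the cutoff $\underline z$ is pinned down by the free-entry condition~\eqref{eq:cutoff}, which contains neither $\mathcal I$ nor the incumbent block, so $\partial_+\underline z/\partial_+\log\mathcal I=0$. (For a contraction the cutoff would instead rise and force incumbent exit; this is exactly the expansion/contraction asymmetry flagged just before the lemma, and it is why only the right-derivative is claimed.)

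With $\underline z$ frozen, combining the zero-profit condition~\eqref{eq:zcp-pe} with the profit expression~\eqref{eq:pi-pe} gives $\mathcal Z=\mathcal I\,\underline z^{\sigma-1}/(\sigma f^c)$, so $\mathcal Z$ is proportional to $\mathcal I$ and $\partial_+\log\mathcal Z/\partial_+\log\mathcal I=1$. Differentiating $\log\bar z=\tfrac1{\sigma-1}(\log\mathcal Z-\log M)$ and using this immediately yields the first identity, $\partial_+\log\bar z/\partial_+\log\mathcal I=\tfrac1{\sigma-1}\bigl(1-\partial_+\log M/\partial_+\log\mathcal I\bigr)$, with no further input.

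For the second identity I would solve out $E$. From $\mathcal Z=E^I\zeta(z^\ast)+E\,\zeta(\underline z)$ and the previous display, $E=E_{\mid E^I=0}-E^I\zeta(z^\ast)/\zeta(\underline z)$, where $E_{\mid E^I=0}=\mathcal I\,\underline z^{\sigma-1}/(\sigma f^c\,\zeta(\underline z))$ is the entry mass of the incumbent-free economy (whose cutoff is the same, again by Lemma~\ref{lemma:cutoff}). Substituting into $M=E^I p_0(z^\ast)+E\,p_0(\underline z)$ gives $M=E_{\mid E^I=0}\,p_0(\underline z)+E^I k(\underline z,z^\ast)$ with $k(\underline z,z^\ast):=p_0(z^\ast)-\tfrac{\zeta(z^\ast)}{\zeta(\underline z)}p_0(\underline z)$. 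Since the incumbent block $E^I p_0(z^\ast)$ does not move with $\mathcal I$, one has $\partial_+ M/\partial_+\log\mathcal I=p_0(\underline z)\,\partial_+ E_{\mid E^I=0}/\partial_+\log\mathcal I=p_0(\underline z)\,E_{\mid E^I=0}$; dividing by $M$ and using the displayed form of $M$ gives exactly the claimed formula for $\partial_+\log M/\partial_+\log\mathcal I$.

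It remains to sign $k$: $k(\underline z,z^\ast)\gtrless 0\iff \zeta(\underline z)/p_0(\underline z)\gtrless\zeta(z^\ast)/p_0(z^\ast)$, i.e.\ $\E_{\mu^E}[z^{\sigma-1}\mid z\ge\underline z]\gtrless\E_{\mu^E}[z^{\sigma-1}\mid z\ge z^\ast]$. Since $z\mapsto z^{\sigma-1}$ is strictly increasing, the truncated mean $a\mapsto\E_{\mu^E}[z^{\sigma-1}\mid z\ge a]$ is strictly increasing, so this holds iff $\underline z\gtrless z^\ast$; hence $k(\underline z,z^\ast)\lesseqgtr 0$ according as $\underline z\lesseqgtr z^\ast$. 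There is no substantive analytic obstacle here; the steps requiring the most care are invoking Lemma~\ref{lemma:cutoff} correctly to freeze $\underline z$ — and hence the ratio $\mathcal Z/\mathcal I$ — along an expansion (which is what makes the one-sided derivative well-behaved), and keeping the given incumbent block $(E^I,z^\ast)$, held fixed rather than re-truncated at the current cutoff, cleanly separated from the responsive entrant block when differentiating $M$ and $\mathcal Z$; the sign claim then reduces entirely to the monotonicity of the truncated power mean of $\mu^E$ in its truncation point, which is also what tells us whether incumbents are on average more or less productive than a fresh entry cohort.
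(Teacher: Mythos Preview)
Your proposal is correct and follows essentially the same route as the paper: freeze $\underline z$ via the cutoff-determination lemma (so $\mathcal Z\propto\mathcal I$ from the zero-profit condition), solve out $E$ from market intensity, and differentiate $M$ and $\log\bar z=\tfrac{1}{\sigma-1}(\log\mathcal Z-\log M)$. Your derivation of the $\bar z$ elasticity is slightly more direct than the paper's (which writes out the post-entry density and simplifies), and you spell out the monotonicity of $a\mapsto\E_{\mu^E}[z^{\sigma-1}\mid z\ge a]$ to sign $k$, which the paper leaves implicit; otherwise the arguments coincide.
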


For identification, we make Assumption \ref{ass:entrant_incum_dist}, which states that the entrant distribution and the incumbent distribution have the same cutoff. This has three interpretations. Either i) the industry has not undergone a sequence of selection-inducing fixed-cost recessions yet. Or ii) the industry is in a long-run equilibrium with exogenous exit (see Proposition \ref{prop:ext_idrisk_dyn}), such that the cleansing effects from the last recession have fully subsided. Or, finally, iii) entrants `learn' from incumbents, such that their technologies, conditional on clearing the cutoff, $\underline{z}$, are no worse than those of existing businesses.

\begin{assumption}[Entrant and Incumbent Distribution]\label{ass:entrant_incum_dist}
    $\underline z = z^*.$
\end{assumption}

Under Assumption \ref{ass:entrant_incum_dist}, entry induced by shocks such as income expansions that leave $f^c/f^e$ and hence the cutoff unchanged do not affect average productivity, $\bar z$. As a consequence, we have the following identification result.

\begin{prop}[Identification of $q$]\label{prop:identification}
    Under Assumption \ref{ass:entrant_incum_dist},
    \begin{align}
        \frac{\partial_+\log Y}{\partial_+ \log X} = 1+q, \quad X\in \{\mathcal I, L^p\},
    \end{align} and in the regression equation %
    \begin{align} \label{eq:regression_zbar}
        \Delta\log Y_{i,t} = \beta \Delta \log X_{i,t} +\gamma\Delta \log\bar z_{i,t}+\epsilon_{i,t}, \quad X_{i,t}\in \{\mathcal I_{i,t}, L^p_{i,t}\},
    \end{align} %
    we have %
        $\Delta \log X_{i,t} > 0 \implies \Delta \log\bar z_{i,t} = 0\ \text{ and }\ \beta = 1+q.$
\end{prop}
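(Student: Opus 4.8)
The plan is to start from the aggregate production function in eq.~\eqref{eq:gdp}, written in the log-level form $\log Y = q\log M + \log L^p + \log\bar z$ that appears just after Proposition~\ref{proposition:PE-cycles}, and combine the two inputs of Lemma~\ref{lem:emp_elasticities}. First I would invoke Lemma~\ref{lem:emp_elasticities} to record the identity $\frac{\partial_+\log\bar z}{\partial_+\log\mathcal I} = \frac{1}{\sigma-1}\bigl(1 - \frac{\partial_+\log M}{\partial_+\log\mathcal I}\bigr)$, together with the formula for $\frac{\partial_+\log M}{\partial_+\log\mathcal I}$ in terms of $k(\underline z, z^\ast)$. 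Then I would impose Assumption~\ref{ass:entrant_incum_dist} ($\underline z = z^\ast$), which by the sign property of $k$ stated in the lemma ($k(\underline z, z^\ast)\lesseqgtr 0$ as $\underline z \lesseqgtr z^\ast$) forces $k(\underline z, z^\ast) = 0$. Substituting $k=0$ into the expression for $\frac{\partial_+\log M}{\partial_+\log\mathcal I}$ collapses the denominator to $E_{\mid E^I=0}\,p_0(\underline z)$, so the elasticity equals $1$, i.e. the number of firms scales one-for-one with expenditure. Plugging $\frac{\partial_+\log M}{\partial_+\log\mathcal I}=1$ back into the $\bar z$-equation gives $\frac{\partial_+\log\bar z}{\partial_+\log\mathcal I}=0$.

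With those two facts in hand, the right-derivative of the log production function is $\frac{\partial_+\log Y}{\partial_+\log\mathcal I} = q\cdot 1 + \frac{\partial_+\log L^p}{\partial_+\log\mathcal I} + 0$, and since in partial equilibrium production labor $L^p$ is proportional to $\mathcal I$ (argued in Section~\ref{subsec:2.2} and restated after eq.~\eqref{eq:empirical_logY}), $\frac{\partial_+\log L^p}{\partial_+\log\mathcal I}=1$, yielding $\frac{\partial_+\log Y}{\partial_+\log\mathcal I} = 1+q$. The identical argument applies verbatim with $L^p$ in place of $\mathcal I$ because the two move proportionally, so $\frac{\partial_+\log Y}{\partial_+\log L^p}=1+q$ as well; this disposes of the $X\in\{\mathcal I, L^p\}$ clause.

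For the regression-equation statement, I would argue that a positive shock $\Delta\log X_{i,t}>0$ is, to first order, an income/labor expansion of the kind just analyzed, and that under Assumption~\ref{ass:entrant_incum_dist} such expansions leave the cutoff $\underline z$ unchanged (they do not touch $f^c/f^e$, which by eq.~\eqref{eq:cutoff} pins down the cutoff), hence induce entry drawn from the same truncation and therefore $\Delta\log\bar z_{i,t}=0$ — the finite-difference analog of the zero right-derivative just established. Given $\Delta\log\bar z_{i,t}=0$ on the relevant observations, the coefficient on $\Delta\log X_{i,t}$ in~\eqref{eq:regression_zbar} is exactly the elasticity $1+q$. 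The main obstacle I anticipate is the passage from the right-derivative statement to the finite-difference statement: one must be careful that a discrete positive expansion does not push the economy across a region where the cutoff would move, and that the ``asymmetry'' flagged before the lemma (contractions always raise $\bar z$, expansions need not change it) is used in the correct direction. I expect the paper handles this by restricting attention to expansions where no incumbent is forced out, so the cutoff is literally constant and $\Delta\log\bar z=0$ holds exactly rather than only to first order; I would make that restriction explicit and then the rest is immediate from the log-linear structure of eq.~\eqref{eq:gdp}.
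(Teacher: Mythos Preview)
Your proposal is correct and follows exactly the paper's route: invoke Lemma~\ref{lem:emp_elasticities}, use Assumption~\ref{ass:entrant_incum_dist} to obtain $k(\underline z,\underline z)=0$ and hence $\frac{\partial_+\log M}{\partial_+\log\mathcal I}=1$ and $\frac{\partial_+\log\bar z}{\partial_+\log\mathcal I}=0$, then substitute into eq.~\eqref{eq:empirical_logY}. The paper's proof is in fact terser than yours (it stops after the two elasticity conclusions), and your worry about the finite-difference passage is unnecessary: since a positive income shock leaves $f^e/f^c$ and hence the cutoff literally unchanged by eq.~\eqref{eq:cutoff}, the log-linearity of eq.~\eqref{eq:gdp} makes $\Delta\log\bar z=0$ hold exactly, not just to first order.
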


This result allows us to identify $q$ as the determinant of the elasticity of output to income changes. Underlying this result is that, by the free entry condition, an increase in income $\mathcal I$ increases the number of firms proportionally. The change in $M$ induces a direct change in output with elasticity $q$. In this sense, we estimate $q$ as governing the intensity of external returns to scale over and above the technological returns, which in our model are equal to 1 by the constant returns to scale assumption. Importantly, this only holds for positive changes in income and demanded labor since any negative change necessarily induces exit from the bottom of the distribution and, therefore, a response in $\log\bar z$. We conclude that we can estimate love-of-variety $q$ provided some exogenous increase in employment $L^p$ or expenditure $\mathcal I$ from %
\begin{align}
    \label{eq:regression}
    \Delta\log Y_{i,t} = \beta \Delta \log X_{i,t} +\epsilon_{i,t}, \quad X_{i,t}\in \{\mathcal I_{i,t}, L^p_{i,t}\}.
\end{align}%
Proposition \ref{prop:identification} enables us to estimate eq. (\ref{eq:regression}) with an instrument inducing \textit{positive variation} in $L^p$ or $\mathcal I$, while disregarding the average productivity term. We use this identification strategy as our benchmark and discuss possible threats to identification in the next section. In particular, how eq. (\ref{eq:regression_zbar})  extends to a setting with multiple factor inputs, non-constant returns to scale, and measurement error. Additionally, we discuss how failure of Assumption \ref{ass:entrant_incum_dist} still allows us to identify the sign of $q - q^{CES}$, given that we can identify $q^{CES} = \text{\footnotesize$\frac{1}{\sigma-1}$}$, or a lower bound thereof.

\subsection{Discussion of Identification}
\paragraph{Misspecified Production Function} Part of our identification result in Proposition \ref{prop:identification} relies on the assumption that labor is the only factor of production and that it enters the production function linearly. The constant returns to scale (CRS) underlie the ``1'' in our estimated coefficient of $\beta 
= 1+q$. Clearly, these assumptions are counterfactuals in empirical applications.  Our identification strategy readily extends to settings in which firms have multiple inputs combined under CRS. The ``1'' we identify in our regression should be interpreted as the returns to the optimal input bundle (in our model, trivially made only of labor). Importantly, in the absence of relative input price changes, homogeneity of the production function implies that factor demands are proportional to one another. As a consequence, estimating eq. (\ref{eq:regression}) on a single input (labor) does not affect our finding. Hence, with multiple inputs under CRS, we would still obtain $\beta=1+q$ from estimating eq. (\ref{eq:regression}). If the production function had non-constant returns to scale, we would have an upward bias in our estimated $q$ if the true data-generating process features increasing returns, and a downward bias if it features decreasing returns to scale. We summarize both these results in Remark \ref{identification_homog}.
\begin{remark}[Identification for Homogeneous Production Functions]\label{identification_homog}
    Proposition \ref{prop:identification} holds more generally for any production function that is homogeneous of degree 1. If the production function is instead homogeneous of degree $\chi$, the right derivative with respect to some input $X$ estimates
    \begin{align*}
        \frac{\partial_+\log Y}{\partial_+\log X} = \chi+q.
    \end{align*}
\end{remark}
Recent evidence estimating returns to scale points to small deviations from CRS, with returns to scale of at most $1.05$ \citep[see for example][]{chiavari2024customer}. This upper bound implies a maximal upward bias of just $0.05$ in our estimate of $q$, leaving our findings unaltered.

Importantly, our model-based interpretation of $\beta = 1+q$ also relies on the absence of alternative sources of external returns. Suppose otherwise, that the aggregate production function in equilibrium is given by $Y^\delta = M^q L \bar z$, where $\delta<1$ governs the extent of external returns that are not love-of-variety, such as agglomeration externalities or knowledge spillovers. Then, our estimate is $\beta = \frac{1+q}{\delta}$. As a consequence, we have a biased estimate $\hat q-q = (q+1)\frac{1-\delta}{\delta}>0$. This allows us to bound the extent of increasing returns that would undo our qualitative results. We provide these calculations together with our findings in Section \ref{sec:emp_results}.

\paragraph{Mismeasurement of Price Indices} An important requirement in our estimation approach is the observation of output, rather than sales. In our empirical analysis, we use deflated sales to obtain output. This relies on the correctness of the industry deflators as composites of the standard CES-implied price index $P^{CES}$ and the variety effect $M^{\frac{1}{\sigma-1}-q}$. A natural question when taking our approach to the data is whether the WIOD deflators incorporate the variety effect---and, accordingly, whether and under what assumptions $q$ is identified, and how large the bias can be when one recovers $q$ from Proposition \ref{prop:identification}, i.e. $\hat q = \hat \beta - 1$. We consider this in Remark \ref{rem:ident_pindex}. To fully characterize this problem, which is not always defined when deflators are incorrectly measured, we consider an economy where fixed and entry costs are produced by a Cobb-Douglas bundle of output (share $\alpha$) and labor (share $1-\alpha$) as described in Appendix \ref{ext:fixed-cost-in-final-good}.\footnote{We adopt this formulation because when deflators are incorrectly measured, the limit of our estimator is not informative on $q$, and, therefore, does not have a bounded bias. Using this more general model allows us to characterize the limit behavior of the bias and therefore find its sign.} Denote with $\sim$ objects measured or estimated using the incorrect deflators, which do not account for the variety effect. For example, call $\tilde Y$, sales deflated by $P^{CES}$ rather than $P$. 

\begin{remark}[Mismeasurement in Price Indices]\label{rem:ident_pindex}
Under Assumption \ref{ass:entrant_incum_dist} and positive demand shocks, the regressions
\begin{align}
\Delta\log Y_{i,t}=\beta\,\Delta\log X_{i,t}+\epsilon_{i,t},\qquad
\Delta\log \tilde Y_{i,t}=\tilde\beta\,\Delta\log X_{i,t}+\epsilon_{i,t},\qquad
X_{i,t}\in\{\mathcal I_{i,t},L^p_{i,t}\}
\end{align}
identify
\begin{align}
\beta \;=\; 1-\frac{q}{\alpha q-1},\qquad
\tilde\beta \;=\; 1-\frac{1}{(\sigma-1)(\alpha q-1)}.
\end{align}
Hence, given an estimate $\beta$ or $\tilde \beta$, and $\sigma$, the implied love-of-variety parameter $q$ is
\begin{align}
q(\alpha)=\frac{1-\beta}{\alpha(1-\beta)-1},\qquad
\tilde q(\alpha)=\frac{1}{\alpha}\!\left[1+\frac{1}{(\sigma-1)(1-\tilde\beta)}\right].
\end{align}
Taken together, these relationships deliver two results that allow us to bound our estimate against mismeasurement of price indices.
    \begin{itemize}
        \item[i)] When empirical deflators account for varieties, $q$ is identified for any $\alpha\in[0,1]$. Furthermore, the bias is given by $-\,\frac{\alpha(1-\beta)^2}{\,\alpha(1-\beta)-1\,}$, for $\alpha(1-\beta) \neq 1$.
        \item[ii)] When deflators do not account for varieties, $q$ is identified for any $\alpha\in(0,1]$ conditional on $\sigma$. Furthermore, the bias is given by $\frac{\,\alpha(\sigma-1)(\tilde\beta-1)^2-(\sigma-1)(\tilde\beta-1)+1\,}        {\alpha(\sigma-1)(\tilde\beta-1)}$, for $\alpha>0$ and $\tilde\beta\neq 1$. 
    \end{itemize}


\end{remark}

In summary, identification holds for any $\alpha$ when deflators are correctly reported, and for any $\alpha>0$ (given $\sigma$) when deflators fail to incorporate the variety effect. If one recovers $q$ using the result in Proposition \ref{prop:identification}, in anticipation of the empirical results, the bias is positive but small in the correctly-measured deflator case, while in the mismeasured case the bias is negative for our estimate of $\beta$ and $\sigma$. Our findings remain unaltered when taking into account the bias under both deflators. We provide these bounds together with our main estimates in Section \ref{sec:emp_results}.

\paragraph{Estimates without Assumption \ref{ass:entrant_incum_dist}} Our identification strategy requires that there is limited selection in incumbents. However, even if Assumption \ref{ass:entrant_incum_dist} fails entirely and $\underline z < z^*$ holds, we are still able to identify the sign of $q - q^{CES}$ and determine whether recessions (in partial equilibrium) are welfare-improving. This is formalized in the following proposition.

\begin{prop}[Identification of $\mathrm{sgn}(q - q^{CES})$]\label{prop:sign_identification} 
    Suppose that $q^{CES} = \text{\footnotesize$\frac{1}{\sigma-1}$}$ is identified. Let $\beta$ be the regression coefficient on $\Delta \log X_{i,t}$ in eq. (\ref{eq:regression}).     Then, the bias of $\hat q =  \beta -1 $ is given by 
    \begin{align}
        bias_+ =\hat q - q = \left(\text{\footnotesize$\frac{1}{\sigma-1}$} - q\right) \left( 1 -  \frac{\partial_+\log M}{\partial_+ \log \mathcal I}\right),
    \end{align}
    assuming an expansion in $X$. Furthermore, for $\underline z < z^*$, %
    \begin{align}
        \mathrm{sgn}\left( \hat q- \text{\footnotesize$\frac{1}{\sigma-1}$}\right) = \mathrm{sgn}\left( q - \text{\footnotesize$\frac{1}{\sigma-1}$}\right),
    \end{align}
    Which implies that even without Assumption \ref{ass:entrant_incum_dist}, we can identify the sign of $q-q^{CES}$.
\end{prop}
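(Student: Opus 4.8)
The plan is to build directly on Lemma \ref{lem:emp_elasticities} and the output-equation decomposition in eq. \eqref{eq:empirical_logY}, now \emph{without} imposing $\underline z = z^*$. First I would write out the full right-derivative of $\log Y$ with respect to $\log\mathcal I$ by substituting the two elasticities from Lemma \ref{lem:emp_elasticities}:
\begin{align*}
    \frac{\partial_+\log Y}{\partial_+ \log\mathcal I}
    = q\,\frac{\partial_+\log M}{\partial_+\log\mathcal I} + 1 + \frac{1}{\sigma-1}\left(1 - \frac{\partial_+\log M}{\partial_+\log\mathcal I}\right).
\end{align*}
Since the instrument induces a positive shift in $\mathcal I$, the regression coefficient $\beta$ in eq. \eqref{eq:regression} is exactly this right-derivative, so $\hat q = \beta - 1 = q\,\pi_M + \frac{1}{\sigma-1}(1-\pi_M)$ where I abbreviate $\pi_M := \partial_+\log M/\partial_+\log\mathcal I \in [0,1]$. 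Subtracting $q$ gives $\hat q - q = \left(\frac{1}{\sigma-1} - q\right)(1-\pi_M)$, which is the claimed bias formula. This step is essentially bookkeeping once Lemma \ref{lem:emp_elasticities} is in hand.

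The substantive step is the sign claim, which requires showing $\pi_M \in [0,1)$ strictly — i.e., $1-\pi_M>0$ — when $\underline z < z^*$. From the closed form in Lemma \ref{lem:emp_elasticities}, $\pi_M = \frac{E_{\mid E^I=0}\,p_0(\underline z)}{E_{\mid E^I=0}\,p_0(\underline z) + E^I k(\underline z,z^*)}$. The lemma already records the sign behavior of $k$: $k(\underline z,z^*)>0$ when $\underline z<z^*$. Hence the denominator strictly exceeds the numerator (the extra term $E^I k(\underline z,z^*)$ is strictly positive whenever $E^I>0$), so $\pi_M<1$, and $\pi_M\geq 0$ since numerator and denominator are positive. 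Therefore $1-\pi_M>0$, and the sign of $\hat q - q$ equals the sign of $\frac{1}{\sigma-1} - q$; equivalently, $\hat q$ lies strictly between $q$ and $q^{CES}=\frac{1}{\sigma-1}$ (weakly, collapsing to $q$ only in the knife-edge $E^I=0$ or $\underline z = z^*$ cases). Consequently $\hat q$ and $q$ lie on the same side of $q^{CES}$: if $q>q^{CES}$ then $\hat q \geq \ldots$ wait — one must be careful about orientation. Since $\hat q$ is a convex combination of $q$ and $q^{CES}$ with strictly positive weight on $q$ when $\underline z<z^*$ (weight $\pi_M$ need not be that, rather weight on $q^{CES}$ is $1-\pi_M$), $\hat q$ is strictly between them, hence strictly on the same side of $q^{CES}$ as $q$ whenever $q\neq q^{CES}$; this yields $\mathrm{sgn}(\hat q - q^{CES}) = \mathrm{sgn}(q - q^{CES})$.

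The main obstacle I anticipate is not the algebra but pinning down precisely why $k(\underline z,z^*)>0$ under $\underline z<z^*$ in a self-contained way — i.e., re-deriving (or citing from the proof of Lemma \ref{lem:emp_elasticities}) that the marginal entry response, decomposed into the "no-incumbent" entry mass and the incumbent-weighted market-intensity term, has the stated sign structure. This rests on the fact that incumbents drawn from a distribution truncated at the \emph{higher} cutoff $z^*>\underline z$ carry more market intensity per firm than freshly truncated entrants at $\underline z$, so that the free-entry condition (which fixes total market intensity) accommodates fewer net new varieties per unit of income; formally this is where Assumption \ref{as:policy}\ref{1b}-type monotonicity of $\E_{\mu^E}[(z/\underline z)^{\sigma-1}\mid z\geq\underline z]$ enters. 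I would state this as a short lemma (or invoke it from the proof of Lemma \ref{lem:emp_elasticities}) and then the proposition follows immediately. The remaining care is purely about strict vs.\ weak inequalities, which I would handle by noting $E^I>0$ is the non-degenerate case and $E^I=0$ reduces to Proposition \ref{prop:identification}.
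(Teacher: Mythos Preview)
Your derivation of the bias formula is correct and matches the paper: starting from the output decomposition and Lemma~\ref{lem:emp_elasticities} you get $\hat q = \pi_M\, q + (1-\pi_M)\,q^{CES}$ and hence $\hat q - q = (q^{CES}-q)(1-\pi_M)$. But the crucial sign step is wrong. You assert $k(\underline z,z^\ast)>0$ when $\underline z<z^\ast$, whereas Lemma~\ref{lem:emp_elasticities} records the opposite: $k(\underline z,z^\ast)\lesseqgtr 0$ iff $\underline z\lesseqgtr z^\ast$, so for $\underline z<z^\ast$ one has $k<0$. Directly from the formula $k(\underline z,z^\ast)=p_E(z^\ast)\big(1-\E[z^{\sigma-1}\mid z\geq z^\ast]/\E[z^{\sigma-1}\mid z\geq\underline z]\big)$: truncating at the higher $z^\ast$ raises the conditional mean of $z^{\sigma-1}$, so the ratio exceeds one and $k<0$. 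No appeal to Assumption~\ref{as:policy}\ref{1b} is needed here; it is just first-order stochastic dominance of truncations acting on an increasing function.

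The consequence is that $\pi_M=\partial_+\log M/\partial_+\log\mathcal I>1$, not $\pi_M\in(0,1)$. Your ``$\hat q$ is a convex combination of $q$ and $q^{CES}$'' argument therefore does not go through: $\hat q$ lies on the far side of $q$ from $q^{CES}$, not between them (the bias \emph{amplifies} the gap $q-q^{CES}$, it does not attenuate it). The sign conclusion is nonetheless correct, but for the simpler reason the paper uses: from $\hat q - q^{CES} = \pi_M\,(q-q^{CES})$, positivity of $\pi_M$ (indeed $\pi_M>1$) immediately gives $\mathrm{sgn}(\hat q-q^{CES})=\mathrm{sgn}(q-q^{CES})$. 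So the structure of your argument is right, but the sign of $k$ and hence the range of $\pi_M$ is inverted, and the ``strictly between'' interpretation should be replaced by the one-line scalar-multiple observation.
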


\paragraph{Firms, Establishments, and Varieties} As a final remark, note that our approach circumvents a fundamental problem in mapping the model to the data: the definition of a variety. In principle, if we assumed that establishments/firms only produced a single variety, we could use the number of firms or establishments as a direct measure of $M$. This mapping is fragile as establishments/firms may produce multiple products (as we highlight in the extension section), or new establishments of firms, such as factories, stores, or offices, may indicate larger quantities of existing rather than new product varieties. In such a case, we would need to observe both the number of firms and the number of products per firm. Our empirical strategy sidesteps this issue since it relies on the observation that $q$ governs the intensity of external aggregate returns to scale. Our approach is vulnerable to the criticism that identification relies on the absence of other sources of spillovers. Conversely, alternative approaches to the estimation of love-for-variety require a precise definition of variety, for example \cite{baqaee2023supplier}. We return in the results section to the comparison of our findings.

\subsection{Implementation -- Instrumental Variable} For each of the WIOD industries, we would like a plausibly exogenous shift in expenditure. We build this as in \cite{ferrari2023inventories} through a shift-share, aggregating destination-specific shifters through destination shares. Formally, an industry $i$ in country $c$ at time $t$ is assigned a change in demand equal to
\begin{align}
    \sum_j \xi^i_{cd}\Delta\log F_{dt},\label{eq:ss}
\end{align}
where $\xi_{cd}^i$ represents the fraction of the value of output of industry $i$ in country $c$ consumed directly or indirectly in destination $d$ in the first sample period and $\Delta\log F_{dt}$ is the growth rate of final consumption expenditure in destination $d$ at time $t$. The exposure share $\xi_{cd}^i$, assumed to be time-invariant, includes direct sales from the industry to consumers in $d$ and output sold to other industries, which eventually sell to consumers in $d$.

The input-output structure of the data allows a full account of these indirect linkages when analyzing sales composition. Formally, define the share of sales of industry $i$ in country $c$ that is consumed by destination $d$ as
\begin{align}
\xi_{cd}^i=\frac{F_{cd}^i+\sum_j\sum_e a_{ce}^{ij}F_{ed}^j+\sum_j\sum_e\sum_k\sum_f a_{ce}^{ij}a_{ef}^{jk}F_{fd}^k+...}{S_c^i},
\label{shareseq}
\end{align}
where $a_{cd}^{ij}$ is dollar amount of output of sector $i$ from country $c$ needed to produce one dollar of output of sector $j$ in destination $d$, defined as $a_{cd}^{ij}=Z_{cd}^{ij}/S_d^j$. The first term in the numerator represents sales from sector $i$ in country $c$ directly consumed by $d$; the second term accounts for the fraction of sales of sector $i$ in $c$ sold to any producer $j$ in country $e$ that uses $c,i$ as input and then sells to destination $d$ for consumption. The same logic applies to higher-order terms. By definition $\sum_d\xi_{cd}^i=1$.

For the shifters, we estimate the common component of all industries selling to a given destination $d$ at time $t$. For each industry, we iteratively exclude flows from the same country and in the same sector
\begin{align}
\Delta \log F_{edt}^j=\eta_{dt}(c,i)+\nu_{edt}^i\quad e\neq c, j\neq i.
\label{fixedeffects2}
\end{align}
This boils down to identifying our effect of interest through changes in foreign demand on other products; we refer to \cite{ferrari2023inventories} for an extended discussion of the identification and briefly discuss the key intuition. The idea behind our instrumental variable approach is that each industry is small relative to each destination country. As a consequence, variations in aggregate consumption in destinations are quasi-randomly assigned to each producing industry through pre-existing network linkages. The exclusions in eq. \eqref{fixedeffects2} state that we only use variation from different products and different countries to reduce concerns of reverse causality. 
Formally, our instrument is given by
\begin{align}
    \eta_{ct}^i = \sum_d \xi^i_{cd} \hat \eta_{dt}(c,i).
    \label{IV-final}
\end{align}
Following Proposition \ref{prop:identification}, we estimate  
\begin{align}
        \Delta\log Y^i_{ct} = \beta \Delta\log \hat{X}_{ct}^i+\epsilon_{ct}^{+i}, 
        \label{empiricalmodel}
\end{align}
Where $\Delta\log Y^i_{ct}$ is the growth rate of output sold by industry $i$ in country $c$ at time $t$, which we compute as described in Appendix \ref{sec:salesVoutput}. $\Delta\log \hat{X}_{ct}^i$ is the instrumented change in either expenditure $\mathcal I$ or production labor $L^p$ using $\eta_{ct}^i$ as instrument.

\subsection{Results}\label{sec:emp_results}
\paragraph{Estimates of $q^{CES}$} We start with a description of our estimated $\sigma$. Based on the WIOD data, we obtain the distribution of profit rates across industries summarized in Table \ref{tab:profit_sigma_qces_joint}. The profit rate ranges between 0\% and 69.9\% with a median of 17.2\%. The corresponding $\sigma$ distribution is between 1.4 and 10, with a median of 5.8, in line with the estimates in the literature \citep[see][for a review]{anderson2004trade}. We are interested in the $q^{CES}$ associated with this distribution of elasticities. We find $q^{CES}$ to be between 0.11 and 2.32, with a median of 0.21. In the SABI data, we find a median profit rate of 18.6\%, as shown in the right panel of Table \ref{tab:profit_sigma_qces_joint}. The associated $q^{CES}$ of 0.23 is our chosen measure of $q^{CES}$ for the quantitative application in Section \ref{sec:quant}, providing a more conservative benchmark than the WIOD median.

\begin{table}[htbp]
\centering
\small
\caption{Distribution of Profit Rate, $\sigma$, and $q^{CES}$}
\label{tab:profit_sigma_qces_joint}

\centering
\begin{tabular}{l c c c| c c c}
\hline\hline
&\multicolumn{3}{c}{\textbf{WIOD}}&\multicolumn{3}{c}{\textbf{Spanish administrative data (SABI)}}\\
\toprule
Percentiles & | \hspace{0.03cm} Profit Rate \hspace{0.03cm} | & $ \sigma $ & | \hspace{0.06cm} $ q^{CES} $ &Profit Rate \hspace{0.03cm} | & $ \sigma $ & | \hspace{0.06cm} $ q^{CES} $ \\
\midrule
1 & 0.0\% & . & . &-12.1\% & . & . \\
5 & 10.0\% & 10.0 & 0.11 &3.6\% & 27.6 & 0.04  \\
10 & 12.2\% & 8.2 & 0.14&6.8\% & 14.6 & 0.07  \\
25 & 13.0\% & 7.7 & 0.15 &11.7\% & 8.5 & 0.13 \\
\rowcolor{lightgray} 50 & 17.2\% & 5.8 & 0.21 & 18.6\% & 5.4 & 0.23  \\
75 & 23.8\% & 4.2 & 0.31 &28.4\% & 3.5 & 0.40  \\
90 & 31.3\% & 3.2 & 0.46 &40.8\% & 2.5 & 0.69 \\
95 & 34.4\% & 2.9 & 0.52 &49.7\% & 2.0 & 0.99  \\
99 & 69.9\% & 1.4 & 2.32& 65.3\% & 1.5 & 1.88  \\
\hline\hline

\end{tabular}

\vspace{6pt}
\begin{minipage}{\textwidth}
\footnotesize
\textbf{Notes:} The empirical distribution pertains to the profit rate; the values for $\sigma$ and $q^{CES}$ are obtained from the profit rates through the theoretical relationships set out in the text. For the SABI panel, we omit the ``Percentiles" column and align rows to the left panel’s percentiles. For SABI, we keep only firms with $\text{employees}\ge 5$, as profit rates of very small firms are outlier-prone. The definition of profits includes capital expenses, consistently with the WIOD definition. Results are very similar when including all firms (median $q^{CES}=0.24$).
\end{minipage}
\end{table}

\paragraph{Estimates of $q$} We estimate love-of-variety $q$ by means of eq. \eqref{empiricalmodel}. Since the procedure holds for changes in both expenditure and labor usage, we report results for both in Table \ref{tab:q_est}. As the positive demand shocks could induce firm entry with a delay, we use both contemporaneous and lagged shocks as instruments.\footnote{In Appendix \ref{appendix: e-rob}, we show that our results are robust to this choice. We also show that the findings hold when weighting observations by industry size.} The first and second columns in each panel show the first- and second-stage estimation results, respectively. In all our estimations, we find a significant first stage and $\beta$s larger than 1. Recall that $q = \beta-1$, which implies that all our estimates of $q$ are strictly positive. Depending on the specification, we find estimates of $q$ between .512 and .634. For comparison, using a different identification and firm-level data, \cite{baqaee2023supplier} estimate a love-of-variety elasticity of $0.3$. We interpret our larger estimate as a sign of even stronger aggregate increasing returns than implied by individual firms' production functions. 

Panel (C) in Table \ref{tab:q_est} reports the results of an empirical model treating eq. (\ref{empiricalmodel}) for labor and income as not independent. We estimate a 3SLS model explicitly accounting for the correlation between the error terms across the system of equations. Additionally, as implied by our model, we impose common coefficients, a restriction that the data does not reject. We choose the estimate of $q$ from panel (C), $0.568$, as our $q$ for the quantitative application in Section \ref{sec:quant}.

\begin{table}[htb]
    \centering
    \caption{Estimation Results}\label{tab:q_est}
    \scalebox{.8}{{\def\sym#1{\ifmmode^{#1}\else\(^{#1}\)\fi} \begin{tabular}{l c c | c c | c} \hline\hline & \multicolumn{2}{c|}{(A)} & \multicolumn{2}{c|}{(B)} & \multicolumn{1}{c}{(C)}\\
                    &\multicolumn{1}{c}{(1)}&\multicolumn{1}{c|}{(2)}&\multicolumn{1}{c}{(3)}&\multicolumn{1}{c|}{(4)}&\multicolumn{1}{c}{(5)}\\
                    &\multicolumn{1}{c}{$\Delta \log \mathcal{I}^i_{c,t} $}&\multicolumn{1}{c|}{$\Delta \log Y^i_{c,t}$}&\multicolumn{1}{c}{$\Delta \log L^{p,i}_{c,t} $}&\multicolumn{1}{c|}{$\Delta \log Y^i_{c,t}$}&\multicolumn{1}{c}{$\Delta \log Y^i_{c,t}$}\\
\hline
                &                     &                     &                     &                     &                     \\
 $ \eta_{c,t}^{+,i} $ &      0.0854\sym{***}&                     &       0.103\sym{***}&                     &                     \\
                    &    (0.0215)         &                     &    (0.0204)         &                     &                     \\
[.5em]
 $ \eta_{c,t-1}^{+,i} $ &      0.0941\sym{***}&                     &      0.0646\sym{***}&                     &                     \\
                    &    (0.0231)         &                     &    (0.0204)         &                     &                     \\
[.5em]
$\Delta \log \mathcal{I}^i_{c,t} $&                     &       1.512\sym{***}&                     &                     &       1.568\sym{***}\\
                    &                     &     (0.191)         &                     &                     &     (0.151)         \\
[.5em]
$\Delta \log L^{p,i}_{c,t} $&                     &                     &                     &       1.634\sym{***}&                    1.568\sym{***} \\
                    &                     &                     &                     &     (0.245)         &          (0.151)           \\
\hline
$ P(q \leq q^{CES}) $&                     &         .055         &                     &        .040         &        .008         \\
Year FE             &         Yes         &         Yes         &         Yes         &         Yes         &         Yes         \\
Industry FE         &         Yes         &         Yes         &         Yes         &         Yes         &         Yes         \\
N                   &       10067         &       10067         &       10083         &       10083         &       10067         \\
F-Stat              &          22         &          63         &          23         &          45         &                 \\
$ R^2 $             &      0.0315         &      0.0864         &      0.0413         &      0.0881         &               \\
F-Stat eq.(1)       &                     &                     &                     &                     &        2843         \\
F-Stat eq.(2)       &                     &                     &                     &                     &         131         \\
$ R^2 $ eq.(1)      &                     &                     &                     &                     &      0.0866         \\
$ R^2 $ eq.(2)      &                     &                     &                     &                     &      0.0877         \\
\hline\hline
\multicolumn{6}{l}{\footnotesize Robust Standard Errors (HC1). \sym{*} \(p<0.10\), \sym{**} \(p<0.05\), \sym{***} \(p<0.01\)}
\end{tabular}
}
}
    
    \vspace{0.2cm}
    
    \begin{minipage}{\textwidth}
    \justifying\noindent
    \footnotesize
     \textbf{Notes:} $ P(q \leq q^{CES}) $ is computed under the null hypothesis $ H_0 : q \leq q^{CES} $. When we estimate the system jointly using 3SLS, $ P(q \leq q^{CES}) $ is computed using the estimated coefficient and standard error of the income regressor. The F-statistic for the jointly estimated model is approximated by the model's Wald test statistic, which follows a $ \chi^2 $ distribution, divided by the model degrees of freedom. We impose a common coefficient in the joint estimation, as we fail to reject, at the 10\% significance level, the hypothesis that the coefficients on labor and income are the same.
    \end{minipage}

\end{table}
Through the lens of our model, the external returns $\hat\beta-1$ are fully attributed to LoV $q$. This interpretation crucially relies on the absence of alternative sources of external returns. As we have established earlier, in the presence of other external increasing returns $1/\delta>1$, the estimate for $q$ has a bias of $(q+1)\frac{1-\delta}{\delta}$. We can then compute how large $1/\delta$ would need to be for the true data-generating process to have $q=1/(\sigma-1)$. Given our estimate of $1/(\sigma-1)$ and the bias formula, we obtain $\bar\delta=.78$, which implies that we would need external increasing returns of the order of 27\% ($1/\delta = 1.27$) to undo our qualitative findings. Empirical work generally finds external returns of smaller magnitude. \cite{noauthor_empirics_2015} review a large number of studies and conclude that the elasticity of aggregate (area-level) TFP with respect to local density is around 0.04-0.07, while \cite{coe_international_1995} estimate knowledge spillover elasticities between 0.06 and 0.1. Interpreting these elasticities entirely as external effects still yields estimates substantially smaller than the 27\% needed to overturn our qualitative result that $q<q^{CES}$.

To check robustness to mismeasurement in deflators, we invoke Remark \ref{rem:ident_pindex}. If deflators correctly incorporate the variety effect, the implied $q$ lies between $0.362$ and $0.568$. If deflators instead neglect the variety effect, then, given any non-zero output share in fixed cost production $\underline{\alpha}$, the implied $q$ lies between $0.595$ and $0.595/\underline{\alpha}$. Note that, if statistical agencies correctly report deflators, our lower-bound remains larger than $q^{CES}$. If instead deflators are mismeasured, our estimate is a lower bound for the true value of love-of-variety. 

In Figure \ref{q_fig}, we provide a visual representation of our results. We plot our chosen estimate of $q$ from panel (C) against the distribution of $q^{CES}$ at the industry level. Our estimate implies that only 2 industries out of 56 in our sample have $q^{CES}>q$, and for the vast majority of industries, $q$ is substantially larger than $q^{CES}$. Furthermore, the median $q^{CES}$, which we use as our benchmark, lies below the lower bound of the two-sided 95\% confidence interval for our estimated $q$. Therefore, when conducting a one-sided $t$-test, we can reject the hypothesis that $q \leq q^{CES}$ at the 5\% significance level as per Table \ref{tab:q_est}.
This result suggests that recessions have a significant potential to reduce long-run welfare, because agents are sensitive to the loss of varieties. To quantify the long-run output and welfare effects of recessions, we use our estimates in a quantitative version of the model that allows us to account for transitional effects.
\begin{figure}[htb]
    \centering
    \includegraphics[width=0.8\linewidth]{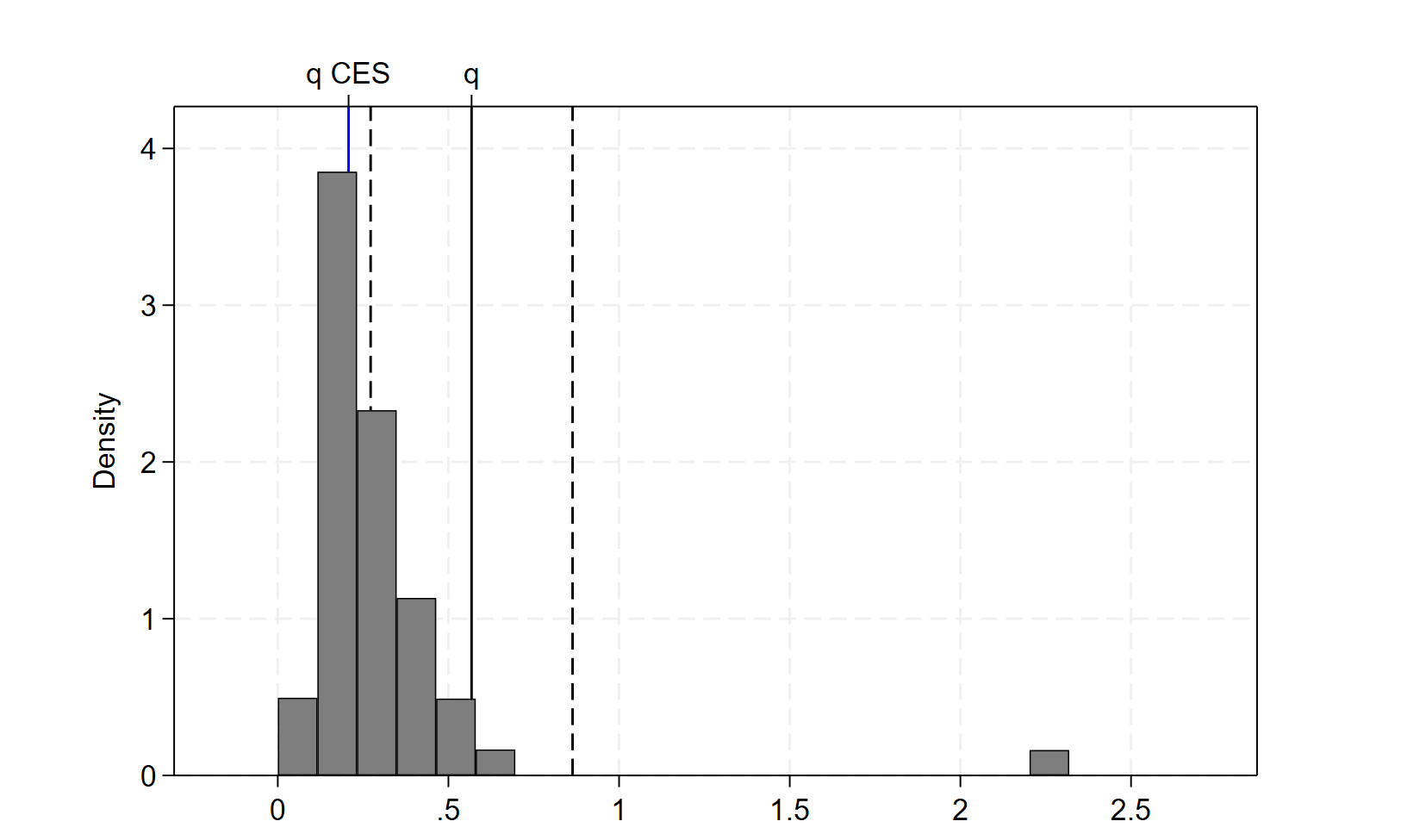}
    \caption{Estimated $q$ and $q^{CES}$. The figure plots the distribution of estimates of $1/(\sigma -1)$ by industry, highlighting the median, which is our preferred point estimate for $q^{CES}$. Additionally, it shows our estimate of a global $q$, with $95\%$ confidence bands.}
    \label{q_fig}
\end{figure}


\section{Quantitative Model}\label{sec:quant}
We conclude by considering a quantitative version of the model of Section \ref{sec:ge} to study fixed-cost cycles and the optimal policy response. We simulate firm exit and entry in a mean-reverting fixed-cost cycle and characterize output losses in terms of log deviations from the steady state along the transition.

\subsection{Set-Up and Calibration}\label{subsec:calib}
We generalize the analytical model by allowing firms to forecast the full path of fixed costs after the initial shock, rather than treating each period myopically. Firms discount the path of fixed cost with factor $\beta\in (0,1)$. They forecast
\begin{align}
    \pi(z, \mu_t) = \frac{\tilde \pi(z, \mu_t)}{1-\beta} - \sum_{0 \leq \tau < \infty} \beta^\tau \bar f^c_{t+\tau},  
\end{align}%
where $\tilde\pi$ are profits gross of fixed cost payments. Fixed cost after potential government subsidies are given by $\bar f_t^c$. Firms estimate the NPV of expected future profits by anticipating the transition of fixed costs after subsidies. The net present value of fixed cost, $NPV(\bar{\bm{f}}^c)$, determines the cutoff and enters the eqs. \eqref{eq:fe-pe}, \eqref{eq:zcp-pe} in place of $f^c$. We also include random exit shocks in the analysis, i.e., at the beginning of each period, a fraction $\delta \in (0,1)$ of firms exit, which yields a unique steady-state of the economy in the long run. The factor $(1-\delta)$ is subsumed into the firm discount factor. We set $\delta=3.36\%$, which is the average exit rate of all active Spanish firms, as covered by SABI, in the period preceding the Global Financial Crisis (GFC). As our sample begins in 2000, the average is calculated from 2000 to 2006. We assume that the entrant productivity distribution, $\mu^E$, is Pareto, and parameterized by location $z_{min} = 1$. For the shape parameter, we follow \cite{gabaix2011rank} to estimate the shape parameter of the pre-GFC revenue distribution in SABI. Denoting $h$ the shape parameter of the firm size distribution, in our model, the productivity distribution has a shape parameter of $h(\sigma-1)$. The elasticity of substitution, $\sigma$, and LoV, $q$, were estimated in Section \ref{sec:empirics}. The log-deviations of model quantities are invariant to the scale of $f^e > 0$, which therefore does not need to be calibrated. For the same reason, we normalize $\bar L = f_{ss}^c = 1$, where $f_{ss}^c$ refers to the steady state fixed cost.

The social planner solves the full dynamic Ramsey problem, accounting for the future effects of policy interventions in the current period. The planner maximizes welfare with discount factor $\beta^*$ and a social welfare function $U(\bm{Y}) = \sum_t (\beta^{*})^t \log (Y_t)$. The planner's discount factor is linked to the firms' discount factor through $\beta^* = \beta/(1-\delta)$. Namely, $\beta^*$ is the social discount factor, and firms discount faster due to the probability of exit.\footnote{We calibrate $\beta$ to the firm discount rate estimated in \citet{gormsen_corporate_2025}, which lies toward the lower end of empirical estimates. Our quantitative results are similar under higher values of $\beta$.} We allow the planner to choose two policy instruments, $\theta_{ss}$ and $\theta_{cyc}$, which modify the steady state and cycle fixed cost, respectively. Suppose that the economy is in its steady state in period $0$, and a crisis occurs in period $1$, such that $f_t^c > f_{ss}^c,\,\forall t\geq 1$. Given the policy levers $\theta_{ss}$ and $\theta_{cyc}$, the fixed cost paid by firms is %
\begin{align}
    \bar{f_t^c} =
    (1-\theta_{ss}) [ f_{ss}^c + (1-\theta_{cyc})  (f_t^c-f_{ss}^c) ].
\end{align} %
Hence, $\theta_{ss}$ is a subsidy/tax on the overall fixed cost ($f_{ss}^c$ in steady state), while $\theta_{cyc}$ is an additional subsidy/tax on the cyclical deviation $f_t^c-f_{ss}^c$, tracking the path of the exogenous fixed cost. An inactive planner is one for whom $\theta_{ss} =  \theta_{cyc}=0$.

The magnitude, $\varepsilon$, and persistence, $\alpha$, of the fixed cost crisis are calibrated to the GFC. We assume exponential decay of the fixed cost shock: \begin{align}
    f_{t}^{c} = f_{ss}^c + \varepsilon \cdot \alpha^{t}, \quad \alpha\in (0,1).
\end{align} We choose $\alpha$ such that the shock has a half-life of one year, and $\varepsilon$ is calibrated such that $20.44\%$ of firms exit on impact if the cycle policy is passive. This figure corresponds to the percentage deviation from trend in the number of active firms in Spain during the crisis period, as shown in Figure \ref{fig:loss_of_m_es}.

\begin{table}[htb]
    \centering
    \caption{Parametrization of the quantitative model. }\label{tab:quant_calibration}
    \begin{tabular}{ccll}
        \hline\hline
        & \textbf{Value} & \textbf{Description} & \textbf{Target/Source}\\ \hline
        $\beta$     & 0.864      & Firm discount rate (p.a.) & \cite{gormsen_corporate_2025}\\
         $\delta$    & 0.0336    & Death rate (p.a.) & Average pre-GFC exit rate in SABI \\
         $\beta^\ast$   & 0.894      & Social planner discount factor (p.a.) & $\beta/(1-\delta)$ \\
        $h$           & 1.2       & Tail parameter & Revenue distribution of firms in SABI \\
        $\sigma$    & 5.4      & Elasticity of substitution & Section \ref{sec:empirics} \\
        $q$         & 0.568     & Love-of-variety & Section \ref{sec:empirics} \\
        $L^p$       & 1.0       & Labor supply & Normalization \\
        $f_{ss}^c$     & 1.0      & Steady state fixed cost& Normalization \\
        $\alpha$    & 0.841     & Shock decay & 1-year half-life\\
        $\epsilon$  & see text & Shock magnitude & $\Delta$ Num. firms from trend in SABI\\ \hline\hline
    \end{tabular}
\end{table}

\subsection{Results}

We discuss the dynamics of output, $Y$, and its components: the mass of variety, $M$, average productivity, $\bar{z}\equiv \left(\int z^{\sigma-1}\mu(z) \d z\right)^{\frac{1}{\sigma-1}}$, and production labor, $L^p$, over the business cycle. 
We start by considering the laissez-faire case when the economy is hit by a recession that induces the targeted variety loss of $20.44\%$ on impact. Next, we allow the planner to implement a policy response along this cycle by subsidizing firms. These scenarios, however, feature steady-state distortions that the planner would like to correct because $q>1/(\sigma-1)$. Therefore, we also consider an alternative setting in which the planner implements the steady-state optimal policy, and then the economy is hit by the recession. Within this setting, we study the same two cycle regimes as above: an idle and active planner along the cycle. Importantly, note that having or not having the steady-state subsidy in place changes the steady-state number of firms. As a consequence, the same shock to the economy would induce different losses in varieties on impact. To ensure the comparability of these experiments, we recalibrate the magnitude of the shock so that in each specific scenario, there is an on-impact $20.44\%$ loss of varieties.


We report the welfare effects of the recession for each of these different scenarios in Table \ref{tab:policy}. Importantly, these should not be compared to the estimates of the welfare costs of business cycles, since all statements we make are specific to a single recession and its aftermath.

\subsubsection{Laissez-Faire}

\paragraph{Inactive Cycle Policy} We start by studying the laissez-faire scenario of Figure \ref{fig:quant_grid-lsf}. Panel B shows that, on impact, the targeted $20.44\%$ of varieties are lost. At the same time, production labor falls as higher fixed costs absorb labor, while average productivity rises through firm selection. As shown in Panel C, these adjustments jointly lead to a drop in output of $9.28\%$ on impact, compared with a $2.75\%$ contraction in the CES benchmark. While output in the CES economy experiences a small and stable long-term gain, the economy featuring our estimated $q$ experiences a persistent fall in output relative to its pre-crisis steady-state level. 

In consumption equivalent variation (CEV), the recession cost is $5.28\%$ of laissez-faire steady-state consumption, as reported in Table \ref{tab:policy}.\footnote{A value of $-x\%$ in Panel B of Table \ref{tab:policy} means that the planner would be willing to sacrifice $x\%$ of steady-state consumption to avoid the crisis.} The decomposition of the total recession cost shows that most of the welfare loss, $7.82\%$ CEV, can be attributed to variety losses, mirroring the large and persistent drop in $M$ in the figure. Accordingly, the welfare impact is dominated by the variety channel, despite a sizable positive contribution from selection-induced productivity gains ($+2.94\%$ CEV). The contribution of production labor is negative ($-0.4\%$) because the decline in $M$ is not strong enough to prevent $f^c M$ from rising. With fixed costs denominated in labor, this increase in $f^c M$ reduces $L^p$ one-for-one.

\begin{figure}[htbp]
    \centering
    \begin{tabular}{@{\hskip 0pt}c@{\hskip 0pt}c@{\hskip 0pt}c@{\hskip 0pt}}
        \subfloat{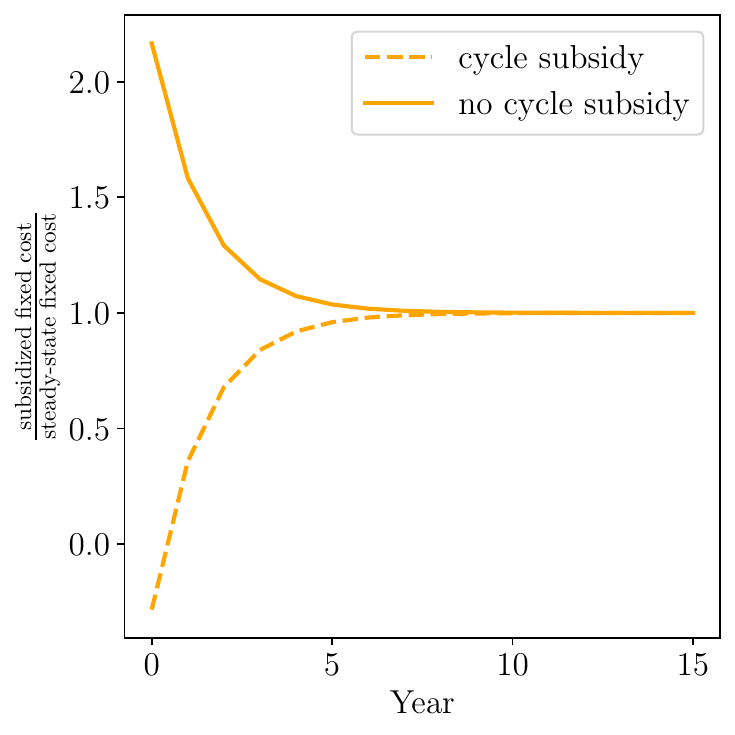}{A}
        \subfloat{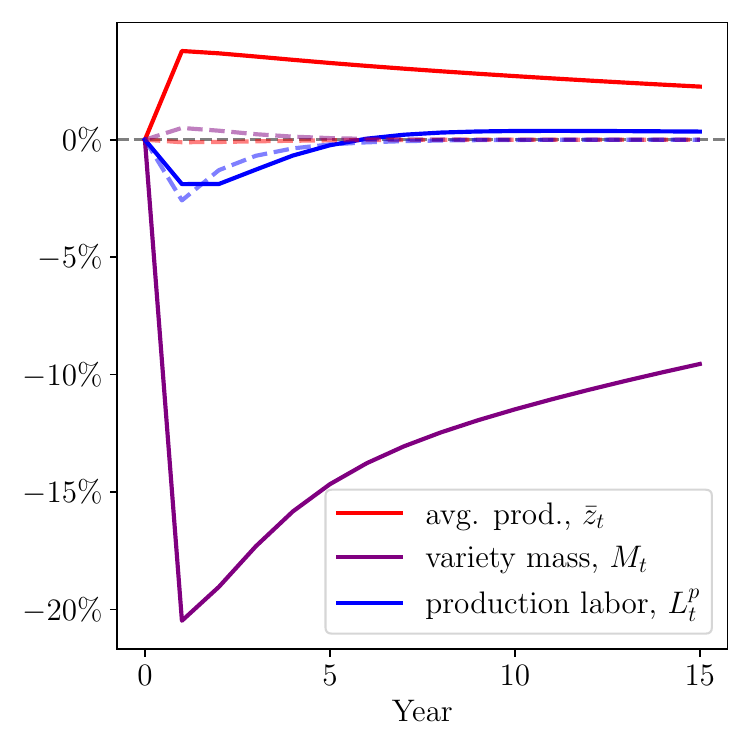}{B}
        \subfloat{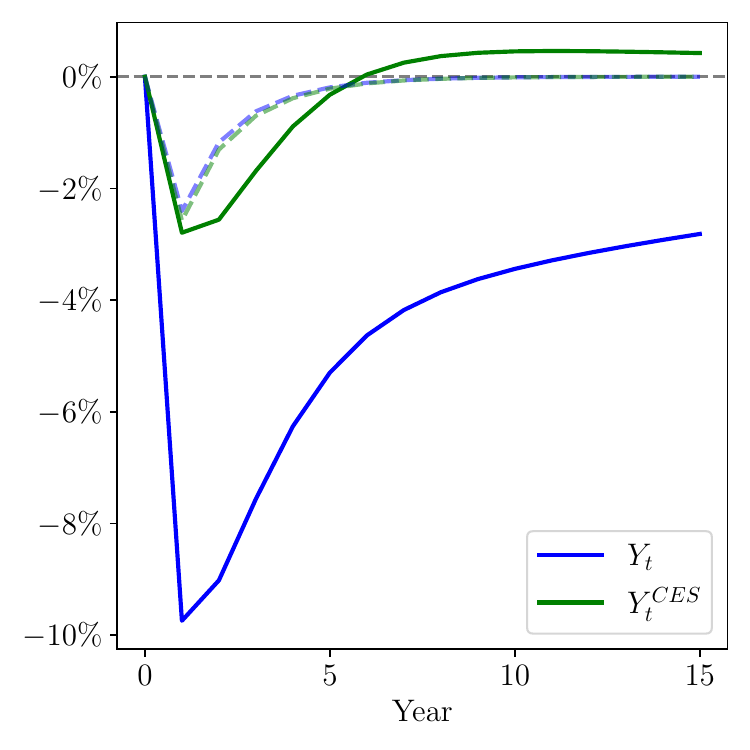}{C}
    \end{tabular}
    \caption{Fixed-cost shock path and impulse responses in a laissez-faire economy. Solid lines show the responses without an active cycle policy; dashed lines show the responses with an active cycle policy. Panel A displays the fixed-cost shock and its reversion calibrated as per \ref{subsec:calib}. Panel C and B report the impulse responses--log deviations from the pre-shock steady-state--of output and its components $Y_t= (M_t^q) (L_t^p) (\bar{z}_t)$: variety mass $M_t$, production labor $L^p_t$, and average productivity $\bar{z}_t$, respectively.}
    \label{fig:quant_grid-lsf}
\end{figure}

\paragraph{Active Cycle Policy} Allowing the planner to respond to the crisis by choosing $\theta_{cyc}$ decreases the cost of the crisis from 5.28\% to $0.63\%$ of steady-state consumption. The planner implements a large subsidy to limit firm exit. The subsidy diverts labor from production to fixed costs payments so that no firms exit during the recession, as shown by the dashed lines in Panel B of Figure \ref{fig:quant_grid-lsf} and in Table \ref{tab:policy}. Importantly, this policy intervention is effective because the economy starts with a sub-optimally low number of active firms. As a consequence, the cost of subsidizing their fixed cost is relatively limited compared to the benefit of reduced exit through the variety gains. In fact, the planner uses the recession and the possibility of engaging in policy intervention to partially substitute the missing steady-state instrument. As shown in Panel A of Figure \ref{fig:quant_grid-lsf}, the planner temporarily sets the subsidy to be such that the entry cost is negative at the beginning of the crisis. This induces an increase in the number of active firms that slowly fades as the economy converges back to the steady state. Thanks to this active policy, the effect on output is limited (-2\% instead of -10\%), and there is no difference between the economy and the counterfactual CES case where there are no variety effects. Since in this scenario the planner uses the intervention during the recession to imperfectly replace the missing steady-state intervention, we next allow it to operate these instruments separately.


\subsubsection{Optimal Steady-State Subsidy}
Next, we consider an economy where a social planner implements the optimal steady-state intervention characterized in Proposition \ref{prop:SP_solution}.\ref{prop:SP_subsidy}. Since $q>1/(\sigma-1)$, the planner optimally subsidizes the fixed cost in steady state, accordingly. Quantitatively, we find that $\theta_{ss}=0.88$.\footnote{Note that in Proposition \ref{prop:SP_solution}, we solved the problem of a myopic planner, whereas here we solve a forward-looking Ramsey problem with discount rate $\beta$ and exit rate $\delta$. While the qualitative implication is the same, the numerical solution is different because of the different planning problem.}
Crucially, since $q$ is substantially larger than $1/(\sigma-1)$, the economy is very far from first-best. Hence, the steady-state subsidy is very powerful and increases welfare by $55.30\%$ in consumption-equivalent terms, relative to the laissez-faire steady state.

In this economy, the recession is evaluated relative to a different steady state than under laissez-faire. Therefore, two clarifications are in order. First, any given CEV in Panel B of Table \ref{tab:policy} is computed with respect to a $55\%$ larger steady-state consumption if $\theta_{ss}$ is active. Second, when the steady-state policy is active, the marginal benefit and cost of adding varieties are balanced. Therefore, by an envelope-type argument, the marginal value of preserving varieties using $\theta_{cyc}$ during the crisis is much lower, reducing the effectiveness of the cycle intervention.

 \paragraph{Inactive Cycle Policy} The recession dynamics in this economy are shown in Figure \ref{fig:quant_grid-ssp}. The initial output loss of $12.04\%$ is substantially larger than in the laissez-faire economy, as shown in Panel C. This is driven by the higher equilibrium number of firms, which requires more labor diversion as the fixed costs increase. After the initial decline, $L^p$ recovers as fixed costs return to their original level.  The steady-state subsidy implemented by the planner fundamentally changes the economy by making entry and continued existence cheaper for firms, while increasing the price of production labor. It exacerbates the crisis in the very short run, but causes the recovery to be faster. The long-run consequence of the recession is a small increase in output, still associated with negative welfare effects due to volatility and discounting. In CEV terms, the welfare costs of the cycle are $3.68\%$ of steady-state consumption. As shown in Table \ref{tab:policy}, this loss is mostly driven by the loss of varieties, which alone reduces welfare by $6.75\%$, and is only partially offset by gains in production labor and in average productivity of $0.31\%$ and $2.76\%$, respectively.

\begin{figure}[htbp]
    \centering
    \begin{tabular}{@{\hskip 0pt}c@{\hskip 0pt}c@{\hskip 0pt}c@{\hskip 0pt}}
        \subfloat{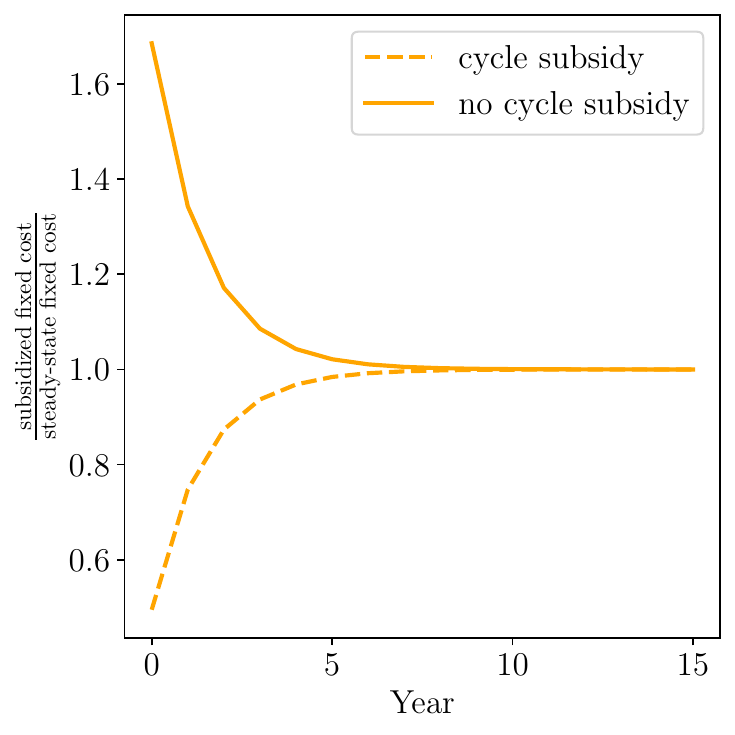}{A}
        \subfloat{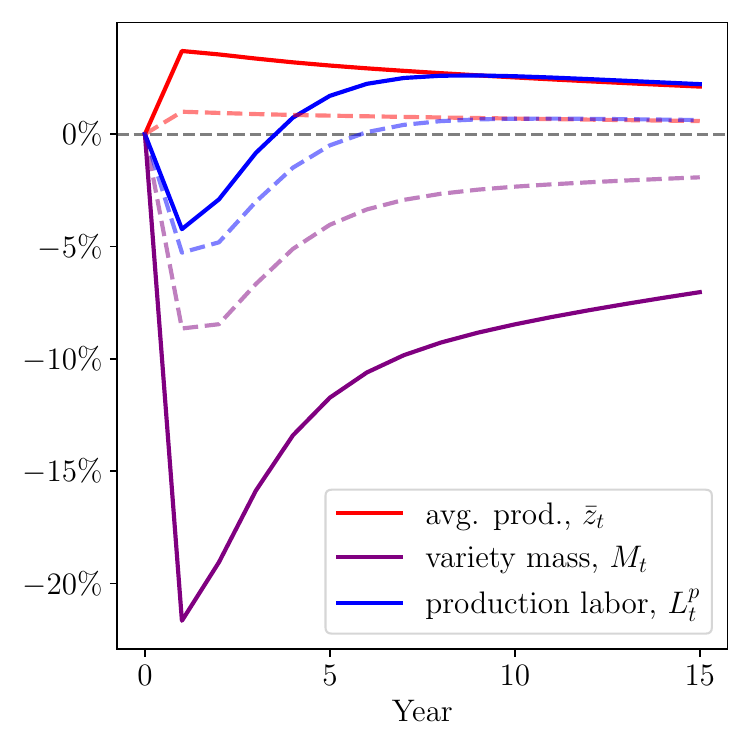}{B}
        \subfloat{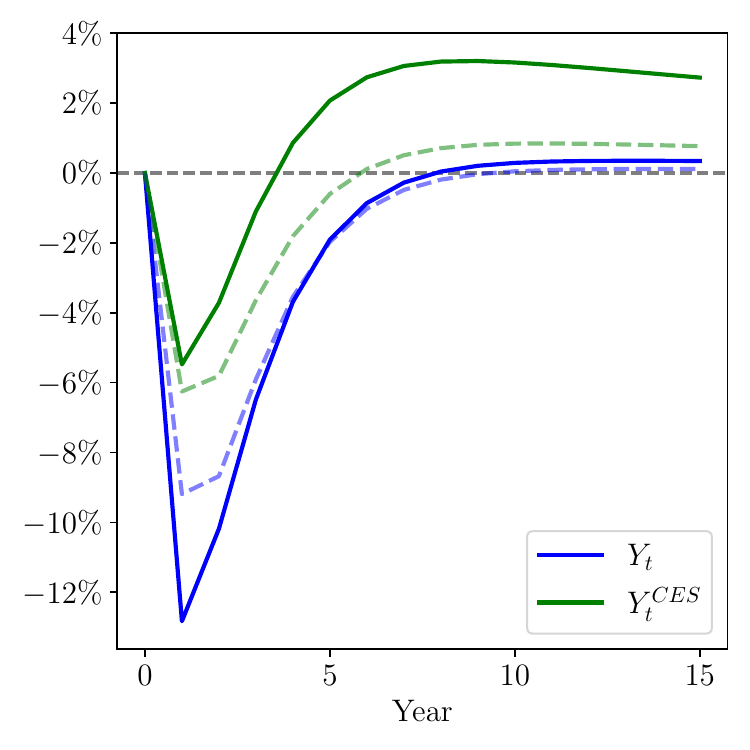}{C}
    \end{tabular}
    \caption{Fixed-cost shock path and impulse responses in an economy with active steady-state policy. See the notes of Figure \ref{fig:quant_grid-lsf} for additional details.
    }
    \label{fig:quant_grid-ssp}
\end{figure}

\paragraph{Active Cycle Policy} Finally, we examine the transition dynamics when the planner has access to both the steady-state ($\theta_{ss}$) and cycle instrument ($\theta_{cyc}$). 
In this setting, the social planner minimizes variety losses by subsidizing the fixed cost, as shown in Panels A and B of Figure \ref{fig:quant_grid-ssp}. Unlike when the steady-state policy is unavailable, the planner does not induce net entry by temporarily setting the fixed cost negative; instead, it absorbs part of the fixed-cost increase to deter exit.
This causes the crisis to unfold almost exclusively in the factor markets: since the policy prevents firm exit, the total overhead $M f^c$ shouldered by the economy becomes very large during the crisis. As a consequence, labor allocated to production collapses, and output gain from the active policy is relatively small and temporary, see Panel C. 

The crisis is deep but short-lived, and this pattern is largely shaped by the steady-state policy rather than by the cyclical intervention. The steady-state policy places the economy at a point where the variety margin is already internalized, so the marginal value of an additional variety is zero in steady state. When the shock hits, preserving varieties through cyclical intervention yields limited additional welfare benefits, because any reduction in variety losses is quickly offset by weaker selection and a larger diversion of labor away from production. Quantitatively, giving the planner access to the cycle instrument results in a small reduction in the cost of the crisis, from $3.68\%$ to $3.48\%$ CEV, as shown in Table \ref{tab:policy}. The decomposition of the total recession cost shows that the improvement from a smaller variety-loss component, from $-6.75\%$ to $-2.39\%$ CEV, is largely offset by reductions in the production-labor and average-productivity margins.

The equivalent of Figures \ref{fig:quant_grid-lsf} and \ref{fig:quant_grid-ssp}, and Table \ref{tab:policy} with $q=0.3$, as estimated by \cite{baqaee2023supplier}, can be found in Appendix \ref{sec:appendix_quant}. The results are qualitatively similar, with the caveat that, as love-of-variety is smaller, the economy is less inefficient and recessions are less costly.
 
\begin{table}[htb]
    \centering
    \caption{Steady-state welfare gains and recession welfare costs (CEV, \%)}
    \label{tab:policy}
    \scalebox{.9}{
    \begin{tabular}{l c c c}
        \hline\hline\addlinespace
        \multicolumn{4}{l}{\textbf{Panel A: Welfare Gains from Steady-State Policy}}\\
        \addlinespace
        Steady-State Welfare & $+55.30\%$ &  &  \\
        \midrule
        \addlinespace
        \multicolumn{4}{l}{\textbf{Panel B: Welfare Costs of Recessions}}\\
        \addlinespace
        Component & CEV relative to Steady State & $\theta_{ss}$ & $\theta_{cyc}$\\
        \midrule
        Total Output & $-5.28\%$ & \multirow{4}{*}{\centering inactive} & \multirow{4}{*}{\centering inactive} \\
        \hspace{6pt}Varieties & \hspace{6pt} $-7.82\%$ &  &  \\
        \hspace{6pt}Production Labor & \hspace{6pt} $-0.40\%$ &  &  \\
        \hspace{6pt}Avg. Productivity & \hspace{6pt} $+2.94\%$ &  &  \\
        \addlinespace
        Total Output & $-0.63\%$ & \multirow{4}{*}{\centering inactive} & \multirow{4}{*}{\centering active} \\
        \hspace{6pt}Varieties & \hspace{6pt} $+0.07\%$ &  &  \\
        \hspace{6pt}Production Labor & \hspace{6pt} $-0.67\%$ &  &  \\
        \hspace{6pt}Avg. Productivity & \hspace{6pt} $-0.03\%$ &  &  \\
        \addlinespace
        Total Output & $-3.68\%$ & \multirow{4}{*}{\centering active} & \multirow{4}{*}{\centering inactive} \\
        \hspace{6pt}Varieties & \hspace{6pt} $-6.75\%$ &  &  \\
        \hspace{6pt}Production Labor & \hspace{6pt} $+0.31\%$ &  &  \\
        \hspace{6pt}Avg. Productivity & \hspace{6pt} $+2.76\%$ &  &  \\
        \addlinespace
        Total Output & $-3.48\%$ & \multirow{4}{*}{\centering active} & \multirow{4}{*}{\centering active} \\
        \hspace{6pt}Varieties & \hspace{6pt} $-2.39\%$ &  &  \\
        \hspace{6pt}Production Labor & \hspace{6pt} $-1.84\%$ &  &  \\
        \hspace{6pt}Avg. Productivity & \hspace{6pt} $+0.75\%$ &  &  \\
        \addlinespace
        \hline\hline
    \end{tabular}
    }\vspace{6pt}
\begin{minipage}{\textwidth}\justifying\noindent\footnotesize
\textbf{Notes: }Panel A reports the steady-state welfare gain in the economy with the steady-state planner relative to the laissez-faire benchmark. Panel B reports the \emph{welfare cost of the recession} relative to the corresponding steady state. The decomposition of the total welfare cost shows the contribution of the unique components of output, i.e., variety mass, production labor, and average productivity, to the total CEV, using the output equation $Y = (M)^q (L^p)(\bar{z})$ as in eq. \eqref{eq:output_FD_avgprod}. The columns $\theta_{ss}$ and $\theta_{cyc}$ indicate whether the corresponding policy instrument is active in the reported scenario. The results are reported using the $q$ and $\sigma$ estimates from Section \ref{sec:empirics}. Numbers are shown as CEV percentages.
\end{minipage}
\end{table}

\section{Conclusions}
We study the cleansing effect of recessions in an economy where agents value the presence of multiple differentiated varieties. We show that recessions driven by rising fixed costs of production generate \textit{cleansing} in the sense that exiting firms are replaced by more productive entrants. However, we show that this need not translate into long-run gains in output and welfare. Whether the household is better off in the long run fundamentally depends on the extent of \textit{love-of-variety} in the downstream aggregation technology for industry output. We show that when \textit{love-of-variety} is stronger than in the benchmark CES economy, a social planner finds it optimal to increase fixed-cost subsidies during recessions to reduce the extent of firm exit. Our model-consistent estimates suggest that recessions generate long-run welfare losses over and above their transitional costs, contrary to what posited by the liquidationist view of business cycles.

\center{\addcontentsline{toc}{section}{ References} }
\bibliographystyle{aer}
{\footnotesize\bibliography{draft/references}}
\begin{appendices}
\justifying
\newpage
\begin{center}
    \huge{Online Appendix}\\
    \Large{Not-so-Cleansing Recessions}
\end{center}
\justifying
\setcounter{prop}{0}
\setcounter{lemma}{0}
\setcounter{figure}{0}
\setcounter{table}{0}
\setcounter{remark}{0}
\setcounter{cor}{0}

\renewcommand{\theprop}{A.\arabic{prop}}
\renewcommand{\thefigure}{A.\arabic{figure}}
\renewcommand{\thetable}{A.\arabic{table}}
\renewcommand{\theappprop}{A.\arabic{appprop}}
\renewcommand{\theapplemma}{A.\arabic{applemma}}
\renewcommand{\theappcor}{A.\arabic{appcor}}
\renewcommand{\theappremark}{A.\arabic{appremark}}

\section{Extensions and Additional Results}\label{sec:extensions}

\subsection{Homothetic Aggregators}\label{ext:hom-aggr}
In this section, we show that within the class of homothetic aggregators with variety externality, the aggregator in eq. (\ref{eq:aggr}) is particularly suited for analytical characterizations and empirical work. 

\begin{appprop}[Homothetic Aggregators]\label{prop:HSA}
    Let $Y(\bm{y})$ be a homothetic production function from the HSA, HIIA, or HDIA families by \cite{matsuyama2025homothetic}, acting on a bundle of inputs $\bm{y}$. Let $M = |\bm{y}|$ be the number of varieties in the input bundle. Suppose that \begin{enumerate}
        \item One can separate $Y$ into a variety externality and an aggregator with constant, zero LoV: $$
            Y(\bm{y}) = h(M) g(\bm{y}) \quad \text{with} \quad \frac{d\ln g}{d \ln M}=0
        $$
        \item The variety externality has constant LoV: $$
        \frac{d\ln h}{d \ln M}= q = const.
        $$
    \end{enumerate}
    Then $Y$ is the aggregator in eq. (\ref{eq:aggr}), with $g(\bm y)=Y^{CES}(\bm y)\,M^{-\,\frac{1}{\sigma-1}} \;\text{and}\;\; h(M)=M^{q}$.
\end{appprop}

These desirable properties allow us to separate the role of love-of-variety from that of the price elasticity of demand sharply. They also allow us to recover empirically an exact map of our model, rather than a local effect in Section \ref{sec:empirics}.

\subsection{Entry Game Equivalence}\label{ext:entry_equiv}
In the main text, we have assumed that the entry process is an instantaneous equilibrium in which all market participants anticipate the firm distribution after entry, $m$. In an alternative equilibrium characterization without anticipation of $m$, firms iteratively enter the market, only considering the ex-ante distribution of firms in each iteration. Given a distribution of incumbents, initially, a small mass of firms enters the market, and the low-productivity subset folds immediately. This play repeats until no additional firm wishes to enter and no active firm wishes to quit. Proposition \ref{remark:equiv} states that the two entry processes are equivalent. 
\begin{appprop}[Entry game equivalence]\label{remark:equiv}
The instantaneous equilibrium has a unique solution $(E, \underline z, m)$, which coincides with the limit point of the iterative entry game.
\end{appprop}

\subsection{Elastic Factor Supply}\label{ext:elastic-labor}
{ In the main text, we have considered economies with fixed factor supply. Here, we extend a positive result to economies with elastic factor supply.

\begin{appprop}[Elastic Factor Supply]\label{prop:elast_labor}
    Consider an economy with elastic primary factor supply (EF). Then, relative to an economy with fixed factor supply (FF), long-run output effects of a recession are dampened: $|\Delta\log Y^{EF}|<|\Delta\log Y^{FF}|$.
\end{appprop}
}

\subsection{Are larger crises more cleansing?} \label{sec:depth}
Until now, we have considered a fixed business cycle characterized by $(f_l^c, f_h^c)$. A natural question is whether larger increases in the fixed cost induce larger cleansing effects and variety losses. To answer this, we start this analysis by noting that the effect of a marginally deeper recession can be summarized by the following elasticity:
\begin{align}
       \frac{\partial \log Y_3}{\partial \log f^c_h}
= \frac{\partial \log M_3}{\partial \log f^c_h}\times \bigg[    \frac{\partial \log Y^{CES}_3}{\partial \log  M_3 } + (q-q^{CES})
 \bigg],\label{eq:elast}
\end{align}
where $Y_\tau^{CES}:=L_\tau^p (\int z^{\sigma - 1} m_\tau(z) \d z)^{1/(\sigma-1)}$ is the phase $\tau$ output in an economy with $q=q^{CES}$. We emphasize that eq. (\ref{eq:elast}) is a local chain-rule identity for the \emph{full} equilibrium response, \emph{not} a partial derivative that holds the composition $m_\tau(\cdot)$ fixed. The sign of this elasticity determines whether a marginally bigger recession increases or decreases long-run output. This effect can be decomposed into two elements: (i) the effect of larger fixed costs increases on a CES economy and (ii) the additional variety effect. Both of these terms depend fundamentally on whether a larger fixed cost change increases or decreases the long-run number of available varieties. We study each element of eq. (\ref{eq:elast}) in turn.

\paragraph{Effects of $f_h^c$ on the mass of active firms $M_3$.} 
We have argued above that the long-run number of firms declines after a crisis. This effect is driven by selection, as the average entrant is strictly more productive than the average exiter. This holds true for all recession intensities $f^h_c$ and independently of LoV, $q$. 

However, larger crises do not always imply marginally fewer firms in the long run: the sign of $\frac{\partial \log M_3}{\partial \log f^c_h}$ is ambiguous. The magnitude of the selection force that induces a decline in $M_3$ depends on the intensity of the recession $f^h_c$. To understand this, consider a very large recession. In such a case, the marginal exiting firm is very productive, in fact, more productive than the expected entrant. As a consequence, even larger recessions induce less, not more selection. The opposite holds for very small recessions: a larger increase in $f_h^c$ induces stronger selection effects. As selection effects drive the long-run number of firms $M$, we have that stronger recessions have ambiguous effects on $M_3$. We formalize this insight in Lemma \ref{lem:depth}.

\begin{applemma}[Recessions Depth and Cleansing Effects]\label{lem:depth}
    The long run number of active firms in the economy $M_3(f_h^c) \leq M_1$ attains a unique minimum $M_3^{min}$ at some crisis level, $f_h^{c, *} \in (f_l^c, \infty)$. For each crisis driven by $f_h^c\in (f_l^c, f_h^{c, *})$ there exists some crisis $f_h^{c\prime}\in (f_h^{c, *}, \infty)$, such that $M_3(f_h^c) = M_3(f_h^{c\prime})$.
\end{applemma}
The mechanism underlying this result is fully driven by the presence of incumbents. The marginal effect of the magnitude of the crisis on the long-run number of firms is given by 
\begin{align}\label{eq:m3-fc}
    \frac{\partial \log M_3}{\partial \log f_h^c}\Bigg\vert_{f^c = f_h^c} &\propto\frac{\partial}{\partial f_h^c}\left[ \underbrace{M_1 p_E(\underline z_2)}_{\text{Surviving Firms in Phase 2}}+ \underbrace{E_3  p_E(\underline z_1)}_{\text{Entrants in Phase 3}}\right] \\
    &=\left[  \underbrace{M_1\frac{\partial p_E(\underline z_2) }{\partial \underline z_2}}_{<0}  +  \underbrace{p_E(\underline z_1) \frac{\partial E_3}{\partial \underline z_2}}_{>0} \right] \cdot \underbrace{\frac{\partial \underline z_2}{\partial f_h^c}}_{> 0}, \notag
\end{align}
where $p_{E}( \underline z) = \int_{\underline z}^{\infty} \mu^E(z) dz$. As $f_h^c$ increases, the crisis cutoff $\underline z_2$ shifts up. At low levels of $f_h^c$, the first term in the bracket of eq. (\ref{eq:m3-fc}) dominates: marginal exiters are less productive than entrants, and there is a marginal decline in the number of firms. At high levels of $f_h^c$, the marginal exiter is more productive than the average entrant, and, therefore, any additional increase in the fixed cost is associated with marginally more firms in the long run.

\paragraph{Effect of $M_3$ on long-run output, $Y_3$.} The elasticity of long-run output $Y_3$ with respect to $M_3$ decomposes into (i) the elasticity of $Y_3^{CES}$---output if the economy had LoV of level $q^{CES}$---and (ii) the elasticity of the variety effect in the $q$-economy relative to a CES economy. The latter is constant and its sign depends on whether $q\lessgtr 1/(\sigma-1)$. The elasticity of $Y^{CES}$ with respect to $M_3$ is always negative. 
A higher number of firms $M_3$ affects $Y^{CES}$ in two different ways: (i) more firms directly imply more fixed cost payments and, therefore, less labor used in production, and (ii) a long-run equilibrium with more firms has undergone weaker selection along the transition. Both of these effects reduce $Y_3^{CES}$. However, note that $\frac{\partial \log Y_3^{CES}}{\partial\log M_3}$ is globally negative but not constant across different crisis intensities and independent of $q$.
We can now return to the total effect of having more firms in the post-cycle steady state. On the one hand, additional firms generate one-to-one gains from varieties, provided that $q>q^{CES}$. On the other hand, additional firms generate more-than-linear reductions in the labor-saving and selection gains from the crisis, and thus in $Y_3^{CES}$. We conclude that reducing firm and hence product variety, $M_3$, always has a negative effect on long-run output if $q$ is large and always has a positive effect if $q$ is small.

In between these two extremes, reducing firm and hence product variety $M_3$ has a negative effect on long-run output for small values of $M_3$ (where the variety effect is stronger than cleansing and selection) and positive for large values of $M_3$. Next, we discuss this region of ambiguity and its implications for the effect of crises of varying magnitude, $f_h^c$, on total output.

\paragraph{Effect of $f_h^c$ on $Y_3$.}
Taking stock, we know that as the crisis intensifies, the mass of varieties available in phase 3 first decreases then increases, and that this mass of varieties can have unambiguously positive, negative, or indeterminate effects on total output. We assemble these insights about the effect of $f_h^c$ on $Y_3$ in Proposition \ref{prop:compreces}:
\begin{appprop}[Interaction of cycle depth and LoV]\label{prop:compreces}
Index economies with otherwise equal parameters by their love-of-variety, $q$ (`$q$-economies'). Then, there exists a unique, nonempty interval $(q_\circ, q^\circ)$ with $q_\circ > q^{CES}$ such that: \begin{enumerate}
            \item $q\geq q^\circ \implies Y_3/Y_1 \leq 1$ for all $f_h^c$.
            \item $q\leq q_\circ \implies Y_3/Y_1 \geq 1$ for all $f_h^c$.
            \item for all $q\in (q_\circ, q^\circ)$, larger crises can be welfare improving or welfare reducing depending on their intensity $f_h^c$.
        \end{enumerate}
        
\end{appprop}

Proposition \ref{prop:compreces} provides three results. First, there exists a level of LoV $q_\circ$ such that any economy with $q\geq q^\circ$ faces output and welfare drops for any crisis intensity $f_h^c$. These economies value the presence of varieties so much that they experience long-run welfare losses, even for the smallest crises. Symmetrically, there is a subset of $q$-economies that do not value varieties as much and for which all recessions are long-run welfare-improving, independently of their intensity. Finally, the set of economies indexed by $q\in(q_\circ, q^\circ)$ is such that small and extremely large recessions can increase long-run output, while medium-sized recessions reduce it. This follows from the decreasing returns to labor-saving and selection effects as the number of varieties shrinks. We provide a graphical representation in Figure \ref{fig:decomp}.

Panel (A) shows how CES output, $Y^{CES}$, and the total mass of firms, $M_3$, respond to different increases in the fixed costs, $f_h^c$. Both quantities are independent of $q$. Relative to before the crisis, the mass $M_3$ drops, reaching a minimum at depth $f_h^{c,*}$, where all the incumbents with productivity less than the average successful entrant have been replaced. CES output always increases relative to before the cycle and reaches its highest level when no further returns from labor cleansing and selection effects can be extracted, i.e., at $f_h^{c,*}$. As the fixed cost rises toward eliminating every firm during phase 2, all quantities return to their pre-crisis levels.

Next, panel (B) shows the evolution of the love-of-variety effect for different levels of $q$. When $q>q^{CES}$, this effect tracks the path of the number of firms in panel (A). When $q<q^{CES}$, the LOV externality is such that the economy is better off when fewer firms are active, so the effect mirrors that of $M_3/M_1$. 

Finally, panel (C) shows the behavior of output for different $q-$economies. When $q$ is large, recessions are unambiguously welfare reducing and, vice versa, when $q$ is small they are always cleansing. In the intermediate range, some recessions can induce welfare gains while others generate welfare losses depending on their depth. 

This set of intermediate $q-$economies is characterized by an inference problem. In fact, past data on recessions cannot inform on whether future recessions will have cleansing effect or not. 

\begin{figure}[!!htbp]
    \centering
    \includegraphics[trim=0cm 0cm 0cm 0.3cm, clip, width=1\linewidth]{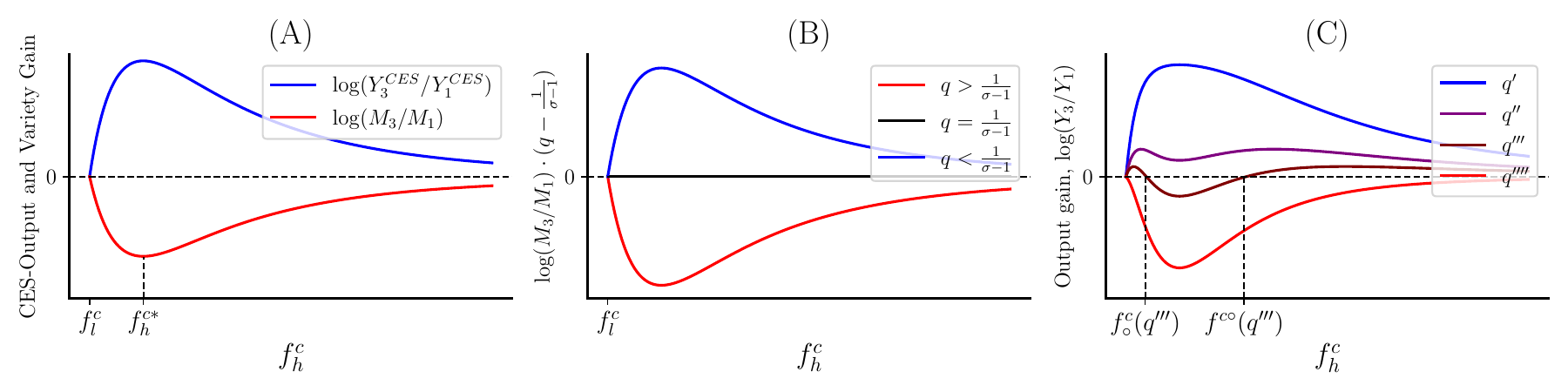}
    \caption{Decomposition of the log-output ratio for varying levels of LoV, $q$ and crisis, $f_h^c$. All curves start in $(f_l^c, 0)$ and are simulated from the model. Panel (A) shows the how $\ln(Y_3^{CES}/Y_1^{CES})$ and $\ln(M_3/M_1)$ vary with $f_h^c$. Panel (B) shows how LOV contributes to the output ratio through the factor $\ln((M_3/M_1)^{q- q^{CES}})$. Panel (C) sketches the curve of $\ln(Y_3/Y_1)$ for values of $q$, $q' < q'' < q'''< q''''$, whereby only $q''' \in ( q_\circ, q^\circ)$ holds (cf. Proposition \ref{prop:compreces}). Correspondingly, at $q'''$, the effect of crises is ambiguous, and there are two crises $f_\circ^c(q''') < {f^c}^\circ(q''')$ at which the output ratio returns equals 1. A proof of the `wiggly' shape when LoV is $q'''$ is provided in the proof of Proposition \ref{prop:compreces}.}
    \label{fig:decomp}
\end{figure}


\subsection{Heterogeneous Externality Draws}\label{app:matsuyama}
In the main text, we study a setting in which firms draw idiosyncratic productivities upon entry but are symmetric in terms of their contribution to the variety externality: each firm contributes a single variety. Consider an extension of our baseline economy where firms draw their contribution $\xi$ to the aggregate externality term. Suppose that, upon entry, entrants draw once and for all from a joint distribution $(\xi, z)\sim \mu^E(\xi, z)$. Denote the marginal entry distributions $\mu_\xi^E$ and $\mu_z^E$, and the marginal distributions of active firms by $m_\xi$ and $m_z$, for $\xi$ and $z$, respectively. W.l.o.g., we assume that the expected externality contribution drawn by an entrant is unity: $\E_{\mu_\xi^E}[\xi]=1$. The aggregate externality is then $\mathcal{M} \equiv \int\int \xi m(z, \xi) \d z\d\xi = \E_{m_\xi} [\xi] \times M$, with $ M$ being the count of how many firms there are in the market: $M=\int \int m(\xi, z) \d z \d \xi$. Importantly, as varieties have heterogeneous effects on the aggregator, we adopt $\mathcal M$ as the measure of varieties. Namely, LoV is the elasticity to $\mathcal M$. The only HSA aggregator which satisfies the separability conditions of Proposition \ref{prop:HSA} is given by \begin{align}
    Y = \mathcal{M}^q \times \left[\int y(z)^{(\sigma-1)/\sigma} \mu_z(z) \d z\right]^{\frac{\sigma}{\sigma -1}}.
\end{align} %
Note that the first factor is isoelastic in $\mathcal{M}$ with elasticity $q$, while the second factor has zero elasticity with respect to the number of varieties, since $\mu_z$ is a marginal probability density. 

Intuitively, heterogeneous externalities interact with the cleansing effects of cycles if and only if they are correlated with the selection mechanism based on productivity. We confirm this intuition in the following remark.

\begin{appremark}[Independence]\label{rem:indep}
    If $\xi$ and $z$ are independent in entrant and incumbent distribution, then the model is equivalent to our baseline model in firm-level choices and aggregates.
\end{appremark}
Away from this special case, we characterize the economy with a general joint distribution through the cycle, where we again refer to pre-cycle, recession, and post-cycle values with indexes 1, 2, and 3, respectively. The average externality contribution in the economy along the cycle is given by \begin{align}
    \E_{m_1}[\xi] &= \E_{\mu^{E}}[\xi \g z \geq \ulz_1], \\
    \E_{m_2}[\xi] &= \E_{\mu^{E}}[\xi \g z \geq \ulz_2], \\
    \E_{m_3}[\xi] &= \E_{\mu^{E}}[\xi \g z \geq \ulz_2]\times \frac{M_2}{M_3} + \E_{\mu^{E}}[\xi \g z \geq \ulz_1] \times \frac{E \mu_z(z\geq \ulz_1)}{M_3}.
\end{align}%
The last equation states that the long-run average contribution depends on the average among firms that survived throughout the recession (first term) and the average contribution among new entrants post-recession (second term). Like in the baseline model, we have $M_3 < M_1$. However, this is no longer the externality-relevant quantity in the current model. Instead, we note that 
 \begin{align}\label{eq:extension_hetvar_deltacurlyM}
    \mathcal{M}_3 > \mathcal{M}_1 \quad  \iff \quad \Delta \log \E_{m}[\xi] > - \Delta \log M.
\end{align} 
Namely, the variety effect is positive if the loss in the number of firms is more than compensated for by an increase in the average externality per firm. Since entry and exit choices at the firm level depend on productivity through their expected profits, the correlation between $\xi $ and $z$ determines whether recessions are less or more costly relative to the independent case of Remark \ref{rem:indep}.

Clearly, the welfare effect of the externality depends now on whether a business cycle selects out firms with relatively high or low contributions to the aggregate externality. We formalize this additional effect in the following proposition, which relates this extension to our baseline model and Proposition \ref{proposition:PE-cycles}.

\begin{appprop}[Business Cycles with Heterogeneous Externalities]\label{prop:het_ext}
  Consider the extension with heterogeneous contributions to the externality. Say that $\xi$ and $z$ are positively (negatively) correlated if $\E_{\mu^E}[ \xi \g z \geq a]$ is strictly increasing (decreasing) in $a$. Let $Y^{hom}$ be output in our baseline model with homogeneous externalities. Then: \begin{enumerate}
    \item The cycle induces negative selection in $\xi$ and $\Delta Y < \Delta Y^{hom}$ if $\xi$ and $z$ are negatively correlated.
    \item The cycle induces positive selection in $\xi$ and $\Delta Y > \Delta Y^{hom}$ if $\xi$ and $z$ are positively correlated.
    \item The cycle induces no selection in $\xi$ and the heterogeneity is irrelevant to output $\Delta Y = \Delta Y^{hom}$, if $\xi$ and $z$ are uncorrelated.
  \end{enumerate}
\end{appprop}

\subsection{Forward-Looking Exit}\label{ext:fwd-look-exit} A straightforward extension of the model is one in which firms anticipate the mean-reverting path of an adverse fixed cost shock, allowing them to see and discount $f_{t+1}^c,f_{t+2}^c, ...$. However, they do not foresee the path of future entry, forecasting future profits using $\tilde \pi(z, m_t)$. The decision to exit then hinges on \begin{align}
    NPV(\{f_{t+k}^c\}_k) \gtreqqless \tilde\pi(z, m_t)/(1-\beta).
\end{align} Accordingly, the cutoff jumps up less during a recession, while the dynamics of the model are unaffected.

\subsection{Aggregate Dynamics}\label{app:dynamics}

In this section, we briefly revisit the dynamics of an economy in which the fixed cost increases and slowly reverts to its original value, in order to characterize the transition behavior. For exposition in this section, we assume that $\mu^E$ is Pareto, i.e., $\mu^E(z)= \beta z_{min}^\beta z^{-(\beta+1)}$ with $\beta > \sigma -1$, and that the initial fixed cost are $f_{-1}^c = 1$. Fixed cost increase to $\phi>1$ in $t=0$ and revert back according to %
\begin{align}
    f_t^c = \phi^{ \max \{0, 1 - t\cdot \frac{1}{T^*}\} }, \quad (t = 0, 1, 2, ...)
\end{align}
where $T^*$ is the number of fixed cost changes until complete reversion. Hence, $\phi$ parametrizes intensity and $T^*$ parametrizes smoothness of the crisis. Since the equilibrium is attained instantly with any parameter change, and since each equilibrium would be a steady state in the absence of further parameter changes, we are effectively comparing a sequence of steady state equilibria. 
If one assigns unit length to the crisis, then $\mathrm{d}t = \frac{1}{T^*}$ is the length of a single time-delta. A subscript-$t$ indicates the value of a variable after $t$ such small intervals. Since the number of time-deltas passed through in a fixed cost cycle is $T^*$, $T^* \rightarrow \infty$ characterizes a continuous time limit. Using the equilibrium definition repeatedly, we derive closed-form solutions of the equilibrium at the end of the crisis (after $T^*$ time-deltas). Solutions are given in the propositions below. We assume that the pre-crisis equilibrium with entry and cutoff---which we call $E_{-1}$ and $\ulz_{-1}$, respectively---is derived from entry into a previously empty economy (i.e., no prior incumbents).

\begin{appprop}[Pareto Economy with Sequential Fixed Cost Reversion] \label{prop:smooth_reversion_pareto_sequential_fc}
    After a fixed cost increase, during the recovery, $t \in \{1, ..., T^*\}$, the economy features a constant flow of entrants
    \begin{align}
        E_t = \frac{\mathcal{I}}{\beta} \frac{\sigma-1}{\sigma} \frac{1}{f^e} \left[1 - \phi^{\frac{\sigma -1 -\beta}{\beta} \frac{1}{T^*}} \right] \quad (t = 1, 2, ..., T^*).
    \end{align} 
    The sequence of cutoff productivities, including the cutoff on impact in $t=0$, is smoothly declining and given by 
    \begin{align}
        \ulz_t = z_{min} \left\{ \frac{\phi^{ (T^* - t)/T^* } }{f^e} \frac{\sigma -1}{\beta - (\sigma - 1)} \right\}^{1/\beta} \quad (t = 0, 1, ..., T^*).
    \end{align} The measure of active firms after the crisis is given by %
    \begin{align}
        m_{T^*}(z) &= \frac{\mathcal{I}}{f^e} \frac{\sigma -1}{\sigma} \beta z_{min}^\beta z^{-(\beta+1)} \indicalt{ z_{min} \left\{ \frac{ 1 }{f^e} \frac{\sigma -1}{\beta - (\sigma - 1)} \right\}^{1/\beta} } \notag \\ 
        & +\frac{\mathcal{I}}{f^e} \frac{\sigma -1}{\sigma} \beta z_{min}^\beta z^{-(\beta+1)}  \sum_{t=1}^{T^*} \indicalt{ z_{min} \left\{ \frac{\phi^{ (T^* - t)/T^* } }{f^e} \frac{\sigma -1}{\beta - (\sigma - 1)} \right\}^{1/\beta} }  \left[1 - \phi^{\frac{\sigma -1 -\beta}{\beta} \frac{1}{T^*}} \right].
    \end{align}
    The long-run aggregate productivity, $\mathcal{Z}_{T^*}$, equals its pre-crisis value: %
    \begin{align}
        \mathcal{Z}_{T^*} = \mathcal{Z}_{-1},
    \end{align}
    and the long-run mass of active firms is \begin{align}
        M_{T^*} = E_{-1}p_E(\ulz_0) + \sum_{\tau = 1}^{T^*} p_E(\ulz_\tau) E_{\tau} &= 
        \frac{\mathcal{I}}{\sigma} \frac{\beta - (\sigma -1)}{\beta} \left\{ \phi^{-1}
        + (1-1/\phi)  \left( \frac{1-\phi^{- \frac{\beta- (\sigma -1 )}{\beta} \frac{1}{T^*} }}{ 1- \phi^{-\frac{1}{T^*}}} \right) \right\} \\
        < \frac{\mathcal{I}}{\sigma} \frac{\beta - (\sigma -1)}{\beta} = M_{-1},
    \end{align}
    and all quantities approach their pre-crisis values as $\phi \downarrow 1$.
\end{appprop}

The characterization of the transition provides one important economic insight: the slow reversion of the fixed cost induces selection \textit{along the way}. Since firms are allowed to enter before the fixed cost has reverted completely to $f^c_{-1}$, some of the post-crisis, entrants in period 2 face a more stringent cutoff than in period 3 and so on. This effect strengthens the cleansing effects and induces more long-run selection. The continuous time limit yields more succinct expressions in the following Proposition.

\begin{appprop}[Continuous Time Reversion] \label{prop:smooth_reversion_pareto_smooth_fc}
    In the limit as $T^* \rightarrow \infty$, the mass of active firms converges from above to %
    \begin{align}
        M_{T^*} \downarrow M_\infty = \frac{\mathcal{I}}{\sigma} \frac{\beta - (\sigma - 1)}{\beta} \left( (1 - \phi^{-1}) \frac{\beta - (\sigma -1)}{\beta} + \phi^{-1} \right).
    \end{align}%
    Moreover, the firm distribution converges to %
    \begin{align}
        m_{T^*}(z) \rightarrow m_\infty(z) = \underbrace{ \mu^E(z) \overbrace{\frac{\sigma -1}{\beta} \frac{\mathcal{I}}{\sigma} \frac{1}{f^e}}^{ = E_{-1}}
        }_{=m_{-1}(z)} \times \begin{cases}
             (\beta - (\sigma -1))\left[ \ln z - \ln \ulz_{-1} \right]  &\quad z \in [ \ulz_{-1}, \ulz_0 ), \\
              \left[ 1 + \frac{\beta - (\sigma -1)}{\beta} \ln \phi \right] &\quad z \in [\ulz_0 ,\infty).
        \end{cases}
    \end{align}
    The right tail of $m_{T^*}(z)$, $z\geq \underline z_0$ converges to $m_\infty(z)$ from below. The limit density $m_\infty(z)$ is continuous, except at $z = \ulz_0$, where it jumps up.
\end{appprop}

The continuous version discussed in Proposition \ref{prop:smooth_reversion_pareto_smooth_fc} reveals that cleansing effects in terms of an increased average productivity are more pronounced if fixed costs revert smoothly. Nonetheless, our main results still apply: If $q > q^{CES}$, then the long-run welfare effects of crises are negative. We illustrate graphically the result in Proposition \ref{prop:smooth_reversion_pareto_smooth_fc} in Figure \ref{fig:tent}. As highlighted, the limiting distribution is different when the reversion is smooth. The distribution has zero mass at its lower truncation point, $\ulz_{-1}$. Such a shape is more in line with productivity distributions inferred from data. Second, the right tail thickens relative to $m_{-1}$ by a factor that is increasing in the log of crisis intensity, yet the tail index of the distribution is unaffected.

\begin{figure}[htbp]
    \centering
    \includegraphics[width=0.55\linewidth]{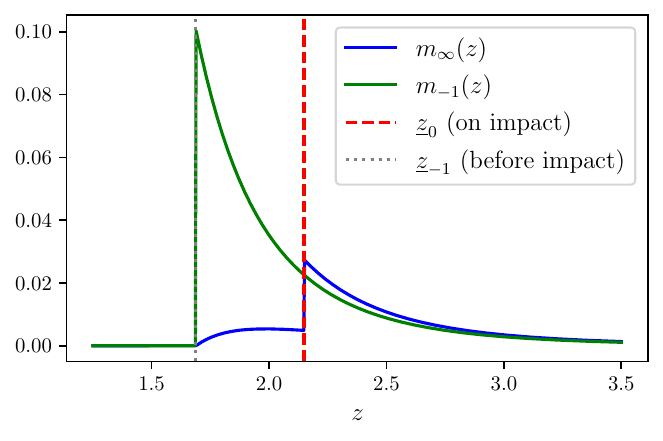}
    \caption{Limit firm productivity distribution after smooth fixed cost reversion. The distribution $m_\infty(z)$ has zero mass at its lower truncation point and a thickened right tail, as described in Proposition \ref{prop:smooth_reversion_pareto_smooth_fc}.}
    \label{fig:tent}
\end{figure}

\paragraph{General Equilibrium} Finally, we extend this characterization to the general equilibrium framework of Section \ref{sec:ge}. Because the smooth transition increases the strength of cleansing effects, the long-run number of firms is smaller. As a consequence, even less labor is used for fixed cost payments than in the case in which fixed costs revert all at once.

\begin{appprop}[Aggregate Dynamics in General Equilibrium]
    
\label{remark:GE_exponential_decay_entry}
    In general equilibrium, entry subsides over multiple periods, and the steady state is attained in the limit as $E_t \rightarrow 0$. $E_t$ exponentially decays, and the rate of decay is faster if successful entrants are more productive than firms at the cutoff, $\ulz$, and if the cutoff is low in the sense that $p_E(\ulz)$ is close to 1. 
\end{appprop}

\subsection{Stochastic Idiosyncratic Productivity} \label{app:stoch_prod}
Our baseline model posits that firms draw their productivity once and for all. Here, we study a setting in which they experience fluctuations in idiosyncratic productivity even after entry. Suppose firms are subject to individual, transitory productivity shocks as in \cite{hopenhayn1992entry}. Productivity shocks are i.i.d. across firms and periods. Firms decide at the beginning of the period, before learning their idiosyncratic shock, whether or not to produce. Effective productivity $\tilde z$ consists of a fixed component, $z$, drawn from $\mu^E$ on entry, and a shock $\varepsilon$, such that $\tilde z = z + \varepsilon$ and $\varepsilon \sim \psi$ with $\mathbb{E}[\varepsilon]=0$.\footnote{For ease of notation, we assume that $z + \varepsilon \geq 0$ always holds. Similar results can be achieved if one parametrizes the distribution of $\varepsilon$ by $z$ to ensure $\varepsilon + z \geq 0$.} We directly discuss the general equilibrium version of the model. Firm profits now read 
\begin{align}
    \pi(\tilde z, m) = \frac{R}{\sigma} \frac{(z + \varepsilon)^{\sigma - 1}}{\tilde{Z}(m, \psi)} - f^c,
\end{align}
where 
\begin{align}
    \tilde{Z}(m, \psi) := \int \int (z + \varepsilon)^{\sigma - 1} m(z)\psi(\varepsilon)\, \mathrm{d} z \, \mathrm{d} \varepsilon
\end{align}
is market intensity, which nests our baseline model for $\psi = \delta_0$ (the Dirac distribution centered about 0). Given the timing assumption, the zero profit condition pins down the cutoff productivity. The equilibrium conditions describing firm entry and exit with idiosyncratic risk are given by \begin{align}
        & m_t(z) = m_{t-1}(z)\mathbb{I}_{\{z\geq \underline z_t\}} + E_{\varepsilon,t}\mu^E(z)\mathbb{I}_{\{z\geq \underline z_t\}},\\
       & E_{\varepsilon,t}(\mathbb{E}_{\mu^E} [\max\{\mathbb{E}_{\psi}[\pi(z, m_t)], 0\}] - f^e) \leq 0 ,\\
        &E_{\varepsilon,t} \geq 0 ,\\
       & \mathbb{E}_{\psi} \pi(\underline{z} + \varepsilon, m_t) = 0,
    \end{align} %
and all other equilibrium equations are as in the baseline model. We note the following properties relative to our baseline model:

\begin{appprop}[Equilibrium with idiosyncratic risk]\label{prop:ext_idrisk_dyn}The following holds
\begin{enumerate}
    \item As in the baseline model, the cutoff, $\underline{z}$ is history-independent in a state of entry.
    
    \item An incumbent with permanent productivity component $z$ is part of the steady-state if and only if the probability that temporary productivity fluctuations cannot make them exit at the peak of the crisis: $
        \mathbb{P}_\psi(z + \varepsilon < \underline{z}) = 0$. The pre-crisis and long-run allocations differ if and only if \begin{align}\label{eq:ext_idrisk_pathdep}
        \int \mathbb{P}_\psi(z + \varepsilon < \underline{z}) \mu^E(z) \d z > 0.
    \end{align} 
    A sufficient condition for both these to be true is that there exists a lower bound $\underline \varepsilon$ high enough.
\end{enumerate}
\end{appprop}

An immediate consequence of Proposition \ref{prop:ext_idrisk_dyn} is that whenever eq. (\ref{eq:ext_idrisk_pathdep}) holds, Propositions \ref{proposition:PE-cycles}--\ref{prop:compreces}, which all exploit the path dependency of our model, still hold true.

An alternative scenario is one in which \textit{every} firm exits in some future, and incumbency plays no role in the steady state. For example, this would be the case whenever there are random exit shocks \citep{melitz_impact_2003,bilbiie2012endogenous}. With exogenous exit, the steady-state distribution becomes history-independent, and all the fixed cost cycle effects we discussed above become transitory. Nonetheless, the dynamics we described throughout would be visible through what \cite{caballero1996timing} refer to as ``echo'' effects. Namely, persistent fluctuations caused by past recessions that only vanish in the very long run.

\subsection{Multiproduct Firms}\label{app:multiprod}An important assumption of our model is that firms produce a single variety. As a consequence, firm exit necessarily implies the loss of a variety. Here, we study an extension of our baseline model where firms can produce multiple varieties. The key force that this extension introduces is that the exit of a firm may induce a surviving competitor to expand its variety set, leaving the number of offered goods in the economy unchanged. Effectively, this allows us to break the link between firm exit and variety losses. 

Consider an economy in which individual firms pay a fixed cost $f^p>0$ per variety they produce, and firm productivity is the same across varieties. Then, a firm with productivity $z$ produces a total number of varieties $N(z)$. To aggregate varieties $j \in [0, N(z)]$ within a firm, we use a CES-aggregator with an elasticity of substitution $\eta>1$: \begin{align}
    y(z) = \left(\int_0^{N(z)} y_{j}(z)^{\frac{\eta-1}{\eta}} \d j\right)^{\frac{\eta}{\eta-1}}.
\end{align} 
Varieties across firms are aggregated by a generalized CES aggregator with elasticity of substitution $\sigma \in (1, \eta)$. This aggregator features, in addition to the aggregate externality running through LoV for different firms, $q_F$, an externality term for LoV through products, $q_V$:%
\begin{align}\label{eq:aggr-multiprod}
    Y = M^{q_F - \frac{1}{\sigma-1}} \tilde{N}^{q_V - \frac{1}{\eta-1}} \left[\int y(z)^{\frac{\sigma}{\sigma-1}}m(z) \d z\right]^{\frac{\sigma-1}{\sigma}}.
\end{align} %
Here, the number of firms is $M:=\int m \d z$, and  $\tilde N$ is the number of products per firm: $\tilde{N} = N/M = \int N(z) \mu(z) \d z$. Since the cost to produce an extra variety is linear while the gains are concave, each firm chooses to operate an optimal mass $N(z)>0$ of varieties:
\begin{align}\label{eq:n(z)-multi-prod}
    N(z) = \frac{L^p}{(\eta-1) f^p} \frac{z^{\phi-1}}{\int  z^{\phi-1} m(z) \d z},
\end{align}
where $\phi -1 := \frac{(\sigma-1)(\eta-1)}{\eta-\sigma}$. Like individual output in our baseline model, the number of varieties of each firm is increasing in its relative productivity $z^{\phi - 1}/\left(\int  z^{\phi-1} m(z) \d z\right)^{\frac{1}{\phi-1}}$ and decreasing in the fixed cost $f^p$. By aggregation, the total number of varieties is given by \begin{align}
    N=\frac{L^p}{(\eta -1)f^p},   
\end{align}%
and therefore varieties scale linearly in the amount of available production labor. Equivalently, the ratio of labor expended on product fixed cost to that used for production, $Nf^p/L^p$ is constant and given by $1/(\eta-1)$. Intuitively, firms trade off allocating labor to producing more varieties versus more output of existing varieties. Since firms behave as if they lived in a world with CES aggregation, this trade-off is independent of scale, and $Nf^p/L^p$  is a constant fraction. To solve for $L^p$, we use the labor market clearing condition: 
\begin{align}\label{eq:mkt-clearing-multiprod}
    L^p + M f^c + E f^e + N f^p = \bar L,
\end{align}%
from which we conclude \begin{align}
    L^p &= \frac{\eta -1}{\eta}  (\bar L - M f^c) \quad\quad \text{ and} \quad\quad
    N =  \frac{(\bar L - M f^c)}{\eta f^p}.
\end{align} %
As before, we consider the steady state, i.e., $E=0$. By substituting the equilibrium production of each firm into the aggregator in eq. (\ref{eq:aggr-multiprod}), we can write aggregate output in equilibrium as
\begin{align}\label{eq:gdp-multiprod}
    Y 
      & \propto \overbrace{M^{q_F - q_V - \frac{1}{\phi - 1}}}^{= M^{\left(q_F - \frac{1}{\sigma-1}\right) - \left(q_V - \frac{1}{\eta-1}\right)}} (\bar L - Mf^c)^{1+q_V} \overbrace{\left( \int  z^{\phi - 1} m(z) \d z \right)^{\frac{1}{\phi - 1}}}^{=: \mathcal{Z}},
\end{align}
suppressing positive constants depending on $\eta$ and $f^p$ only.
Contrary to eq. (\ref{eq:gdp}), both the mass of firms and the mass of varieties matter for aggregate output, and we allow for two separate parameters capturing how much each is valued: $q_F$ and $q_V$, respectively. From eq. \eqref{eq:gdp-multiprod}, it is clear that cleansing effects are qualitatively the same, with a few remarks. First, for $q_V > 0$, there are increasing returns to scale to \textit{uncommitted labor}, $\bar L - M f^c$, because more uncommitted labor increases both the number of varieties per firm $\tilde{N}$, and firm output $y(z)$. Second, an increase in the number of firms $M$ is valued through two channels: a mechanical contraction of varieties per firm $\tilde{N}$ with elasticity $-(q_V - \frac{1}{1-\eta})$ and an expansion of firm-variety with elasticity $(q_F - \frac{1}{\sigma-1})$. 

With appropriately redefined boundaries for the inequalities in Propositions \ref{proposition:PE-cycles}--\ref{prop:compreces}, all our results on cleansing through fixed-cost cycles hold true. In particular, firm-level profits are
\begin{align}\label{eq:prof-multiprod}
    \pi( z, m) &= \frac{L^p}{\phi-1} \frac{z^{\phi-1}}{\int  z^{\phi-1} m(z) \d z} - f^c,
\end{align}
eqs. (\ref{eq:fe-pe}), (\ref{eq:zcp-pe}) and (\ref{eq:mu-lom}) yield exactly the same equilibrium conditions and entry/exit dynamics by replacing $\sigma$ with $\phi$ to account for within-firm variety substitution. Note that $f^p$ does not enter the profits of a firm. This is driven by the fact that changes in the cost of holding products make all firms shrink or expand their number of varieties, thereby not altering the effective relative productivities. As a consequence, movements in $f^p$ do not induce exit or entry since all firms move in lockstep and profits are unchanged. 
We formalize the long-run effects of business cycles driven by firm-level fixed costs $f^c$ in part (i) and driven by product-level fixed costs in part (ii) of Proposition \ref{prop:multi-prod-cycles}.

\begin{appprop}[Fixed-cost Cycles with Multi-Product Firms]\label{prop:multi-prod-cycles}
    In the presence of multi-product firms, the following holds true.
\begin{enumerate}[label=(\roman*)]
    \item For recessions driven by temporary increases in $f^c$, where $q_F^{CES} := \frac{1}{\sigma-1}$ and $q_V^{CES} := \frac{1}{\eta-1}$:
    \begin{enumerate}
        \item an analogue of Proposition \ref{proposition:PE-cycles} holds with 
        \begin{align*}
            M_3 < M_1, \; L_3^p = L_1^p, \; N_3 = N_1, \; \mathcal{Z}_3 = \mathcal{Z}_1
        \end{align*}
        and
        \begin{align*}
            \frac{Y_3}{Y_1} \gtreqqless 1 \Leftrightarrow q_F - q_F^{CES} \lesseqqgtr q_V - q_V^{CES};
        \end{align*}
        \item an analogue of Proposition \ref{prop:output-GE} holds with 
        \begin{align*}
            M_3 < M_1, \; L_3^p > L_1^p, \; N_3 > N_1, \; \mathcal{Z}_3 > \mathcal{Z}_1
        \end{align*}
        and there exists a unique $q_F^\star$ with
        \begin{align*} 
            q_F^\star - q_F^{CES} > q_V - q_V^{CES},
        \end{align*}
        such that
        \begin{align*}
              \frac{Y_3}{Y_1}\gtreqqless 1 \iff q_F \lesseqqgtr  q_F^\star;
        \end{align*}
        \item an analogue of Proposition \ref{prop:compreces} holds with
    \begin{align*}
        \exists! \; (q_\circ, q^\circ) \;\; \text{with} \;\; q_\circ - q_F^{CES} > q_V - q_V^{CES} 
    \end{align*}
    such that long-run effects depend on $f_h^c$ intensity.
    \end{enumerate}
    \item Recessions driven by temporary increases in $f^p$ have no long-run effects.
\end{enumerate}
\end{appprop}

The temporary increase in firm-level fixed costs induces exit on impact. In partial equilibrium, surviving firms temporarily increase their number of products. As the fixed cost reverts to its initial level, incumbents' product set shrinks while entrants join the economy. Overall, because the recession has cleansing effects, the total number of firms is lower, but the number of products per firm is unchanged. The long-run effects on output and welfare depend on whether LoV is high enough to compensate the loss of varieties, yielding a condition similar to that of Proposition \ref{proposition:PE-cycles} but accounting for both within and across firms substitutability. In GE, as fewer firms operate after the recession, labor is relatively cheaper due to smaller firm fixed costs payments. This increases the optimal number of products for surviving firms. Like in Proposition \ref{prop:output-GE}, there is a unique level of LoV that makes agents indifferent between going through a fixed cost cycle or not. This level now accounts for the partial recovery of lost varieties by exiting firms, since incumbents increase their product range after the recession. 

Finally, the setup of this extension also allows us to examine a second kind of fixed cost cycle: a cycle in $f^p$. Part (ii) states that $f^p$ cycles are akin to aggregate TFP cycles in that they have no long-run effects. The intuition is as follows. First, note that, differently from firm entry, product creation does not require a sunk investment, hence there is no notion of `product-incumbency'. Since we consider temporary increases in $f^p$, they only have long-run effects if they affect the firm distribution by inducing entry or exit. Firms reduce their product set while the recession occurs and create new products as $f^p$ returns to the original level; however, since all firms reduce their product set at the same time, there is no effect on market shares and, therefore, on profits. As a consequence, this recession does not feature exit, but rather just a reduction in the number of products per firm. This is because, in our economy, the presence of multiple products operates like a shifter to the productivity of each individual firm: if $z^\prime>z$, the effective productivity difference is even larger since the $z^\prime$ firm chooses to operate more products than the $z$ firm. However, the elasticity of $N(z)$ to $f^p$ is independent of $z$. Hence, when $f^p$ increases, all firms reduce the product set by the same percentage. This operates like a reduction in aggregate productivity and does not affect the relative productivity of firms. As in Proposition \ref{prop:ext_TFP}, this shift does not induce any exit and, therefore, does not alter the firm distribution before and after the recession. As a consequence, the economy reverts to the initial steady state.

\subsection{Fixed costs in output and labor units}\label{ext:fixed-cost-in-final-good} 

To preserve efficiency in the decentralized economy with $q=q^{CES}$, we have assumed that fixed and entry costs are paid entirely in units of labor \citep{barro_economic_2004}. To generalize our model, we let these costs be produced with the following Cobb–Douglas technology combining output (share $\alpha$) and labor (share $1-\alpha$):
\begin{align*}
        B(Y,L) \;&=\;\kappa(\alpha) \;  Y^{\alpha} L^{\,1-\alpha}, \qquad \kappa(\alpha)= \alpha^\alpha (1-\alpha)^{1-\alpha}, \quad  \alpha\in[0,1].
\end{align*}
This technology nests two special cases: $\alpha=0$, in which costs are paid entirely in units of labor as in our model, and $\alpha=1$, in which they are paid entirely in units of output. Throughout, we make the technical assumption that $\sigma>1+\alpha$. The following Proposition extends Proposition \ref{prop:output-GE} to this general setting.

\begin{appprop}[Cleansing Effects of Cycles]\label{prop:output-fin-good}
The change in output after the crisis, when the fixed costs are produced by a Cobb-Douglas bundle of output (share $\alpha$) and labor (share $1-\alpha$), is given by \begin{align}\label{eq:ge_ratio_fc_cb}
\Delta\log Y(q,\alpha)
&= \frac{\sigma-\alpha}{\sigma-1}\, \Delta\log \left(\int z^{\sigma-1} m(z)\d z\right)
+(1-\alpha)\,(\,q-q^{\mathrm{CES}}\,)\,\Delta\log M.
\end{align}
%
For each $\alpha\in[0,1]$ there exists a value of love-of-variety $q^\star(\alpha)$ such that the crisis leaves output unchanged:
$\Delta\log Y(q^\star(\alpha),\alpha)=0$; with $q^\star(\alpha)\ge q^{\mathrm{CES}}$, with equality for $\alpha=1$. 
Furthermore, output is decreasing in $q$ around $q^\star(\alpha)$: $\left.\frac{\partial}{\partial q}\Delta\log Y(q,\alpha)\right|_{q=q^{\star}(\alpha)}<0$. Which implies  
$$
q \gtrless q^{\star}(\alpha) \Rightarrow \Delta\log Y(q) \lessgtr 0 \quad \text{for $q$ in a neighborhood of $q^{\star}(\alpha)$, to first order}.
$$
\end{appprop}

The main results continue to hold when fixed costs are denominated in any combination of labor and output units. When the economy places greater weight on variety than under the CES benchmark ($q>q^\star(\alpha) \ge q^{CES}$), recessions reduce long-run output and welfare.

The intuition behind this result is as follows. The fixed-cost cycle shifts the economy from the original steady state to a new one in which the mass of active firms is lower. Aggregate output adjusts according to equation \eqref{eq:ge_ratio_fc_cb}, which depends on both the number of varieties and the firm productivity distribution. After the recession, the mass of firms shrinks, while the surviving firms are on average more productive. 

When $\alpha=0$, fixed costs are paid entirely in labor units, and thus $q$ does not affect the resources available to firms for production. The equilibrium trade-off then operates purely between variety and selection: for $q>q^\star$, the economy values variety more than average productivity, so the variety effect dominates the selection effect.

When $\alpha>0$, part of the fixed cost is denominated in units of output. A higher $q$ encourages entry, which increases output through the love-of-variety channel but also raises the nominal burden of fixed costs, reducing the scale of production among the most productive firms. When $\sigma$ is large, goods are close substitutes and concentrating production in the most efficient firms is particularly valuable. If $q>q^{\star}(\alpha)$, however, the economy places excessive weight on preserving variety, diverting capacity to sustain additional firms rather than allowing efficient producers to expand. This shift of production away from the strongest margin under high $\sigma$ implies that recessions reduce long-run output and welfare whenever $q>q^{\star}(\alpha)$.

\section{Proofs}\label{app:proofs}

\begin{proof}[Proof of Lemma \ref{lemma:hsa_invariance_A_z}]
    Since both equilibria are with entry, the free-entry condition and the zero-profit condition must both hold with equality. Jointly, they determine $A$ and $\underline{z}$. Since these two equilibrium equations are the same for $m_{t-1}$ and $m_{t-1}'$, the result follows.     
\end{proof}


\begin{proof}[Proof of Proposition \ref{prop:hsa_phase2}]
    Let $\olpsi = \underline{z}^{-1}$. Absence of entry ($E_2=0$) is obvious. From eq. \eqref{eq:hsa_zpc}, given that $\pi$ is decreasing in $\olpsi/A$, we note that $\overline{u}_2 = \olpsi_2/A_2 < \olpsi_1/A_1=\overline{u}_1$. For contradiction, suppose that the marginal cost cutoff has become more permissive, so $\olpsi_2 > \olpsi_1$. This would imply that $A_2 > A_1$, and competition has gone down. Then, consistency (eq. \ref{eq:hsa_lom}) and the fact that the revenue function $r \equiv s\circ X $ is strictly decreasing in its argument imply that the RHS of the adding-up constraint \eqref{eq:hsa_addup} must increase strictly, and that the adding-up constraint is violated. Therefore, $\olpsi_2 < \olpsi_1$. This implies that $m_2$ is a strict truncation of $m_1$, and the adding-up constraint \eqref{eq:hsa_addup} must decrease strictly, since $r > 0$, globally. To counter this, $A_2 > A_1$ increases.
\end{proof}

\begin{proof}[Proof of Theorem \ref{thm:hsa_tradeoff}]
    Using the adding-up constraint in $t=1$ and $t=3$, we obtain the entry mass post cycle: $
        E_3 = \frac{\int_{\ulz_1}^{\ulz_2} r(\frac{1}{zA_1}) m_1(z) \d z}{\int_{\ulz_1}^{\infty} r(\frac{1}{zA_1}) \mu^E(z) \d z}.
    $ %
    
    Then, the difference in the mass of firms, $M_3 - M_1$, is:
    \begin{align*}
        \Delta M &= -\int_{\ulz_1}^{\ulz_2} m_1(z)\d z + E_3 \Pr_{\mu^E}(z \geq \ulz_1) = -\int_{\ulz_1}^{\ulz_2} m_1(z)\d z + \frac{\int_{\ulz_1}^{\ulz_2} r(\frac{1}{zA_1}) m_1(z) \d z}{\E_{f_E}\Big[ r(\frac{1}{zA_1}) \Big]} \\
        &= \int_{\ulz_1}^{\ulz_2} m_1(z)\d z \left(\frac{\E_{\zeta_X}\Big[ r(\frac{1}{zA_1}) \Big]}{\E_{\zeta_E}\Big[ r(\frac{1}{zA_1}) \Big]} - 1 \right),
    \end{align*}
    where $\zeta_E(z)\equiv \frac{\mu^E(z)\,\mathbb{I}\{z\ge \ulz_1\}}{\Pr_{\mu^E}(z\ge \ulz_1)}$ is the successful-entrant probability density and $\zeta_X(z)\equiv \frac{m_1(z)\,\mathbb{I}\{\ulz_1\le z\le \ulz_2\}}{\int_{\ulz_1}^{\ulz_2} m_1(z)\,\d z}$ is the exiter probability density.
    This implies the equivalent characterization
    \begin{align*}
    \Delta M \gtreqqless 0
    \quad \Longleftrightarrow \quad
    \E_{\zeta_E}\Big[ r\Big(\frac{1}{zA_1}\Big) \Big] 	\lesseqqgtr \E_{\zeta_X}\Big[ r\Big(\frac{1}{zA_1}\Big) \Big].
    \end{align*}
    Let us define the pre and post-cycle productivity probability distributions
    $$
    \mu_1(z)\equiv \frac{m_1(z)}{M_1},
    \qquad
    \mu_3(z)\equiv \frac{m_3(z)}{M_3},
    $$
    where $M_t$ denotes the mass of active firms in period $t$.     Because phase 2 removes the slice $[\ulz_1,\ulz_2]$ and phase 3 adds entrants drawn from $\mu^E$, we can write $\mu_1$ and $\mu_3$ as mixtures with a common survivor component:
    \begin{align*}
    \mu_1(z) &= (1-w_X) \zeta_S(z) + w_X \zeta_X(z), 
    \qquad w_X \equiv \frac{\int_{\ulz_1}^{\ulz_2} m_1(z)\,\d z}{M_1},\\
    \mu_3(z) &= (1-w_E) \zeta_S(z) + w_E \zeta_E(z), 
    \qquad w_E \equiv \frac{E_3\Pr_{\mu^E}(z\ge \ulz_1)}{M_3},
    \end{align*}
    where $\zeta_S(z)\equiv \frac{m_1(z)\,\mathbb{I}\{z\ge \ulz_2\}}{\int_{\ulz_2}^{\infty} m_1(z)\,\d z}$ is the survivor density.
    Let the exiters FOSD successful entrants, i.e.,
    $$
    \zeta_X \ \text{FOSD}\ \zeta_E
    \qquad\Longleftrightarrow\qquad
    G_X(z)\le G_E(z)\ \ \ \forall z,
    $$
    where $G$ denotes the corresponding CDF. Since $\zeta_S$ is supported on $[\,\ulz_2,\infty)$ while $\zeta_X$ (and $\zeta_E$) place (weakly) more mass below $\ulz_2$, we have
    $$
    G_S(z)\le G_X(z)\le G_E(z)\qquad \forall z.
    $$
    Next, note that the sign characterization derived above implies $\Delta M>0$ under the assumed dominance. To see this, note that  $\tilde r(z)\equiv r(1/(zA_1))$ is strictly increasing in $z$, and thus FOSD implies
    $$
    \E_{\zeta_X}[\tilde r(z)] \;\geq\; \E_{\zeta_E}[\tilde r(z)],
    $$
    hence $\Delta M \geq 0$ and therefore $M_3\geq M_1$. Using the mixture weights, observe that
    $$
    w_E - w_X
    = \frac{E_3\Pr_{\mu^E}(z\ge \ulz_1)}{M_3} - \frac{\int_{\ulz_1}^{\ulz_2} m_1(z)\,\d z}{M_1}
    = \frac{\Delta M}{M_3} \left( \frac{\int_{\ulz_2}^{\infty} m_1(z) \d z}{M_1} \right),
    $$
    and thus $sign(w_E - w_X) = sign(\Delta M)$. Finally, compare the CDFs of $\mu_1$ and $\mu_3$:
    $$
    G_1(z)=(1-w_X)G_S(z)+w_X G_X(z),
    \qquad
    G_3(z)=(1-w_E)G_S(z)+w_E G_E(z),
    $$
    where $G_1$ and $G_3$ denote the corresponding CDF of $\mu_1$ and $\mu_3$.
    Then for every $z$,
    \begin{align*}
    G_3(z)-G_1(z)
    &= (w_E-w_X)\big(G_E(z)-G_S(z)\big) + w_X\big(G_E(z)-G_X(z)\big)
    \;\ge\; 0,
    \end{align*}
    because $w_E-w_X \geq 0$, $G_E(z)\ge G_S(z)$, and $G_E(z)\ge G_X(z)$ for all $z$. Hence, $G_3(z)\ge G_1(z)$ for all $z$, i.e.,
    $$
    \mu_1 \ \text{FOSD}\ \mu_3.
    $$
    With strict $FOSD$, i.e. $G_X(z) \leq G_E(z)$ for all $z$ with strict inequality on a set of positive measure, we obtain $\Delta M >0$ and $\mu_1$ strictly $FOSD$ $\mu_3$, so the post-cycle productivity distribution shifts left.
    
    The proof in the case in which the successful entrants distribution $FOSD$ the exiters distribution follows by repeating the same steps. 
\end{proof}

\begin{proof}[Proof of Theorem \ref{thm:hsa_cleansing}]
    Consider log output post cycle, recalling $\ulz_1 = \ulz_3$, and $A_1 = A_3$ by Lemma \ref{lemma:hsa_invariance_A_z}: \begin{align*}
        \log Y_3 &= \log \upsilon(M_3) + \log (\mathcal{I}) - \log(A_1) + \\ &\int_{\ulz_1}^\infty (s \circ X)\Big( \frac{1}{zA_1} \Big) \cdot (\Phi \circ X) \Big(\frac{1}{zA_1}\Big) \cdot \big( m_1(z) \mathbb{I}\{z \geq \ulz_2\} + E_3 \mu^E(z) \big) \d z,
    \end{align*}%
    and the difference with the first period log output $\Delta \log Y = \log Y_3 - \log Y_1$: $$
    \Delta \log Y = \Delta \log \upsilon(M) + \int_{\ulz_1}^\infty (s  \circ X)\Big( \frac{1}{zA_1} \Big) \cdot (\Phi \circ X) \Big(\frac{1}{zA_1}\Big) \cdot \big(E_3 \mu^E(z)  - m_1(z)\mathbb{I}\{z\in [\ulz_1, \ulz_2]\}\big) \d z.
    $$ %
    Using the adding-up constraint in $t=1$ and $t=3$, we obtain the entry mass post cycle: $$
        E_3 = \frac{\int_{\ulz_1}^{\ulz_2} r(\frac{1}{zA_1}) m_1(z) \d z}{\int_{\ulz_1}^{\infty} r(\frac{1}{zA_1}) \mu^E(z) \d z} = \frac{\Theta(\ulz_1, \ulz_2)}{\E_{\mu^E}\Big[ r(\frac{1}{zA_1}) \ \big|\ z \geq \ulz_1 \Big]\ \Pr_{\mu^E}(z \geq \ulz_1)}.
    $$ %
    Note that $\theta_1(z) \equiv r(1/(zA_1))m_1(z)$ is a density function with support $[\ulz_1, \infty)$ and that $\theta_1^E(z) \equiv \frac{r( \frac{1}{zA_1} )\mu^E(z)}{\int_{\ulz_1}^{\infty} r(\frac{1}{zA_1}) \mu^E(z) \d z}$ is a density function also with support $[\ulz_1, \infty)$. Therefore, the change in output after the recession is: {\footnotesize\begin{align*}
    \Delta \log Y &= \Delta \log \upsilon(M) + \int_{\ulz_1}^\infty r\Big( \frac{1}{zA_1} \Big) \cdot (\Phi\circ X) \Big(\frac{1}{zA_1}\Big) \cdot \bigg(  \frac{\int_{\ulz_1}^{\ulz_2} r(\frac{1}{zA_1}) m_1(z) \d z}{\int_{\ulz_1}^{\infty} r(\frac{1}{zA_1}) \mu^E(z) \d z} \mu^E(z)  - m_1(z)\mathbb{I}\{z\in [\ulz_1, \ulz_2]\}\bigg) \d z \\
    &= \Delta \log \upsilon(M) + \int_{\ulz_1}^\infty (\Phi\circ X) \Big(\frac{1}{zA_1}\Big) \cdot  \frac{\int_{\ulz_1}^{\ulz_2} r(\frac{1}{zA_1}) m_1(z) \d z}{\int_{\ulz_1}^{\infty} r(\frac{1}{zA_1}) \mu^E(z) \d z} r\Big( \frac{1}{zA_1} \Big)\mu^E(z) \d z
    \\ &\quad - \int_{\ulz_1}^{\ulz_2}  (\Phi\circ X) \Big(\frac{1}{zA_1}\Big) r\Big( \frac{1}{zA_1} \Big)  m_1(z) \d z \\
    &= \Delta \log \upsilon(M) + \Theta_1(\ulz_1, \ulz_2) \cdot \int_{\ulz_1}^\infty (\Phi\circ X) \Big(\frac{1}{zA_1}\Big)   \theta_1^E(z) \d z
    - \int_{\ulz_1}^{\ulz_2}  (\Phi\circ X) \Big(\frac{1}{zA_1}\Big) \theta_1(z)   \d z \\
    &= \Delta \log \upsilon(M) + \Theta_1(\ulz_1, \ulz_2) \bigg[ \E_{\theta_1^E} \Big[(\Phi\circ X) \Big(\frac{1}{zA_1}\Big) \Big]
    - \E_{\theta_1} \Big[  (\Phi\circ X) \Big(\frac{1}{zA_1}\Big) \ \Big|\ \ulz_2 \geq z \geq \ulz_1 \Big] \bigg].
    \end{align*}}
\end{proof}

\begin{proof}[Proof of Proposition \ref{prop:CES_HSA}]
    The first part is immediate from \citet{MatsuyamaUshchev2023_LoveForVariety}, and noting that if $\Phi = c$, then both expectations in \eqref{eq:hsa_cycle_Y} equal $c$, and hence they cancel. For the converse, note that the first expectation is constant while the second expectation varies in $\epsilon$ if $\Phi$ is not constant $\theta_1$-almost everywhere. Therefore, $\Phi$ must be constant $\theta_1$-a.e. and by continuity of $\Phi$, it must be constant overall.
\end{proof}
\begin{proof}[Proof of Proposition \ref{prop:hsa_cleansing_msld}]
    If $m_0=0$, then $$r(1/(zA_1))m_1(z) = r(1/(zA_1))\mu^E(z)E_1= r(1/(zA_1))\mu^E(z)\frac{1}{\int_{\ulz_1}^{\infty} r(\frac{1}{zA_1}) \mu^E(z) \d z},$$ using \eqref{eq:hsa_addup} to solve for $E_1$. Hence, $\theta_1 = \theta_1^E$. 
    
    Next, consult \citet{MatsuyamaUshchev2023_LoveForVariety} for the implication $\text{MSLD} \implies\ (\mathcal{L}^{int})' < 0 \ \forall M$. The equivalence $(\mathcal{L}^{int})' < 0 \ \forall M\iff \Phi'(x) < 0 \ \forall x$ follows from the definition of $\mathcal{L}^{int}$ in eq. \eqref{eq:hsa_lov_def}, where the market share function, $s$, is strictly decreasing (up to the choke price). The choke price $\bar x$ is defined as $\bar x := \inf \{ x>0\, |\, s(x)=0\}$.

    To prove the MLSD case, note that the first expectation in \eqref{eq:hsa_cycle_Y_noinc} is taken over larger values of $z$ than the second. For decreasing $\Phi$, its composition with $X(1/(zA_1))$, which is decreasing in $z$, is increasing. Hence, the first expectation in \eqref{eq:hsa_cycle_Y_noinc} is larger than the second, and $\Delta^{int}>0$ (recall here that $\theta_1^E(z)$ is not supported on values below $\ulz_1$.) Therefore, under no external variety effects, $\Delta \log Y = \Delta^{int} > 0$.

    The second result with regard to the opposite of MLSD follows analogous reasoning.
\end{proof}%
\begin{proof}[Proof of Proposition \ref{prop:hsa_cleansing_external_iso}]
    If $\upsilon(M) = e^{qM}$, then $\Delta \log \upsilon(M) = q \Delta M$. If $\upsilon(M) = M^q$, then $\Delta \log \upsilon(M) = q \Delta \log M$. Substituting these expressions in eq. (\ref{eq:hsa_cycle_Y}), and using the result on $\Delta M$ under Theorem \ref{thm:hsa_cleansing}, yields the result.
\end{proof} 

\begin{proof}[Proof of Remark \ref{rem:q}]
    Suppose wlog that $z_i$ is increasing in the firm index, $i \in [0, M]$. We can then perform the change of measure:
    $$
    V_M(\{y_i\}_{i\in [0,M]}) := M^{q - \frac{1}{\sigma -1}} \left(\int_{0}^M y_i^{\frac{\sigma-1}{\sigma}} \d i\right)^{\frac{\sigma}{\sigma -1}} = M^{q - \frac{1}{\sigma -1}} \left(\int_{z \in \mathcal{Z}} y(z)^{\frac{\sigma-1}{\sigma}} m(z) \d z\right)^{\frac{\sigma}{\sigma -1}},
    $$
    where the right-hand side is our aggregator. As in \cite{benassy1996taste}, we define %
    $
        \upsilon(M) := \frac{V_M(1)}{M} = \frac{M^{q - \frac{1}{\sigma -1}} M^{1 + \frac{1}{\sigma - 1}}}{ M } = M^{q}.
    $
    The function $v(M)$ characterizes the relative ``gain derived from spreading a certain amount of production between $M$ differentiated products instead of concentrating it on a single variety'' \citep{benassy1996taste}. The elasticity of $v(M)$ with respect to $M$ is defined as LoV, which indeed equals $q$.
\end{proof}

\versionshort{
\begin{proof}[Proof of Proposition \ref{proposition:PE-cycles}]
    The Proposition is an immediate Corollary of Theorem \ref{thm:hsa_cleansing} under CES, with $m_{t-1}(z)=c\,\mu^E(z) \; \forall z\in[\ulz_1,\ulz_2]$, for some constant $c > 0$. Recall, this is a simple sufficient condition to guarantee cleansing and $\Delta M < 0$. Since the $\Delta^{int}$ vanishes under CES, one only computes $\Delta^{ext}$ using $\upsilon(M)=M^{q - \frac{1}{\sigma-1}}$.
\end{proof}
}

\versionlong{
\begin{proof}[Proof of Proposition \ref{proposition:PE-cycles}]
    Start by noticing that \begin{align}
    \int z^{\sigma -1 } m_3 (z)\, \d z &= \int z^{\sigma -1 } \big(E_1 \mu^E \mathbf{1}_{\{ z \geq \underline z_2\}} + E_3 \mu^E \mathbf{1}_{\{ z \geq \underline z_3\}}\big) \d z \\
    &= E_1 \int_{\{ z \geq \underline{z}_2 \}}z^{\sigma -1 } \mu^E \d z + E_3 \int_{\{ z \geq \underline{z}_3 \}}z^{\sigma -1 } \mu^E \d z \\
    &= E_1 \int_{\{ z \geq \underline{z}_2 \}}z^{\sigma -1 } \mu^E \d z + E_1 \left(1 - \frac{\int_{\{ z \geq \underline{z}_2 \}}z^{\sigma -1 } \mu^E \d z}{\int_{\{ z \geq \underline{z}_1 \}}z^{\sigma -1 } \mu^E \d z}\right) \int_{\{ z \geq \underline{z}_3 \}}z^{\sigma -1 } \mu^E \d z \\
    &= E_1 \int_{\{ z \geq \underline{z}_1 \}}z^{\sigma -1 } \mu^E \d z \\
    &= \int z^{\sigma -1 } m_1 (z)\, \d z.
\end{align} Therefore, the factor $\int z^{\sigma -1 } m_\tau \, \d z$ remains unchanged in before and after the cycle. Next, consider some $a \geq \underline z_2 > \underline z_1$, then
\begin{align}
\frac{\int_a^{\infty} m_3(z) \d z}{\int m_3(z) \d z} &= \frac{E_1 \int_a^{\infty} \mu^E(z) \d z + E_3 \int_a^{\infty} \mu^E(z) \d z}{\int m_3(z) \d z} \\
&= \frac{E_1 \int_a^{\infty} \mu^E(z) \d z + E_1 \left(1 - \frac{\int_{\{ z \geq \underline{z}_2 \}}z^{\sigma -1 } \mu^E \d z}{\int_{\{ z \geq \underline{z}_1 \}}z^{\sigma -1 } \mu^E \d z}\right) \int_a^{\infty} \mu^E(z) \d z}{\int m_3(z) \d z} \\
&= \frac{E_1 \int_a^{\infty} \mu^E(z) \d z + E_1 \left(1 - \frac{\int_{\{ z \geq \underline{z}_2 \}}z^{\sigma -1 } \mu^E \d z}{\int_{\{ z \geq \underline{z}_1 \}}z^{\sigma -1 } \mu^E \d z}\right) \int_a^{\infty} \mu^E(z) \d z}{E_1 \int_{\underline{z}_2}^{\infty} \mu^E(z) \d z + E_1 \left(1 - \frac{\int_{\{ z \geq \underline{z}_2 \}}z^{\sigma -1 } \mu^E \d z}{\int_{\{ z \geq \underline{z}_1 \}}z^{\sigma -1 } \mu^E \d z}\right) \int_{\underline{z}_1}^{\infty} \mu^E(z) \d z} \\
&> \frac{E_1 \int_a^{\infty} \mu^E(z) \d z + E_1 \left(1 - \frac{\int_{\{ z \geq \underline{z}_2 \}}z^{\sigma -1 } \mu^E \d z}{\int_{\{ z \geq \underline{z}_1 \}}z^{\sigma -1 } \mu^E \d z}\right) \int_a^{\infty} \mu^E(z) \d z}{E_1 \int_{\underline{z}_1}^{\infty} \mu^E(z) \d z + E_1 \left(1 - \frac{\int_{\{ z \geq \underline{z}_2 \}}z^{\sigma -1 } \mu^E \d z}{\int_{\{ z \geq \underline{z}_1 \}}z^{\sigma -1 } \mu^E \d z}\right) \int_{\underline{z}_1}^{\infty} \mu^E(z) \d z} \\
&= \frac{\int_a^{\infty} \mu^E(z) \d z}{\int_{\underline{z}_1}^{\infty} \mu^E(z) \d z} \\
&= \frac{\int_a^{\infty} m_1(z) \d z}{\int m_1(z) \d z}.
\end{align}
Likewise, for $\underline z_1 \leq a < \underline z_2$
\begin{align}
1 - \frac{\int_a^{\infty} m_3(z) \d z}{\int m_3(z) \d z} &= \frac{\int_{0}^{a} m_3(z) \d z}{\int m_3(z) \d z} \\
&= \frac{E_1 \int_{a}^{a} \mu^E(z) \d z + E_1 \left(1 - \frac{\int_{\{ z \geq \underline{z}_2 \}}z^{\sigma -1 } \mu^E \d z}{\int_{\{ z \geq \underline{z}_1 \}}z^{\sigma -1 } \mu^E \d z}\right) \int_{\underline{z}_1}^{a} \mu^E(z) \d z}{E_1 \int_{\underline{z}_2}^{\infty} \mu^E(z) \d z + E_1 \left(1 - \frac{\int_{\{ z \geq \underline{z}_2 \}}z^{\sigma -1 } \mu^E \d z}{\int_{\{ z \geq \underline{z}_1 \}}z^{\sigma -1 } \mu^E \d z}\right) \int_{\underline{z}_1}^{\infty} \mu^E(z) \d z} \\
&\leq \frac{E_1 \left(1 - \frac{\int_{\{ z \geq \underline{z}_2 \}}z^{\sigma -1 } \mu^E \d z}{\int_{\{ z \geq \underline{z}_1 \}}z^{\sigma -1 } \mu^E \d z}\right) \int_{\underline{z}_1}^{a} \mu^E(z) \d z}{E_1 \left(1 - \frac{\int_{\{ z \geq \underline{z}_2 \}}z^{\sigma -1 } \mu^E \d z}{\int_{\{ z \geq \underline{z}_1 \}}z^{\sigma -1 } \mu^E \d z}\right) \int_{\underline{z}_1}^{\infty} \mu^E(z) \d z} \\
&= \frac{\int_{0}^{a} m_1(z) \d z}{\int m_1(z) \d z} = 1 - \frac{\int_a^{\infty} m_1(z) \d z}{\int m_1(z) \d z},
\end{align}
so that again $\frac{\int_a^{\infty} m_3(z) \d z}{\int m_3(z) \d z} \geq \frac{\int_a^{\infty} m_1(z) \d z}{\int m_1(z) \d z}$. Thus, the probability density function $\frac{m_3(z)}{\int m_3(z) \d z}$ strictly first-order dominates the probability density function $\frac{m_1(z)}{\int m_1(z) \d z}$. Since $g(z)=z^{\sigma-1}$ is increasing for $\sigma>1$,
FOSD implies
\begin{align}
    \frac{\int z^{\sigma-1} m_3(z)\,dz}{\int m_3(z)\,dz}
    \;>\;
    \frac{\int z^{\sigma-1} m_1(z)\,dz}{\int m_1(z)\,dz}.
\end{align}
Combining this with the first part of this proof, we obtain $\int m_1(z) \d z > \int m_3(z) \d z$.

Finally, notice that total income is exogenously fixed at $\mathcal{I}$. This implies that $L^p_3 = L^p_1 = \frac{\sigma-1}{\sigma} \mathcal{I}$. Therefore, we have that $\Delta \log L^p = 0$.

\end{proof}
}

\versionshort{
\begin{proof}[Proof of Proposition \ref{prop:ext_TFP}]
    We write the equilibrium conditions for an economy with aggregate productivity shock $\Xi$ as below and index all endogenous quantities by $\Xi$: \begin{align} \label{eq:ext_TFP_1}
        f^e &\geq \int_{\underline z_\Xi} \left[\frac{R_\Xi}{\sigma} \frac{(\Xi z)^{\sigma -1}}{\int_{\underline z_\Xi} (\Xi z)^{\sigma -1} m_\Xi(z) \d z} - f^c \right] \mu^E(z)\d z \\
        0 &= \frac{R_\Xi}{\sigma} \frac{(\Xi\underline z_\Xi)^{\sigma -1}}{\int_{\underline z_\Xi} (\Xi z)^{\sigma -1}m_\Xi(z) \d z} \label{eq:ext_TFP_2}
    \end{align}
    Note that $\Xi$ cancels in both equations above, such that the system is equivalent to the equilibrium equations of our baseline model if and only if $R_\Xi = R$. To establish this equality, note \begin{align}
        R_\Xi &= \bar L + \Pi_\Xi = \bar L + R_\Xi/\sigma - \left(\int m_\Xi(z) \d z\right) f^c \quad \text{(integrate $\pi_\Xi(z, m)$)}. \label{eq:ext_TFP_3}
    \end{align}
    Eqs. (\ref{eq:ext_TFP_1}--\ref{eq:ext_TFP_3}) are then exactly the same as in the baseline equilibrium, and thus any $(m, \underline z, R)$ satisfying the baseline equilibrium also satisfies (\ref{eq:ext_TFP_1}--\ref{eq:ext_TFP_3}).
\end{proof}
}

\versionlong{
\begin{proof}[Proof of Proposition \ref{prop:ext_TFP}]
    We write the equilibrium conditions for an economy with aggregate productivity shock $A$ as below and index all endogenous quantities by $A$: \begin{align} \label{eq:ext_TFP_1}
        f^e &\geq \int_{\underline z_A} \left[\frac{R_A}{\sigma} \frac{(Az)^{\sigma -1}}{\int_{\underline z_A} (Az)^{\sigma -1} m_A(z) \d z} - f^c \right] \mu^E(z)\d z \\
        0 &= \frac{R_A}{\sigma} \frac{(A\underline z_A)^{\sigma -1}}{\int_{\underline z_A} (Az)^{\sigma -1}m_A(z) \d z} \label{eq:ext_TFP_2}
    \end{align}
    Note that $A$ cancels in both equations above, such that the system is equivalent to the equilibrium equations of our baseline model if and only if $R_A = R$. To establish this equality, note \begin{align}
        R_A &= \bar L + \Pi_A = \bar L + R_A/\sigma - \left(\int m_A(z) \d z\right) f^c \quad \text{(integrate $\pi_A(z, m)$)}. \label{eq:ext_TFP_3}
    \end{align}
    Eqs. (\ref{eq:ext_TFP_1}--\ref{eq:ext_TFP_3}) are then exactly the same as in the baseline equilibrium, and thus any $(m, \underline z, R)$ satisfying the baseline equilibrium also satisfies (\ref{eq:ext_TFP_1}--\ref{eq:ext_TFP_3}). We can give more detail. Let $m^I$ be the predetermined incumbent density (in a dynamic context, $m_t^I=m_{t-1}$). In the case with entry, use $m_A = (E_A \mu^E + m^I) \mathbb{I}_{z\geq \underline z_A}$ in eqs. (\ref{eq:ext_TFP_1}--\ref{eq:ext_TFP_2}) to derive eq. (\ref{eq:cutoff}), so $\underline z_A = \underline z$. Use eq. (\ref{eq:ext_TFP_3}) in (\ref{eq:ext_TFP_2}) and the definition of $m_A$ to see that $E_A = E$, hence $m_A = m$. Clearly, then also $R_A = R$. In the case without entry, $E_A = 0 = E$ and eq. (\ref{eq:ext_TFP_2}) alone pins down $\underline z_A$, after substituting for $m_A$. Since this equation is again independent of $A$, $\underline z_A =\underline z$.
    
\end{proof}
}

\begin{proof}[Proof of Proposition \ref{prop:ext_entry}]
   First, note that the presence of a random exit shock $\delta$ implies that the stationary distribution of firms is unique and so is the number of firms $M^{\sst}$. As a consequence, there is a unique steady-state cutoff $\underline z^{\sst}$. 
   
   Without loss of generality, consider a case in which the entry cost increases and then steadily declines. Suppose that, for any arbitrary temporary increase in $f^e$, the cutoff $\underline z$ is non-increasing. Let $m^I$ be the predetermined incumbent density (in a dynamic context, $m_t^I=m_{t-1}$), and $\mu^I$ be the corresponding pdf. Then there are two cases
   \begin{enumerate}
       \item $f^e$ is large enough that the free entry condition is slack, and there is an initial part of the transition where $E=0$, the number of firms steadily declines until some new level $M^\prime$ such that the free entry condition holds exactly. From this point onwards, any further reduction in $f^e$ is associated with a strictly increasing number of firms until the entry cost returns to its initial level and $M\rightarrow M^{\sst}$ from below
       \item $f^e$ increases but not enough to offset the exit rate delta, $E>0$, along the entire transition and $M\rightarrow M^{\sst}$ from below.
   \end{enumerate}
   Consider now the behaviour of the cutoff. Again, we consider the same two cases
   \begin{enumerate}
       \item If the increase is large enough that $E=0$ in the early periods of the transition, we know from the exit case above that the cutoff remains constant, since it is determined by 
       \begin{align*} 
          f^c &= \frac{\mathcal{I}}{\sigma} \frac{\underline z^{\sigma-1}}{\int_{\underline z}^\infty  z^{\sigma -1} \mu^I(z) \d z}
        \end{align*}
       which is independent of $f^e$.
       
       As soon as the entry cost declines enough to have $E>0$, the cutoff is determined by 
       \begin{align*}
            \frac{f^e}{f^c} &= \int_{\underline z}^\infty \left[\left(\frac{z}{\underline z}\right)^{\sigma-1} -1 \right]\mu^E(z) \d z 
       \end{align*}
       which implies that the cutoff jumps to some new level $\underline z^\prime<\underline z$ since the RHS is strictly decreasing in $\underline z$. As the fixed entry cost keeps decreasing back towards the original level, $\underline z^\prime$ converges to $\underline z^{\sst}$ from below.
       \item If the entry cost increase is small enough that $E>0$ throughout the transition, then the cutoff jumps to some level $\underline z^\prime<\underline z$ on impact and then converges back to $\underline z^{\sst}$ from below. 
   \end{enumerate}

   This verifies the original guess and proves that along the transition, the cutoff decreases and so does the number of firms. 

   To prove part \textit{c)}, note that since both $\underline z$ and $M$ decrease along the transition and since exit is \textit{unselected}, the average productivity of firms has to decrease as well. 
   Since $M$ and $\underline z$ both decline, then output has to decrease independently of $q$.
\end{proof}

\begin{proof}[Proof Proposition \ref{prop:output-GE}]
    Note first that in the absence of fixed cost of entry, $f^e$, eq. \eqref{eq:entrymass-ge} determines $E^{\text{ss}}$ directly: there is no additional labor freed up from lower entry expenses. Hence, the entry process terminates immediately and jumps to $E^{\text{ss}}$. Rewrite eq. \eqref{eq:entrymass-ge} now as
\begin{align}
    E_3 &= \frac{R_3}{\sigma f^c} \frac{\underline{z}_1^{\sigma-1}}{\int_{\{ z \geq \underline{z}_1 \}}z^{\sigma -1 } \mu^E(z) \d z} - E_1 \; \frac{\int_{\{ z \geq \underline{z}_2 \}}z^{\sigma -1 } \mu^E(z) \d z}{\int_{\{ z \geq \underline{z}_1 \}}z^{\sigma -1 } \mu^E(z) \d z} \\
    &= \frac{R_1}{\sigma f^c} \frac{\underline{z}_1^{\sigma-1}}{\int_{\{ z \geq \underline{z}_1 \}}z^{\sigma -1 } \mu^E(z) \d z} - E_1 \; \frac{\int_{\{ z \geq \underline{z}_2 \}}z^{\sigma -1 } \mu^E(z) \d z}{\int_{\{ z \geq \underline{z}_1 \}}z^{\sigma -1 } \mu^E(z) \d z} + (R_3 - R_1) \frac{1}{\sigma f^c} \frac{\underline{z}_1^{\sigma-1}}{\int_{\{ z \geq \underline{z}_1 \}}z^{\sigma -1 } \mu^E(z) \d z} \\
    &= E_1 \left(1 - \frac{\int_{\{ z \geq \underline{z}_2 \}}z^{\sigma -1 } \mu^E(z) \d z}{\int_{\{ z \geq \underline{z}_1 \}}z^{\sigma -1 } \mu^E(z) \d z} \right) + (R_3 - R_1) \frac{1}{\sigma f^c} \frac{\underline{z}_1^{\sigma-1}}{\int_{\{ z \geq \underline{z}_1 \}}z^{\sigma -1 } \mu^E(z) \d z} \label{eq:proof_prop2_E3}
\end{align}  
where $R_\tau$ is steady-state GE income (i.e., without fixed cost of entry). Note that \begin{align}
    (R_3 - R_1) &= \frac{\sigma}{\sigma-1}f^c(E_1 p_E(\underline z_1) - (E_1 p_E(\underline z_2) + E_3 p_E(\underline z_1 ))) = \frac{\sigma}{\sigma-1}f^c(M_1 - M_3). \label{eq:proof_prop2_diffR}
\end{align}   %
Suppose that $ (R_3 - R_1) \leq 0$, then $M_3 \geq M_1$. Since $E_3 \leq E_1 \left(1 - \frac{\int_{\{ z \geq \underline{z}_2 \}}z^{\sigma -1 } \mu^E(z) \d z}{\int_{\{ z \geq \underline{z}_1 \}}z^{\sigma -1 } \mu^E(z) \d z} \right)$, following the steps of the proof of Proposition \ref{proposition:PE-cycles}, we know that $\int z^{\sigma -1 } m_3 (z)\d z \leq \int z^{\sigma -1 } m_1 (z)\d z$. But then, it also holds that $\int z^{\sigma -1 } \mu_3 \d z \leq \int z^{\sigma -1 } \mu_1 \d z$, i.e., the average productivity of a firm active in the market is lower post cycle. By construction of the exit and entry process (clipping the left tail of the productivity distribution in period 2 and allowing entry of average firms in period 3) this cannot be. Hence,  $ (R_3 - R_1) > 0$ and thus \begin{align}
    M_3 < M_1,\; \text{and} \; \int z^{\sigma -1 } m_3 (z)\d z > \int z^{\sigma -1 } m_1 (z)\d z.
\end{align} %
Consequently, $L_3^p > L_1^p$ holds, too. The output ratio between pre- and post-crisis steady-state depending on $q$ is then given by \begin{align}
    \frac{Y_3}{Y_1}(q) =  \left[\frac{M_3}{M_1}\right]^{q - \frac{1}{\sigma-1}} \; \left[\frac{L_3^p}{L_1^p}\right] \frac{(\int z^{\sigma -1 } m_3 (z)\d z)^{1/(\sigma-1)}}{(\int z^{\sigma -1 } m_1 (z)\d z)^{1/(\sigma-1)}},
\end{align} and none of the equilibrium quantities on the right-hand-side depend on $q$. It follows that $\frac{Y_3}{Y_1}\big(\frac{1}{\sigma-1}\big) > 1$ for the CES case, and there exists a unique $q^* > \frac{1}{\sigma-1}$ for which $\frac{Y_3}{Y_1}(q^*)=1$.
\end{proof}

\begin{proof}[Proof Proposition \ref{prop:SP_solution}]
We start by characterizing the optimal policy in the case of entry and then study the one of pure exit. For lighter notation, we suppress the time subscripts of \eqref{eq:SP_obj}. Accordingly, let $m^I=m_{t-1}$ be the density of incumbents, and $I=\int m^I$ the incumbent mass. Let also $p_E(\underline z):=\int_{\underline z}^\infty \mu^E(z) dz$. \\

\noindent\textbf{Entry:}\hspace{0.3cm} 
    The interior first-order condition on entry is %
\begin{align}\label{eq:foc_SP_E}
    E^{SP}(\underline z; m^I, q)=\dfrac{\bar L-If^{c}}{p_E(\underline z) f^{c}+f^{e}}\dfrac{1}{\sigma }-\dfrac{\int_{\{ z \geq \underline{z} \}}z^{\sigma -1 } m^I \d z}{\int_{\{ z \geq \underline{z} \}}z^{\sigma -1 } \mu^E(z) \d z}\dfrac{\sigma-1}{\sigma}+ \Delta E^{SP},
\end{align} %
%
which equals the CES GE-entry plus a difference term, $\Delta E^{SP}$. The latter is given by \begin{align}
    \Delta E^{SP} 
    &= \left(q - \frac{1}{\sigma - 1}\right) L^p \frac{\sigma}{\sigma - 1} (p_E( \underline z) f^{c}+f^{e})^{-1} Q(m^I, \underline z),
\end{align} %
where $Q(m^I, \underline z)$ is the ratio between the average productivity of firms post-entry and the average productivity of (successful) entrants. It is immediate that if the household has CES preferences ($q=1/(\sigma-1)$), then $\Delta E^{SP}=0$, and the market outcome is constrained efficient. Over time, a stock of firms builds up (see discussion in Section \ref{sec:fixec_cost_cycles}), so $\sum_{t\geq 0}E_t^{SP} \rightarrow E^{SP, ss}$, where we call $E^{SP, ss}$ the (steady-state) total mass of entrants for the social planner problem. As $E_t^{SP}$ tends to $0$, $Q$ tends to 1, and the incumbent mass building up is, eventually, $\tilde{I} = I + E^{SP, ss}$. (Similarly, the measure of incumbents is stacked up.) Substituting these into eq. (\ref{eq:foc_SP_E}) and rearranging, we find \begin{align}
    E^{SP, ss} = \frac{(\bar L -If^c)q}{f^c p_E(\underline z)(1+q) + f^e} - \frac{\int_{\{ z \geq \underline{z} \}}z^{\sigma -1 } m^I \d z}{\int_{\{ z \geq \underline{z} \}}z^{\sigma -1 } \mu^E(z) \d z}\frac{f^c p_E(\underline z) + f^e}{f^c p_E(\underline z)(1+q) + f^e}.
\end{align} %
The socially optimal number of entrants is increasing in taste-for-varieties, $q$, and decreasing in the mass and productivity of active incumbents. Fixed costs of entry, $f^e$, stop to matter if and only if $q\rightarrow\infty$. Comparing this to eq. (\ref{eq:entrymass-ge}), it so follows that the general equilibrium steady-state mass converges to $E^{SP, ss}$ if and only if $q = \frac{1}{\sigma - 1}$. It also follows that in an economy with high $q > \frac{1}{\sigma - 1}$, the equilibrium number of entrants is too small compared to the short-run optimal number, so $E^{GE} < E^{SP}$. Conversely, for small $q$, the equilibrium number of varieties is larger than the short-run efficient number.  

Using the first order condition on $E^{SP}$, the interior first order condition on $\underline z^{SP} = \underline z^{SP}(E; q, m^I)$ yields: %
{\footnotesize\begin{align} \label{eq:foc_SP_z} 
    & (\bar L - If^c) \left(\int_{\{ z \geq \underline{z} \}}z^{\sigma -1 } \mu^E(z) \d z\right) \left( \frac{f^e}{f^cp_E(\underline z) + f^e} \frac{1}{\sigma}\frac{J}{K} - \left[1 - \frac{\underline z^{\sigma -1}}{\E_{\mu^E}[z^{\sigma -1 } \g  z \geq \underline z]}\right]\frac{1}{\sigma}\frac{\sigma - \frac{J}{K}}{\sigma-1} \right) \notag\\
    &= \left(\int_{\{ z \geq \underline{z} \}}z^{\sigma -1 } m^I \d z\right) (f^cp_E(\underline z) + f^e) \left( \frac{f^e}{f^cp_E(\underline z) + f^e} \left(1 - \frac{\sigma-1}{\sigma}\frac{1}{K}\right) + \frac{1}{\sigma}\frac{1}{K}  \left[1 - \frac{\underline z^{\sigma -1}}{\E_{\mu^E}[z^{\sigma -1 } \g  z \geq \underline z]}\right] \right).
\end{align}} %
The fraction $J/K$ and $J$ and $K$ individually take value $1$ precisely when $q = \frac{1}{\sigma - 1}$. Only in this case, are the terms in the large parentheses both equivalent to {\footnotesize$\dfrac{f^{e} \int_{\{ z \geq \underline{z} \}}z^{\sigma -1 } \mu^E(z) \d z}{\delta_{f}} - \left( \int_{\{ z \geq \underline{z} \}}z^{\sigma -1 } \mu^E(z) \d z- \underline z^{\sigma -1} p_{E}(\underline z) \right)$}. Eq. (\ref{eq:foc_SP_z}) is then equivalent to eq. (\ref{eq:cutoff}), the GE cutoff.

With external variety effect, unpack $J$ and $K$ to determine the cutoff. Note that $J = 1 + [(\sigma - 1)q - 1] Q$ and $K = 1 + \frac{1}{\sigma} [(\sigma - 1)q - 1] Q$. If we approach a steady state (or drop the incumbents), $Q\rightarrow 1$. Then eq. (\ref{eq:foc_SP_z}) simplifies:%
\begin{align}\label{eq:foc_SP_z_noI}
    \frac{f^e}{f^cp_E(\underline z) + f^e} (\sigma - 1) q - \left[1 - \frac{\underline z^{\sigma -1}}{\E_{\mu^E}[z^{\sigma -1 } \g  z \geq \underline z]}\right] = 0,
\end{align}%
and under mild regularity conditions on $\mu^E$, we can conclude that $\frac{\partial \underline z}{\partial q} < 0$. Intuitively, if the economy has a strong preference for varieties, the entry barriers for new establishments should be lower. Eq. (\ref{eq:foc_SP_z_noI}) also holds with incumbents if they were drawn from the same distribution, i.e., if $m^I \propto \mu^E(z) \mathbb{I}( z \geq \underline z)$. \\

\noindent\textbf{Exit:}\hspace{0.3cm} 
Suppose incumbent firms are distributed according to $m^I = M_0 \mu^E(z) \mathbb{I}(z \geq \underline z_0)$, and that fixed costs $f^c$ are such that the optimal cutoff $\underline z$ is larger than $\underline z_0$, and that optimal entry, therefore, is nil. Hence, the relevant production function to maximize is \begin{align}
    Y &= \left( M_0\right)^{q}p_{E}( \underline z)^{q-\frac{1}{\sigma - 1}}\left( \bar L- f^c( M_0p_{E}( \underline z))\right)  \left(\int_{\{ z \geq \underline{z} \}}z^{\sigma -1 } \mu^E(z) \d z\right)^{\frac{1}{\sigma-1}},
\end{align} %
where taking first order conditions yields \begin{align}\label{eq:foc_SP_exit}
    f^c M_0 p_E(\underline z^{SP}) = \frac{\bar L - f^c M_0p_E(\underline z^{SP})}{\sigma -1} \left( [q (\sigma - 1)-1] + \frac{(\underline z^{SP})^{\sigma -1}}{\E_{\mu^E}[z^{\sigma -1 } \g  z \geq \underline z^{SP}] }\right),
\end{align}
which is equal to the market equilibrium outcome only if $q = \frac{1}{\sigma -1}$. For larger $q$, the cutoff is again optimally set lower than in the equilibrium allocation.
\end{proof}

\begin{proof}[Proof of Proposition \ref{prop:steady-state-subsidy}]
    In the steady-state targeted by the social planner, eq. (\ref{eq:foc_SP_exit}) must hold once entry has subsided, i.e., with $E^{SP}=0$ and some $\underline z^{SP}$. We now evaluate eq. (\ref{eq:zcp-pe}) at this $\underline z^{SP}$, whereby we have swapped $f^c$ for $\delta^cf^c$ beforehand. Solving for $\delta^c$ yields the result.
\end{proof}

\begin{proof}[Proof of Lemma \ref{lem:emp_elasticities}]
    Since we are using the entry equation for positive entry, all derivatives are right-derivatives. Let $E^I$ be the number of `old' entrants. That is, the mass of incumbent firms that have tried, and of which $E^I p_E(z^*)$ have succeeded to enter the market. %
    By definition, $M = Ep_E(\underline z) + E^I p_E(z^*)$. The entry condition determines $E = E(\mathcal I)$ such that
    \begin{align}
       Ep_E(\underline z) + E^I p_E(z^*) &= \frac{\mathcal I}{\sigma f^c} \frac{\underline z^{\sigma - 1}}{\E[z^{\sigma - 1} \g z \geq \underline z]} - E^I p_E(z^*)\left( 1 - \frac{\E[z^{\sigma - 1} \g z \geq  z^*]}{\E[z^{\sigma - 1} \g z \geq \underline z]}\right). 
   \end{align} %
       Letting %
   \begin{align}
       k(\underline z, z^*)&= p_E(z^*)\left( 1 - \frac{\E[z^{\sigma - 1} \g z \geq  z^*]}{\E[z^{\sigma - 1} \g z \geq \underline z]}\right),
    \end{align} %
    we obtain $
    k(\underline z, z^*) \gtreqqless 0 \iff \ulz \gtreqqless z^*
    $. %
    One now takes elasticities with respect to $\mathcal I$ to obtain the expression for $\frac{\partial_+\log M}{\partial_+ \log \mathcal I}$. 
    Next, substitute the distribution post-entry \begin{align}
        \frac{E p_E(\underline z)\mu^E(z \g z \geq \underline z) + E^I p_E( z^*)\mu^E(z \g z \geq z^*)}{E p_E(\underline z) + E^I p_E( z^*)}
    \end{align} %
    into the integral of the average productivity, $\log \bar z$. After some algebra to write integrals as expectations, substitute out $E(\mathcal I)$ to obtain \begin{align}
        \log \bar z &= \text{\footnotesize$\frac{1}{\sigma-1}$}\left( \log \mathcal I - \log \sigma f^c + \log \underline z - \log[ p_E(\underline z)E_{\g E^I=0} + E^I\, k(\underline z, z^*)]  \right) \\
         &= \text{\footnotesize$\frac{1}{\sigma-1}$}\left( \log \mathcal I - \log \sigma f^c + \log \underline z - \log M  \right)
    \end{align} with
    \begin{align}
        E_{\g E^I=0} &= \frac{\mathcal I}{\sigma f^c}\frac{\underline z^{\sigma -1}}{\E[z^{\sigma - 1}\g z \geq \underline z]p_E(\underline z)}.
    \end{align} %
    Differentiating with respect to $\log \mathcal I$ yields the result.
\end{proof}

\begin{proof}[Proof of Proposition \ref{prop:identification}] 
    From Lemma  \ref{lem:emp_elasticities}, $k(\underline z, \underline z) = 0$. Hence, $\frac{\partial_+\log M}{\partial_+ \log \mathcal I}=1 $ and $\frac{\partial_+\log \bar z}{\partial_+ \log \mathcal I} = 0$.
\end{proof}

\begin{proof}[Proof of Remark \ref{identification_homog}]
    We prove the general statement, and the special case of constant returns to scale (homogeneity of degree 1) follows. First, we prove that if output changes have a representation like eq. \eqref{eq:empirical_logY} under homogeneity of degree $\chi$, then we identify $\chi+q$. Then we prove that such a representation exists.

    First, we prove the proportionality of input demands.  Consider a production function $f(\mathbf{x})$ homogeneous of degree $\chi$, increasing, strictly concave, and twice differentiable in all inputs. Then, we can write generic implicit input demands as $f_{x_i} = w_i$, where $w_i$ is the unit price of input $x_i$. Taking ratios of the FOCs $f_{x_i}/f_{x_j} = \psi(x_i/x_j)=w_i/w_j$, which we can invert to obtain $x_i = \phi(w_i/w_j)x_j,\,\forall i,j$. This proves that all inputs are perfectly collinear, provided that relative input prices do not change. To prove the second part of the statement, we substitute input demands into the production function, $f(x_1,...,x_N) = (\phi_{1j}x_j,...,\phi_{Nj}x_j)$. By homogeneity of degree $\chi$, we obtain $f(x_1,...,x_N) =x_j^\chi f(\phi_{1j},...,\phi_{Nj})$. This directly implies that the output elasticity to a single input is 
    \begin{align*}
        \frac{\partial \log y}{\partial\log x_j} = \chi.
    \end{align*}
    This proves that if we can write output changes in an equivalent way to eq. \eqref{eq:empirical_logY} for general homogeneous production functions, the regression estimates $\hat\beta = \chi+q$. To prove that such a representation exists, note that by the properties of homogeneous functions, the marginal cost associated to the cost minimizing input bundle takes the form $c(z) = \frac{\kappa}{z \chi}\left(\frac{y(z)}{z}\right)^{\frac{1}{\chi}-1}$, where $\kappa$ is some function of the relative input prices and parameters which the firm takes as given. By the profit maximization, then the firm charges an optimal price $p(z) = \frac{\sigma}{\sigma-1}c(z).$ Using the demand function and solving for profits we obtain $\pi(z,m) = \frac{\mathcal I}{\sigma}\frac{c(z)^{1-\sigma}}{\int_z c(z)^{1-\sigma}m(z)\d z }-f^c.$ From the demand function we can solve output and find that $y(z)\propto z^{\frac{\sigma}{\chi+\sigma(1-\chi)}}$. From this, the steps used in the proof of Lemma \ref{lem:emp_elasticities} apply since we can write profits as $\pi(z,m) =  \frac{\mathcal I}{\sigma}\frac{z^{f(\sigma,\chi)}}{\int_z z^{f(\sigma,\chi)} m(z)\d z }-f^c$, where $f(\sigma,\chi)$ is some constant that depends on the degree of homogeneity of the production function and the elasticity of substitution.
\end{proof}

\begin{proof}[Proof of Remark \ref{rem:ident_pindex}]
    Consider the methodology described in Appendix \ref{sec:salesVoutput} to obtain real output from nominal output and price indices in WIOD. Define sales $S_t = P_tY_t$. The data either reports sales at previous year prices correctly accounting for the variety effect $S_t^{t-1} =P_{t-1}Y_t$ or sales at previous year prices using a deflator that ignores the variety effect $\tilde S^{t-1}_t = \frac{S_t^t P_{t-1}^{CES}}{P_t^{CES}}$, with $P_t^{CES} = P_t M_t^{q-\frac{1}{\sigma-1}}$. When the statistical agencies report deflators that properly account for the variety effect, then real output growth can be decomposed as:
    \begin{align}\label{eq:delta-log-yt-corr-defl}
       \Delta \log Y_t &= \log S_{t}^{t-1} - \log S_{t-1}^{t-1} = \log P_{t-1} + \log Y_t - \log P_{t-1} - \log Y_{t-1} \notag \\
        &= q \Delta \log M_t + \Delta \log L^p_t + \Delta \log \bar z_t.
    \end{align}
    When the statistical agencies ignore the variety effect, then real output growth can be decomposed as:
    \begin{align}\label{eq:delta-log-yt-incorr-defl}
         \Delta \log \tilde Y_t &= \log \tilde S_{t}^{t-1} - \log S_{t-1}^{t-1} = \log P_{t} + \log Y_t + \log P_{t-1}^{CES} - \log P^{CES}_{t} - \log P_{t-1} - \log Y_{t-1} \notag \\
        &= \log Y_t + \left ( q - \frac{1}{\sigma-1} \right) \log M_{t-1} - \left ( q - \frac{1}{\sigma-1} \right) \log M_t - \log Y_{t-1} \notag \\
        &= q \Delta \log M_t + \Delta \log L^p_t + \Delta \log \bar z_t - \left ( q - \frac{1}{\sigma-1} \right) \Delta \log M_{t}\notag \\
        &= \frac{1}{\sigma-1} \Delta \log M_{t} + \Delta \log L^p_t + \Delta \log \bar z_t.
    \end{align}
    Let the fixed-cost technology be Cobb–Douglas in the final good and labor:
    \begin{align*}
            B(Y,L) \;&=\;\kappa(\alpha) \;  Y^{\alpha} L^{\,1-\alpha}, \qquad \kappa(\alpha)= \alpha^\alpha (1-\alpha)^{1-\alpha}, \quad  \alpha\in[0,1].
    \end{align*}
    Let $(P_t,w_t)$ denote the prices of the final good and labor, then the cost of the bundle $B$ is
    \begin{align*}
            \Phi_t(P_t,w_t) \;&=\; P_t^{\alpha} w_t^{\,1-\alpha}.
    \end{align*}
    The firm requires $f^c$ units of $B$ per period. With labor as the numeraire ($w_t=1$), the nominal expenditure on fixed cost payments at time $t$ is
    \begin{align*}
        f^c \cdot \Phi_t(P_t,1)
        \;&=\;
        f^c\, P_t^{\alpha}.
    \end{align*}
    This nests two special cases. If $\alpha=0$, then the fixed costs are paid only in units of labor, and the nominal expenditure is $w_t f^c= f^c$. If $\alpha=1$, then the fixed costs are paid only in final-good units, and the nominal expenditure is $f^c P_t$.

Let the economy be in a state of entry, in line with our estimation strategy that isolates periods of positive demand shocks. Then, the mass of varieties in the economy is
\begin{align}
    &\frac{L^p_{t} \; \underline z_t^{\sigma-1}}{(\sigma-1) \; P_t^\alpha \;  f^c} = \mathcal{Z}_t^{\sigma-1} %
    \iff \frac{L^p_t}{\sigma^\alpha (\sigma-1)^{1-\alpha}} M_t^{\alpha(q-\frac{1}{\sigma-1})} \mathcal{Z}_t^{\alpha+1-\sigma} \underline z_t^{\sigma-1} =  f^c \iff \notag \\
    &\frac{L^p_t \underline z_t^{\sigma-1}}{\sigma^\alpha (\sigma-1)^{1-\alpha}} M_t^{\alpha(q-\frac{1}{\sigma-1})+\frac{\alpha+1-\sigma}{\sigma-1}} \bar z_t^{\alpha+1-\sigma}  =  f^c 
    \iff M_t = \left ( \frac{\sigma^\alpha (\sigma-1)^{1-\alpha}  f^c}{L^p_t \underline z_t^{\sigma-1}} \bar z_t^{\sigma-1-\alpha} \right)^{\frac{1}{\alpha q - 1}}\label{cobb-douglas-fc}
\end{align}
As $\alpha \to 1$, we obtain the mass of varieties in an economy where fixed costs are paid exclusively in final-good units, that is
$
    M_t = \left ( \frac{\sigma  f^c}{L^p_t \underline z_t^{\sigma-1}} \bar z_t^{\sigma-2} \right)^{\frac{1}{q - 1}}.
$
As $\alpha \to 0$, we obtain the mass of varieties in an economy where fixed costs are paid exclusively in units of labor, that is
$
    M_t = \frac{L^p_t}{(\sigma-1)f^c} \frac{\underline z_t^{\sigma-1}}{\bar{z}_t^{\sigma-1}} \iff f^c = \frac{L^p_t}{\sigma-1} \frac{\underline z_t^{\sigma-1}}{\int z^{\sigma-1} m_t(z) dz}.
$

Taking log differences of eq. (\ref{cobb-douglas-fc}), we obtain
\begin{align}\label{eq:delta-log-mt-alpha}
    \Delta \log M_t &= \frac{1}{\alpha q - 1} \left[ - \Delta \log L^p_t - (\sigma-1) \Delta \log \underline z_t + (\sigma-1-\alpha) \Delta \log \bar z_t \right].
\end{align}%
Substituting $ \Delta \log M_t$, eq. (\ref{eq:delta-log-mt-alpha}), into $\Delta \log Y_t$ and $\Delta \log \tilde Y_t$, eq. (\ref{eq:delta-log-yt-corr-defl}) and eq. (\ref{eq:delta-log-yt-incorr-defl}), we obtain
\begin{align}
    \Delta \log Y_t &=  \left(1-\frac{q}{\alpha q -1} \right) \Delta \log L^p_t - \frac{q(\sigma-1)}{\alpha q -1} \Delta \log \underline z_t + \frac{q(\sigma-1)-1}{\alpha q -1} \Delta \log \bar z_t, \notag \\
    \Delta \log \tilde Y_t &= \left(1-\frac{1}{(\sigma-1) (\alpha q -1)} \right) \Delta \log L^p_t - \frac{1}{\alpha q -1} \Delta \log \underline z_t + \frac{\alpha q(\sigma-1)-\alpha}{(\sigma-1) (\alpha q -1)} \Delta \log \bar z_t. \notag
\end{align}
Given that $L^p_t = \frac{\sigma-1}{\sigma} \mathcal{I}_t$, the above equations hold for both $\mathcal{I}_t$ and $L_t^p$. Furthermore, positive income shocks do not move $\underline z_t$, and by Assumption \ref{ass:entrant_incum_dist} they also do not move $\bar z_t$. More specifically, $\Delta \log X_t >0 \implies \Delta \log \underline z_t = 0$ and $\Delta \log \bar z_t = 0$, where $X_{t}\in \{\mathcal I_{t}, L^p_{t}\}$. Substituting these conditions and indexing by industry $i$ yields the regression specifications in the Remark:
\begin{align}
        \Delta\log Y_{i,t} = \beta \Delta\log X_{i,t} +\epsilon_{i,t}, \qquad
        \Delta\log \tilde Y_{i,t} = \tilde \beta \Delta\log X_{i,t}+\epsilon_{i,t}, \notag
\end{align}
with $\beta = 1-\frac{q}{\alpha q -1}$, $\tilde \beta = 1-\frac{1}{(\sigma-1) (\alpha q -1)}$. Using the relationships $\beta = 1-\frac{q}{\alpha q -1}$
and $\tilde\beta = 1-\frac{1}{(\sigma-1)(\alpha q -1)}$, we invert them to obtain, as in the Remark,
\begin{align}
    q(\alpha)=\frac{1-\beta}{\alpha(1-\beta)-1},\qquad
    \tilde q(\alpha)=\frac{1}{\alpha}\!\left[1+\frac{1}{(\sigma-1)(1-\tilde\beta)}\right], \notag
\end{align}
where $\tilde q(\alpha)$ denotes the function that maps $\alpha$ (given $\tilde\beta$ and $\sigma$) to the implied $q$ under incorrectly measured (non–variety-adjusted) deflators, in contrast to $q(\alpha)$ obtained under variety-adjusted deflators.

Identification is immediate from these mappings: with variety-adjusted deflators, $q$ is point-identified for any $\alpha\in[0,1]$; with deflators that ignore variety, $q$ is point-identified for any $\alpha\in(0,1]$ (given $\sigma$), but \emph{not} at $\alpha=0$, where the mapping is undefined and $\lim_{\alpha\downarrow 0}\tilde q(\alpha)=+\infty$.

Applying Proposition \ref{prop:identification}, the corresponding biases can be computed as follows (the biases arise because the economy has a single true, but unobserved, $\alpha$). For the case of deflators which correctly account for the variety effect, the bias is given by 
\begin{align*}
    \text{Bias}(\alpha;\beta)
= \underbrace{(\beta-1)}_{\text{measured q}} \;-\;\underbrace{\frac{1-\beta}{\alpha(1-\beta)-1}}_{\text{true} \;q}
= -\,\frac{\alpha(1-\beta)^2}{\,\alpha(1-\beta)-1\,}.
\end{align*}
For the case of deflators which ignore the variety effect, the bias is given by 
{\small\begin{align*}
    \text{Bias}(\alpha;\tilde\beta,\sigma)
\; =\;\underbrace{(\tilde\beta-1)}_{\text{measured q}}\;-\;\underbrace{\frac{1}{\alpha}\Bigg[1+\frac{1}{(\sigma-1)\bigl(1-\tilde\beta\bigr)}\Bigg]}_{\text{true} \;q}\; =\; \frac{\,\alpha(\sigma-1)(\tilde\beta-1)^2-(\sigma-1)(\tilde\beta-1)+1\,}
        {\alpha(\sigma-1)(\tilde\beta-1)}. 
\end{align*}}
\end{proof}

\begin{proof}[Proof of Proposition \ref{prop:sign_identification}]
Using Lemma \ref{lem:emp_elasticities}, note that
    \begin{align}
        \frac{\partial_+\log Y}{\partial_+ \log \mathcal I} &= q\frac{\partial_+\log M}{\partial_+ \log \mathcal I} + 1 +  \text{\footnotesize$\frac{1}{\sigma-1}$} \left( 1 -  \frac{\partial_+\log M}{\partial_+ \log \mathcal I}\right)
    \end{align}
    The LHS is the estimated coefficient from our regression. Define the implied $\hat q =\beta-1$. Then, we have 
    $
        \hat q = \beta-1  =\text{\footnotesize$\frac{1}{\sigma-1}$}+ \frac{\partial_+\log M}{\partial_+ \log \mathcal I}(q - q^{CES})
    $.
    If we have a measure of $\frac{1}{\sigma-1}$ we can write 
    $
        \hat q -\frac{1}{\sigma-1} =  \beta-1 - \frac{1}{\sigma-1} = \frac{\partial_+\log M}{\partial_+ \log \mathcal I}(q - q^{CES}).
    $ 
    By Lemma \ref{lem:emp_elasticities}, $k(\underline z, z^\star)<0$ if $z^\star>\underline z$, which implies $\frac{\partial_+\log M}{\partial_+ \log \mathcal I} > 1$, and hence we have that
    $
        \mathrm{sgn}\left( \hat q- \text{\footnotesize$\frac{1}{\sigma-1}$}\right) = \mathrm{sgn}\left( q - \text{\footnotesize$\frac{1}{\sigma-1}$}\right).
    $
\end{proof}

\begin{proof}[Proof of Proposition \ref{prop:HSA}]
    By \cite{matsuyama2023love}, the only demand system in HSA, HIIA or HDIA that does not feature an aggregate externality to TFP and has constant LoV is $CES$. Since $
        \frac{d\ln Y^{CES}}{d \ln M}= \frac{1}{\sigma -1}
        $, we immediately get $g(\bm{y}) = Y^{CES}(\bm{y})M^{\frac{-1}{\sigma-1}}$ has LoV of $0$. By iso-elasticity of $h$, we get $h(M)=M^q$. 
\end{proof}

\versionshort{
\begin{proof}[Proof of Proposition \ref{remark:equiv}]

    Let $\mu^E$ be the baseline probability density function from which productivity is drawn. Let $m^I$ be a density of incumbents (in a dynamic context, $m^I = m_{t-1}$ at time $t$), and $I=\int m^I$ the incumbent mass. We then have the following iterative entry procedure: 
    \begin{enumerate}
        \item At iteration $k\geq 0$, a mass of firms, $E^{(k)}$, enter initially until: 
        \begin{align}\label{seq-entry-cond}
            0 &= \E_{\mu^E}[\max\{\pi( z, m^I, m^{(k-1)}, E^{(k)}),0\}] - f^e \\
            &= \int_{\underline z^{(k)}}^{\infty} \left [ \frac{\mathcal{I}}{\sigma} \frac{z^{\sigma -1 }}{\int z^{\sigma -1 } m^\prime \d z + \int z^{\sigma -1 } m^I \d z} - f^c \right ] \mu^E \d z - f^e \notag
        \end{align} where $m^\prime := E^{(0)} \mu^E$ in the first iteration and $m^\prime := m^{(k-1)} + E^{(k)} \mu^E$ in the $k$th iteration.
        \item Currently active firms check their profitability, and all firms with productivity less than $\underline{z}^{(k)}$ leave the market. The cut-off $\underline{z}^{(k)}$ is determined by:
        \begin{align}
            \pi(\underline{z}^{(k)}, m^I, m^{(k-1)}, E^{(k)}) = \frac{\mathcal{I}}{\sigma} \frac{(\underline{z}^{(k)})^{\sigma -1}}{\int z^{\sigma -1 } m^\prime \d z + \int z^{\sigma -1 } m^I \d z} - f^c = 0 \notag \\
            (\underline{z}^{(k)})^{\sigma -1} = f^c \frac{\sigma}{\mathcal{I}} \left [\int z^{\sigma -1 } m^\prime \d z + \int z^{\sigma -1 } m^I \d z \right] \notag
        \end{align}
        \item After exit, the density function corresponding to the firms left in the market is relabeled as the new incumbent density function and is given by:
        \begin{align}
            m^I \leftarrow m^I + m^\prime(z)\mathbb{I}_{\{z\geq \underline z^{(k)}\}} = m^I + m^{(k)} \notag
        \end{align} where $m^I$ is not affected by the truncation since scope for entry implies that the threshold is weakly lower than the one previously incurred by the incumbents.
    \end{enumerate} 

    Steps 1 to 3 are iterated until convergence. To prove equivalence, one follows three steps. First, show that the cutoff is the same as under the orgininal entry protocol for all iterations, $k$. Using this fact derive, second, the partial entry masses $E^{(k)}$ explicitly. Finally, show convergence of total entry in the alternative to entry in the original protocol: $\sum E^{(k)} = E$.
\end{proof}
}

\versionlong{
\begin{proof}[Proof of Proposition \ref{remark:equiv}]

    Let $\mu^E$ be the baseline probability density function from which productivity is drawn. Let $m^I$ be a density of incumbents (in a dynamic context, $m^I = m_{t-1}$ at time $t$), and $I=\int m^I$ the incumbent mass. We then have the following iterative entry procedure: 
    \begin{enumerate}
        \item At iteration $k\geq 0$, a mass of firms, $M^{(k)}$, enter initially until: 
        \begin{align}\label{seq-entry-cond}
            0 &= \E_{\mu^E}[\max\{\pi( z, m^I, m^{(k-1)}, M^{(k)}),0\}] - f^e \\
            &= \int_{\underline z^{(k)}}^{\infty} \left [ \frac{\mathcal{I}}{\sigma} \frac{z^{\sigma -1 }}{\int z^{\sigma -1 } m^\prime \d z + \int z^{\sigma -1 } m^I \d z} - f^c \right ] \mu^E \d z - f^e \notag
        \end{align} where $m^\prime := M^{(0)} \mu^E$ in the first iteration and $m^\prime := m^{(k-1)} + M^{(k)} \mu^E$ in the $k$th iteration.
        \item Currently active firms check their profitability, and all firms with productivity less than $\underline{z}^{(k)}$ leave the market. The cut-off $\underline{z}^{(k)}$ is determined by:
        \begin{align}
            \pi(\underline{z}^{(k)}, m^I, m^{(k-1)}, M^{(k)}) = \frac{\mathcal{I}}{\sigma} \frac{(\underline{z}^{(k)})^{\sigma -1}}{\int z^{\sigma -1 } m^\prime \d z + \int z^{\sigma -1 } m^I \d z} - f^c = 0 \notag \\
            (\underline{z}^{(k)})^{\sigma -1} = f^c \frac{\sigma}{\mathcal{I}} \left [\int z^{\sigma -1 } m^\prime \d z + \int z^{\sigma -1 } m^I \d z \right] \notag
        \end{align}
        \item After exit, the density function corresponding to the firms left in the market is relabeled as the new incumbent density function and is given by:
        \begin{align}
            m^I \leftarrow m^I + m^\prime(z)\mathbb{I}_{\{z\geq \underline z^{(k)}\}} = m^I + m^{(k)} \notag
        \end{align} where $m^I$ is not affected by the truncation since scope for entry implies that the threshold is weakly lower than the one previously incurred by the incumbents.
    \end{enumerate} 

    Steps 1 to 3 are iterated until convergence. \medskip \\

    \textit{Equivalence of threshold $\underline z$} \medskip
    
    By step 1, we have that at iteration $k$ the following holds:
    \begin{align}
        f^e &= \int_{\underline z^{(k)}}^{\infty} \left [ \frac{\mathcal{I}}{\sigma} \frac{z^{\sigma -1 }}{\int z^{\sigma -1 } m^{(k-1)} \d z + M^{(k)} \int z^{\sigma -1 } \mu^E \d z + \int z^{\sigma -1 } m^I \d z} - f_c \right] \mu^E \d z \\
        &= \frac{\mathcal{I}}{\sigma} \frac{p_E(\underline z^{(k)}) \; \E_{\mu^E}[ z^{\sigma -1 } \g  z \geq \underline z^{(k)}]}{\int z^{\sigma -1 } m^{(k-1)} \d z + M^{(k)} \E_{\mu^E}[z^{\sigma -1 }] + \int z^{\sigma -1 } m^I \d z} - p_E(\underline z^{(k)}) f_c
    \end{align}
    where $p_E(\underline z^{(k)}) = \int_{\underline z^{(k)}}^\infty \mu^E(z) \d z$. \smallskip \\
    Solving for $M^{(k)}$, we obtain:
    \begin{align} \label{mass_iter_k}
        M^{(k)} = \frac{1}{\E_{\mu^E}[z^{\sigma -1 }]} \; \left [ \frac{\mathcal{I}}{\sigma} \frac{p_E(\underline z^{(k)}) \; \E_{\mu^E}[z^{\sigma -1 } \g  z \geq \underline z^{(k)}]}{f^e + f^c p_E(\underline z^{(k)})} - \int z^{\sigma -1 } m^{(k-1)}\d z - \int z^{\sigma -1 } m^I \d z \right]
    \end{align}

    Substituting the derived $M^{(k)}$ in the cut-off equation of step 2,  we obtain:
    \begin{align}
        (\underline{z}^{(k)})^{\sigma -1} = f^c \frac{p_E(\underline z^{(k)}) \; \E_{\mu^E}[z^{\sigma -1 } \g  z \geq \underline z^{(k)}]}{f^e + f^c \; p_E(\underline z^{(k)})} \label{cutoff_to_subst} \\
        \frac{f^e}{f^c} = p_E(\underline z^{(k)}) \left[ \frac{\E_{\mu^E}[z^{\sigma -1 } \g  z \geq \underline z^{(k)}]}{(\underline{z}^{(k)})^{\sigma -1}} -1 \right] \label{cutoff_iter}
    \end{align}

    First, note that clearly none of this is dependent on having started at iteration $k$. We would have obtained eq. (\ref{cutoff_iter}) also by having derived $M^{(0)}$ in eq. (\ref{mass_iter_k}). Second, note that $\underline z^{(k)}$ is a constant sequence as it depends only on $f^c$ and $f^e$, which do not change with $k$. Hence, we can write $\underline{z}_k = \underline{z} \; \forall k\in \mathbb{N}$. Third, note that the equation pinning down the threshold in the iterative entry procedure is the same as in the simultaneous entry game eq. (\ref{eq:cutoff}). \medskip \\

    \textit{Equivalence of entry mass $E$} \medskip 
    
    With $\underline{z}^{(k)} = \underline{z} \; \forall k\in \mathbb{N}$, it is clear that all we are doing is stacking scaled versions of the baseline density truncated at $\underline{z}$. Therefore, eq. (\ref{mass_iter_k}) can be written as: 
    \begin{align}
        M^{(k)} = \frac{1}{\E_{\mu^E}[z^{\sigma -1 }]} \; \left [ \frac{\mathcal{I}}{\sigma} \frac{\underline{z}^{\sigma -1}}{f^c} - (M^{(0)} + ... + M^{(k-1)}) \int_{\underline z}^{\infty} z^{\sigma -1 } \mu^E \d z - \int z^{\sigma -1 } m^I \d z \right]
    \end{align} where we used eq. (\ref{cutoff_to_subst}) to substitute in $\frac{\underline{z}^{\sigma -1}}{f^c}$.\\
    This shows that $\{M^{(k)}\}$ is a decreasing sequence. Since $M^{(k)} \geq 0$ holds $\forall k$, $\{M^{(k)}\}$ has a real limit point, and this limit point must be $0$. Hence, taking the limit as $k \rightarrow \infty$, we obtain:
    \begin{align}
        0 =  \frac{\mathcal{I}}{\sigma} \frac{\underline{z}^{\sigma -1}}{f^c} - \sum_{i = 0}^\infty M^{(i)} \int_{\underline z}^{\infty} z^{\sigma -1 } \mu^E \d z - \int z^{\sigma -1 } m^I \d z \\
        \sum_{i = 0}^\infty M^{(i)} = E = \frac{\mathcal{I}}{\sigma f^c} \frac{\underline{z}^{\sigma -1}}{\int_{\underline z}^{\infty} z^{\sigma -1 } \mu^E \d z} - \frac{\int z^{\sigma -1 } m^I \d z}{\int_{\underline z}^{\infty} z^{\sigma -1 } \mu^E \d z} 
    \end{align}
    which is clearly the same entry mass as in the simultaneous entry game (see eq. \ref{eq:entrymass-ge} with $\mathcal{I}$ instead of $R_t$). \medskip \\
    
    \textit{Equivalence of measure $\mu$} \medskip \\
    Assuming that $\mu^E$ is a bounded density function, it is clear that $m^{(k)}:= \sum_{i = 0}^{k} M^{(i)} \mu^E(z)\mathbb{I}_{\{z\geq \underline z\}}$ converges to $m^E:= E \mu^E(z)\mathbb{I}_{\{z\geq \underline z\}}$ as $k \rightarrow \infty$, given that the partial sums of the entry mass series form a Cauchy sequence. The limiting density function $m = m^I +  E \mu^E(z)\mathbb{I}_{\{z\geq \underline z\}}$ is the same density function as in the simultaneous entry game equilibrium.
\end{proof}
}

\begin{proof}[Proof of Proposition \ref{prop:elast_labor}]
    We prove the first statement by proving that factor-saving effects are dampened by the endogenous factor supply response. We prove it for a single primary factor with the understanding that it readily generalizes. 
    
Suppose the household has preferences
$    \mathcal U = \log C-\nu(L)
$
and BC
$    wL +\Pi = PC.
$ Setting labor as the numeraire, we have that the labor supply curve is implicitly defined by $
    \nu^\prime(L) = (PC)^{-1}
$. Assuming that $\nu(\cdot)$ is twice continuously differentiable, we can write the labor market clearing as 
\begin{align}
    L^p +Mf_c+Ef_e = L = {\nu^\prime}^{-1}((PC)^{-1})
\end{align}
From the price index and consumption aggregators, we know that 
   $ PC = L^p\mu$, 
which implies
\begin{align}
    Mf_c+Ef_e = {\nu^\prime}^{-1}((L^p\mu)^{-1})-L^p
\end{align}
Note that RHS is decreasing in $L^p$ for all $\nu^\prime>0$ and $\mu>1$. So, more firms ($M$) or more entrants ($E$) reduce labor used in production, even when supply is elastic. To complete the proof, define appropriately a function $g$, with $g^\prime<0$ such that
\begin{align}
    Mf_c+Ef_e = g(L^p)-L^p
\end{align}
Comparing it with the inelastic case
\begin{align}
    Mf_c+Ef_e = \bar L-L^p
\end{align}
we note that for any given change of the LHS $\Delta LHS$ under inelastic supply, $\Delta L^p=-\Delta LHS$. Under elastic supply, if $\Delta L^p=-\Delta LHS$ then $|\Delta RHS|>|\Delta LHS|$. We conclude that under elastic labor supply, for any given $\Delta LHS$ we need $|\Delta L^p|<|\Delta LHS|$. Hence, elastic labor dampens movements in production labor. Since the elastic factor supply dampens movements in factors used in production, it reduces the factor-saving effects. As a consequence, all else equal, the output change is strictly smaller under elastic factor supply.


\end{proof}

\begin{proof}[Proof of Proposition \ref{prop:compreces} and Figure \ref{fig:decomp}.]
    We prove the characteristics of Figure \ref{fig:decomp}, panels A and B, first. To this end, note that $\underline z_2^{\sigma-1}$ is a strictly increasing, monotonic function of $f_h^c$ defined by the zero profit condition (\ref{eq:zcp-pe}), with $\underline z_2^{\sigma-1} \rightarrow \infty$ as $f_h^c\rightarrow \infty$, and any analysis in terms of $\underline z_2^{\sigma-1}$ carries over in terms of $f_h^c$. Now note that \begin{align}
        &M_3 = E_1 p_E(\underline z_2) + E_3 p_E(\underline z_1) \quad\text{and}\quad
        E_3 = \left( 1 + \frac{\underline z_1^{\sigma-1}}{\E_{\mu^E}[z^{\sigma -1 } \g  z \geq \underline z_1] (\sigma - 1)} \right)^{-1} \notag \\ 
        & \cdot E_1 \left[ 1 - \frac{\int_{\{ z \geq \underline{z}_2 \}}z^{\sigma -1 } \mu^E(z) \d z}{\int_{\{ z \geq \underline{z}_1 \}}z^{\sigma -1 } \mu^E(z) \d z} + (p_E(\underline z_1) - p_E(\underline z_2))\frac{\underline z_1^{\sigma-1}}{(\int_{\{ z \geq \underline{z}_1 \}}z^{\sigma -1 } \mu^E(z) \d z) \; (\sigma-1)}\right], \label{eq:decomp_E3}
    \end{align} which follows from substituting eq. (\ref{eq:proof_prop2_diffR}) into eq. (\ref{eq:proof_prop2_E3}) and rearranging for $E_3$. Using eq. (\ref{eq:decomp_E3}), we calculate \begin{align}
        \label{decomp_derivM3} \frac{\partial}{\partial\underline z_2^{\sigma-1}} M_3 &= E_1 \mu^E(\underline z_2)\left(\frac{\frac{\underline z_2^{\sigma-1} +\frac{1}{\sigma-1}\underline z_1^{\sigma-1}}{\E_{\mu^E}[z^{\sigma -1 } \g  z \geq \underline z_1]}}{1 + \frac{1}{\sigma -1}\frac{\underline z_1^{\sigma-1}}{\E_{\mu^E}[z^{\sigma -1 } \g  z \geq \underline z_1]}} -1\right).
    \end{align} If $\underline z_2 = \underline z_1$, one verifies that the term in parentheses is negative, while for large enough $\underline z_2$, it is positive. By continuity, there exists a $\underline z_2^*$, such that \begin{align} \label{eq:decomp_derivM3}
        \frac{\partial M_3}{\partial\underline z_2^{\sigma-1}} \lesseqgtr 0 \iff \underline z_2 \lesseqgtr \underline z_2^* 
    \end{align}%
    Now, partial derivatives are \begin{align}
        \frac{\partial [M_3/M_1]^{q - q^{CES}}}{\partial\underline z_2^{\sigma-1}} &= (q - q^{CES})\left[\frac{M_3}{M_1}\right]^{q - q^{CES}-1} \frac{1}{M_1} \frac{\partial M_3}{\partial\underline z_2^{\sigma-1}}, \label{eq:decomp_derivFac1} \\
        \frac{\partial (L_3^p/L_1^p)}{\partial\underline z_2^{\sigma-1}} &= \frac{-f_l^c}{L_1^p} \frac{\partial M_3}{\partial\underline z_2^{\sigma-1}}, \label{eq:decomp_derivFac2}\\
        \frac{\partial}{\partial\underline z_2^{\sigma-1}} \left[\frac{\int z^{\sigma -1 } m_3 (z)\d z}{\int z^{\sigma -1 } m_1 (z)\d z}\right]^{1/(\sigma-1 )} &= \left[\frac{\int z^{\sigma -1 } m_3 (z)\d z}{\int z^{\sigma -1 } m_1 (z)\d z}\right]^{1/(\sigma-1 )-1}\notag \\
        &\; \times \frac{1}{\sigma-1}\frac{\underline z_1^{\sigma-1}}{\int_{\{ z \geq \underline{z}_1 \}}z^{\sigma -1 } \mu^E(z) \d z}\frac{-1}{E_1(\sigma-1)}  \frac{\partial M_3}{\partial\underline z_2^{\sigma-1}} \label{eq:decomp_derivFac3},
    \end{align} from which we see that eq. (\ref{eq:decomp_derivFac2}, \ref{eq:decomp_derivFac3}) have the reverse sign order of (\ref{eq:decomp_derivM3}), while eq. (\ref{eq:decomp_derivFac1})  has the same sign order, iff $q > q^{CES}$. Combined with the facts that each factor is 1 at $f_l^c$ (no crisis in phase 2) and at $f_h^c \rightarrow \infty$, the graphs in panels A and B of Figure \ref{fig:decomp} follow. Furthermore, if $q\leq q^{CES}$, then the graph for $q^\prime$ in panel C follows, and $(Y_3/Y_1)(q, f_h^c) \geq 1$ for all $f_h^c \geq f_l^c$.

    Assume now that $q > q^{CES}$. For reference, note that \begin{align} \label{eq:decomp_prodintratio}
        \frac{\int z^{\sigma -1 } m_1 (z)\d z}{\int z^{\sigma -1 } m_3 (z)\d z} &= 1 + \frac{M_1 - M_3}{E_1(\sigma-1)} \frac{\underline z_1^{\sigma-1}}{\int_{\{ z \geq \underline{z}_1 \}}z^{\sigma -1 } \mu^E(z) \d z}
    \end{align} is the ratio of productivity integrals.
    The output ratio only depends on $f_h^c$ through $\underline z_2$, and it depends on $\underline z_2$ only through $M_3$. Consider the output ratio now as a function of $M_3$ given $q$, i.e. $\frac{Y_3}{Y_1}(M_3; q)$, and note that the domain of $M_3(\underline z_2)$ is $[M_{min}, M_1]$ for some $M_{min}$, and that $M_3$ runs from $M_1$ monotonically down to $M_{min} := M_3(\underline z_2^*)$ and back up to $M_1$ in the limit. We write $$\frac{Y_3}{Y_1}(M_3; q) = F_1(M_3, q)F_2(M_3)F_3(M_3),$$ corresponding to its three contributing factors, and take the derivative \begin{align}
        \frac{\partial }{\partial M_3}\frac{Y_3}{Y_1} = \bigg(\frac{F_1^\prime(M_3, q)}{F_1(M_3, q)} + \frac{F_2^\prime(M_3)}{F_2(M_3)} + \frac{F_3^\prime(M_3)}{F_3(M_3)}\bigg)\frac{Y_3}{Y_1}(M_3;q)
        =:g(M_3, q)\frac{Y_3}{Y_1}(M_3;q).
    \end{align} The derivative vanishes iff the first factor vanishes; substituting in:%
    {\small\begin{align}\label{eq:decomp_foc_outratio}
        0 = \frac{(q-q^{CES})}{M_3} \underbrace{- \frac{f_l^c}{L_3^p(M_3)} - \frac{\int z^{\sigma -1 } m_1 (z)\d z}{\int z^{\sigma -1 } m_3 (z)\d z} \frac{1}{\sigma-1} \frac{\underline z_1^{\sigma-1}}{\int_{\{ z \geq \underline{z}_1 \}}z^{\sigma -1 } \mu^E(z) \d z}\frac{1}{E_1(\sigma-1)}}_{:=-s(M_3)/M_3} =: v(M_3, q),
    \end{align}}%
    which is strictly decreasing in $M_3$ (to see that the third summand is decreasing cf. eq. \ref{eq:decomp_prodintratio}). Therefore, if $M_3^*$ satisfying eq. (\ref{eq:decomp_foc_outratio}) exists, it is unique. 
    Since $M_{min} \leq M_3 \leq M_1$, it can only exist if $v(M_{min},q) \geq 0 \geq v(M_1,q)$. Clearly, if $q$ is large enough, then $v$ must be globally positive and if $q$ gets close to $q^{CES}$, then $v$ is negative for some, and eventually all $M_3\in [M_{min}, M_1]$. One checks that an interior critical point exists iff $q\in (\underline q, \overline q)$, where $\underline q = q^{CES} + s(M_1)$ and $\overline q = q^{CES} + s(M_{min})$. On the one hand, if $q\leq \underline q$, even though there is LOV relative to CES, the output ratio is decreasing in the number of post-crisis varieties, and any crisis is cleansing. On the other hand, if $q\geq \overline{q}$, then the output ratio is globally increasing in the number of varieties, and all crises come at an output loss. 
    For the region in between, $q\in (\underline q , \overline{q})$, there is an extremum at some $M_3^{*}$. This is a maximum because $Y_3/Y_1$ satisfies second-order conditions at $M_3^{*}$ for such $q$. See this as follows. As argued, $g(M_3;q)$ is decreasing, so $g^\prime<0$. The second derivative of $Y_1/Y_3$ evaluated at $M_3^{*}$ is \begin{align}
        g^\prime(M_3^*;q) \frac{Y_3}{Y_1}(M_3^*;q) + g(M_3^*;q) \frac{Y_3}{Y_1}^\prime(M_3^*;q) =  g^\prime(M_3^*;q) \frac{Y_3}{Y_1}(M_3^*;q) +  g(M_3^*;q) \cdot 0 <0.
    \end{align} %
    Consequently, there is an optimal number of firms in phase 3, $M_3^*\in (M_{min}, M_1)$, created by an ``optimally cleansing business cycle''. Because the mapping between $f_h^c$ and $M_3$ is parabola-shaped, its inverse image always has two elements, and any optimal post-crisis number of firms can be created by two different values of $f_h^c$. This implies the graph displayed in panel C for $q^{\prime\prime}$. It is easy to check that the output ratio uniformly decreases in $q$, which explains why the graph for $q^{\prime\prime\prime}$ lies below. Therefore, as the curve of $Y_3/Y_1$ shifts down, there must be an interval $(q_\circ, q^\circ)\subseteq (\underline q, \overline{q})$ where the output ratio crosses 1 multiple times, creating two nonempty, disjunct intervals in which crises are cleansing. \\
    
    The partial equilibrium result follows immediately since $F_2$ and $F_3$ are independent of $f_h^c$ and of $q$, by Proposition \ref{proposition:PE-cycles}.
\end{proof}

\begin{proof}[Proof of Remark \ref{rem:indep} and Proposition \ref{prop:het_ext}]
    Since firms are selected on $z$, the average externality contributions in the economy are given by \begin{align}
    \E_{m_1}[\xi] &= \E_{\mu^{E}}[\xi \g z \geq \ulz_1], \\
    \E_{m_2}[\xi] &= \E_{\mu^{E}}[\xi \g z \geq \ulz_2], \\
    \E_{m_3}[\xi] &= \E_{\mu^{E}}[\xi \g z \geq \ulz_2]\times \frac{M_2}{M_3} + \E_{\mu^{E}}[\xi \g z \geq \ulz_1] \times \frac{E \mu_z(z\geq \ulz_1)}{M_3}.
\end{align} where the first equation obtains from initial entry, the second from keeping only surviving firms, and the third from averaging the surviving firms of the cycle with new entrants. As usual, $E>0$ is the post-cycle entry mass of firms. Let $\Delta$ be the difference operator differencing cycle phase 3 and 1. We immediately see that if externality and productivity draws are mean-independent, then $\E_{m_t}[\xi] = \E_{\mu^E}[\xi] \; \forall t = 1, 2, 3$. Thus, $\mathcal{M}_t = M_t$, and the model is identical to our baseline model in aggregates. This proves Remark \ref{rem:indep}.

Substituting into the aggregate externality $\mathcal{M}$, we obtain \eqref{eq:extension_hetvar_deltacurlyM}. This tells us that the welfare effect running through $\mathcal{M}$ is positive if and only if the selection in $\xi$ given by $\Delta \log \E_{m}[\xi]$ is at least some positive number $- \Delta \log M$. This is a result analogous to Proposition \ref{proposition:PE-cycles}. To relate the cleansing effect in this extension to the baseline model, and thus to complete the proof of Proposition \ref{prop:het_ext}, compute $\Delta\log Y$ under the alternative aggregator. This yields \begin{align}
    \Delta \log Y = q\ \Delta \log \E_{m}[\xi] + \underbrace{\bigg(q - \frac{1}{\sigma-1}\bigg)\Delta \log M}_{\Delta Y^{hom}}.
\end{align} One then notices that $\Delta \log \E_{m}[\xi] > 0$ if there is positive positive correlation between $\xi$ and $z$ in the baseline distribution $\mu^E$. The stated cases follow.
\end{proof}

\versionshort{
\begin{proof}[Proof of Proposition \ref{prop:smooth_reversion_pareto_sequential_fc} and Proposition \ref{prop:smooth_reversion_pareto_smooth_fc}]

    Consider a Pareto density $\mu^E(z)$ with minimum $x$ and shape $\beta$. For brevity, we write $x$ instead of $z_{min}$ in this proof. Note that $\bm{\mathrm{I}}(z) := \int_{\underline{z}}^\infty z^{\sigma-1} \mu^E(z) dz = x^\beta \frac{\beta}{\beta - (\sigma-1)} \underline{z}^{\sigma-1-\beta}$ with $\beta - (\sigma-1) > 0$. \\

    \noindent\emph{Step 1, calculate $E_{-1}, \underline z_{-1}$:} The cutoff $\underline{z}_{-1}$ satisfies $\frac{f^e}{1} = \frac{\bm{\mathrm{I}}(\underline{z}_{-1})}{\underline{z}_{-1}^{\sigma -1}} - \int_{\underline z_{-1}}^\infty \mu^E(z) dz$. From this follows
    \[
        \underline z_{-1}
        = x\left\{\frac{1}{f^e}\frac{\sigma-1}{\beta-(\sigma-1)}\right\}^{\!1/\beta},\quad E_{-1}
    =\frac{\mathcal I}{\sigma}\,\frac{(\sigma-1)}{\beta}\cdot\frac{1}{f^e}.
    \]

    \noindent\emph{Step 2, truncate the firm measure to find the crisis measure (on impact):} Immediately, $E_0 = 0$. From $\pi(\underline{z}_0, m_0)=0 \Leftrightarrow \phi = \frac{\mathcal{I}}{\sigma} \frac{\underline{z}_{0}^{\sigma-1}}{E_{-1} \bm{\mathrm{I}}(\underline{z}_{0})}$ we solve for $\underline{z}_{0}$: 
    \[
    \underline z_0
    = x\left\{\frac{\sigma-1}{\beta-(\sigma-1)}\cdot \frac{1}{f^e}\cdot \phi\right\}^{\!1/\beta}.
    \]


    \noindent \emph{Step 3, calculate reentry:} Now, in $t = 1, 2, ..., T^*$, fixed cost go down relative to the previous period. From the cutoff equation for entry, obtain directly 

    $$
        \underline{z}_t = x \bigg\{ \frac{f_{t}}{f^e} \frac{\sigma -1}{\beta - (\sigma -1)} \bigg\}^{1/\beta}
    $$
    where $f_{t} = \phi^{1-t/T^*}$. Hence, the entry is given by 
    $$
        E_t = \frac{\mathcal{I}}{\sigma} \frac{1}{f_t^c} \frac{\underline z_t^{\sigma -1}}{\bm{\mathrm{I}}(\underline z_t)} - \frac{
            E_{-1} \bm{\mathrm{I}}(\underline z_0) + \sum_{\tau = 1}^{t-1} E_\tau \bm{\mathrm{I}}(\underline{z}_\tau)
            }{
                \bm{\mathrm{I}}(\underline z_t)
            }.
    $$
    Note that we can simplify the first term using $\underline z_t^\beta
    =x^\beta\Big\{\frac{f_t^c}{f^e}\frac{\sigma-1}{\beta-(\sigma-1)}\Big\}$ to $
    \frac{\mathcal{I}}{\sigma}\frac{1}{f_t^c}\frac{\underline z_t^{\sigma-1}}{\bm{\mathrm{I}}(\underline z_t)}
    =\frac{\mathcal{I}}{\beta}\frac{\sigma-1}{\sigma}\frac{1}{f^e}.$
    Furthermore, the effect of pre-cycle incumbents is %
    \[
    E_{-1} \bm{\mathrm{I}}(\underline z_0)
    =\frac{\mathcal{I}}{\sigma}\,\frac{\underline z_0^{\sigma-1}}{\phi}
    =\frac{\mathcal{I}}{\sigma}\,\frac{1}{\phi}\left[x^\beta\Big\{\frac{\phi}{f^e}\frac{\sigma-1}{\beta-(\sigma-1)}\Big\}\right]^{\frac{\sigma-1}{\beta}}
    =\frac{\mathcal{I}}{\sigma}\bigg\{x^\beta\frac{1}{f^e}\frac{\sigma-1}{\beta-(\sigma-1)}\bigg\}^{\frac{\sigma-1}{\beta}}\phi^{\frac{\sigma-1-\beta}{\beta}}.
    \]
    In the summation above, we have $\bm{\mathrm{I}}(\underline{z}_\tau)$. We solve for this object explicitly %
    \[
    \begin{aligned}
    \bm{\mathrm{I}}(\underline z_t) = \phi^{\frac{\sigma-1-\beta}{\beta}\left(1-\frac{t}{T^*}\right)}
    \frac{\beta}{\sigma-1}\,f^e\,
    \Big\{x^\beta\frac{1}{f^e}\frac{\sigma-1}{\beta-(\sigma-1)}\Big\}^{\frac{\sigma-1}{\beta}}.
    \end{aligned}
    \]
    Therefore, 
    $$
    \begin{aligned}
    \frac{E_{-1} \bm{\mathrm{I}}(\underline z_0)}{\bm{\mathrm{I}}(\underline z_t)}
    = \frac{\mathcal I}{\sigma}\,\frac{\sigma-1}{\beta}\,\frac{1}{f^e}\,
    \phi^{\frac{\sigma-1-\beta}{\beta}\frac{t}{T^*}}\quad \text{and}\quad E_\tau \frac{\bm{\mathrm{I}}(\underline{z}_\tau)}{\bm{\mathrm{I}}(\underline{z}_t)} = E_\tau \phi^{\frac{\sigma -1-\beta}{\beta} \frac{t-\tau}{T^*}} \quad \text{for}\quad t = 1, \dots, T^*.
    \end{aligned}
    $$
    Then, denoting $\pi :=  \phi^{\frac{\sigma -1-\beta}{\beta} \frac{1}{T^*}} $, we can write %
    $
        E_t = E_{-1} - E_{-1} \pi^t - \sum_{\tau=1}^{t-1}  E_\tau \pi^{t-\tau}.
    $
    \noindent \emph{Step 4, solve the recursion and find $m_\infty$:} We can verify the guess %
    $
        E_t = E_{-1}(1-\pi).
    $
    From this, we can put together the measure $m_{\tau}$: 
    \begin{align*}
        m_\tau(z)  %
                    &= E_{-1}\left( \indic{\ulz_0} + \Big(1-\phi^{\frac{\sigma -1-\beta}{\beta} \frac{1}{T^*}} \Big) \sum_{l=1}^{\tau}  \indic{\ulz_\tau}\right)\mu^E(z)
    \end{align*}
    Substituting out the cutoffs $\ulz_0, \ulz_\tau$ and $E_{-1}$ yields the expression for $m^\tau$ in the proposition.\\

    \noindent \emph{Step 5, continuous-time convergence:}
    Next, we consider what happens as the crisis approaches its continuous time limit. First we look at the tail of $m_\infty$, i.e. at a value $z\geq \max \ulz_\tau = \ulz_0$ (recall that the cutoff monotone decreases after $t=0$). Note the following limit: $N(1-u^{1/N}) \rightarrow -\log u$ for any $u > 0$ as $N\rightarrow \infty$. Applying this to our problem, we have
    \begin{align*}
        m_T^*(z)&= E_{-1}\left( \indic{\ulz_0} + \Big(1-\phi^{\frac{\sigma -1-\beta}{\beta} \frac{1}{T^*}} \Big) \sum_{l=1}^{T^*}  \indic{\ulz_\tau}\right)\mu^E(z)\\
                &\longrightarrow E_{-1}\cdot \mu^E(z)\cdot \left[ 1 - \frac{\sigma -1-\beta}{\beta} \log \phi \right] \quad \text{as} \quad T^*\rightarrow\infty
    \end{align*}%
    as claimed. To analyze the left tail, let $\alpha \in [0, 1]$ and define 
    $
        z_\alpha = x\left[ \frac{\phi^{1-\alpha}}{f^e} \frac{\sigma -1}{\beta - (\sigma -1)} \right]^{1/\beta}.
    $
    Note that we can reach every $z \in [\ulz_{-1}, \ulz_0]$ in the left tail by varying $\alpha$. We investigate the convergence of $m_{T^*}(z_\alpha)$. We have \begin{align*}
        m_T^*(z_\alpha)&= E_{-1}\left( \indictwo{\ulz_0}{z_\alpha} + \Big(1-\phi^{\frac{\sigma -1-\beta}{\beta} \frac{1}{T^*}} \Big) \sum_{l=1}^{T^*}  \indictwo{\ulz_\tau}{z_\alpha}\right)\mu^E(z_\alpha)\\
                &\longrightarrow (1-\alpha)\cdot E_{-1}\cdot \mu^E(z_\alpha)\cdot \left[ \frac{\beta - (\sigma -1)}{\beta} \log \phi \right] \quad \text{as} \quad T^*\rightarrow\infty.
    \end{align*} %
    Finally get rid of $\alpha$ in the limit. Inverting $z_\alpha$ for $\alpha$ yields \begin{align*}
        \alpha = \left[ 1-\frac{\log\left( \big(\frac{z_\alpha}{x}\big)^\beta f^e \frac{\beta - (\sigma-1)}{\sigma -1} \right)}{\log \phi} \right] =  \left[ 1-\beta\frac{ \log\left( \frac{z_\alpha}{x}\right) - \log \left( \frac{\ulz_{-1}}{x} \right)}{\log \phi} \right] 
    \end{align*}%
    Where we have used the formula for $\ulz_{-1}$. Substitute this expression back into the limit and the result follows.\\

    \noindent \emph{Step 6, firm mass after the cycle $M_{T^*}$:} Recall that the anti-cumulative of $\mu^E$ is denoted by $p_E$, and $p_E(a) = \left(\frac{x}{a}\right)^\beta$. Hence, we are looking for \begin{align*}
        M_{T^*} &= E_{-1} p_E(\ulz_0) + \sum_{\tau = 1}^{T^*} E_\tau p_E(\ulz_\tau) 
                &= \frac{\mathcal{I}}{\sigma} \frac{\beta - (\sigma -1)}{\beta} \cdot \phi^{-1} \left\{1 + 
                 \left[1-\phi^{\frac{\sigma -1 -\beta}{\beta} \frac{1}{T^*}}\right]\cdot \phi^{\frac{1}{T^*}}\frac{1-\phi^{\frac{T^*}{T^*}}}{1-\phi^{\frac{1}{T^*}}}\right\}
    \end{align*}
    Noting that 
    $
        \frac{\left[1-\phi^{\frac{\sigma -1 -\beta}{\beta} \frac{1}{T^*}}\right]}{1-\phi^{\frac{1}{T^*}}} \rightarrow \frac{\sigma -1 -\beta}{\beta}
    $
    we get, as required: \begin{align*}
        M_{T^*} &= \frac{\mathcal{I}}{\sigma} \frac{\beta - (\sigma -1)}{\beta} \cdot \phi^{-1} \left\{1 + 
                 (\phi - 1) \frac{ \beta - (\sigma -1)}{\beta} \right\}.
    \end{align*}

\end{proof}
}

\versionlong{
\begin{proof}[Proof of Proposition \ref{prop:smooth_reversion_pareto_sequential_fc} and Proposition \ref{prop:smooth_reversion_pareto_smooth_fc}]

    Consider a Pareto density $\mu^E(z)$ with minimum $x$ and shape $\beta$. For brevity, we write $x$ instead of $z_{min}$ in this proof. Note that $\bm{\mathrm{I}}(z) := \int_{\underline{z}}^\infty z^{\sigma-1} \mu^E(z) dz = x^\beta \frac{\beta}{\beta - (\sigma-1)} \underline{z}^{\sigma-1-\beta}$ with $\beta - (\sigma-1) > 0$. \\

    \noindent\emph{Step 1, calculate $E_{-1}, \underline z_{-1}$:} The cutoff $\underline{z}_{-1}$ satisfies $\frac{f^e}{1} = \frac{\bm{\mathrm{I}}(\underline{z}_{-1})}{\underline{z}_{-1}^{\sigma -1}} - \int_{\underline z_{-1}}^\infty \mu^E(z) dz$. From this follows

    \[
        \frac{f^e}{1}
        =\frac{\bm{\mathrm{I}}(\underline z_{-1})}{\underline z_{-1}^{\sigma-1}}
        -\int_{\underline z_{-1}}^\infty \mu^E(z)\,dz
        =\frac{\beta x^\beta}{\beta-(\sigma-1)}\,\underline z_{-1}^{-\beta}
        - x^\beta \underline z_{-1}^{-\beta}
        = x^\beta \underline z_{-1}^{-\beta}\,\frac{\sigma-1}{\beta-(\sigma-1)}.
    \]
    \[
        \Rightarrow\quad 
        \underline z_{-1}^{\beta}
        = x^{\beta}\,\frac{\sigma-1}{\beta-(\sigma-1)}\cdot \frac{1}{f^e}
        \quad\Rightarrow\quad
        \underline z_{-1}
        = x\left\{\frac{1}{f^e}\frac{\sigma-1}{\beta-(\sigma-1)}\right\}^{\!1/\beta}.
    \]

    \[
    E_{-1}
    =\frac{\mathcal I}{\sigma}\,\frac{\underline z_{-1}^{\sigma-1}}{\bm{\mathrm{I}}(\underline z_{-1})}
    =\frac{\mathcal I}{\sigma}\left(\frac{\bm{\mathrm{I}}(\underline z_{-1})}{\underline z_{-1}^{\sigma-1}}\right)^{-1}
    =\frac{\mathcal I}{\sigma}\left[\frac{\beta x^\beta}{\beta-(\sigma-1)}\,\underline z_{-1}^{-\beta}\right]^{-1}
    =\frac{\mathcal I}{\sigma}\,\frac{\beta-(\sigma-1)}{\beta}\,x^{-\beta}\underline z_{-1}^{\beta}.
    \]
    \[
    \text{Using }x^{-\beta}\underline z_{-1}^{\beta}
    =\frac{\sigma-1}{\beta-(\sigma-1)}\cdot\frac{1}{f^e}
    \quad\Rightarrow\quad
    E_{-1}
    =\frac{\mathcal I}{\sigma}\,\frac{(\sigma-1)}{\beta}\cdot\frac{1}{f^e}.
    \]

    \noindent\emph{Step 2, truncate the firm measure to find the crisis measure (on impact):} Immediately, $E_0 = 0$. From $\pi(\underline{z}_0, m_0)=0 \Leftrightarrow \phi = \frac{\mathcal{I}}{\sigma} \frac{\underline{z}_{0}^{\sigma-1}}{E_{-1} \bm{\mathrm{I}}(\underline{z}_{0})}$ we solve for $\underline{z}_{0}$:

    \[
    \phi=\frac{\mathcal I}{\sigma}\,\frac{\underline z_0^{\sigma-1}}{E_{-1} \bm{\mathrm{I}}(\underline z_0)}
    =\frac{\mathcal I}{\sigma}\frac{1}{E_{-1}}
    \left(\frac{\bm{\mathrm{I}}(\underline z_0)}{\underline z_0^{\sigma-1}}\right)^{-1}
    =\frac{\mathcal I}{\sigma}\frac{1}{E_{-1}}\left[\frac{\beta x^\beta}{\beta-(\sigma-1)}\,\underline z_0^{-\beta}\right]^{-1}
    = \frac{\underline{z}_0^\beta}{\underline{z}_{-1}^\beta}
    \]
    \[
    \Rightarrow\quad
    \underline z_0^{\beta}
    = x^{\beta}\,\frac{\sigma-1}{\beta-(\sigma-1)}\cdot \frac{1}{f^e}\cdot \phi
    \quad\Rightarrow\quad
    \underline z_0
    = x\left\{\frac{\sigma-1}{\beta-(\sigma-1)}\cdot \frac{1}{f^e}\cdot \phi\right\}^{\!1/\beta}.
    \]


    \noindent \emph{Step 3, calculate reentry:} Now, in $t = 1, 2, ..., T^*$, fixed cost go down relative to the previous period. From the cutoff equation for entry, obtain directly 

    $$
        \underline{z}_t = x \bigg\{ \frac{f_{t}}{f^e} \frac{\sigma -1}{\beta - (\sigma -1)} \bigg\}^{1/\beta}
    $$
    where $f_{t} = \phi^{1-t/T^*}$. Hence, the entry is given by 
    $$
        E_t = \frac{\mathcal{I}}{\sigma} \frac{1}{f_t^c} \frac{\underline z_t^{\sigma -1}}{\bm{\mathrm{I}}(\underline z_t)} - \frac{
            E_{-1} \bm{\mathrm{I}}(\underline z_0) + \sum_{\tau = 1}^{t-1} E_\tau \bm{\mathrm{I}}(\underline{z}_\tau)
            }{
                \bm{\mathrm{I}}(\underline z_t)
            }.
    $$
    Note that we can simplify the first term:
    \[
    \frac{\mathcal{I}}{\sigma}\frac{1}{f_t^c}\frac{\underline z_t^{\sigma-1}}{\bm{\mathrm{I}}(\underline z_t)}
    =\frac{\mathcal{I}}{\sigma}\frac{1}{f_t^c}\,
    \frac{\underline z_t^{\sigma-1}}{x^\beta\frac{\beta}{\beta-(\sigma-1)}\underline z_t^{\sigma-1-\beta}}
    =\frac{\mathcal{I}}{\sigma}\frac{1}{f_t^c}\,\frac{\beta-(\sigma-1)}{\beta x^\beta}\,\underline z_t^\beta.
    \]
    Using $\displaystyle \underline z_t^\beta
    =x^\beta\Big\{\frac{f_t^c}{f^e}\frac{\sigma-1}{\beta-(\sigma-1)}\Big\}$ 
    \[
    \Rightarrow\
    \frac{\mathcal{I}}{\sigma}\frac{1}{f_t^c}\frac{\underline z_t^{\sigma-1}}{\bm{\mathrm{I}}(\underline z_t)}
    =\frac{\mathcal{I}}{\sigma}\cdot\frac{1}{\beta}\cdot\frac{\sigma-1}{f^e}
    =\frac{\mathcal{I}}{\beta}\frac{\sigma-1}{\sigma}\frac{1}{f^e}.
    \]
    Furthermore, the effect of pre-cycle incumbents is %
    \[
    E_{-1} \bm{\mathrm{I}}(\underline z_0)
    =\frac{\mathcal{I}}{\sigma}\,\frac{\underline z_0^{\sigma-1}}{\phi}
    =\frac{\mathcal{I}}{\sigma}\,\frac{1}{\phi}\left[x^\beta\Big\{\frac{\phi}{f^e}\frac{\sigma-1}{\beta-(\sigma-1)}\Big\}\right]^{\frac{\sigma-1}{\beta}}
    =\frac{\mathcal{I}}{\sigma}\bigg\{x^\beta\frac{1}{f^e}\frac{\sigma-1}{\beta-(\sigma-1)}\bigg\}^{\frac{\sigma-1}{\beta}}\phi^{\frac{\sigma-1-\beta}{\beta}}.
    \]
    In the summation above, we have $\bm{\mathrm{I}}(\underline{z}_\tau)$. We solve for this object explicitly %
    \[
    \begin{aligned}
    \bm{\mathrm{I}}(\underline z_t)
    &= x^\beta \frac{\beta}{\beta-(\sigma-1)}\,\underline z_t^{\,\sigma-1-\beta}
    = x^\beta \frac{\beta}{\beta-(\sigma-1)}\,\Big(\underline z_t^\beta\Big)^{\frac{\sigma-1}{\beta}-1} \\[2pt]
    &= x^\beta \frac{\beta}{\beta-(\sigma-1)}\,
    \Big[x^\beta\Big\{\frac{f_t^c}{f^e}\frac{\sigma-1}{\beta-(\sigma-1)}\Big\}\Big]^{\frac{\sigma-1}{\beta}-1} \\[2pt]
    &= \frac{\beta}{\beta-(\sigma-1)}\,x^{\sigma-1}
    \Big\{\frac{f_t^c}{f^e}\frac{\sigma-1}{\beta-(\sigma-1)}\Big\}^{\frac{\sigma-1-\beta}{\beta}} \\[2pt]
    &= \frac{\beta}{\sigma-1}\,x^{\sigma-1}
    \Big(\frac{\sigma-1}{\beta-(\sigma-1)}\Big)^{\frac{\sigma-1}{\beta}}
    \Big(\frac{f_t^c}{f^e}\Big)^{\frac{\sigma-1-\beta}{\beta}} \\[2pt]
    &= \phi^{\frac{\sigma-1-\beta}{\beta}\left(1-\frac{t}{T^*}\right)}
    \frac{\beta}{\sigma-1}\,f^e\,
    \Big\{x^\beta\frac{1}{f^e}\frac{\sigma-1}{\beta-(\sigma-1)}\Big\}^{\frac{\sigma-1}{\beta}}.
    \end{aligned}
    \]
    Therefore, 
    \[
    \begin{aligned}
    \frac{E_{-1} \bm{\mathrm{I}}(\underline z_0)}{\bm{\mathrm{I}}(\underline z_t)}
    &= \frac{\displaystyle \frac{\mathcal I}{\sigma}
    \Big\{x^\beta \frac{1}{f^e}\frac{\sigma-1}{\beta-(\sigma-1)}\Big\}^{\frac{\sigma-1}{\beta}}
    \phi^{\frac{\sigma-1-\beta}{\beta}}}
    {\displaystyle \phi^{\frac{\sigma-1-\beta}{\beta}\left(1-\frac{t}{T^*}\right)}
    \frac{\beta}{\sigma-1} f^e
    \Big\{x^\beta \frac{1}{f^e}\frac{\sigma-1}{\beta-(\sigma-1)}\Big\}^{\frac{\sigma-1}{\beta}}} \\[4pt]
    &= \frac{\mathcal I}{\sigma}\,\frac{\sigma-1}{\beta}\,\frac{1}{f^e}\,
    \phi^{\frac{\sigma-1-\beta}{\beta}\frac{t}{T^*}}.
    \end{aligned}
    \]
    Also, 
    $$
        E_\tau \frac{\bm{\mathrm{I}}(\underline{z}_\tau)}{\bm{\mathrm{I}}(\underline{z}_t)} = E_\tau \phi^{\frac{\sigma -1-\beta}{\beta} \frac{t-\tau}{T^*}} \quad \text{for}\quad t = 1, \dots, T^*.
    $$
    Then, denoting $\pi :=  \phi^{\frac{\sigma -1-\beta}{\beta} \frac{1}{T^*}} $, we can write %
    $$
        E_t = E_{-1} - E_{-1} \pi^t - \sum_{\tau=1}^{t-1}  E_\tau \pi^{t-\tau}
    $$
    \noindent \emph{Step 4, solve the recursion and find $m_\infty$:} We can verify the following guess %
    $$
        E_t = E_{-1}(1-\pi)
    $$
    From this, we can put together the measure $m_{\tau}$: 
    \begin{align*}
        m_\tau(z)   &= \left(E_{-1} \indic{\ulz_0} + \sum_{l=1}^{\tau} E_\tau \indic{\ulz_\tau}\right)\mu^E(z)\\
                    &= E_{-1}\left( \indic{\ulz_0} + \Big(1-\phi^{\frac{\sigma -1-\beta}{\beta} \frac{1}{T^*}} \Big) \sum_{l=1}^{\tau}  \indic{\ulz_\tau}\right)\mu^E(z)
    \end{align*}
    Substituting out the cutoffs $\ulz_0, \ulz_\tau$ and $E_{-1}$ yields the expression for $m^\tau$ in the proposition.\\

    \noindent \emph{Step 5, continuous-time convergence:}
    Next, we consider what happens as the crisis approaches its continuous time limit. First we look at the tail of $m_\infty$, i.e. at a value $z\geq \max \ulz_\tau = \ulz_0$ (recall that the cutoff monotone decreases after $t=0$). Note the following limit: $N(1-u^{1/N}) \rightarrow -\log u$ for any $u > 0$ as $N\rightarrow \infty$. Applying this to our problem, we have
    \begin{align*}
        m_T^*(z)&= E_{-1}\left( \indic{\ulz_0} + \Big(1-\phi^{\frac{\sigma -1-\beta}{\beta} \frac{1}{T^*}} \Big) \sum_{l=1}^{T^*}  \indic{\ulz_\tau}\right)\mu^E(z)\\
                &= E_{-1}\left( 1 + \Big(1-\phi^{\frac{\sigma -1-\beta}{\beta} \frac{1}{T^*}} \Big) T^*\right)\mu^E(z)\\
                &\longrightarrow E_{-1}\cdot \mu^E(z)\cdot \left[ 1 - \frac{\sigma -1-\beta}{\beta} \log \phi \right] \quad \text{as } T^*\rightarrow\infty
    \end{align*}%
    as claimed. Analyzing the left tail is harder. To this end, let $\alpha \in [0, 1]$ and define 
    $$
        z_\alpha = x\left[ \frac{\phi^{1-\alpha}}{f^e} \frac{\sigma -1}{\beta - (\sigma -1)} \right]^{1/\beta}.
    $$
    Note that we can reach every $z \in [\ulz_{-1}, \ulz_0]$ in the left tail by varying $\alpha$. We investigate the convergence of $m_{T^*}(z_\alpha)$. We have \begin{align*}
        m_T^*(z_\alpha)&= E_{-1}\left( \indictwo{\ulz_0}{z_\alpha} + \Big(1-\phi^{\frac{\sigma -1-\beta}{\beta} \frac{1}{T^*}} \Big) \sum_{l=1}^{T^*}  \indictwo{\ulz_\tau}{z_\alpha}\right)\mu^E(z_\alpha)\\
                &= E_{-1}\left( 0 + \Big(1-\phi^{\frac{\sigma -1-\beta}{\beta} \frac{1}{T^*}} \Big) \sum_{l=\left\lceil \alpha T^* \right\rceil  }^{T^*}  \right)\mu^E(z_\alpha)\\
                &\longrightarrow (1-\alpha)\cdot E_{-1}\cdot \mu^E(z_\alpha)\cdot \left[ \frac{\beta - (\sigma -1)}{\beta} \log \phi \right] \quad \text{as} \quad T^*\rightarrow\infty.
    \end{align*} %
    Finally get rid of $\alpha$ in the limit. Inverting $z_\alpha$ for $\alpha$ yields \begin{align*}
        \alpha = \left[ 1-\frac{\log\left( \big(\frac{z_\alpha}{x}\big)^\beta f^e \frac{\beta - (\sigma-1)}{\sigma -1} \right)}{\log \phi} \right] =  \left[ 1-\beta\frac{ \log\left( \frac{z_\alpha}{x}\right) - \log \left( \frac{\ulz_{-1}}{x} \right)}{\log \phi} \right] 
    \end{align*}%
    Where we have used the formula for $\ulz_{-1}$. Substitute this expression back into the limit and the result follows.\\

    \noindent \emph{Step 6, firm mass after the cycle $M_{T^*}$:} Recall that the anti-cumulative of $\mu^E$ is denoted by $p_E$, and $p_E(a) = \left(\frac{x}{a}\right)^\beta$. Hence, we are looking for \begin{align*}
        M_{T^*} &= E_{-1} p_E(\ulz_0) + \sum_{\tau = 1}^{T^*} E_\tau p_E(\ulz_\tau) \\
                &= \frac{\mathcal{I}}{\sigma} \frac{\beta - (\sigma -1)}{\beta} \cdot \phi^{-1} + 
                 \sum_\tau \frac{\mathcal{I}}{\sigma} \frac{\beta - (\sigma -1)}{\beta} \left[1-\phi^{\frac{\sigma -1 -\beta}{\beta} \frac{1}{T^*}}\right]\cdot \phi^{\frac{\tau}{T^*}-1} \\
                &= \frac{\mathcal{I}}{\sigma} \frac{\beta - (\sigma -1)}{\beta} \cdot \phi^{-1} \left\{1 + 
                 \left[1-\phi^{\frac{\sigma -1 -\beta}{\beta} \frac{1}{T^*}}\right]\cdot \sum_\tau \phi^{\frac{\tau}{T^*}}\right\}\\
                &= \frac{\mathcal{I}}{\sigma} \frac{\beta - (\sigma -1)}{\beta} \cdot \phi^{-1} \left\{1 + 
                 \left[1-\phi^{\frac{\sigma -1 -\beta}{\beta} \frac{1}{T^*}}\right]\cdot \phi^{\frac{1}{T^*}}\frac{1-\phi^{\frac{T^*}{T^*}}}{1-\phi^{\frac{1}{T^*}}}\right\}
    \end{align*}
    Noting that 
    $$
        \frac{\left[1-\phi^{\frac{\sigma -1 -\beta}{\beta} \frac{1}{T^*}}\right]}{1-\phi^{\frac{1}{T^*}}} \rightarrow \frac{\sigma -1 -\beta}{\beta}
    $$
    we get \begin{align*}
        M_{T^*} &= \frac{\mathcal{I}}{\sigma} \frac{\beta - (\sigma -1)}{\beta} \cdot \phi^{-1} \left\{1 + 
                 (\phi - 1) \frac{ \beta - (\sigma -1)}{\beta} \right\}
    \end{align*}  which equals the expression in the proposition.

\end{proof}
}

\begin{proof}[Proof of Proposition \ref{remark:GE_exponential_decay_entry}]
    Consider an economy in which both the distribution of entrants and of incumbents are kept general. Suppose, that the fixed cost have jumped up in the close past, hence the initial measure of incumbents, $m_0$ is truncated at some level $z^*$, but have since reverted back to some fixed level, $f^c$. This lower level of fixed cost is associated with cutoff $\ulz < z^*$. The mass of active firms is given by $M_t=\int m_t$, of which $M_t^I = \int m_{t-1}$ are incumbents. Like before, entrant masses are denoted by $E_t$. The distributional law of motion is given by \begin{align}
    m_t(z) = m_{t-1}(z) + \sum_{\tau = 1}^{t} E_t \indic{\ulz} \mu^E(z),
\end{align} %
where we can ignore the truncation of last period firms, since the cutoff does not move up over time. We impose a general equilibrium condition, labor markets must clear through \begin{align}
    \mathcal{I}_t = \bar{L} + \Pi = \frac{\sigma}{\sigma-1} (\bar{L} - M_t f^c - E_t f^e).
\end{align}
Generally, let $\E_f(x) = \int x f(x) \d x / \int_{0}^{\infty} f(x) \d x$ be the expectation with respect to any non-negative function $f$, and let $p_f(a) = \int_{a}^{\infty} f(x) \d x / \int_{0}^{\infty} f(x) \d x$ be the anti-cumulative distribution function corresponding to $f$. $E$ subscripts i.e., $\E_E$ and $p_E$, refer to the entry distribution, $\mu^E$. The entry equation reads %
{\footnotesize
\begin{align}
    E_t &= \frac{1}{f^c}\frac{1}{\sigma-1} \left( \bar{L} - \sum_{1\leq \tau \leq t-1} p_E(\ulz) E_\tau f^c - E_t(f^e + f^c p_E(\ulz)) - M_0^I f^c \right) \underbrace{\frac{\ulz^{\sigma-1}}{ \int_{\ulz} z^{\sigma-1} \mu^E(z) \d z}}_{\tilde{Z} \, \equiv} \notag \\ 
    & \quad - \sum_{1\leq \tau \leq t-1} E_\tau - \underbrace{ \frac{ \int_{\ulz} z^{\sigma-1} m_0(z) \d z }{ \int_{\ulz} z^{\sigma-1} \mu^E(z) \d z} }_{\tilde{Z}_0 \, \equiv} \\
&\iff \notag \\
    E_t &= \underbrace{\frac{
            \frac{\tilde{Z}}{\sigma-1} \frac{\bar{L} - M_0 f^c }{f^c} - \tilde{Z}_0
        }{
            1 + \frac{\tilde{Z}}{\sigma-1} (f^e / f^c + p_E(\ulz))
        }}_{A\, \equiv} - %
        \left[\sum_{1\leq \tau\leq t-1} E_\tau\right] 
        \underbrace{
            \left[ \frac{1 + p_E(\ulz) \frac{\tilde{Z}}{\sigma-1} }{ 1 + \frac{\tilde{Z}}{\sigma-1} (f^e/f^c + p_E(\ulz)) } \right]
        }_{b\, \equiv}
\end{align}%
}%
from which one deduces that $E_t = (1-b)^{t-1} A$. %
We can rewrite the coefficients as (letting $p_{0}$ be the anti-cumulative of $m_0$) {\footnotesize\begin{align*}
    A = \frac{
        (\bar{L}/f^c - M_0) - \E_{m_{0}}\left[ \big(\frac{z }{\ulz}\big)^{\sigma-1} \vert z \geq \ulz \right] (\sigma-1)\, p_{0}(\ulz)
    }{
        \frac{f^e}{f^c} + \Big(\E_{E}\left[ \big(\frac{z }{\ulz}\big)^{\sigma-1} \vert z \geq \ulz \right] (\sigma-1) + 1\Big)\, p_E(\ulz)
    }, \quad
    1 - b = \frac{
        \frac{f^e}{f^c}
    }{
        \frac{f^e}{f^c} + \Big(\E_{E}\left[ \big(\frac{z }{\ulz}\big)^{\sigma-1} \vert z \geq \ulz \right] (\sigma-1) + 1\Big)\, p_E(\ulz)
    }.
\end{align*}} %
Therefore, finally %
\begin{align}
    E_t = \frac{f^c}{f^e} \left( 
        1-b
     \right)^t 
     \cdot \left[(\bar{L}/f^c - M_0) - \E_{m_{0}}\left[ \bigg(\frac{z }{\ulz}\bigg)^{\sigma-1} \vert z \geq \ulz \right] (\sigma-1)\, p_{0}(\ulz)\right].
\end{align}
\end{proof}

\begin{proof}[Proof of Proposition \ref{prop:ext_idrisk_dyn}]
    In a state of entry, we obtain 
    \begin{align}
        \frac{f^{e}}{f^{c}}
        &= \int_{\underline z}^{\infty}
        \left[
        \frac{\mathbb{E}_{\psi}\!\big[(z+\varepsilon)^{\sigma-1}\big]}
             {\mathbb{E}_{\psi}\!\big[(\underline z+\varepsilon)^{\sigma-1}\big]}
        - 1
        \right]\mu^{E}(z)\,dz = \frac{\tilde Z(\mu,\psi)}{\mathbb{E}_{\psi}\!\big[(\underline z+\varepsilon)^{\sigma-1}\big]}
          - p_E(\underline z),
    \end{align} %
from which we see immediately that cutoffs are independent of incumbents. Like in the baseline, equilibrium is established through entry $E_\varepsilon$ which shifts $\tilde{Z}$. For the second part of the proposition, note that a firm which enters with productivity $z$ in period $k$ of a cycle phase $\tau \in \{1, 2, 3\}$ is still alive in period $t>k$ with probability $ \mathbb{P}_\psi(z + \varepsilon < \underline{z}_\varepsilon)^{t-k}, $ which converges to 0 if and only if $\mathbb{P}_\psi(z + \varepsilon < \underline{z}_\varepsilon) > 0$. Therefore, all incumbents at the beginning of the period for whom $\mathbb{P}_\psi(z + \varepsilon < \underline{z}_\varepsilon) > 0$ holds true will also exit eventually, and only if there is a positive measure of incumbents who never exit, path dependency holds.
\end{proof}

\begin{proof}[Proof of Proposition \ref{prop:multi-prod-cycles}]

    For $(ii)$, a very similar argument to Proposition \ref{prop:ext_TFP} applies. Note that individual firms' profits in eq. (\ref{eq:prof-multiprod}) are independent of $f^p$. To see this, note that the only element in eq. (\ref{eq:prof-multiprod}) that could depend on $f^p$ is $L^p$. Using eq. (\ref{eq:mkt-clearing-multiprod}) in steady state, i.e. with $E=0$, we can write $L^p$ as:
    \begin{align}
        L^p = \bar L - M f^c - N f^p
    \end{align}
    Integrating eq. (\ref{eq:n(z)-multi-prod}), and substituting it in the above, we obtain:
    \begin{align}
        L^p = \bar L - M f^c - \frac{L^p}{(\eta-1) f^p} f^p
    \implies
        L^p = \frac{\eta-1}{\eta} (\bar L - Mf^c)
    \end{align}
    As firm profits are independent of $f^p$, entry and exit decisions are unaffected by $f^p$. Therefore, movements in $f^p$ do not change $(m, \underline z)$, and there is no path-dependency and, thus, no long-run effects.

    For $(i) \; (a)$, firstly, note that in PE $\mathcal{I}$ is exogenously fixed. Therefore, $\mathcal{I}_3 = \mathcal{I}_1$, and given that $L_\tau^p = \frac{\sigma-1}{\sigma} \mathcal{I}_\tau$, also $L_3^p = L_1^p$. Secondly, note that the proof of Proposition \ref{proposition:PE-cycles} applies with $z^{\phi-1}$ instead of $z^{\sigma-1}$, which has no material effect. Therefore, following the steps of Proposition \ref{proposition:PE-cycles}, we obtain that  $\mathcal{Z}_3 = \mathcal{Z}_1$ and $M_3 < M_1$. In addition, given that by integrating eq. (\ref{eq:n(z)-multi-prod}) we have $N = \frac{L^p}{(\eta-1) f^p}$, we obtain that $N_3 = N_1$. The condition on the $q$'s immediately follows.
    
    For $(i) \; (b)$, note that the proof of Proposition \ref{prop:output-GE} applies with $z^{\phi-1}$ instead of $z^{\sigma-1}$, which has no material effect, and with:
    \begin{align}
        R_3 - R_1 &= \frac{\sigma}{\sigma-1} \frac{\eta-1}{\eta} f^c (M_1 - M_3)
    \end{align}
    where $R_3 - R_1$ is obtained by using that $R_\tau = \frac{\sigma}{\sigma-1} L_\tau^p$. Using the argumentation in the proof of Proposition \ref{prop:output-GE}, we obtain that $ M_3 < M_1$ and $\mathcal{Z}_3 > \mathcal{Z}_1$. For the multi-product setting, we have that:
    \begin{align}
        L^p = \frac{\eta-1}{\eta} (\bar L - Mf^c),\quad N = \frac{\bar L - Mf^c}{\eta f^p}
    \end{align}
    where to obtain $N$, we substitute $L^p$ in $N=\frac{L^p}{(\eta-1) f^p}$. Therefore, we can conclude that: $ L_3^p > L_1^p, \; N_3 > N_1$. The condition on the $q$'s immediately follows.

    For $(i) \; (c)$, firstly note that: $Y_\tau^{CES}:= N_\tau^{q_V} L_\tau^p \mathcal{Z}_\tau$ and thus
    $Y_\tau^{CES} \propto (\bar L - Mf^c)^{1+q_V} \mathcal{Z}_\tau$.
    Therefore, the only changes to eq. (\ref{eq:elast}) are that within the parenthesis:
    \begin{enumerate}
        \item instead of $(q-q^{CES})$ we have $(q_F - \frac{1}{\sigma-1}) - (q_V - \frac{1}{\eta-1})$
        \item for $\frac{\partial \log Y^{CES}_3}{\partial \log  M_3 }$ we just have that the labor cleansing effect is amplified. Recall that in the single-product firm case, we had $Y_\tau^{CES} = (\bar L - Mf^c) \mathcal{Z}_\tau.$ However, the elasticity is still increasing (in absolute terms) in $M_3$.
    \end{enumerate}
    Therefore, the proof of Proposition \ref{prop:compreces} applies, with the only difference that the $q$-interval in which the long-run effects are dependent on $f_h^c$ needs to have that $q_\circ - \frac{1}{\sigma-1} > q_V - \frac{1}{\eta-1}$, as stated in the proposition.
    
    
    
    
\end{proof}

\begin{proof}[Proof of Proposition \ref{prop:output-fin-good}]

We define the following proof-specific objects:
\begin{flalign*}
& \tilde Z_\tau := \int z^{\sigma-1} m_\tau(z) dz, \quad a := \alpha(q-q^{\text{CES}}), \quad b := \frac{\alpha+1-\sigma}{\sigma-1}, \quad \tilde \sigma := \sigma^\alpha (\sigma-1)^{1-\alpha}.
\end{flalign*}
Let the technology for entry and per-period fixed operating costs be Cobb–Douglas in the final good and labor:
    \begin{align*}
            B(Y,L) \;&=\;\kappa(\alpha) \;  Y^{\alpha} L^{\,1-\alpha}, \qquad \kappa(\alpha)= \alpha^\alpha (1-\alpha)^{1-\alpha}, \quad  \alpha\in[0,1].
    \end{align*}
    Let $(P_t,w_t)$ denote the prices of the final good and labor, then the cost of the bundle $B$ is
    \begin{align*}
            \Phi_t(P_t,w_t) \;&=\; P_t^{\alpha} w_t^{\,1-\alpha}.
    \end{align*}
    The firm requires $f^e$ units of $B$ to enter and $f^c$ units of $B$, per period, to produce. As in Proposition \ref{prop:output-GE}, we analyze steady-state output, therefore, there is no payment of entry costs. With labor as the numeraire ($w_t=1$), the nominal expenditure on fixed cost payments at time $\tau$ is
    \begin{align*}
        f^c \cdot \Phi_\tau(P_\tau,1)
        \;&=\;
        f^c\, P_\tau^{\alpha}.
    \end{align*}
    This nests two special cases. If $\alpha=0$, then the fixed costs are paid only in units of labor, and the nominal expenditure is $w_\tau f^c= f^c$. If $\alpha=1$, then the fixed costs are paid only in final-good units, and the nominal expenditure is $f^c P_\tau$.

\paragraph{Change in output after the crisis} The zero-profit condition at time $\tau$ can be written as
\begin{align}
    &\frac{L^p_{\tau} \; \underline z_\tau^{\sigma-1}}{(\sigma-1) \; P_t^\alpha \;  f^c_\tau} = \mathcal{Z}_\tau^{\sigma-1} \iff
    &\frac{L^p_\tau}{\sigma^\alpha (\sigma-1)^{1-\alpha} f^c_\tau} M_\tau^{\alpha(q-\frac{1}{\sigma-1})} \underline z_\tau^{\sigma-1} =  \mathcal{Z}_\tau^{\sigma-\alpha-1}. \label{zpc-cobb-fc-v1}
\end{align}
Dividing the zero-profit conditions at $\tau=3$ and $\tau=1$, and using that $f^c_3 = f^c_1$, we obtain:
\begin{align}
    \frac{L^p_3}{L^p_1} \left(\frac{M_3}{M_1}\right)^{a} \left(\frac{\underline z_3}{\underline z_1}\right)^{\sigma-1} = \left(\frac{\mathcal{Z}_3}{\mathcal{Z}_1}\right)^{\sigma-\alpha-1} \notag
\end{align}
Using Lemma \ref{lemma:hsa_invariance_A_z}, which also applies in this case, and implies $\underline z_3=\underline z_1$, we obtain:
\begin{align}
    \frac{L^p_3}{L^p_1} \left(\frac{M_3}{M_1}\right)^{a} \frac{\mathcal{Z}_3}{\mathcal{Z}_1} = \left(\frac{\mathcal{Z}_3}{\mathcal{Z}_1}\right)^{\sigma-\alpha} \iff
    \frac{Y_3}{Y_1} \left(\frac{M_3}{M_1}\right)^{(\alpha-1)(q-q^{CES})} = \left(\frac{\mathcal{Z}_3}{\mathcal{Z}_1}\right)^{\sigma-\alpha} \notag
\end{align}
Taking logs, we obtain
\begin{align}\label{out-post-crisis-cb}
    \Delta \log Y = (\sigma-\alpha) \Delta\log \mathcal{Z} + (1-\alpha)(q-q^{CES}) \Delta \log M.
\end{align}
Note that we can also rewrite eq. (\ref{out-post-crisis-cb}) like in the proposition:
\begin{align}\label{out-post-crisis-cb-rewrt}
    \Delta \log Y = \frac{\sigma-\alpha}{\sigma-1} \Delta\log \tilde{Z} + (1-\alpha)(q-q^{CES}) \Delta \log M.
\end{align}

\paragraph{$\bm{q^{\star}}$} When fixed costs are entirely paid in units of output, i.e. $\alpha=1$, we have that
\begin{align}
    \Delta\log Y = (\sigma-1)\,\Delta\log\mathcal Z = \Delta \log \tilde Z, \notag
\end{align}
thus $\Delta\log Y=0 \iff \Delta \log \tilde Z=0$, which holds at $q=q^{\text{CES}}$. Hence $q^\star(1)=q^{\text{CES}}$. To see that $\Delta \log \mathcal{Z}=0$ at $q=q^{CES}$ when $\alpha=1$, start from eq. (\ref{zpc-cobb-fc-v1}), and substitute $q=q^{CES}=\frac{1}{\sigma-1}$ and $\alpha=1$, then
\begin{align}
    \frac{L^p_\tau}{\sigma f^c_\tau} \underline z_\tau^{\sigma-1} =  \mathcal{Z}_\tau^{\sigma-2} \notag
\end{align}
Lemma \ref{lemma:hsa_invariance_A_z}, which still applies here, together with $f_3 = f_1$ and $L^p_\tau = \bar L \; \forall \tau$, as fixed costs are not paid in labor with $\alpha=1$, immediately give the result. Starting from eq. (\ref{out-post-crisis-cb}), for $\alpha\in[0,1)$, the $q^\star(\alpha)$ solving $\Delta\log Y=0$ satisfies
\begin{align}
q^\star(\alpha)
\;=\;
q^{\text{CES}}
\;-\;
\frac{\sigma-\alpha}{1-\alpha}\,
\frac{\Delta\log \mathcal Z\big(q^\star(\alpha),\alpha\big)}
     {\Delta\log M\big(q^\star(\alpha),\alpha\big)}.\label{eq:qstar-alpha}
\end{align}
Equation \eqref{eq:qstar-alpha} implies that $q^\star(\alpha)> q^{\text{CES}}$ when $\alpha<1$. The statement follows from $\Delta\log M<0$ and $\Delta\log \mathcal Z > 0$ after the crisis, because of the labor-saving effect as soon as $\alpha<1$. 

Therefore, as per the proposition, we have that for each $\alpha\in[0,1]$, there exists a $q^\star(\alpha)$ such that $\Delta\log Y=0$. Moreover, $q^\star(\alpha)\ge q^{\text{CES}}$ for all $\alpha$, with $q^\star(1)=q^{\text{CES}}$.

\paragraph{$\mathbf{q \neq q^{\star}}$} When $\alpha=0$, we characterize the change in output after the crisis globally, see Proposition \ref{prop:output-GE}. For $\alpha \in (0, 1]$, we can characterize the change in output locally around $q^\star(\alpha)$, to first order. The zero-profit condition at $t=\tau$, i.e. eq. (\ref{zpc-cobb-fc-v1}), can be expressed as
$\frac{L_\tau^p \; \underline z_\tau^{\sigma-1}}{\tilde \sigma} \tilde Z_\tau^b M_\tau^a =  f^c_\tau$. 
By taking logs, we obtain:
\begin{align}
    \log L_\tau^p + (\sigma-1) \log \underline z_\tau - \log \tilde \sigma + b \log \tilde Z_\tau + a \log M_\tau = \log  f^c_\tau. \label{zpc-fixed-out-logs}
\end{align}
By subtracting eq. (\ref{zpc-fixed-out-logs}) at $t=1$ to $t=3$, we obtain:
\begin{align}\label{log-diff-zpc-cb}
    \Delta \log L^p + b \Delta \log \tilde Z + a \Delta \log M = 0,
\end{align}
where $\Delta \log X := \log X_3 - \log X_1$, and we used that $f^c_3 = f^c_1$ and Lemma \ref{lemma:hsa_invariance_A_z}. We can solve for $\Delta \log \tilde Z$ in eq. (\ref{log-diff-zpc-cb}), which gives us:
\begin{align}
    \Delta \log \tilde Z = -\frac{1}{b} \left[a(q) \Delta \log M +  \Delta \log L^p \right],\notag
\end{align}
where we explicitly note that $a$ is a function of $q$.
We can now substitute for $\Delta \log \tilde Z$ in eq. (\ref{out-post-crisis-cb-rewrt}), obtaining:
\begin{align}
    \Delta \log Y = \left[(1-\alpha)(q-q^{CES})- \frac{\sigma-\alpha}{\sigma-1} \frac{a(q)}{b}   \right] \Delta \log M - \frac{\sigma-\alpha}{\sigma-1} \frac{1}{b} \Delta\log L^p.
\end{align}
Then, the first-order effect of $q$ on $\Delta \log Y$ locally around $q^\star(\alpha)$ is:
\begin{align}
\left.\frac{\partial}{\partial q}\Delta\log Y(q,\alpha)\right|_{q=q^\star(\alpha)}
&=\Big[(1-\alpha)-\frac{(\sigma-\alpha)}{(\sigma-1)} \frac{\alpha}{b}\Big]\,
\Delta\log M\big|_{q^\star(\alpha)}.
\end{align}
Therefore, given that $\Delta \log M <0$, we have that if $\sigma > \alpha + 1$, and thus $b = \frac{\alpha + 1 -\sigma}{\sigma-1}<0$, then
$
    \left.\frac{\partial}{\partial q}\Delta\log Y(q,\alpha)\right|_{q=q^\star(\alpha)}<0,
$
implying that 
$
q \gtrless q^{\star}(\alpha) \Rightarrow \Delta\log Y(q) \lessgtr 0$ for $q$ in a neighborhood of $q^{\star}(\alpha)$, to first order.

\end{proof}

\section{Data and Empirical Analysis}

\subsection{WIOD Data} \label{app:wiod}
\subsubsection{WIOD Structure and Coverage}
\begin{figure}[H]
\caption{World Input Output Table }
\label{wiod}\includegraphics[width=1\textwidth]{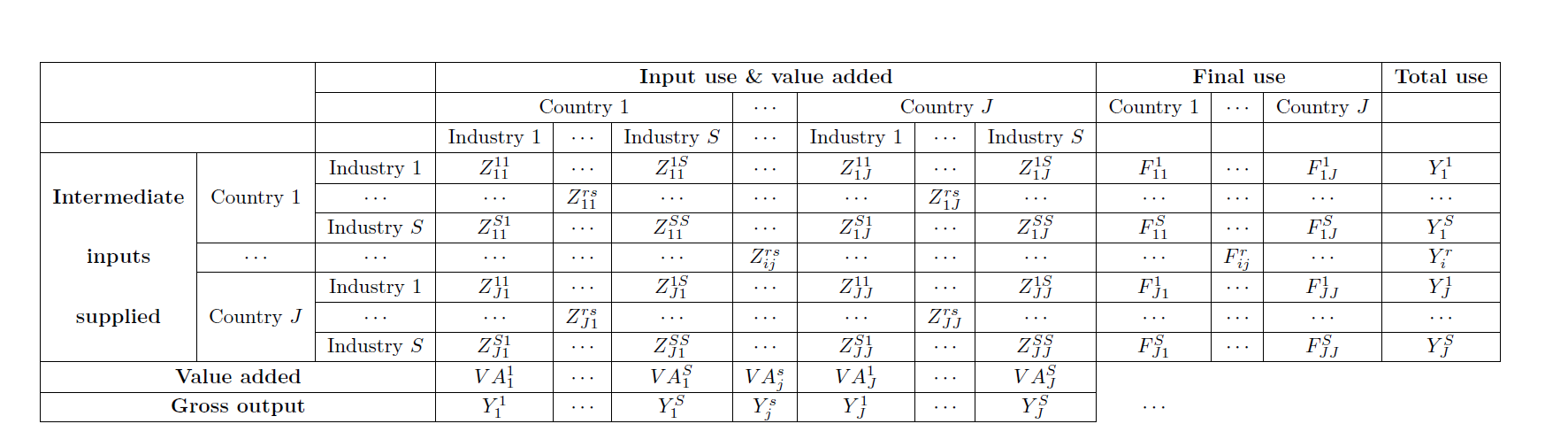} 
\end{figure}
\noindent\justifying
 \footnotesize{Note: The figure shows a schematic of the structure of the World Input-Output Database.}
 
\begin{table}[htp!]
\caption{Countries}
\label{country}
\center\scalebox{0.85}{

\begin{tabular}{llll} \hline
Country   \\ \hline
Australia &Denmark & Ireland & Poland\\
Austria &Spain & Italy &Portugal\\
Belgium &Estonia & Japan&Romania \\
Bulgaria &Finland & Republic of Korea&Russian Federation  \\
Brazil &France & Lithuania& Slovakia\\
Canada &United Kingdom & Luxembourg&Slovenia \\
Switzerland &Greece & Latvia &Sweden\\
China &Croatia & Mexico &Turkey\\
Cyprus &Hungary & Malta &Taiwan\\
Czech Republic &Indonesia & Netherlands&United States \\
Germany &India & Norway &\\
 \hline
\end{tabular}}
\end{table}

\begin{table}[H]
\caption{Industries}
\label{industry}
\hspace{-0cm}{\scalebox{.63}{\begin{tabular}{ll}
\hline{Industry}&{Industry}\\\hline
Crop and animal production&Wholesale trade\\
Forestry and logging&Retail trade\\
Fishing and aquaculture&Land transport and transport via pipelines\\
Mining and quarrying&Water transport\\
Manufacture of food products&Air transport\\
Manufacture of textiles&Warehousing and support activities for transportation\\
Manufacture of wood and of products of wood and cork&Postal and courier activities\\
Manufacture of paper and paper products&Accommodation and food service activities\\
Printing and reproduction of recorded media&Publishing activities\\
Manufacture of coke and refined petroleum products&Motion picture\\
Manufacture of chemicals and chemical products&Telecommunications\\
Manufacture of basic pharmaceutical products and pharmaceutical preparations&Computer programming\\
Manufacture of rubber and plastic products&Financial service activities\\
Manufacture of other non-metallic mineral products&Insurance\\
Manufacture of basic metals&Activities auxiliary to financial services and insurance activities\\
Manufacture of fabricated metal products&Real estate activities\\
Manufacture of computer&Legal and accounting activities\\
Manufacture of electrical equipment&Architectural and engineering activities\\
Manufacture of machinery and equipment n.e.c.&Scientific research and development\\
Manufacture of motor vehicles&Advertising and market research\\
Manufacture of other transport equipment&Other professiona activities\\
Manufacture of furniture&Administrative and support service activities\\
Repair and installation of machinery and equipment&Public administration and defence\\
Electricity&Education\\
Water collection&Human health and social work activities\\
Sewerage&Other service activities\\
Construction&Activities of households as employers\\
Wholesale and retail trade and repair of motor vehicles and motorcycles&Activities of extraterritorial organizations and bodies\\
\hline\end{tabular}
}}
\end{table}
\subsubsection{Output Changes}\label{sec:salesVoutput}
\normalsize
The WIOD data contains information on gross output at current prices for all industries in a given year. It also provides a version of the same information at previous year prices (PYP).
Define $S_t = P_tY_t$ and $S_t^{t-1} =P_{t-1}Y_t$. We can write 
\begin{align}\nonumber
    \log\frac{ S_t^{t-1}}{S_{t-1}^{t-1}} =\log P_{t-1}+ \log Y_t-\log P_{t-1}-\log Y_{t-1} =\log Y_t-\log Y_{t-1} =\Delta\log Y_t.
\end{align}

\subsection{SABI Data}

\subsubsection{TFP Estimation}\label{app:sabi_tfp_estim}

We estimate firm-level revenue productivity (TFPR) using the control-function approach of \cite{ackerberg_identification_2015}, separately by 2-digit NACE industry. We start from the revenue production function
$$
y_{it}=\beta_0+\beta_m m_{it}+\beta_\ell \ell_{it}+\beta_k k_{it}+\omega_{it}+\varepsilon_{it},
$$
where $\omega_{it}$ is firm productivity and evolves as a first-order Markov process $\omega_{it}=h(\omega_{i,t-1})+\xi_{it}$. In the first stage, we estimate $\iota_{it}\equiv\mathbb{E}[y_{it}\mid m_{it},\ell_{it},k_{it},t]$ by OLS, regressing $y_{it}$ on a flexible third-order polynomial in current inputs (including interactions) and year fixed effects, and we set $\hat\iota_{it}$ to the fitted value. In the second stage, for any candidate $\beta=(\beta_0,\beta_m,\beta_\ell,\beta_k)$ we form implied productivity $\hat\omega_{it}(\beta)=\hat\iota_{it}-[1,m_{it},\ell_{it},k_{it}]\beta$ and $\hat\omega_{i,t-1}(\beta)$ analogously, approximate $h(\cdot)$ by a cubic polynomial in $\hat\omega_{i,t-1}(\beta)$ estimated by OLS, and define the innovation $\hat\varkappa_{it}(\beta)=\hat\omega_{it}(\beta)-\widehat h(\hat\omega_{i,t-1}(\beta))$. Finally, we estimate $\hat\beta$ by GMM using the orthogonality conditions $\mathbb{E}[\hat\varkappa_{it}(\beta)\varrho_{it}]=0$ with instruments $\varrho_{it}=(1,m_{it},m_{i,t-1},\ell_{it},\ell_{i,t-1},k_{it})$. We then compute $\widehat{\mathrm{TFPR}}_{it}=y_{it}-\hat\beta_m m_{it}-\hat\beta_\ell \ell_{it}-\hat\beta_k k_{it}$.

\subsection{Robustness analysis}\label{appendix: e-rob}

\paragraph{Weighting by industry}
We now estimate $q$ using weights based on the industry median log total sales across countries and years. This approach gives greater influence to industries with higher median total sales. As it can be seen from Table \ref{tab:q_est_ind_weights}, the point estimates are very similar to section \ref{sec:empirics}, and $q$ is significantly higher than $q^{CES}$.

\begin{table}[H]
  \centering
  
  \caption{Estimation Results with Industry-Size Weights}\label{tab:q_est_ind_weights}
  
   \scalebox{.8}{{\def\sym#1{\ifmmode^{#1}\else\(^{#1}\)\fi} \begin{tabular}{l c c | c c | c} \hline\hline & \multicolumn{2}{c|}{(D)} & \multicolumn{2}{c|}{(E)} & \multicolumn{1}{c}{(F)}\\
                    &\multicolumn{1}{c}{(1)}&\multicolumn{1}{c|}{(2)}&\multicolumn{1}{c}{(3)}&\multicolumn{1}{c|}{(4)}&\multicolumn{1}{c}{(5)}\\
                    &\multicolumn{1}{c}{$\Delta \log \mathcal{I}^i_{c,t} $}&\multicolumn{1}{c|}{$\Delta \log Y^i_{c,t}$}&\multicolumn{1}{c}{$\Delta \log L^{p,i}_{c,t} $}&\multicolumn{1}{c|}{$\Delta \log Y^i_{c,t}$}&\multicolumn{1}{c}{$\Delta \log Y^i_{c,t}$}\\
\hline
                &                     &                     &                     &                     &                     \\
 $ \eta_{c,t}^{+,i} $ &      0.0890\sym{***}&                     &       0.101\sym{***}&                     &                     \\
                    &    (0.0209)         &                     &    (0.0192)         &                     &                     \\
[.5em]
 $ \eta_{c,t-1}^{+,i} $ &      0.0967\sym{***}&                     &      0.0646\sym{***}&                     &                     \\
                    &    (0.0225)         &                     &    (0.0191)         &                     &                     \\
[.5em]
$\Delta \log \mathcal{I}^i_{c,t} $&                     &       1.460\sym{***}&                     &                     &       1.535\sym{***}\\
                    &                     &     (0.174)         &                     &                     &     (0.141)         \\
[.5em]
$\Delta \log L^{p,i}_{c,t} $&                     &                     &                     &       1.648\sym{***}&                    1.535\sym{***} \\
                    &                     &                     &                     &     (0.236)         &                  (0.141)    \\
\hline
$ P(q \leq q^{CES}) $&                     &        .074         &                     &        .031         &        .010         \\
Year FE             &         Yes         &         Yes         &         Yes         &         Yes         &         Yes         \\
Industry FE         &         Yes         &         Yes         &         Yes         &         Yes         &         Yes         \\
N                   &       10067         &       10067         &       10083         &       10083         &       10067         \\
F-Stat              &          25         &          70         &          25         &          49         &                \\
$ R^2 $             &      0.0323         &      0.0864         &      0.0426         &      0.0881         &               \\
F-Stat eq.(1)       &                     &                     &                     &                     &        1300         \\
F-Stat eq.(2)       &                     &                     &                     &                     &         127         \\
$ R^2 $ eq.(1)      &                     &                     &                     &                     &      0.0866         \\
$ R^2 $ eq.(2)      &                     &                     &                     &                     &      0.0875         \\
\hline\hline
\multicolumn{6}{l}{\footnotesize Robust Standard Errors (HC1).}\\
\multicolumn{6}{l}{\footnotesize \sym{*} \(p<0.10\), \sym{**} \(p<0.05\), \sym{***} \(p<0.01\)}\\
\end{tabular}
}
}

    \vspace{0.2cm}
    
    \begin{minipage}{\textwidth}
    \centering
    \footnotesize
     \textbf{Notes:} See Table \ref{tab:q_est}.
    \end{minipage}

\end{table}

\paragraph{Single instrument}

In our main analysis, we estimate $q$ using both lagged and contemporary positive demand shocks as instruments for income and labor. As shown in Tables \ref{tab:q_est_lag_only} and \ref{tab:q_est_curr_only}, our results are robust to the choice of instruments.

\begin{table}[H]
  \centering
  
  \caption{Estimation Results with only lagged positive demand shocks}\label{tab:q_est_lag_only}
  
   \scalebox{.8}{{\def\sym#1{\ifmmode^{#1}\else\(^{#1}\)\fi} \begin{tabular}{l c c | c c | c} \hline\hline & \multicolumn{2}{c|}{(G)} & \multicolumn{2}{c|}{(H)} & \multicolumn{1}{c}{(I)}\\
                    &\multicolumn{1}{c}{(1)}&\multicolumn{1}{c|}{(2)}&\multicolumn{1}{c}{(3)}&\multicolumn{1}{c|}{(4)}&\multicolumn{1}{c}{(5)}\\
                    &\multicolumn{1}{c}{$\Delta \log \mathcal{I}^i_{c,t} $}&\multicolumn{1}{c|}{$\Delta \log Y^i_{c,t}$}&\multicolumn{1}{c}{$\Delta \log L^{p,i}_{c,t} $}&\multicolumn{1}{c|}{$\Delta \log Y^i_{c,t}$}&\multicolumn{1}{c}{$\Delta \log Y^i_{c,t}$}\\
\hline
                &                     &                     &                     &                     &                     \\
 $ \eta_{c,t-1}^{+,i} $ &       0.104\sym{***}&                     &       0.112\sym{***}&                     &                     \\
                    &    (0.0188)         &                     &    (0.0154)         &                     &                     \\
[.5em]
$\Delta \log \mathcal{I}^i_{c,t} $&                     &       1.680\sym{***}&                     &                     &       1.622\sym{***}\\
                    &                     &     (0.255)         &                     &                     &     (0.165)         \\
[.5em]
$\Delta \log L^{p,i}_{c,t} $&                     &                     &                     &       1.568\sym{***}&                   1.622\sym{***}  \\
                    &                     &                     &                     &     (0.216)         &                   (0.165)  \\
\hline
$ P(q \leq q^{CES}) $&                     &        .032         &                     &        .047         &        .006         \\
Year FE             &         Yes         &         Yes         &         Yes         &         Yes         &         Yes         \\
Industry FE         &         Yes         &         Yes         &         Yes         &         Yes         &         Yes         \\
N                   &       15507         &       15507         &       15526         &       15526         &       15505         \\
F-Stat              &          31         &          44         &          53         &          53         &                 \\
$ R^2 $             &      0.0254         &       0.1235         &      0.0353         &       0.1237         &                \\
F-Stat eq.(1)       &                     &                     &                     &                     &           7         \\
F-Stat eq.(2)       &                     &                     &                     &                     &          10         \\
$ R^2 $ eq.(1)      &                     &                     &                     &                     &       0.12353         \\
$ R^2 $ eq.(2)      &                     &                     &                     &                     &       0.12354         \\
\hline\hline
\multicolumn{6}{l}{\footnotesize Robust Standard Errors (HC1).}\\
\multicolumn{6}{l}{\footnotesize \sym{*} \(p<0.10\), \sym{**} \(p<0.05\), \sym{***} \(p<0.01\)}\\
\end{tabular}
}
}

    \vspace{0.2cm}
    
    \begin{minipage}{\textwidth}
    \centering
    \footnotesize
     \textbf{Notes:} See Table \ref{tab:q_est}.
    \end{minipage}

\end{table}

\begin{table}[H]
  \centering
  
  \caption{Estimation Results with only contemporary positive demand shocks}\label{tab:q_est_curr_only}
  
   \scalebox{.8}{{\def\sym#1{\ifmmode^{#1}\else\(^{#1}\)\fi} \begin{tabular}{l c c | c c | c} \hline\hline & \multicolumn{2}{c|}{(J)} & \multicolumn{2}{c|}{(K)} & \multicolumn{1}{c}{(L)}\\
                    &\multicolumn{1}{c}{(1)}&\multicolumn{1}{c|}{(2)}&\multicolumn{1}{c}{(3)}&\multicolumn{1}{c|}{(4)}&\multicolumn{1}{c}{(5)}\\
                    &\multicolumn{1}{c}{$\Delta \log \mathcal{I}^i_{c,t} $}&\multicolumn{1}{c|}{$\Delta \log Y^i_{c,t}$}&\multicolumn{1}{c}{$\Delta \log L^{p,i}_{c,t} $}&\multicolumn{1}{c|}{$\Delta \log Y^i_{c,t}$}&\multicolumn{1}{c}{$\Delta \log Y^i_{c,t}$}\\
\hline
                &                     &                     &                     &                     &                     \\
 $ \eta_{c,t}^{+,i} $ &       0.114\sym{***}&                     &       0.104\sym{***}&                     &                     \\
                    &    (0.0179)         &                     &    (0.0169)         &                     &                     \\
[.5em]
$\Delta \log \mathcal{I}^i_{c,t} $&                     &       1.745\sym{***}&                     &                     &       1.816\sym{***}\\
                    &                     &     (0.235)         &                     &                     &     (0.186)         \\
[.5em]
$\Delta \log L^{p,i}_{c,t} $&                     &                     &                     &       1.919\sym{***}&                   1.816\sym{***}  \\
                    &                     &                     &                     &     (0.316)         &              (0.186)        \\
\hline
$ P(q \leq q^{CES}) $&                     &        .011         &                     &        .012         &        .001         \\
Year FE             &         Yes         &         Yes         &         Yes         &         Yes         &         Yes         \\
Industry FE         &         Yes         &         Yes         &         Yes         &         Yes         &         Yes         \\
N                   &       15029         &       15029         &       15048         &       15048         &       15028         \\
F-Stat              &          41         &          55         &          38         &          37         &               \\
$ R^2 $             &      0.0264         &      0.0689         &      0.0317         &      0.0693         &              \\
F-Stat eq.(1)       &                     &                     &                     &                     &           6         \\
F-Stat eq.(2)       &                     &                     &                     &                     &           5         \\
$ R^2 $ eq.(1)      &                     &                     &                     &                     &      0.0690         \\
$ R^2 $ eq.(2)      &                     &                     &                     &                     &      0.0689         \\
\hline\hline
\multicolumn{6}{l}{\footnotesize Robust Standard Errors (HC1).}\\
\multicolumn{6}{l}{\footnotesize \sym{*} \(p<0.10\), \sym{**} \(p<0.05\), \sym{***} \(p<0.01\)}\\
\end{tabular}
}
}

    \vspace{0.2cm}
    
    \begin{minipage}{\textwidth}
    \centering
    {   \footnotesize{
     \textbf{Notes:} See Table \ref{tab:q_est}.}}
    \end{minipage}

\end{table}

\section{Additional Results Quantitative Model} \label{sec:appendix_quant}
\normalsize
In addition, with Table \ref{tab:policy_baqaee}, Figure \ref{fig:quant_grid-lsf-baqaee}, and Figure \ref{fig:quant_grid-ssp-baqaee} we provide an analogue of Table \ref{tab:policy}, Figure \ref{fig:quant_grid-lsf}, and Figure \ref{fig:quant_grid-ssp}, respectively, for $q=0.3$, which is the love of variety in production estimated by \cite{baqaee2023supplier}. As in the main figures, random exit shocks ensure a unique long-run steady state. Any residual in the panels reflects the 15-year plotting window.

\begin{table}[H]
    \centering
    \caption{Steady-state welfare gains and recession welfare costs (CEV, \%)}
    \label{tab:policy_baqaee}
    \scalebox{.9}{\begin{threeparttable}
    \begin{tabular}{l c c c}
        \hline\hline\addlinespace
        \multicolumn{4}{l}{\textbf{Panel A: Welfare Gains from Steady-State Policy}}\\
        \addlinespace
        Steady-State Welfare & $+3.40\%$ &  &  \\
        \midrule
        \addlinespace
        \multicolumn{4}{l}{\textbf{Panel B: Welfare Costs of Recessions}}\\
        \addlinespace
        Component & CEV relative to Steady State & $\theta_{ss}$ & $\theta_{cyc}$\\
        \midrule
        Total Output & $-1.56\%$ & \multirow{4}{*}{\centering inactive} & \multirow{4}{*}{\centering inactive} \\
        \hspace{6pt}Varieties & \hspace{6pt} $-4.06\%$ &  &  \\
        \hspace{6pt}Production Labor & \hspace{6pt} $-0.39\%$ &  &  \\
        \hspace{6pt}Avg. Productivity & \hspace{6pt} $+2.88\%$ &  &  \\
        \addlinespace
        Total Output & $-0.62\%$ & \multirow{4}{*}{\centering inactive} & \multirow{4}{*}{\centering active} \\
        \hspace{6pt}Varieties & \hspace{6pt} $+0.59\%$ &  &  \\
        \hspace{6pt}Production Labor & \hspace{6pt} $-1.01\%$ &  &  \\
        \hspace{6pt}Avg. Productivity & \hspace{6pt} $-0.20\%$ &  &  \\
        \addlinespace
        Total Output & $-1.53\%$ & \multirow{4}{*}{\centering active} & \multirow{4}{*}{\centering inactive} \\
        \hspace{6pt}Varieties & \hspace{6pt} $-3.93\%$ &  &  \\
        \hspace{6pt}Production Labor & \hspace{6pt} $-0.40\%$ &  &  \\
        \hspace{6pt}Avg. Productivity & \hspace{6pt} $+2.80\%$ &  &  \\
        \addlinespace
        Total Output & $-1.50\%$ & \multirow{4}{*}{\centering active} & \multirow{4}{*}{\centering active} \\
        \hspace{6pt}Varieties & \hspace{6pt} $-1.85\%$ &  &  \\
        \hspace{6pt}Production Labor & \hspace{6pt} $-0.85\%$ &  &  \\
        \hspace{6pt}Avg. Productivity & \hspace{6pt} $+1.20\%$ &  &  \\
        \addlinespace
        \hline\hline
    \end{tabular}
    \end{threeparttable}}\vspace{6pt}
    \begin{minipage}{\textwidth}\footnotesize
    \textbf{Notes: }Results are reported using the $q$ from \cite{baqaee2023supplier}. See notes of Table \ref{tab:policy} for additional details.
    \end{minipage}
\end{table}
\vspace{0.4cm}
\begin{figure}[htbp]
    \centering
    \begin{tabular}{@{\hskip 0pt}c@{\hskip 0pt}c@{\hskip 0pt}c@{\hskip 0pt}}
        \subfloat{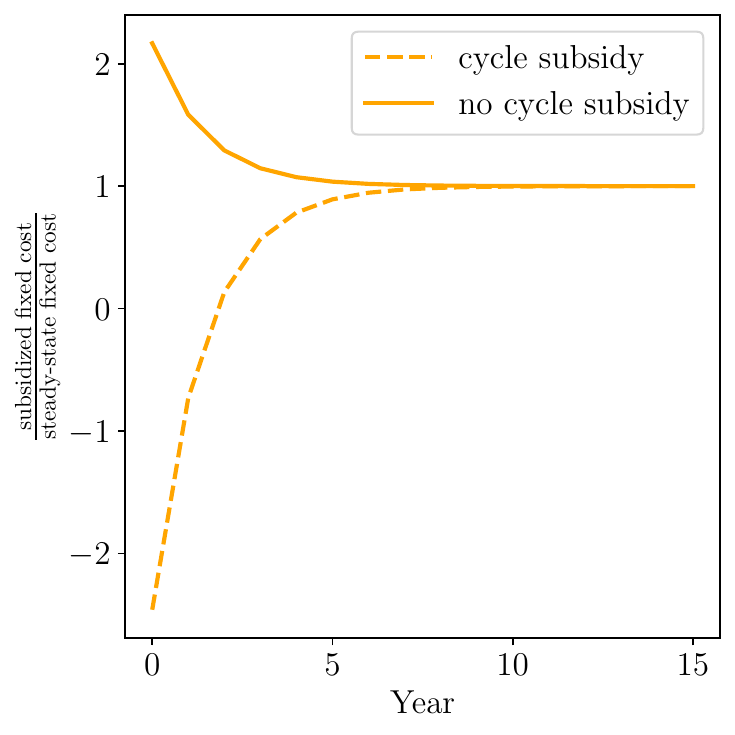}{A}
        \subfloat{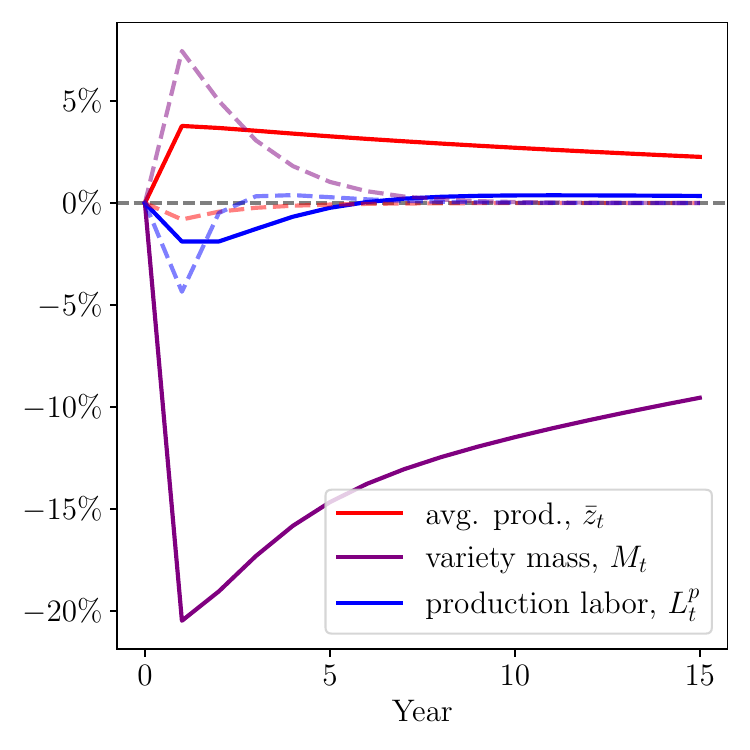}{B}
        \subfloat{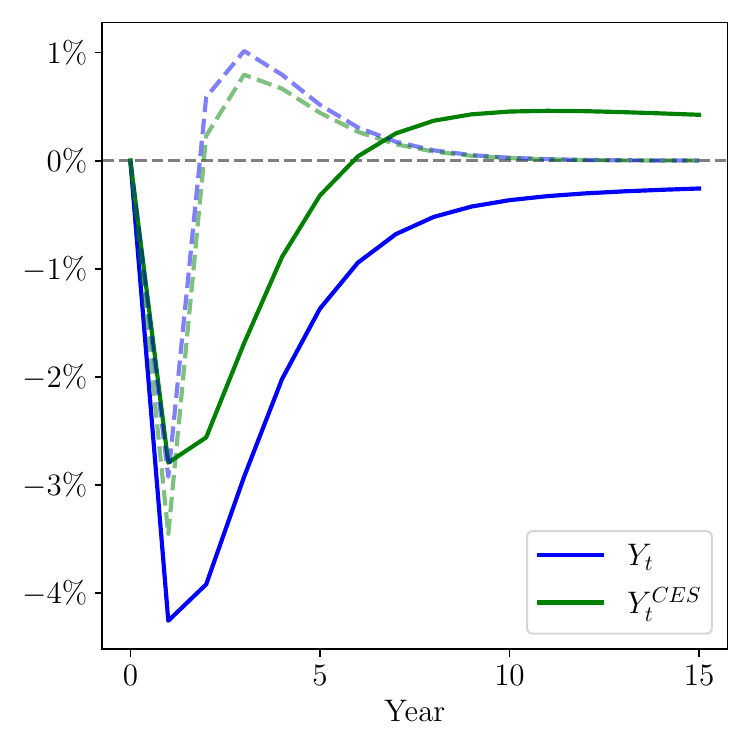}{C}
    \end{tabular}
    \caption{Fixed-cost shock path and impulse responses in a laissez-faire economy, using the $q$ from \cite{baqaee2023supplier}. See the notes of Figure \ref{fig:quant_grid-lsf} for additional details.}
    \label{fig:quant_grid-lsf-baqaee}
\end{figure}

\begin{figure}[htbp]
    \centering
    \begin{tabular}{@{\hskip 0pt}c@{\hskip 0pt}c@{\hskip 0pt}c@{\hskip 0pt}}
        \subfloat{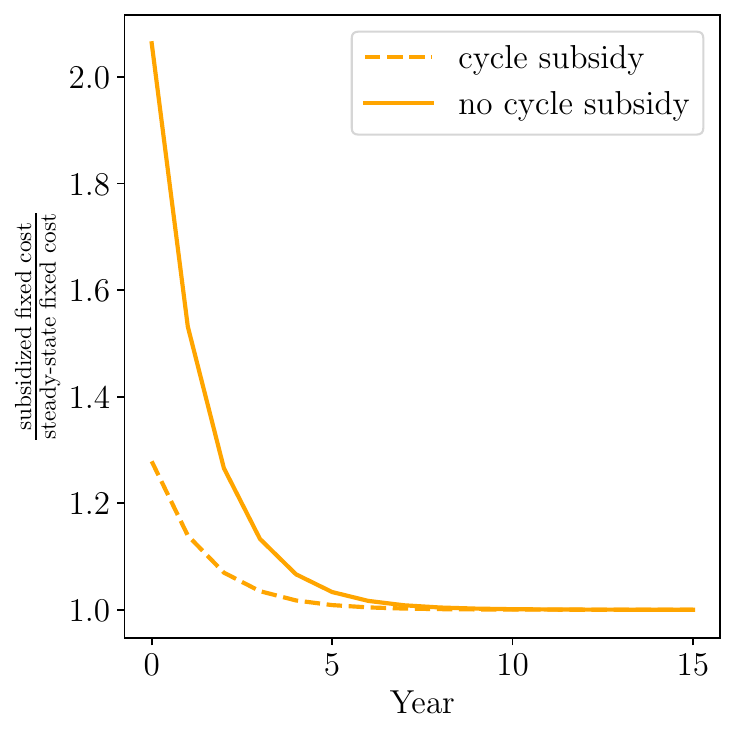}{A}
        \subfloat{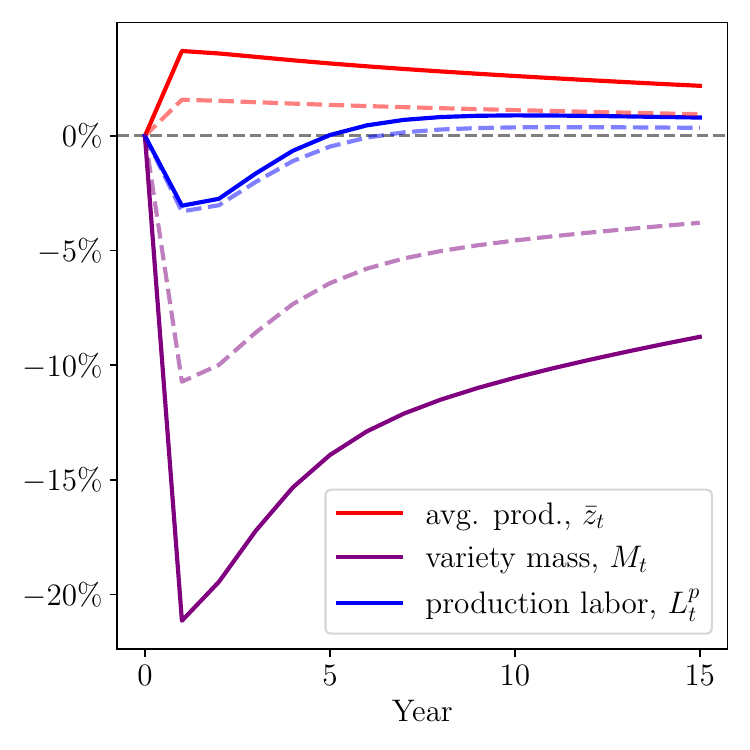}{B}
        \subfloat{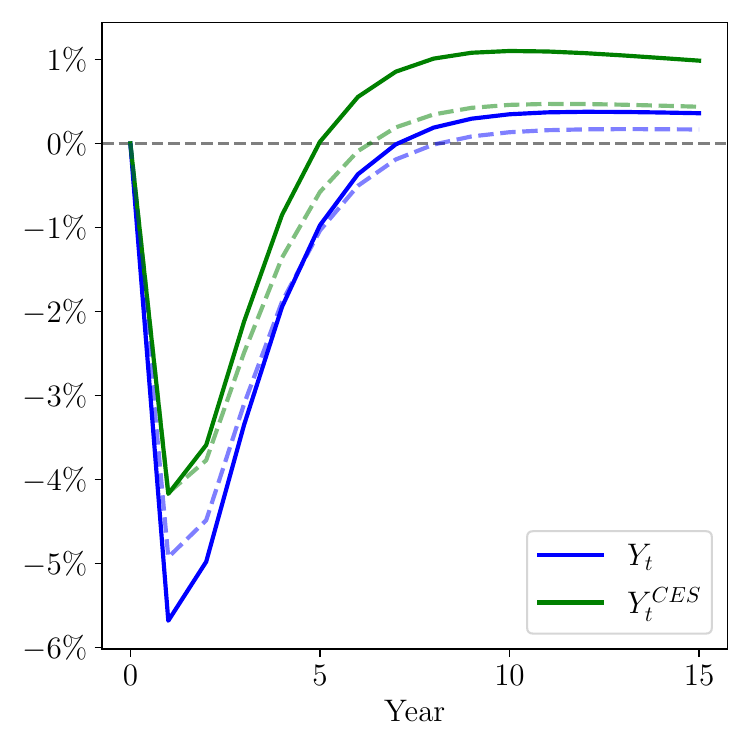}{C}
    \end{tabular}
    \caption{Fixed-cost shock path and impulse responses in an economy with active steady-state policy, using the $q$ from \cite{baqaee2023supplier}. See the notes of Figure \ref{fig:quant_grid-lsf} for additional details.}
    \label{fig:quant_grid-ssp-baqaee}
\end{figure}

\end{appendices}

\end{document}